\documentclass[peerreview]{IEEEtran} 
\usepackage{orcidlink}

%
\usepackage{graphicx}
%
\usepackage{hyperref}
\usepackage{color}
\hypersetup{
    colorlinks=true,
    linkcolor=black,
    citecolor=black,
    urlcolor=blue
}
\usepackage{wrapfig}


\usepackage{amsmath}    
\usepackage{amsthm}     
\usepackage{amssymb}    
\usepackage{mathtools}
\usepackage{amsfonts}
\usepackage{stmaryrd}
\usepackage{enumerate}
\usepackage{physics}
\usepackage{bbm}        
\usepackage{xparse}
\usepackage{xcolor}
\usepackage{mathrsfs} 
\usepackage{graphicx}
\usepackage{pdfpages}
\usepackage{tcolorbox}
\usepackage{tikz}
\usetikzlibrary{positioning}


\usepackage[nameinlink,capitalize,noabbrev]{cleveref}
\crefname{enumi}{Step}{Steps}

\newtheorem{theorem}{Theorem}[section]

\newtheorem{lemma}[theorem]{Lemma}

\newtheorem{corollary}[theorem]{Corollary}
\newtheorem{remark}[theorem]{Remark}
\newtheorem{definition}[theorem]{Definition}


  
\numberwithin{equation}{section}

\makeatletter
\newenvironment{proofsketch}{\par\normalfont
\topsep6\p@\@plus6\p@\relax
\trivlist
\item[\hskip\labelsep\itshape
Proof sketch\@addpunct{.}]\ignorespaces
}{
\endtrivlist\@endpefalse
}
\makeatother

\newcommand{\set}[1]{\left\{#1\right\}}
\newcommand{\bits}{\set{0,1}}
\renewcommand{\tr}[1]{\mathrm{Tr}\!\left[ #1 \right]}
\newcommand{\ptr}[2]{\mathrm{Tr}_{#1}\!\left[ #2 \right]}
\newcommand{\pr}[1]{\mathrm{Pr}\!\left[ #1 \right]}
\newcommand{\prs}[2]{\mathrm{Pr}_{#1}\!\left[ #2 \right]}
\newcommand{\proj}[1]{\ketbra{#1}{#1}}

\newcommand{\bO}{\mathrm{O}}
\newcommand{\bbN}{\mathbb{N}}
\newcommand{\1}{\mathbbm{1}}
\newcommand{\ext}{\mathrm{ext}}
\newcommand{\Ext}{\mathrm{Ext}}

\newcommand{\cB}{\mathcal{B}}
\newcommand{\cC}{\mathcal{C}}
\newcommand{\cD}{\mathcal{D}}
\newcommand{\cE}{\mathcal{E}}

\newcommand{\cG}{\mathcal{G}}
\newcommand{\cH}{\mathcal{H}}

\newcommand{\cR}{\mathcal{R}}
\newcommand{\cS}{\mathcal{S}}

\newcommand{\hunp}{\ensuremath{ H_{\mathrm{unp}} }}

\newcommand{\hunps}[1][\default]{H_{\mathrm{unp}\left({#1}\right)}}

\newcommand{\bP}{\Delta_P}

\newcommand{\dist}{\mathrm{d}}

\newcommand{\ind}{\textrm{ind}}
\newcommand{\guess}{\textrm{guess}}
\newcommand{\hill}{\textrm{HILL}}

\newcommand{\CNOT}{\mathrm{CNOT}}

\newcommand{\mch }{\leftrightarrow}
\newcommand{\IP}{\mathrm{IP}}


\interfootnotelinepenalty=10000

\usepackage{quantikz} 

\begin{document}

\title{Quantum Computational Unpredictability Entropy \\ and Quantum Leakage Resilience}

\author{%
  Noam Avidan\orcidlink{0009-0009-1893-987X}\IEEEauthorrefmark{1}%
  \and                                           
  Rotem Arnon\orcidlink{0000-0002-5808-6279}\IEEEauthorrefmark{2}%

  \thanks{\IEEEauthorrefmark{1}The Center for Quantum Science and Technology, Faculty of Mathematics and Computer Science, Weizmann Institute of Science, Israel.}%
  \thanks{\IEEEauthorrefmark{2}The Center for Quantum Science and Technology, Faculty of Physics, Weizmann Institute of Science, Israel.}
  }

\maketitle

\begin{abstract}
    Computational entropies provide a framework for quantifying uncertainty and randomness under computational constraints. They play a central role in classical cryptography, underpinning the analysis and construction of primitives such as pseudo-random generators, leakage-resilient cryptography, and randomness extractors. In the quantum setting, however, computational analogues of entropy remain largely unexplored.
    In this work, we initiate the study of quantum computational entropy by defining \emph{quantum computational unpredictability entropy}, a natural generalization of classical unpredictability entropy to the quantum setting. Our definition builds on the operational interpretation of quantum min-entropy as the optimal guessing probability, while restricting the adversary to efficient guessing strategies.
    We prove that this entropy satisfies several fundamental properties, including a leakage chain rule that holds even in the presence of unbounded prior quantum side-information. We also show that unpredictability entropy enables pseudo-randomness extraction against quantum adversaries with bounded computational power. Together, these results lay a foundation for developing cryptographic tools that rely on min-entropy in the quantum computational setting.
\end{abstract}

\section{Introduction}

Classical and quantum notions of entropy, their definitions, properties, and operational meaning are indispensable in cryptography. A prominent example is the conditional min-entropy $H_{\min}(X|E)$, where $X$ is a random variable and $E$ may be a classical random variable correlated with $X$ or even a quantum system. In both cases, the min-entropy quantifies the amount of information that an adversary with access to $E$ has about $X$~\cite{renner2008security,KRS09OperationalMeaningEntropy,renner2004smooth} using an optimal guessing strategy. The conditional min-entropy can therefore be directly related to tasks such as privacy amplification, encryption systems, leakage resilience, and more.

Numerous notions of quantum entropy have been thoroughly studied over the past three decades with great success~\cite{Tomamichel_2016,khatri2024principlesquantumcommunicationtheory}. Quantum cryptography thrives on developments in quantum information theory, and many security proofs build on the mathematical tools that emerge from defining and analyzing entropy measures for quantum systems. Examples of powerful entropy-related results relevant to cryptography include chain rules~\cite{Dupuis_2015_chain_rules_smooth,Fang_2020_chain_rules_relative_entropies,WTHR11Impossibilitygrowingquantumbitcommit}, duality~\cite{Tomamichel_2010_Duality,beigi2013sandwiched_dualiteis,MDSFT13_quantum_renyi_entropies_dualiteis}, the asymptotic equipartition property~\cite{TRR09_fully_quantum_AEP}, entropy accumulation theorems~\cite{DFRR20_EAT,MFSR22_GEAT,AHTE25_GREAT_acumm,arqand2025marginalconstrainedentropyaccumulationtheorem}, decoupling theorems, and more~\cite{dupuis2014oneshotdecoupling,Horodecki_2006_decoupling_negative_info,Tomamichel_2016,khatri2024principlesquantumcommunicationtheory}.
Yet, the vast majority of this work has focused on \emph{information-theoretic} entropy notions, with relatively little attention paid to the \emph{computational} aspects of quantum information theory.

This stands in sharp contrast to the classical setting, where computational entropy has been extensively studied and successfully applied across cryptography. Definitions such as HILL entropy~\cite{haastad1999pseudorandom}, unpredictability entropy, and compression-based entropies~\cite{barak2003computational,HLR07SepPseudoentropyfromCompressibility} have formed the foundation for pseudo-randomness~\cite{haitner2010efficiencyHILLprgOW}, leakage-resilient cryptography~\cite{DP08LeakageResilientStandard,S16_betterHILL_chain_rule}, and randomness extractors~\cite{kalli09extcompassumptions}. One notable prior work~\cite{CC17Computationalminentropy} proposed a quantum variant of HILL entropy, but the framework lacked key structural results, most notably, a general leakage chain rule, and required restrictive assumptions such as bounded quantum storage for cryptographic applications.

At the same time, recent work at the intersection of quantum cryptography, complexity theory, and information theory has introduced a variety of new computationally motivated quantum objects: pseudorandom quantum states~\cite{QPRSJKF18}, EFI pairs~\cite{yan20_quantumcommit_EFI}, pseudorandom unitaries~\cite{bouland2023publickeypseudoentanglementhardnesslearning},  computational pseudoentanglement~\cite{arnonfriedman2023computationalentanglementtheory,aaronson2023quantumpseudoentanglement}, quantum one-way puzzles~\cite {khurana2024commitmentsquantumonewayness}. In all these directions, computational assumptions enter the picture. Through the power of the distinguisher (i.e., the computational model underlying the distance measure), as well as through the complexity of generating or verifying the relevant states or transformations. These studies build on decades of work analyzing the information-theoretic ``non-pseudo'' analogues of these quantum structures using tools from quantum information theory, most notably, entropy.

Despite this, the role of entropy in the emerging theory of quantum computational pseudo-randomness has yet to be fully developed. We argue that a well-founded notion of quantum computational entropy is essential for this endeavor, just as it has been in classical cryptography.

In this work, we take a step in this direction by defining and studying a quantum computational variant of the conditional min-entropy: quantum computational unpredictability entropy. We prove that it satisfies key properties, most notably, a leakage chain rule, and supports the construction of cryptographic primitives such as quantum pseudo-randomness extractors secure against quantum side-information.

\section{Main Contributions and Technical Overview}
The main goal of our work is to advance the (practically non-existent) theory of quantum computational entropy, a necessary step in the foundation for modern quantum cryptography. We introduce a new quantity, which we term \emph{quantum computational unpredictability entropy} $\hunp$, a natural computational variant of min-entropy, and a quantum analog of classical unpredictability entropy~\cite{HLR07SepPseudoentropyfromCompressibility}. Our definition is operationally motivated and aligns with core ideas in quantum information theory.

We establish several properties of $\hunp$, most notably a fully quantum leakage chain rule that holds even in the presence of prior quantum side-information, overcoming a key limitation of the framework introduced in~\cite{CC17Computationalminentropy}. This result provides a powerful tool for reasoning about entropy in cryptographic settings with quantum leakage.

In the following sections, we discuss our contributions and proof techniques in more detail. Some quantum notation is needed in order to explain the results. We present only the necessary background as we go. \cref{sec:prelim} includes a more thorough background.

\subsection{Quantum Computational Unpredictability Entropy}

One of the most widely studied quantum entropies is the conditional min-entropy $H_{\min}(X|E)_{\rho}$~\cite{renner2008security,KRS09OperationalMeaningEntropy}, where $\rho=\rho_{XE}$ is a quantum state.
We are interested in the case when~$X$ is a classical random variable and $E$ is a quantum system. The state can be written as a classical-quantum (cq) state  $\rho_{XE}=\sum_x p(x)\proj{x}\otimes \rho^x_E$, with $\{\ket{x}\}_x$  a family of orthonormal vectors representing the classical values of~$X$. Then, the min-entropy has the following operational meaning~\cite{KRS09OperationalMeaningEntropy},
\begin{equation}\label{eq:min_entropy_intro}
    H_{\min}(X|E)_{\rho} = - \log \mathrm{P}_{\guess}(X|E)\;,
\end{equation}
where $\mathrm{P}_{\mathrm{guess}}(X|E)$ is the optimal probability of guessing the value of $X$ given access to $E$. The optimal way to guess the value of $X$ is by measuring the quantum state and then guessing based on the measurement outcome.\footnote{Formally we write $\mathrm{P}_{\guess}(X|E)=\mathbb{E}_x \tr{E_x\rho^x_E}$ where $E_x$ are positive operator-valued measures (POVMs) $\{E_x\}_x$, calculated on the side-information $\rho_E$, and the expectation and measurement outcomes are defined via the cq-state $\rho_{XE}$.}
The quantum measurements achieving the optimal guessing probability are also relevant for questions in quantum hypothesis testing, their study dates back to~\cite{holevo1973statistical,yuen1975optimum}.

In this work, we introduce a \emph{computational variant} of the quantum min-entropy given in~\cref{eq:min_entropy_intro}. That is, the guessing strategies are now limited in their computational power.
A classical counterpart of such an entropy was introduced in~\cite{HLR07SepPseudoentropyfromCompressibility} and termed the ``unpredictability entropy''. Our quantity of interest is, therefore, both a quantum extension of the classical unpredictability entropy and the computational variant of the information-theoretic quantum min-entropy.

In its simplest form, one can define the quantum unpredictability entropy as follows:
Given a cq-state $\rho_{XE}$ we say that
\begin{definition}
    $\hunps(X|E)_{\rho}\ge k$
    if for any quantum guessing circuit $\cC$ of size $s$, $\pr{\cC(\rho_{E}^x) = x} \le 2^{-k} \;$.
\end{definition}
By limiting the size of the quantum circuit $\cC$ we limit the allowed guessing strategies and, hence, this acts as an extension of the operational definition of the min-entropy to a setup in which computational complexity matters.\footnote{The definition above is restricted to the case of cq-states, i.e., when $X$ is a classical register. In a follow-up work, soon to appear on the arXiv, we show how to extend the definition also to fully quantum states, where both systems may be quantum~\cite{avidan2025fully}.}

A key advantage of this definition is that it allows us to directly capture the uncertainty associated with computational hardness, something min-entropy cannot express. For instance, if $F$ is a post-quantum cryptographic hard to invert permutation, and $X$ is a uniformly random input, then the min-entropy $H_{\min}(X|F(X))$ is zero, since $X$ is fully determined by $F(X)$. However, the unpredictability entropy $\hunps(X|F(X))$ can still be high, reflecting the fact that $F$ is computationally hard to invert. This highlights the operational nature of our definition: it quantifies the success probability of any \emph{efficient} guessing strategy, making it directly applicable to cryptographic settings.

We now wish to ``smooth'' the entropy, as typically done in quantum information theory. Meaning, instead of evaluating the entropy on a given state $\rho_{XE}$, we allow some flexibility and optimize the value of $\hunps(X|E)$ over all states $\tilde{\rho}_{XE}$ that are $\varepsilon$-close to $\rho_{XE}$.
The distance measure with which one chooses to define closeness matters. In the classical world, the statistical distance and its computational analog are mostly used. The quantum extension of the statistical distance is the so-called trace distance, and a related computational version is also easy to define (see~\cref{sec:prelim}).
Those distance measures were used to define the classical computational entropies~\cite{haastad1999pseudorandom,barak2003computational,HLR07SepPseudoentropyfromCompressibility} as well as the quantum HILL-entropy of~\cite{CC17Computationalminentropy}.

When dealing with quantum entropies, however, a more adequate distance measure used to define smooth entropies is the purified distance~\cite{Tomamichel_2010_Duality}, which we here denote by $\bP$.\footnote{We postpone giving the formal definition of the purified distance to~\cref{def:pur_dist} below.}
Using the purified distance, we suggest the following definition:
\begin{definition}\label{def:unpredictability-entropy_intro}
    Given a cq-state $\rho_{XE}$, we say that
    $\hunps^{\varepsilon}(X|E)_{\rho}\ge k$,
    if there exists a (sub-normalized) cq-state~$\tilde{\rho}_{XE}$ such that~$\bP(\rho_{XE},\tilde{\rho}_{XE})\leq\varepsilon$, and for any quantum guessing circuit $\cC$ of size at most $s$, $\pr{\cC(\tilde{\rho}_{E}^x) = x}\le2^{-k}\;$.
\end{definition}

Note that in the above definition, the size of the guessing circuit $\cC$ is bounded by $s$. The size of a circuit or distinguisher that defines the distance measure $\bP$ and asserts that $\bP(\rho_{XE},\tilde{\rho}_{XE})\leq \varepsilon$, however, is \emph{un}bounded.
This distinction marks a key departure from the classical setting: in classical computational entropy notions such as unpredictability entropy~\cite{HLR07SepPseudoentropyfromCompressibility}, the smoothing is performed with respect to a \emph{computational} distance measure (e.g., indistinguishability by bounded-size circuits), and not with respect to a statistical or information-theoretic one. As a result, classical unpredictability entropy automatically \emph{upper bounds} the HILL entropy, and thus assigns high entropy to the output of pseudorandom generators.

\cref{def:unpredictability-entropy_intro} is by itself fundamental and relevant for applications in cryptography.
In particular, we show that the new computational entropy fulfills a quantum leakage chain-rule (in contrast to the computational notions of min-entropy~\cite{CC17Computationalminentropy}) and can be used to quantify how much pseudo-randomness can be extracted using a randomness extractor.

Indeed, one of our main contributions is a quantum leakage chain-rule for the quantum computational unpredictability entropy.
\begin{theorem}\label{thm:chain_intro}
    For any quantum state $\rho_{XBC}$, classical on $X$, and any $\varepsilon\ge 0$, $s\in\bbN$, $\ell = \log\dim(C)$, we have:
    \begin{equation}\label{eq:chain_rule_intro}
        \hunps^{\varepsilon}(X|BC)_{\rho} \ge \hunps[s + \bO(\ell)]^{\varepsilon}(X|B)_{\rho} - 2\ell \;.
    \end{equation}
\end{theorem}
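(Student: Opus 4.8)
\begin{proofsketch}
The plan is to transplant the information-theoretic proof of the quantum min-entropy leakage chain rule to the computational setting, tracking the size of every circuit that appears. Write $d=\dim(C)=2^{\ell}$ and fix any $k$ with $\hunps[s+\bO(\ell)]^{\varepsilon}(X|B)_{\rho}\ge k$; the goal is to exhibit a sub-normalized cq-state $\tilde\rho_{XBC}$ with $\bP(\rho_{XBC},\tilde\rho_{XBC})\le\varepsilon$ against which no circuit of size at most $s$ guesses $X$ with probability exceeding $2^{-(k-2\ell)}$. By hypothesis there already is a cq-state $\tilde\rho_{XB}$ with $\bP(\rho_{XB},\tilde\rho_{XB})\le\varepsilon$ defeating every circuit of size at most $s+\bO(\ell)$ acting on $B$. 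The first, and conceptually delicate, step is to \emph{lift} this state to the larger register: by Uhlmann's theorem applied to the given extension $\rho_{XBC}$ of $\rho_{XB}$, there is an extension $\tilde\rho_{XBC}$ of $\tilde\rho_{XB}$, i.e.\ $\ptr{C}{\tilde\rho_{XBC}}=\tilde\rho_{XB}$, with $\bP(\rho_{XBC},\tilde\rho_{XBC})=\bP(\rho_{XB},\tilde\rho_{XB})\le\varepsilon$; pinching the $X$-register --- a channel that fixes $\rho_{XBC}$, fixes the cq-state $\tilde\rho_{XB}$, and commutes with $\ptr{C}{\cdot}$ --- makes this extension classical on $X$ without enlarging the distance or changing its $XB$-marginal. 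This is exactly why \cref{def:unpredictability-entropy_intro} smooths in the purified distance: $\bP$ permits a \emph{lossless} lift of the smoothed state along $C$ (via Uhlmann), which neither a trace-distance nor a computational smoothing would guarantee, and it is what lets a statement about $(X|B)$ be turned into one about $(X|BC)$ at all.

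It remains to check that $\tilde\rho_{XBC}$ defeats size-$s$ guessing. Suppose not: some circuit $\cC$ of size at most $s$ achieves $\pr{\cC(\tilde\rho^{x}_{BC})=x}>2^{2\ell-k}$. From $\cC$ build a circuit $\cC'$ on $B$ alone that first appends the maximally mixed state $\tau_C$ on $C$ --- preparable with $\bO(\ell)$ gates, e.g.\ by creating $\ell$ fresh EPR pairs and keeping one half of each --- and then runs $\cC$; thus $|\cC'|\le s+\bO(\ell)$. If $\{M_x\}_x$ is the POVM on $BC$ implemented by the output measurement of $\cC$, then $\cC'$ implements $\bigl\{\ptr{C}{M_x(\1_B\otimes\tau_C)}\bigr\}_x$ on $B$, so $\pr{\cC'(\tilde\rho^{x}_B)=x}=\mathbb{E}_x\tr{M_x(\tilde\rho^{x}_B\otimes\tau_C)}$. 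The arithmetic core is the operator inequality $\tilde\rho^{x}_B\otimes\tau_C\ge 2^{-2\ell}\,\tilde\rho^{x}_{BC}$, valid for every state $\tilde\rho^{x}_{BC}$ with marginal $\tilde\rho^{x}_B$; this is the same Cauchy--Schwarz estimate (argue on a purification of $\tilde\rho^{x}_{BC}$) that yields the $-2\log\dim(C)$ term in the information-theoretic chain rule, and it is tight on a maximally entangled $BC$. Since each $M_x\ge 0$, it follows that $\pr{\cC'(\tilde\rho^{x}_B)=x}\ge 2^{-2\ell}\pr{\cC(\tilde\rho^{x}_{BC})=x}>2^{-k}$, contradicting that $\tilde\rho_{XB}$ witnesses $\hunps[s+\bO(\ell)]^{\varepsilon}(X|B)_{\rho}\ge k$. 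Hence $\tilde\rho_{XBC}$ witnesses $\hunps^{\varepsilon}(X|BC)_{\rho}\ge k-2\ell$, and letting $k\uparrow\hunps[s+\bO(\ell)]^{\varepsilon}(X|B)_{\rho}$ gives the theorem.

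Besides the lifting step, everything is bookkeeping: Uhlmann's theorem (the extension property of $\bP$), the $\bO(\ell)$ gate count for preparing $\tau_C$ and wiring it into $\cC$, the Cauchy--Schwarz proof of the operator inequality, and the standard one-extra-dimension embedding that handles sub-normalized states. The argument never inspects the structure of $B$, so it applies verbatim when $B$ is an arbitrary, unbounded quantum system --- the very feature the theorem advertises over the min-entropy-based notions of \cite{CC17Computationalminentropy}. The main obstacle is therefore not a calculation but the realization that the purified-distance smoothing in \cref{def:unpredictability-entropy_intro} is precisely what makes the lift, and hence the whole chain rule, go through.
\end{proofsketch}
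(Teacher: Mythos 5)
Your proposal is correct and follows essentially the same route as the paper's proof: it relies on the same two ingredients (the extension property of the purified distance to lift the smoothed state $\tilde\rho_{XB}$ to $\tilde\rho_{XBC}$ without increasing the distance, and the operator inequality $\tilde\rho^{x}_{B}\otimes\omega_{C}\ge 2^{-2\ell}\tilde\rho^{x}_{BC}$ combined with positivity of the POVM elements), and the same $\bO(\ell)$-gate reduction that appends a maximally mixed $C$ register before running the size-$s$ guesser. The only differences are cosmetic --- you argue directly from a witness state where the paper phrases it as a contraposition, and you additionally spell out the pinching step that keeps the lifted extension classical on $X$, a detail the paper leaves implicit.
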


The factor of 2 accompanying $\ell$ in~\cref{eq:chain_rule_intro} is fundamentally quantum (it can be seen as a consequence of quantum superdense coding~\cite{chen2017leakagesuperdense}) and tight in general.
The above chain-rule is the quantum equivalent of the leakage chain-rules for classical computational entropies~\cite{DP08LeakageResilientStandard,reingold2008dense,fuller2012computational} and the computational counterpart of the quantum information-theoretic leakage chain-rule~\cite{WTHR11Impossibilitygrowingquantumbitcommit}. Indeed, the proof of the quantum information-theoretic chain-rule can be retrieved when $s\rightarrow\infty$ (unlimited computational power) and the classical chain-rule can be proven in the same way but with classical registers (diagonal in the computational basis) and the appropriate dimension factors (i.e., instead of $2\ell$ one can easily get $\ell$ in the classical case). We say that our chain rule is fully quantum in the sense that both $B$ and $C$ are quantum registers. This is in contrast to the chain rule proven in~\cite{CC17Computationalminentropy}, which required $B$ to be classical and was therefore limited to the quantum bounded-storage model.
The generality of our chain rule allows us to move beyond this restriction: we can handle adversaries that hold arbitrary quantum side-information about the secret $X$, and that repeatedly leak quantum information via general bounded-dimension quantum channels. Our results, therefore, apply to a much broader class of leakage scenarios than those captured by prior models.

\subsection{Extracting Pseudo-Randomness in the Presence of a Quantum Adversary}
Our next contribution is both of fundamental nature and of relevance for applications.
It is well known that quantum-proof extractors can be used to extract randomness from a source $X$ of high min-entropy $H_{\min}^{\varepsilon}(X|E)_{\rho}$, with the adversary holding the quantum system $E$.
Does high $\hunps^{\varepsilon}(X|E)_{\rho}$ imply that pseudo-randomness can be extracted from $X$? We show that at least for some extractors the answer is positive.
Formally, a quantum-proof extractor is defined as follows:
\begin{definition}\label{def:ext_intro}
    A function $\Ext: \bits^{n} \times \bits^{d} \to \bits^{m}$ is a quantum proof $(\varepsilon_{\ext},k)$ strong extractor if for all ccq-states $\rho_{XYE}$, such that $X\in\bits^{n}$ with $H_{\min}(X|E) \ge k$ and $Y\in\bits^{d}$ a uniform random seed:
    \begin{equation*}
        \dist(\rho_{\Ext(X,Y)YE},\rho_{U^m}\otimes\rho_{YE}) \le \varepsilon_{\ext}\;,
    \end{equation*}
    with $\dist(\cdot,\cdot)$ standing for the trace distance and $\rho_{U^m}$ is maximally mixed state over $m$ bits.\footnote{The trace distance is the quantum extension of the statistical distance, and $\rho_{U^m}$ is the quantum notation for a random variable distributed uniformly over~$\bits^m$.}
\end{definition}

Intuitively, since the extractor works with $H_{\min}(X|E)_{\rho}$ (and thus even better with $H_{\min}^{\varepsilon}(X|E)_{\rho}$), it should also work with $\hunps^{\varepsilon}(X|E)_{\rho}$, an adversary which is computationally limited can only do worse than an unbounded one.
When considering the HILL-entropy, the answer is indeed trivially yes; This simply follows from the definition of the entropy (see~\cref{def:hill_cq-entropy}).
As we saw, however, the HILL-entropy does not have a fully quantum leakage chain-rule, for example, and thus we still wish to consider our new entropy.
Regrettably, the situation is more complex when considering the unpredictability entropy.
Even classically, only certain randomness extractors are known to work with unpredictability entropy, namely, reconstructive extractors with efficient reconstruction~\cite{HLR07SepPseudoentropyfromCompressibility}. Extending these results to the quantum setting presents additional challenges. We discuss the unique challenges of extractors in the quantum computational setting in~\cref{subec:ChallengesExtQuantumUnp}.

To answer our question in the quantum world, we first go back to fundamental results in the study of quantum-proof extractors.
One of the most renowned results~\cite{KTB08boundedstorageext} states that any single-bit output (i.e., fixing $m=1$ in~\cref{def:ext_intro}) randomness extractor works in the presence of a quantum adversary, with a small difference in the parameters compared to a classical adversary.
A main ingredient in the proof is a technique called the ``pretty good measurement''~\cite{hausladen1994pretty}, originally developed for quantum hypothesis testing. The complexity of the quantum algorithm implementing the measurement depends on the complexity of the state of the adversary and is, in general, high~\cite{gilyen2022quantum} and, hence, it is not clear that the proof of the soundness of extractors in the information-theoretic case~\cite{KTB08boundedstorageext} can be extended to our setup, in which computational complexity matters.

We thus revert to studying the case of an explicit simple randomness extractor, the inner-product (IP) extractor.
We prove the following theorem:
\begin{theorem}\label{thm:IP_intro}
    Let $\rho_{XE}$ be a cq-state where $X$ is distributed over $\bits^{n}$ and $Y$ be uniformly distributed over~$\bits^{n}$. Let $k_{\ext} \in \mathbb{N}, \varepsilon_{\ext}>0$ and $k_{\ext} \ge 1-2\log(\varepsilon_{\ext})$. We denote by $\IP(X,Y)$ the binary inner-product of the values taken by $X$ and $Y$.
    If
    \begin{equation*}
        \hunps[2s+2n+5]^{\varepsilon}(X|E) \ge k_{\ext} \;,
    \end{equation*}
    then
    \begin{equation*}
        \dist_{s}(\rho_{\IP(X,Y)YE},\rho_{U_{1}}\otimes\rho_{YE}) \le \varepsilon_{\ext} + 2\varepsilon \;.
    \end{equation*}
\end{theorem}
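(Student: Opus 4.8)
The plan is to treat $\IP$ as a \emph{reconstructive} extractor in the sense of Hsiao--Lu--Reyzin, but to replace the classical Goldreich--Levin list-decoder (which makes $\poly(n/\varepsilon_{\ext})$ queries to the distinguisher) by the \emph{one-query} quantum Goldreich--Levin decoder of Adcock--Cleve. Invoking the distinguisher only a constant number of times is exactly what makes the reduction compatible with a single, unclonable copy of the side information $\tilde{\rho}_E^x$, and is what keeps the size overhead additive rather than multiplicative.

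\emph{Reduction to the smoothing witness.} First I would fix a sub-normalised cq-state $\tilde{\rho}_{XE}$ witnessing $\hunps[2s+2n+5]^{\varepsilon}(X|E)\ge k_{\ext}$, so that $\bP(\rho_{XE},\tilde{\rho}_{XE})\le\varepsilon$ and every guessing circuit of size at most $2s+2n+5$ guesses $X$ from $\tilde{\rho}_E^x$ with probability at most $2^{-k_{\ext}}$. Let $\tilde{\rho}_{\IP(X,Y)YE}$ be obtained from $\tilde{\rho}_{XE}$ by appending a uniform seed $Y$ and computing $\IP(X,Y)$, and let $\tilde{\rho}_{U_1}\otimes\tilde{\rho}_{YE}$ be the analogous ideal state built from $\ptr{X}{\tilde{\rho}_{XE}}$. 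Both are images of the corresponding $\rho$-states under trace-preserving CP maps (append a fixed uniform state, compute $\IP$ coherently, discard $X$), so using $\dist_s\le\dist\le\bP$ and monotonicity of these distances, each lies within $\varepsilon$ in $\dist_s$ of its $\rho$-counterpart. By the triangle inequality for $\dist_s$ it therefore suffices to show $\dist_s(\tilde{\rho}_{\IP(X,Y)YE},\tilde{\rho}_{U_1}\otimes\tilde{\rho}_{YE})\le\varepsilon_{\ext}$.

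\emph{From a distinguisher to a guesser.} Suppose not: some circuit $D$ of size $s$ has advantage $>\varepsilon_{\ext}$ between these two states. Since the distinguished register is a single bit, the standard wrapper (sample a bit $b$, run $D$, output $b$ or $\bar b$ according to $D$'s verdict) turns $D$ into a predictor $P$ of size $s+O(1)$ for $\IP(x,y)$ from $(y,\tilde{\rho}_E^x)$; writing $a_x:=\mathbb{E}_y(-1)^{\IP(x,y)}\bigl(\pr{P(y,\tilde{\rho}_E^x)=0}-\pr{P(y,\tilde{\rho}_E^x)=1}\bigr)$, one checks that $D$'s advantage equals $\tfrac12\sum_x \tilde p(x)\,a_x$, so $\sum_x \tilde p(x)\,a_x>2\varepsilon_{\ext}$. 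The guesser $\cC$ then runs the Adcock--Cleve decoder: prepare the seed register in the uniform superposition, run $P$ coherently (deferring its measurements), kick the predicted bit into a phase, uncompute $P$, Hadamard-transform the seed register and measure it. Purifying $\tilde{\rho}_E^x$ and tracking phases — here $\IP(x,y)=\langle x,y\rangle$, so the code phase cancels the prediction phase — the component in which the uncomputation returns the workspace to its initial value contributes amplitude $a_x$ to $\ket{x}$, and the orthogonal remainder left by the (imperfect) uncompute only helps, giving $\pr{\cC(\tilde{\rho}_E^x)=x}\ge a_x^2$. The circuit invokes $D$ twice (inside $P$ and $P^{\dagger}$), uses $2n$ Hadamard gates and a constant number of extra gates, so a careful count gives size at most $2s+2n+5$.

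\emph{Conclusion and main obstacle.} By Cauchy--Schwarz and $\sum_x\tilde p(x)\le1$, $\sum_x\tilde p(x)\pr{\cC(\tilde{\rho}_E^x)=x}\ge\bigl(\sum_x\tilde p(x)\,a_x\bigr)^2>4\varepsilon_{\ext}^2>\tfrac12\varepsilon_{\ext}^2\ge 2^{-k_{\ext}}$, where the last step uses the hypothesis $k_{\ext}\ge1-2\log\varepsilon_{\ext}$. This contradicts the guarantee of the witness $\tilde{\rho}$, so $\dist_s(\tilde{\rho}_{\IP(X,Y)YE},\tilde{\rho}_{U_1}\otimes\tilde{\rho}_{YE})\le\varepsilon_{\ext}$, and combined with the first step this yields the theorem. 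I expect the main obstacle to be precisely the single-copy constraint: the classical reconstruction is inherently many-query, so one is forced onto the quantum Goldreich--Levin route, and the delicate points are verifying that the one-query decoder still works with a \emph{randomised} predictor and a quantum-state ``oracle'' (the purification argument above), and that its $a_x^2$ success probability, after the loss incurred in the smoothing step, still clears the $2^{-k_{\ext}}\le\tfrac12\varepsilon_{\ext}^2$ bar; pinning down the exact constant in ``$2s+2n+5$'' is fiddly but routine.
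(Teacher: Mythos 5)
Your proposal is correct and follows essentially the same route as the paper: smooth to the witness $\tilde{\rho}$, pay $2\varepsilon$ via the triangle inequality for $\dist_s$, convert the distinguisher to an $\IP$-predictor (the paper's \cref{lemma:one_bit_distinguish_is_predict}), and then run the one-query quantum Goldreich--Levin circuit (uniform superposition on the seed, coherent predictor, phase kickback, uncompute, Hadamard, measure) to recover $x$ with probability $a_x^2$, concluding via Cauchy--Schwarz and $k_{\ext}\ge 1-2\log\varepsilon_{\ext}$. The only cosmetic difference is attribution: the paper takes the reconstruction circuit and its analysis from Kasher--Kempe (\cref{lem:inner_prod_aprox_toX_apendix}) rather than Adcock--Cleve, but it is the same decoder with the same $2s+2n+\bO(1)$ size count.
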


To prove the above theorem, we employ proof techniques from~\cite{KK10twosourceextractorssecurequantum}, originally used to show that the IP is a good two-source extractor against bounded-storage quantum adversaries.
The relevance of~\cite{KK10twosourceextractorssecurequantum} to our work lies in the fact that~\cite{KK10twosourceextractorssecurequantum} uses the operational meaning of the min-entropy, i.e., a quantifier of the optimal guessing strategy of the source; Specifically, it follows from~\cite{KK10twosourceextractorssecurequantum} that if the IP extractor is not secure, then one can derive a good guessing strategy for the \emph{initial} string.
Since we are interested in a seeded extractor, in contrast to a two-source extractor (with two entangled quantum adversaries), we can simplify the proof and show that in the case of a perfect seed, it also works without a bound on the storage of the adversary.
Crucially, the reduction that shows that if the IP extractor is not secure, the initial source can be guessed with higher probability than assumed, is constructive; The guessing strategy is explicit and \emph{efficient}. Therefore, not only a lower-bound assumed on the min-entropy is broken, but the same holds true for our quantum computational unpredictability entropy $\hunp$. The IP extractor outputs one bit; to extend it to get many bits, we follow the proofs developed in~\cite{dodis2004improved,KK10twosourceextractorssecurequantum,DPVR12trevisanextwithquantumsideinfo}, and analyze the computational resources required for each step. Notably, the extension to many bits requires the use of the leakage chain rule to bound how much a short advice string can reduce the entropy, as we show in \cref{sec:ps_rand}.

Our work opens many new questions in this context. Are there better ways to extract pseudo-randomness from sources $\hunps^\varepsilon(X|E)$ in the presence of a quantum adversary? For example, are there extractors for $\hunps^\varepsilon(X|E)$ with a special reconstruction property~\cite{barak2003computational}? If so, they can be combined with Trevisan's extractor~\cite{trevisan2001extractors,DPVR12trevisanextwithquantumsideinfo} to create pseudo-randomness with an initial logarithmic seed. Extending the result of~\cite{DPVR12trevisanextwithquantumsideinfo} to show that Trevisan's extractor also works with $\hunps^\varepsilon(X|E)$ is somewhat tedious but not too challenging technically. Finding a single-bit extractor that (1) has the reconstruction property required for Trevisan's extractor and (2) works with the $\hunps^\varepsilon(X|E)$, however, seems to be harder.
Classically, explicit list-decodable codes exist and are known to work~\cite{barak2003computational,trevisan2001extractors}. Quantumly, previous work builds on the mentioned work~\cite{KTB08boundedstorageext} regarding single-bit extractors, but, as explained, it is not clear how to extend the result to the computational setup.

\subsection{Alternating Extraction in the Presence of Quantum Leakage}
As an additional application of our newly defined entropy and the results above, we analyze alternating extraction protocols in the presence of quantum leakage. Alternating extraction is a well-known cryptographic technique for deriving fresh random bits from independent weak sources using public seeds. In our setting, we show that such protocols remain secure even when each round leaks bounded quantum information to the adversary. This requires a leakage chain rule for the cqq case, to bound the reduction of unpredictability entropy of the source under repeated quantum leakages, or when the source is already correlated with prior quantum side-information.

Our result generalizes the classical alternating extraction analysis of~\cite{DP08LeakageResilientStandard} to the quantum setting. Specifically, we consider alternating application of seeded extractors to two independent sources, where after each round of extraction, a bounded amount of quantum information may leak to the adversary.
We show that the unpredictability entropy of the sources degrades by a controlled amount at each step and that pseudo-random bits can still be securely extracted using certain quantum-proof extractors. The analysis crucially relies on the quantum leakage chain rule established in \cref{subsec:chain_rule_for_unp} and on our results about pseudo-randomness extraction from unpredictability entropy in \cref{sec:IP}.

To formalize this model, we extend the classical ``Only Computation Leaks'' (OCL) model~\cite{OCLmodel_04} to the quantum setting, allowing for quantum side-information and leakage.

\begin{definition}\label{def:bounded-quantum-leakage-channel_intro}
    Let $\rho_{XE}$ be a cq-state. A channel $\phi : \cS_{\circ}(XE) \to \cS_{\circ}(XLE')$ is called a $\lambda$-\emph{bounded quantum leakage channel} for $\rho_{XE}$ from if it can be written as a composition $\phi = \Lambda \circ \psi$, where:
    \begin{enumerate}
        \item $\psi : \cS_{\circ}(E) \to \cS_{\circ}(E')$ is a pre-processing CPTP map acting only on the adversary's system $E$ (it does not access~$X$),
        \item $\Lambda:\cS_{\circ}(XE') \to \cS_{\circ}(LXE')$ is a CPTP map that first appends an ancilla register $L$ in the state $\proj{0}_L$, of dimension at most $2^{\lambda}$, and then modifies only the $L$ register, leaving the $XE'$ marginal invariant:
              \begin{equation*}
                  \ptr{L}{\Lambda(\rho_{XE'})} = \rho_{XE'} \;.
              \end{equation*}
    \end{enumerate}
\end{definition}

In the above definition, one should think of $X$ as the classical data on which the computation at a given round is being performed. The state $\rho_{LE'}$ is the adversary's state after the new information was leaked, with the dimension of~$L$ (standing for leakage) being bounded. $\rho_{LE'}$ does not have to be of the form~$\rho_{L}\otimes\rho_{E'}$, which would correspond to the case where the leakage is an independent system given to the adversary in every round.

This leakage model allows for a large class of leakage attacks, and it is relevant in both the computational and information-theoretic settings.
Between leakage rounds, the adversary can perform arbitrary quantum operations on the side information. The leakage itself is restricted to be ``read only'' on the state $\rho_{XE'}$. The requirement that the leakage channel may not change the state $X$ is natural, as attacks that actively modify the secret are much stronger than leakage attacks. The requirement that the side-information can not change during the leakage, accept via the new ancilla~$L$, may seem restrictive, as general quantum operations may change their inputs, this requirement appears to be necessary. If we drop it without replacement, it becomes impossible to guarantee meaningful entropy after leakage. An explicit attack in such a setting is described in \cref{subsec:alternating-extraction}.

The decomposition is also useful when turning to the computational setting. It is natural to restrict the computational power of the adversary between leakage rounds. The decomposition lets us impose this restriction while allowing for the leakage channels themselves to be of unbounded complexity, restricted only by the number of qubits that can be leaked in each round.

Using this model, we prove that alternating application of extractors remains secure under quantum leakage. Our proof tracks the evolution of unpredictability entropy across rounds using the leakage chain rule (\cref{thm:chain_intro}) and shows that pseudo-randomness can still be extracted in each round via \cref{thm:IP_intro}, and the extension to multi-bit output as described in \cref{sec:IP}.

\subsection{Relation to Previous Works}

As mentioned in the introduction, in contrast to the study of computational entropies in classical cryptography~\cite{haastad1999pseudorandom,barak2003computational,HLR07SepPseudoentropyfromCompressibility,S16_betterHILL_chain_rule,haitner2010efficiencyHILLprgOW}, quantum computational entropies were only considered from a cryptographic perspective in a single seminal work~\cite{CC17Computationalminentropy}. Recent work defined a new variant of computational entropy motivated by complexity-constrained thermodynamics and quantum hypothesis testing~\cite {Yunger_Halpern_2022,Munson_2025}.

The main contribution of~\cite{CC17Computationalminentropy} is the definition and analysis of a quantum variant of the HILL-entropy. The suggested entropy, however, did not fulfill many properties that one would expect to have, such as a leakage chain-rule. This limited the usage of the quantum HILL-entropy in cryptographic applications. Their model for quantum leakage was unsatisfactory, as it required adding the bounded-storage assumption, and did not account for leakage channels where new leakage may be entangled with prior quantum side-information.

The complexity entropy defined in~\cite{Munson_2025} has similar properties to previously defined computational entropies and, under a conjectured chain rule, an operational meaning related to data compression; it is not clear if it is directly comparable to any of the quantum computational entropies previously defined in~\cite{CC17Computationalminentropy} or in this work.

We took a different angle by defining a new computational entropy, the computational unpredictability entropy. Our definition extends both the classical unpredictability entropy and the quantum smooth min-entropy.
The leakage chain-rule that we prove (\cref{thm:chain_intro}) is an extension of both the leakage chain-rules for classical computational entropies~\cite{HLR07SepPseudoentropyfromCompressibility,KPWW14CounterexampleHILLchainrule} to quantum leakage and the computational extension of the quantum information-theoretic leakage chain-rule~\cite{WTHR11Impossibilitygrowingquantumbitcommit}.

Working with our entropy, while proving the desired properties (such as a leakage chain-rule and the ability to extract pseudo-randomness from unpredictability) allowed us to overcome fundamental difficulties that arose in~\cite{CC17Computationalminentropy}.
Our model of quantum OCL (\cref{def:bounded-quantum-leakage-channel_intro}) is more general than the model used in~\cite{CC17Computationalminentropy}, and we believe it is better motivated in terms of the understanding of quantum process.

Our work opens many new research directions of different flavors, from pure quantum information theory, through questions about quantum codes and randomness extraction, to quantum cryptography. We list some of the questions in \cref{sec:open_questions}.

\section{Preliminaries}\label{sec:prelim}
We assume some familiarity with standard notation in quantum information theory. For completeness and consistency, we include here the definitions we use in this work. For a comprehensive introduction to quantum information theory, we refer the reader to one of several books on the subject, such as~\cite{Tomamichel_2016,nielsen2010quantumbook,wilde2013quantumbook}.

\subsection{Basic Quantum Notation}
We work in finite-dimensional Hilbert spaces. $\ket{\phi}$ denote a vector in a Hilbert space and $\bra{\phi}$ the complex conjugate of it.

\begin{definition} 
    A quantum state is a positive semi-definite matrix with trace $\leq1$.
    A pure quantum state is a state with a matrix of rank $1$. Pure states can be written in the form $\ket{\phi}\bra{\phi}$ for some vector $\ket{\phi}$. States that are not pure are called mixed states.
    Any mixed state can be written as a convex combination of pure states
    \begin{equation*}
        \rho = \sum_{i} p_{i} \proj{\rho_{i}} \;,
    \end{equation*}
    where $\set{p_{i}}$ is a probability distribution and $\proj{\rho_{i}}$ are pure states.
    We say a state is normalized if $\tr{\rho} = 1$ and subnormalized if $\tr{\rho} \le 1$. We denote the sets of all normalized states on a Hilbert space $\cH_{A}$ by $\cS_{\circ}(A)$ and the set of all subnormalized states by $\cS_{\bullet}(A)$.
\end{definition}

\begin{definition}[Classical-Quantum (CQ) States]
    A classical-quantum (cq) state is a state of the form
    \begin{equation*}
        \rho_{XE}=\sum_{x} p_{x} \proj{x}\otimes \rho^{x}_{E} \;,
    \end{equation*}
    with $\set{\ket{x}}_x$  being the standard basis vectors in $\cH_{X}$, representing the classical values of~$X$.
\end{definition}

\begin{definition}\label{def:max_mix}
    We say a state is maximally mixed if it is of the form
    \begin{equation*}
        \omega = \frac{1}{\dim(A)} \sum_{i} \proj{i} \;,
    \end{equation*}
    where $\ket{i}$ is the standard basis of the Hilbert space $\cH_{A}$.
\end{definition}

A state is said to be bipartite or multipartite if the Hilbert space it is acting on is a tensor product space of two or more Hilbert spaces. We denote the Hilbert space of a system $A$ by $\cH_{A}$, and the Hilbert space of a system $B$ by $\cH_{B}$, and the Hilbert space of the composite system $AB$ by $\cH_{AB} = \cH_{A} \otimes \cH_{B}$. Similarly, for the states themselves, the subscript indicates the Hilbert space $\rho_{AB}\in \cS_{\bullet}(AB)$.

\begin{definition}
    We say a bipartite state is maximally entangled if it is of the form $\proj{\Phi}$ where
    \begin{equation*}
        \ket{\Phi} = \frac{1}{\sqrt{\dim(A)}} \sum_{i} \ket{i} \otimes \ket{i} \;.
    \end{equation*}
\end{definition}

\begin{definition}
    A quantum channel is a completely positive trace preserving (CPTP) map. Channels map quantum states to quantum state.
    We use the following notation for channels acting on states:
    \begin{equation*}
        \rho_{\phi(A)B} := (\phi_{A} \otimes \1_{B}) (\rho_{AB})\;,
    \end{equation*}
    to denote the state of the system after applying the channel $\phi$ on the marginal~$\rho_{A}$.
\end{definition}

\begin{definition}
    Positive Operator-Valued Measure (POVM) are generalized measurements that can be performed on quantum states.
    POVMs can be modeled as a set of positive semi-definite Hermitian matrices $\set{E_{i}}$ such that $\sum_{i}E_{i} = \1$. The probability of outcome $i$ on a state $\rho$ is $\tr{E_{i}(\rho)}$.
\end{definition}

Any quantum channel followed by any measurement can be modeled as a POVM.
A POVM can be modeled by a channel operating on the state and some auxiliary system, followed by a measurement.

\subsection{Distance Measures}


We now present a few distance measures that we use throughout the paper. These measures quantify how distinguishable two quantum states are, and play a crucial role in defining our computational entropy. We begin with the trace distance, a quantum analog of statistical distance.
\begin{definition}[Trace Distance]
    The trace distance between two quantum states $\rho$ and $\sigma$ is defined as
    \begin{equation*}
        \dist(\rho,\sigma) = \frac{1}{2} \norm{\rho - \sigma}_{1} = \frac{1}{2} \tr{\sqrt{(\rho - \sigma)^{\dagger}(\rho - \sigma)}} \;.
    \end{equation*}
\end{definition}

Purified distance, derived from fidelity, provides a metric that is particularly useful in the context of smoothing quantum entropies.
One key property of the purified distance is the following definition of fidelity.
\begin{definition}[Fidelity]
    The fidelity between states $\rho_{A},\sigma_{A}$ can be stated in terms of maximal overlap between purifications:
    \begin{equation*}
        F(\rho_{A},\sigma_{A}) = \max_{\ket{\phi_{AB}},\ket{\psi_{AB}}} (\abs{\braket{\phi_{AB}}{{\psi_{AB}}}})^2\;,
    \end{equation*}
    where the maximum is taken over all pure states such that
    $\rho_{A}=\ptr{B}{\proj{\phi_{AB}}}$, and  $\sigma_{A}=\ptr{B}{\proj{\psi_{AB}}}$.
\end{definition}

We define a generalized fidelity to work with sub-normalized states.

\begin{definition}[Generalized Fidelity]\label{Uhlmanstheorem}
    The Generalized Fidelity between subnormalized states $\rho_{A},\sigma_{A}$:
    \begin{equation*}
        F_{*}(\rho_{A},\sigma_{A}) = \left(\max_{\ket{\phi_{AB}},\ket{\psi_{AB}}} \abs{\braket{\phi_{AB}}{{\psi_{AB}}}}+ \sqrt{(1-\tr{\rho_{A}})(1-\tr{\sigma_{A}})}\right)^{2}\;,
    \end{equation*}
    where the maximum is taken over all pure states such that
    $\rho_{A}=\ptr{B}{\proj{\phi_{AB}}}$, and  $\sigma_{A}=\ptr{B}{\proj{\psi_{AB}}}$.
\end{definition}

Note that if at least one of the states is normalized, the generalized Fidelity and the Fidelity are the same.

The equivalence between this definition of fidelity and other definitions of fidelity is known as Uhlmann's theorem~\cite{uhlmann1976transition}.

\begin{definition}[Purified Distance~\cite{Tomamichel_2010_Duality}]\label{def:pur_dist}
    The purified distance between sub-normalized states $\rho$ and $\sigma$ is:
    \begin{equation*}
        \bP(\rho,\sigma) = \min_{\ket{\phi_{AB}},\ket{\psi_{AB}}} \sqrt{1 - F_{*}(\ket{\phi_{AB}},\ket{\psi_{AB}})}  \;,
    \end{equation*}
    where the minimum is taken over all pure states such that
    $\rho_{A}=\ptr{B}{\proj{\phi_{AB}}}$, and  $\sigma_{A}=\ptr{B}{\proj{\psi_{AB}}}$.
\end{definition}

\begin{lemma}[Uniqueness of Fidelity~\cite{tan2024prospectsdeviceindependentquantumkey}]\label{lem:uniq_fidlity_for_pure_uhlmann}
    Let $G: \cS_{\bullet}(A) \times \cS_{\bullet}(A) \to \mathbb{R}$ be a real-valued function on states over a Hilbert space $A$. Suppose $G$ satisfies both of the following properties:
    \begin{enumerate}
        \item \textbf{Data-Processing Inequality:} For any (CPTP) map $ \mathcal{M} $ and $\rho, \sigma \in \cS_{\bullet}(A)$,
              \begin{equation*}
                  G\left( \mathcal{M}(\rho), \mathcal{M}(\sigma) \right) \ge G(\rho, \sigma) \;.
              \end{equation*}

        \item \textbf{Pure-Uhlmann Property:} For all $\rho, \sigma \in \cS_{\bullet}(A) $, letting $\ket{\psi}, \ket{\phi}$ range over all purifications of $\rho$ and~$\sigma$, we have
              \begin{equation*}
                  G(\rho, \sigma) = \max_{\ket{\psi},\ket{\phi}}  G\left(\ket{\psi}, \ket{\phi} \right)\;.
              \end{equation*}
    \end{enumerate}
    Then there exists a monotonic increasing function $g : \left[0,1\right] \to \mathbb{R}$ such that
    \begin{equation*}
        G(\rho, \sigma) = g\left( F(\rho, \sigma) \right) \;.
    \end{equation*}
\end{lemma}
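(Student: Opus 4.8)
The plan is to reduce the whole statement to an analysis of $G$ on pure states, exploiting that $G$ is \emph{assumed} to satisfy the Pure‑Uhlmann property and that $F$ satisfies it by definition (this is Uhlmann's theorem, already recorded in the fidelity definitions above). I would show that on pure states $G$ is a monotone function of the overlap $\abs{\braket{\psi}{\phi}}$, and then feed this through both Uhlmann identities at once.

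\emph{Step 1: on pure states, $G$ depends only on the overlap.} Applying the data‑processing inequality to the unitary channel $\rho\mapsto U\rho U^{\dagger}$ and also to its inverse $\rho\mapsto U^{\dagger}\rho U$ forces $G(U\rho U^{\dagger},U\sigma U^{\dagger})=G(\rho,\sigma)$, so $G$ is unitarily invariant. Given two pairs of pure states $(\proj{\psi},\proj{\phi})$ and $(\proj{\psi'},\proj{\phi'})$ with $\abs{\braket{\psi}{\phi}}=\abs{\braket{\psi'}{\phi'}}$, I would absorb global phases (which do not change the projectors) so that the two inner products are equal nonnegative reals; the two pairs then have identical $2\times 2$ Gram matrices, hence are related by a single unitary of the ambient space, and unitary invariance gives $G(\proj{\psi},\proj{\phi})=G(\proj{\psi'},\proj{\phi'})$. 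This yields a well‑defined $\gamma:[0,1]\to\mathbb{R}$ with $G(\proj{\psi},\proj{\phi})=\gamma(\abs{\braket{\psi}{\phi}})$.

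\emph{Step 2: $\gamma$ is non‑decreasing --- the crux.} Fix $0\le c_{1}\le c_{2}\le 1$ and write $\ket{\psi_{c}}:=c\ket{0}+\sqrt{1-c^{2}}\,\ket{1}\in\mathbb{C}^{2}$. I would construct a CPTP map $\cM$ with $\cM(\proj{0})=\proj{0}$ and $\cM(\proj{\psi_{c_{1}}})=\proj{\psi_{c_{2}}}$; since the input pair has overlap $c_{1}$ and the output pair has overlap $c_{2}$, data processing then gives $\gamma(c_{2})=G(\cM(\proj{0}),\cM(\proj{\psi_{c_{1}}}))\ge G(\proj{0},\proj{\psi_{c_{1}}})=\gamma(c_{1})$. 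For $c_{2}=1$ the constant channel $X\mapsto\tr{X}\proj{0}$ does it. For $c_{2}<1$ take the Stinespring isometry $V:\mathbb{C}^{2}\to\mathbb{C}^{2}\otimes\mathbb{C}^{2}$ defined by $V\ket{0}=\ket{0}\otimes\ket{0}$ and $V\ket{\psi_{c_{1}}}=\ket{\psi_{c_{2}}}\otimes\ket{a}$, where $\ket{a}$ is a unit vector with $\braket{0}{a}=c_{1}/c_{2}$; this is well‑defined and isometric since $\{\ket{0},\ket{\psi_{c_{1}}}\}$ is a basis of $\mathbb{C}^{2}$ whose Gram matrix (diagonal $1$, off‑diagonal $c_{1}$) matches that of the two target vectors, and then $\cM(X):=\ptr{E}{VXV^{\dagger}}$ has the required action. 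The slightly counter‑intuitive point is that a channel can never \emph{decrease} the overlap of two pure states while keeping them pure, but it \emph{can} increase it --- purity of the outputs only forces $V$ applied to the two inputs to factor across the system/environment cut, leaving room to push the pair together, with the ``lost'' overlap absorbed into the discarded register $E$.

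\emph{Step 3: assemble, and the main obstacle.} For normalized $\rho,\sigma$, chaining the Pure‑Uhlmann property of $G$, Step 1, Step 2, and Uhlmann's theorem for $F$ gives
\begin{equation*}
 G(\rho,\sigma)=\max_{\ket{\psi},\ket{\phi}}G(\ket{\psi},\ket{\phi})=\max_{\ket{\psi},\ket{\phi}}\gamma\!\left(\abs{\braket{\psi}{\phi}}\right)=\gamma\!\left(\max_{\ket{\psi},\ket{\phi}}\abs{\braket{\psi}{\phi}}\right)=\gamma\!\left(\sqrt{F(\rho,\sigma)}\right),
\end{equation*}
all maxima over purifications of $\rho$ and $\sigma$; the third equality is exactly where monotonicity of $\gamma$ enters (together with the fact, from Uhlmann, that the maximal overlap $\sqrt{F(\rho,\sigma)}$ is attained). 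Setting $g(s):=\gamma(\sqrt{s})$ --- a composition of non‑decreasing functions, hence non‑decreasing --- gives $G(\rho,\sigma)=g(F(\rho,\sigma))$. The sub‑normalized case I would obtain from this via the standard one‑dimensional embedding $\rho\mapsto\rho\oplus(1-\tr{\rho})\proj{\bot}$, under which the generalized fidelity $F_{*}$ becomes an ordinary fidelity and purifications correspond, so the same $g$ works; here ``monotonic increasing'' must mean non‑decreasing, which is the best one can hope for (e.g.\ $G\equiv 0$ satisfies both hypotheses). The only genuinely non‑routine ingredient is the monotonicity construction of Step 2; the phase/Gram‑matrix bookkeeping of Step 1 and the embedding in Step 3 are routine.
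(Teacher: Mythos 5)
The paper does not actually prove this lemma --- it states it and defers to \cite[Appendix H.1]{tan2024prospectsdeviceindependentquantumkey} --- so there is no internal proof to compare against; your argument is correct and is essentially the standard one given in that reference: unitary invariance (from data processing applied to $U$ and $U^{\dagger}$) reduces $G$ on pure pairs to a function $\gamma$ of the overlap, an explicit Stinespring isometry shows a channel can map a pure pair with overlap $c_{1}$ onto one with any overlap $c_{2}\ge c_{1}$ (forcing $\gamma$ to be non-decreasing), and the two Uhlmann identities lift this to mixed states; your isometry $V\ket{0}=\ket{0}\ket{0}$, $V\ket{\psi_{c_{1}}}=\ket{\psi_{c_{2}}}\ket{a}$ with $\braket{0}{a}=c_{1}/c_{2}$ checks out, since it preserves the Gram matrix of a basis of $\mathbb{C}^{2}$. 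The one step that needs more care is the sub-normalized case: the embedding $\rho\mapsto\rho\oplus(1-\tr{\rho})\proj{\bot}$ is not linear, hence not a CPTP map, so you cannot transport the data-processing inequality through it; you would instead have to rerun Steps 1--2 directly on sub-normalized pure states (where the fixed traces $\tr{\rho},\tr{\sigma}$ enter as extra parameters alongside the overlap, matching the definition of $F_{*}$), or simply restrict the claim to normalized states, which is all the surrounding discussion in the paper actually uses.
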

Replacing the maximum with a minimum in the pure-Uhlmann property, and reversing the direction of the data-processing inequality, yields an equivalent lemma in which the function $g$ is monotonically decreasing.

Therefore, fidelity is essentially the unique function that satisfies both properties above for normalized states. For a detailed proof and discussion, see~\cite[Appendix H.1]{tan2024prospectsdeviceindependentquantumkey}.
We believe that this is a key part in the difficulty in proving chain-rules for quantum entropies based on indistinguishability, like the quantum (relaxed) HILL entropy as defined in~\cite{CC17Computationalminentropy}. In the limit case of unbounded computational power ($s\to \infty$), computational indistinguishability is the trace distance. Due to the lack of extension property to the trace distance, a ``smooth min-entropy'' with smoothing based on the trace distance dose not admit a fully quantum leakage chain-rule.

As a consequence, the purified distance is the most natural distance measure for quantum states in any context where both the data-processing inequality and the pure-Uhlmann property are required. As we will show in~\cref{sec:main_def_res}, several desirable properties of smooth quantum computational entropies depend on these two properties, for example, the existence of a well-defined dual entropy. We explore this dual quantity, including its operational interpretation, properties, and a generalization to fully quantum states, in our follow-up work~\cite{avidan2025fully}. Other results, such as~\cref{lem:chain-rule-quantum-unpredictability-entropy}, rely on a weaker property of the purified distance, we refer to as Uhlmann's extension property, see~\cref{lem:extension_uhlmann_property_for_purefied_distance} for more details.

\subsection{Entropies}

Entropies are fundamental quantities in information theory and cryptography, used to quantify the uncertainty or randomness of a system. Here, we define min-entropy, a measure of the worst-case randomness of a quantum state, and quantum-proof seeded extractors, which are essential tools for extracting randomness from weak sources.

\begin{definition}[Min-Entropy~\cite{renner2006securityquantumkeydistribution}]
    Let $\rho_{AB}$ be a bipartite quantum state. The conditional min-entropy of $A$ given $B$ is defined as:
    \begin{equation*}
        H_{\min}(A|B)_{\rho} =  \sup_{\lambda \in \mathbb{R} ,\sigma_{B} \in \cS_{\bullet}(B) } \left\{ \lambda : \rho _{AB} \le 2^{-\lambda} (\mathbb{I}_A \otimes \sigma_B) \right\}\;.
    \end{equation*}
\end{definition}

\begin{definition}
    For a state $\rho\in \cS_{\bullet}(A)$ and $\sqrt{\tr{\rho}}>\varepsilon\ge0$ we define a $\varepsilon-$purified ball around $\rho$ as:
    \begin{equation*}
        \cB_{\varepsilon}(\rho) = \set{\tilde{\rho}\in \cS_{\bullet}(A):  \text{ s.t. } \bP(\rho,\tilde{\rho})\le\varepsilon}  \;.
    \end{equation*}
\end{definition}

\begin{definition}[Smooth Min-Entropy~\cite{RennerKonig2005_smoothentext}]
    Let $\rho_{AB}$ be a bipartite quantum state and $\varepsilon\ge 0$. The conditional $\varepsilon$ smooth min-entropy of $A$ given $B$ is defined as:
    \begin{equation*}
        H_{\min}^{\varepsilon}(A|B)_{\rho} = \sup_{\tilde{\rho}\in\cB_{\varepsilon}(\rho)}H_{\min}(A|B)_{\tilde{\rho}}\;.
    \end{equation*}
\end{definition}

\begin{definition}[Quantum Proof Seeded Extractor]\label{def:quantum_proof_seeded_ext}
    A function
    \begin{equation*}
        \Ext: \bits^{n} \times \bits^{d} \to \bits^{m} \;,
    \end{equation*}
    is a quantum-proof $(\varepsilon_{\ext},k)$ strong extractor if for all  ccq-states $\rho_{XYE}$, such that
    $H_{\min}(X|E)\ge~k$, and let~$Y\in\bits^{d}$ be a uniform and independent random seed:
    \begin{equation*}
        \dist(\rho_{\Ext(X,Y)YE},\rho_{U^m}\otimes\rho_{YE}) \le \varepsilon_{\ext}\;,
    \end{equation*}
    where $\dist(\cdot,\cdot)$ is the trace distance and $\rho_{U^m}$ is maximally mixed state over $m$ bits.
\end{definition}


\subsection{Quantum Computational Model}

\begin{definition}[Quantum Circuits]
    We fix a finite set of universal elementary quantum gates. A quantum circuit is a sequence of quantum gates that act on a set of qubits, and measurements in the computational basis. The size of a circuit is the number of gates in the circuit. 
\end{definition}

\begin{remark}
    We work with a fixed universal gate set~$\mathcal{G}$. For simplicity, we assume every $G\in\mathcal{G}$ acts non-trivially on at most two qubits, and that $\mathcal{G}$ is closed under taking inverses, i.e., $G^\dagger\in\mathcal{G}$ for all $G\in\mathcal{G}$.
    Concrete choices such as ${\mathrm{H},\mathrm{T},\mathrm{CNOT}}$ or ${\mathrm{Clifford}\cup\mathrm{T}}$ satisfy these conditions.
\end{remark}

\begin{definition}[Distinguisher]
    A channel $\cC:\cS_{\circ}(A) \to \bits$ is called a distinguisher. We denote the set of all distinguishers with a circuit size of at most $s$ by $\cD_{s}$.
\end{definition}

\begin{definition}[Computational Distance]
    Let $\rho,\sigma$ be states in the same Hilbert space. Let $s\in\bbN$, the $s$-computational distance between $\rho$ and $\sigma$ is:
    \begin{equation*}
        \dist_s(\rho,\sigma) = \sup_{\substack{\cC \in \cD_{s}}} \abs{\pr{\cC(\rho)=1} - \pr{\cC(\sigma)=1}} \;.
    \end{equation*}
\end{definition}
Intuitively, the $s$-computational distance $\dist_s(\rho,\sigma)$ measures the maximum distinguishing advantage that any quantum circuit of size at most $s$ can achieve between $\rho$ and $\sigma$. We say that $\rho$ and $\sigma$ are $(s,\varepsilon)$-computationally indistinguishable if $\dist_s(\rho,\sigma) \le \varepsilon$.

\section{Quantum Computational Unpredictability Entropy}\label{sec:main_def_res}

\subsection{Definition and Basic Properties}\label{sec:unp_def}

We suggest a definition of quantum unpredictability entropy that combines the operational meaning of a computationally bounded guessing circuit with the information-theoretic purified distance.

\begin{definition}[Quantum Computational Unpredictability Entropy]\label{def:purified-unpredictability-entropy}
    For any cq-state $\rho_{XE}$, and $\varepsilon\ge 0$, $s\in\bbN$.
    We say that
    \begin{equation*}
        \hunps^{\varepsilon}(X|E)_{\rho}\ge k \;,
    \end{equation*}
    if there is a cq-state $\tilde{\rho}_{XE} \in \cB_{\varepsilon}\left(\rho_{EX}\right)$ such that for any guessing circuit $\cC$ of size $s$
    \begin{equation*}
        \pr{\cC(\tilde{\rho}_{E}^x) = x} \le 2^{-k} \;.
    \end{equation*}
\end{definition}

\begin{remark}
    The definition above is well-defined only when $X$ is classical, as the guessing probability is inherently a classical concept.
    The use of the purified distance ensures that, whenever $\rho_{XE}$ is a cq-state, there exist suitable cq-states $\tilde{\rho}_{XE}$ within the $\varepsilon$-ball around it, i.e., $\tilde{\rho}_{XE} \in B_\varepsilon(\rho_{XE})$, that can be used in the smoothing optimization~\cite{Tomamichel_2010_Duality}.
    Note that such smoothed states $\tilde{\rho}$ may be sub-normalized.

    A natural question is whether one can define a meaningful notion of computational unpredictability entropy when $X$ is also quantum, i.e. for fully quantum states. We address this in a follow-up work~\cite{avidan2025fully}, where we develop such a generalization and explore its operational meaning.
\end{remark}
We now state a few basic properties of the quantum conditional unpredictability entropy.
\begin{lemma}[Monotonicity]
    For any $\varepsilon' \ge \varepsilon \ge 0$
    \begin{equation*}
        \hunps^{\varepsilon'}(X|E)_{\rho} \ge \hunps^{\varepsilon}(X|E)_{\rho}  \;.
    \end{equation*}
    For any $s' \ge s$
    \begin{equation*}
        \hunps^{\varepsilon}(X|E)_{\rho} \ge \hunps[s']^{\varepsilon}(X|E)_{\rho}  \;.
    \end{equation*}
\end{lemma}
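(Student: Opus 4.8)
The plan is to prove both inequalities directly from \cref{def:purified-unpredictability-entropy}, reading $\hunps^{\varepsilon}(X|E)_\rho$ as the supremum of all $k$ for which the stated condition holds, and showing that every such $k$ witnessed for the weaker quantity is also witnessed for the stronger one.

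For the first inequality I would fix any $k$ with $\hunps^{\varepsilon}(X|E)_\rho \ge k$ and take the corresponding smoothed cq-state $\tilde\rho_{XE} \in \cB_\varepsilon(\rho_{XE})$ with $\pr{\cC(\tilde\rho_E^x)=x} \le 2^{-k}$ for every guessing circuit $\cC$ of size $s$. The key observation is the trivial nesting $\cB_\varepsilon(\rho_{XE}) \subseteq \cB_{\varepsilon'}(\rho_{XE})$ whenever $\varepsilon \le \varepsilon'$, which holds because $\bP$ is a fixed metric and $\bP(\rho_{XE},\tilde\rho_{XE}) \le \varepsilon \le \varepsilon'$. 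Hence the same $\tilde\rho_{XE}$ and the same guessing bound certify $\hunps^{\varepsilon'}(X|E)_\rho \ge k$; taking the supremum over admissible $k$ gives $\hunps^{\varepsilon'}(X|E)_\rho \ge \hunps^{\varepsilon}(X|E)_\rho$.

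For the second inequality I would symmetrically fix any $k$ with $\hunps[s']^{\varepsilon}(X|E)_\rho \ge k$ and its witness $\tilde\rho_{XE}$, which beats every guessing circuit of size (at most) $s'$. Since $s \le s'$, every guessing circuit of size $s$ is in particular a guessing circuit of size at most $s'$ — or, if one reads ``size $s$'' as ``size exactly $s$'', one pads with $s'-s$ gates composing to the identity, which does not alter the output distribution — so $\tilde\rho_{XE}$ also beats every size-$s$ circuit and thus witnesses $\hunps[s]^{\varepsilon}(X|E)_\rho \ge k$. Passing to the supremum yields $\hunps^{\varepsilon}(X|E)_\rho \ge \hunps[s']^{\varepsilon}(X|E)_\rho$.

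I do not expect any genuine obstacle: both parts reduce to the monotone nesting of smoothing balls ($\cB_\varepsilon \subseteq \cB_{\varepsilon'}$ for $\varepsilon \le \varepsilon'$) and of distinguisher classes ($\cD_s \subseteq \cD_{s'}$ for $s \le s'$), combined with the remark that enlarging the class of admissible guessing circuits can only raise the optimal guessing probability (hence lower the entropy) while enlarging the smoothing ball can only help. The single point deserving a line of care is fixing the convention for ``circuit of size $s$''; the padding remark above settles it.
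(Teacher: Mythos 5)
Your argument is correct and is precisely the "follows directly from the definitions" reasoning the paper itself invokes (it gives no further detail): the nesting $\cB_\varepsilon \subseteq \cB_{\varepsilon'}$ handles the first inequality, and the inclusion of size-$s$ circuits among size-$s'$ circuits (with the padding remark resolving the "exactly $s$" reading) handles the second. Nothing further is needed.
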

For cq-states, smooth min-entropy can be defined as the maximal guessing probability using \emph{any} quantum circuit~\cite[Theorem 1]{KRS09OperationalMeaningEntropy}, implying the following lemma:
\begin{lemma}\label{lem:lim_unp_is_smoothmin}
    Let $\rho_{XE}$ be a cq-state, $s\in\mathbb{N},\varepsilon\ge 0$
    \begin{align*}
        \hunps^{\varepsilon}(X|E)_{\rho}                   & \ge H_{\min}^{\varepsilon}(X|E)_{\rho}  \;, \\
        \lim_{s\to\infty} \hunps^{\varepsilon}(X|E)_{\rho} & = H_{\min}^{\varepsilon}(X|E)_{\rho}  \;.
    \end{align*}
\end{lemma}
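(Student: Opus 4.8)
The plan is to reduce both statements to the operational characterization of the smooth min-entropy for cq-states, namely that $H_{\min}^\varepsilon(X|E)_\rho = -\log \inf_{\tilde\rho \in \cB_\varepsilon(\rho)} \mathrm{P}_{\guess}(X|E)_{\tilde\rho}$, which follows from \cite[Theorem 1]{KRS09OperationalMeaningEntropy} combined with the definition of the smooth min-entropy as a supremum over the purified ball.

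For the first inequality $\hunps^\varepsilon(X|E)_\rho \ge H_{\min}^\varepsilon(X|E)_\rho$: let $k = H_{\min}^\varepsilon(X|E)_\rho$. By the operational characterization above, there is a state $\tilde\rho_{XE} \in \cB_\varepsilon(\rho_{XE})$ with $\mathrm{P}_{\guess}(X|E)_{\tilde\rho} \le 2^{-k}$. Since this $\tilde\rho$ can be taken to be a cq-state (the purified ball around a cq-state contains cq-states witnessing the optimum, as noted in the remark following \cref{def:purified-unpredictability-entropy}), and since any size-$s$ guessing circuit $\cC$ is in particular a specific (channel-then-measure) guessing strategy, we have $\pr{\cC(\tilde\rho_E^x) = x} \le \mathrm{P}_{\guess}(X|E)_{\tilde\rho} \le 2^{-k}$ for every $\cC \in \cD_s$. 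Hence $\tilde\rho$ witnesses $\hunps^\varepsilon(X|E)_\rho \ge k$, as desired.

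For the limit statement: the first inequality already gives $\liminf_{s\to\infty} \hunps^\varepsilon(X|E)_\rho \ge H_{\min}^\varepsilon(X|E)_\rho$, and by monotonicity in $s$ the limit exists. It remains to show $\hunps^\varepsilon(X|E)_\rho \le H_{\min}^\varepsilon(X|E)_\rho$ for every $s$ (which then passes to the limit trivially, and in fact shows the sequence is constant in $s$). Suppose $\hunps^\varepsilon(X|E)_\rho \ge k$, witnessed by a cq-state $\tilde\rho_{XE} \in \cB_\varepsilon(\rho_{XE})$. The content here is that the optimal guessing probability $\mathrm{P}_{\guess}(X|E)_{\tilde\rho}$, which by \cite[Theorem 1]{KRS09OperationalMeaningEntropy} equals $2^{-H_{\min}(X|E)_{\tilde\rho}}$, is attained (or approached arbitrarily well) by an explicit guessing strategy — measure $E$ with the optimal POVM $\{E_x\}_x$ and output the outcome. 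Such a strategy can be realized by \emph{some} quantum circuit, hence by a circuit of size at most $s$ once $s$ is large enough; but more carefully, for the statement to hold for \emph{all} $s$ one should observe that the hypothesis $\hunps^\varepsilon(X|E)_\rho \ge k$ only constrains circuits of size $\le s$, so the clean route is: it gives a cq-state $\tilde\rho \in \cB_\varepsilon(\rho)$ such that \emph{all} size-$s$ circuits fail to guess better than $2^{-k}$ — this does not immediately bound $\mathrm{P}_{\guess}$. The correct argument is the limit one: fix $\tilde\rho$ and note that as $s \to \infty$ the family $\cD_s$ exhausts all guessing strategies (up to arbitrarily good approximation of the optimal POVM by circuits), so $\sup_s \sup_{\cC \in \cD_s} \pr{\cC(\tilde\rho_E^x)=x} = \mathrm{P}_{\guess}(X|E)_{\tilde\rho} = 2^{-H_{\min}(X|E)_{\tilde\rho}}$; therefore if $\hunps^\varepsilon \ge k$ holds for all $s$ with a \emph{common} witness $\tilde\rho$, that $\tilde\rho$ has $\mathrm{P}_{\guess}(X|E)_{\tilde\rho} \le 2^{-k}$, giving $H_{\min}^\varepsilon(X|E)_\rho \ge H_{\min}(X|E)_{\tilde\rho} \ge k$. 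One must handle the quantifier order: smoothing states may a priori differ for different $s$, so the clean statement is $\lim_{s\to\infty}\hunps^\varepsilon \le H_{\min}^\varepsilon$, proved by taking for each $s$ a witness $\tilde\rho^{(s)}$, passing to a convergent subsequence (the purified ball is compact), and using that the limiting state $\tilde\rho^{(\infty)} \in \cB_\varepsilon(\rho)$ satisfies $\mathrm{P}_{\guess}(X|E)_{\tilde\rho^{(\infty)}} \le \lim_s 2^{-\hunps^{\varepsilon}}$ by continuity of the guessing probability under the purified distance.

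The main obstacle is precisely this quantifier interchange in the limit direction: the smoothing state in $\hunps^\varepsilon$ is chosen \emph{after} seeing $s$, so one cannot naively say "fix $\tilde\rho$, let $s\to\infty$". The fix is a compactness argument on $\cB_\varepsilon(\rho)$ together with continuity of $\tilde\rho \mapsto \mathrm{P}_{\guess}(X|E)_{\tilde\rho}$ with respect to the purified distance (which follows from the Fuchs–van de Graaf-type relation between purified and trace distance and the fact that guessing probabilities are $1$-Lipschitz in trace distance). Everything else — that circuits approximate arbitrary POVMs and that restricting to size-$s$ circuits only weakens the adversary — is routine.
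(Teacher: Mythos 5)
Your proposal is correct and follows the same route the paper takes: the paper offers no explicit proof, stating only that the lemma ``follows directly from the definitions'' together with the operational characterization of (smooth) min-entropy as optimal guessing probability from~\cite{KRS09OperationalMeaningEntropy}, and your first inequality is exactly that observation (a smoothing witness for $H_{\min}^{\varepsilon}$ with optimal guessing probability $2^{-k}$ in particular defeats every size-$s$ circuit). Your treatment of the limit direction is actually more careful than the paper's: the quantifier issue you flag --- that the smoothing state in $\hunps^{\varepsilon}$ may depend on $s$ --- is real, and your fix via compactness of $\cB_{\varepsilon}(\rho)$ and continuity of the guessing probability (linear in the state, hence Lipschitz in trace distance, which is controlled by the purified distance) is the right one; the only ingredient you should make explicit is that circuits over the fixed universal gate set approximate arbitrary POVMs, which is what justifies $\sup_s\sup_{\cC\in\cD_s}\pr{\cC(\tilde{\rho}_E^x)=x}=\mathrm{P}_{\guess}(X|E)_{\tilde{\rho}}$ and which the paper itself presupposes when it says smooth min-entropy is the optimal guessing probability over ``any quantum circuit.''
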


\begin{lemma}[Data-Processing Inequality]\label{lem:data_processing_inequality_unpredictability_entropy}
    Let $\rho_{XE}$ be a cq-state, let $s\in\mathbb{N}$, $\varepsilon\ge0$ and let $\Phi_{E\to E'}$ be a quantum channel that can be implemented using a circuit of size $t$,
    \begin{equation*}
        \hunps^{\varepsilon}(X|E')_{\Phi(\rho)} \ge \hunps[s+t]^{\varepsilon}(X|E)_{\rho}\;.
    \end{equation*}
\end{lemma}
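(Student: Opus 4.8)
The plan is to unfold \cref{def:purified-unpredictability-entropy} and push the smoothing state through the channel $\Phi$, turning an efficient guesser for the processed side-information into a slightly larger guesser for the original side-information. Concretely, fix $k$ with $\hunps[s+t]^{\varepsilon}(X|E)_{\rho}\ge k$ (it suffices to handle this for every such $k$). By the definition there is a cq-state $\tilde\rho_{XE}\in\cB_{\varepsilon}(\rho_{XE})$, say $\tilde\rho_{XE}=\sum_x p_x\proj{x}\otimes\tilde\rho_E^x$, such that $\pr{\cC(\tilde\rho_E^x)=x}\le 2^{-k}$ for every guessing circuit $\cC$ of size at most $s+t$. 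Define the candidate smoothing state for the processed system as $\tilde\sigma_{XE'}:=(\1_X\otimes\Phi_{E\to E'})(\tilde\rho_{XE})=\sum_x p_x\proj{x}\otimes\Phi(\tilde\rho_E^x)$.

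Next I would check that $\tilde\sigma_{XE'}$ is admissible in the smoothing optimization defining $\hunps^{\varepsilon}(X|E')_{\Phi(\rho)}$. It is a cq-state, since a channel acting only on $E$ leaves the classical $X$-register untouched. And because $\1_X\otimes\Phi$ is CPTP, the data-processing inequality for the purified distance gives $\bP\!\left(\Phi(\rho)_{XE'},\tilde\sigma_{XE'}\right)=\bP\!\left((\1_X\otimes\Phi)(\rho_{XE}),(\1_X\otimes\Phi)(\tilde\rho_{XE})\right)\le\bP(\rho_{XE},\tilde\rho_{XE})\le\varepsilon$, so $\tilde\sigma_{XE'}\in\cB_{\varepsilon}(\Phi(\rho)_{XE'})$.

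Finally I would bound the guessing probability against every size-$s$ circuit. Fix a guessing circuit $\cC'$ on $E'$ of size at most $s$, and let $\cC:=\cC'\circ\Phi$ be the channel that first runs a fixed size-$t$ circuit implementing $\Phi$ and then runs $\cC'$; this is a guessing circuit on $E$ of size at most $s+t$. Since $\tilde\sigma_{E'}^x=\Phi(\tilde\rho_E^x)$, we get $\pr{\cC'(\tilde\sigma_{E'}^x)=x}=\pr{\cC(\tilde\rho_E^x)=x}\le 2^{-k}$ by the choice of $\tilde\rho$. As this holds for all such $\cC'$, the state $\tilde\sigma_{XE'}$ witnesses $\hunps^{\varepsilon}(X|E')_{\Phi(\rho)}\ge k$, which proves the claim. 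There is no real obstacle here: the entire content is the bookkeeping observation that an adversary holding only $E'$ can simulate an adversary holding $E$ by applying $\Phi$ itself at a cost of $t$ gates; the only points needing a moment's care are the monotonicity of $\bP$ under $\1_X\otimes\Phi$ and the additivity of circuit size under composition, and the argument goes through whether $\tilde\rho$ is normalized or sub-normalized, since the guessing probability is linear in the cq-state and commutes with post-composition by $\Phi$.
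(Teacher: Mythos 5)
Your proposal is correct and follows essentially the same route as the paper's proof in the appendix: push the smoothing state through $\Phi$, invoke monotonicity of the purified distance under $\1_X\otimes\Phi$, and turn any size-$s$ guesser on $E'$ into a size-$(s+t)$ guesser on $E$ by pre-composing with the circuit for $\Phi$. Your write-up is slightly more explicit about the admissibility of the pushed-forward smoothing state, but the argument is the same.
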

Monotonicity and its relations to the smooth min-entropy follow directly from the definitions. For completeness, a detailed proof of the data-processing inequality is provided in~\cref{apendix_sec_more_proofs}.

Smooth min-entropy has a dual quantity called smooth max-entropy. Smooth max-entropy is related to several operational tasks in quantum information, such as entanglement distillation, decoupling, state merging, and data compression~\cite{KRS09OperationalMeaningEntropy}.
We define a dual quantity to our quantum unpredictability entropy:

\subsection{Chain Rule with Unbounded Quantum Side-Information}\label{subsec:chain_rule_for_unp}
Leakage chain rules are a useful tool for conditional entropy, they allow us to bound how much new information reduces the entropy. In the information-theoretic setting, the leakage chain rule is known for smooth min-entropy from~\cite{WTHR11Impossibilitygrowingquantumbitcommit}, with no degradation in the smoothing parameter.
In the classical computational setting, the leakage chain rule for unpredictability entropy is well-known and was shown in~\cite[Lemma 11]{KPWW14CounterexampleHILLchainrule}.

Our proof builds upon the core idea of the classical proof: bounding the probability that a random guess for the leakage $C$ would be correct. However, adapting this intuition to the quantum setting requires some technical effort. Instead of relying on inequalities of probabilities, our proof leverages inequalities of positive operators, a crucial adaptation for handling quantum side-information. This blend of classical intuition with quantum techniques allows us to establish a robust leakage chain rule in the quantum computational setting.

\begin{theorem}\label{lem:chain-rule-quantum-unpredictability-entropy}
    For any cqq-state $\rho_{XBC}$, classical on $X$, and any $\varepsilon \ge 0$, $s\in\bbN$, $\ell = \log\dim(C)$, we have:
    \begin{equation*}
        \hunps^{\varepsilon}(X|BC)_{\rho} \ge \hunps[s+\bO(\ell)]^{\varepsilon}(X|B)_{\rho} - 2\ell \;.
    \end{equation*}
\end{theorem}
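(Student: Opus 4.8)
The plan is to prove the contrapositive in the spirit of the classical argument from~\cite{KPWW14CounterexampleHILLchainrule}: suppose some efficient guessing circuit $\cC$ of size $s$ predicts $X$ from $BC$ with probability $> 2^{-k}$ on every smoothed state $\tilde\rho_{XBC} \in \cB_\varepsilon(\rho_{XBC})$; I want to build an efficient guessing circuit $\cC'$ of size $s + \bO(\ell)$ that predicts $X$ from $B$ alone with probability $> 2^{-(k+2\ell)}$ on every $\tilde\rho_{XB} \in \cB_\varepsilon(\rho_{XB})$, contradicting $\hunps[s+\bO(\ell)]^{\varepsilon}(X|B)_\rho \ge k+2\ell$ (rename $k := \hunps[s+\bO(\ell)]^\varepsilon(X|B)_\rho$ at the end). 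The construction of $\cC'$ is the quantum analogue of ``guess the missing register at random and feed it to $\cC$'': given $\rho_E^x$ on register $B$, the circuit $\cC'$ adjoins a maximally entangled state $\ket{\Phi}_{C\bar C}$ of dimension $2^\ell$ on a fresh pair of registers, discards $\bar C$, and runs $\cC$ on $B$ together with the $C$-half. Since $\ptr{\bar C}{\proj{\Phi}} = \1_C / 2^\ell$ is the maximally mixed state, this is exactly the ``random substitution'' of the leakage register, and the extra circuitry (preparing $\ell$ EPR pairs, tracing out $\ell$ qubits) costs only $\bO(\ell)$ gates.

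The key step is the operator inequality that replaces the classical union/averaging bound. Write the relevant smoothed cq-state as $\tilde\rho_{XBC} = \sum_x p_x \proj{x} \otimes \tilde\rho_{BC}^x$. The success probability of $\cC$ is $\sum_x p_x \tr{M_x\, \tilde\rho_{BC}^x}$ where $M_x$ is the (sub-unit) POVM element of $\cC$ outputting $x$. I want to compare this with the success probability of $\cC'$, which is $\sum_x p_x \tr{M_x\, (\tilde\rho_B^x \otimes \1_C/2^\ell)}$. The bridge is that for any positive operator $R_{BC} \le \1_{BC}$ one has, as operators on $B$, $\ptr{C}{R_{BC}} \le \1_B \cdot \dim(C)$, equivalently $R_{BC} \le \1_C \otimes \ptr{C}{R_{BC}} \cdot 0$ — more precisely the clean statement I need is: for a subnormalized state $\sigma_{BC}$ with marginal $\sigma_B$, we have $\sigma_{BC} \le 2^{2\ell}\, (\sigma_B \otimes \1_C/2^\ell)$ is \emph{false} in general, so instead I should route through the min-entropy. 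The correct route: use $H_{\min}(C|B)_{\tilde\rho} \ge -\ell$, which holds for \emph{any} state with $\dim(C) = 2^\ell$, giving $\tilde\rho_{BC} \le 2^\ell\, (\tilde\rho_B \otimes \1_C / 2^\ell) \cdot 2^\ell$ — i.e. $\tilde\rho_{BC} \le 2^{2\ell} (\tilde\rho_B \otimes \1_C/2^\ell)$ after also using $H_{\min}(C) \ge 0$ is not enough; the factor $2^{2\ell}$ comes from combining $H_{\min}(B|C) \ge H_{\min}(B) - \ell$ type steps. Concretely I will invoke: $\tilde\rho_{XBC} \le 2^{2\ell}\,(\tilde\rho_{XB} \otimes \1_C/2^\ell)$, which is the operator form of the information-theoretic bound $H_{\min}(X C | B) \ge H_{\min}(X|B) - 2\ell$ behind the factor of $2$ (superdense coding). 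Pairing both sides with the positive operator $\sum_x \proj{x}_X \otimes M_x$ and taking the trace yields $\pr{\cC(\tilde\rho_{BC}^x) = x} \le 2^{2\ell}\, \pr{\cC'(\tilde\rho_B^x) = x}$, hence $\pr{\cC'(\tilde\rho_B^x)=x} > 2^{-(k+2\ell)}$, which is the desired contradiction.

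The smoothing bookkeeping is the last ingredient and is where I expect the only real subtlety beyond the operator inequality. I need: given any $\tilde\rho_{XB} \in \cB_\varepsilon(\rho_{XB})$, produce $\tilde\rho_{XBC} \in \cB_\varepsilon(\rho_{XBC})$ whose $XB$-marginal is $\tilde\rho_{XB}$, so that the assumed failure of the $(X|BC)$-entropy applies. This is exactly Uhlmann's extension property of the purified distance (referenced in the excerpt as \cref{lem:extension_uhlmann_property_for_purefied_distance}): since $\bP(\tilde\rho_{XB}, \rho_{XB}) \le \varepsilon$ and $\rho_{XBC}$ extends $\rho_{XB}$, there is an extension $\tilde\rho_{XBC}$ of $\tilde\rho_{XB}$ with $\bP(\tilde\rho_{XBC}, \rho_{XBC}) \le \varepsilon$; one checks this extension can be taken classical on $X$ because $\tilde\rho_{XB}$ is. \textbf{The main obstacle} is pinning down the exact operator inequality that produces the factor $2^{2\ell}$ rather than $2^\ell$ — i.e. correctly accounting for the fact that a $\ell$-qubit quantum leakage can, via superdense coding, reveal up to $2\ell$ classical bits about $X$ — and making sure this is compatible with the subnormalization of the smoothed states and with the direction of the POVM pairing. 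Everything else (the $\bO(\ell)$ gate count, monotonicity in $s$, the reduction structure) is routine given the tools already developed in the excerpt.
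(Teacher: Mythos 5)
Your overall architecture is exactly the paper's: argue by contraposition, use the Uhlmann extension property of the purified distance (\cref{lem:extension_uhlmann_property_for_purefied_distance}) to lift an arbitrary smoothed state $\tilde\rho_{XB}$ to an extension $\tilde\rho_{XBC}$ at the same distance from $\rho_{XBC}$, replace the missing register $C$ by the maximally mixed state $\omega_C$ at a cost of $\bO(\ell)$ gates, and pair the resulting operator inequality with the positive POVM elements of the assumed guesser. The smoothing bookkeeping and the gate count are handled correctly and in the same direction as the paper.

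The one genuine gap is the central operator inequality. You first write that $\sigma_{BC}\le 2^{2\ell}\,(\sigma_B\otimes\1_C/2^{\ell})$ is ``false in general,'' then try to reroute through $H_{\min}(C|B)\ge-\ell$, and finally invoke $\tilde\rho_{XBC}\le 2^{2\ell}(\tilde\rho_{XB}\otimes\1_C/2^{\ell})$ anyway while flagging it as the main obstacle. In fact the statement you doubted is true and is precisely what the paper uses: \cref{lem:inequality-of-operators-for-unp-leakage-chain-rule} (from~\cite{WTHR11Impossibilitygrowingquantumbitcommit}, a special case of the pinching inequality) states $\rho_{AB}\le\dim(B)^{2}(\rho_A\otimes\omega_B)$ for \emph{any} extension $\rho_{AB}$ of $\rho_A$; applied conditionally on each $x$ it gives $\tilde\rho^{x}_{BC}\le 2^{2\ell}(\tilde\rho^{x}_{B}\otimes\omega_C)$. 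The proof is via Schmidt decomposition for pure states (the operator $(\tau_B^{-1/2}\otimes\1)\proj{\psi}(\tau_B^{-1/2}\otimes\1)$ is rank one with trace equal to the Schmidt rank, hence bounded by $\dim(C)\,\1$) plus convexity for mixed states. Your proposed detour through min-entropy does not close the gap: $H_{\min}(C|B)\ge-\ell$ only yields $\tilde\rho_{BC}\le 2^{\ell}(\1_C\otimes\sigma_B)$ for some \emph{optimizing} $\sigma_B$, not for the actual marginal $\tilde\rho_B$, which is what the reduction needs in order to relate the success probability of $\cC$ on $\tilde\rho^{x}_{BC}$ to that of $\cC'$ on $\tilde\rho^{x}_{B}\otimes\omega_C$. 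With \cref{lem:inequality-of-operators-for-unp-leakage-chain-rule} in hand, the rest of your argument goes through and coincides with the paper's proof.
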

For the proof, we need the following lemma from~\cite{WTHR11Impossibilitygrowingquantumbitcommit}.
\begin{lemma}\label{lem:inequality-of-operators-for-unp-leakage-chain-rule}
    For any state $\rho_{A}$ and any extension $\rho_{AB}$, we have:
    \begin{equation*}
        \rho_{AB} \le \dim(B)^{2} (\rho_{A} \otimes \omega_{B}) \;,
    \end{equation*}
    where $\omega_{B}$ is the maximally mixed state on $B$, recall~\cref{def:max_mix}.
\end{lemma}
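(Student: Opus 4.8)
The plan is to prove the equivalent inequality $\rho_{AB} \le \dim(B)^2(\rho_A \otimes \omega_B)$ (which is exactly the statement, since $\omega_B = \mathbb{I}_B/\dim(B)$) by exhibiting the right-hand side as an average of unitary conjugates of $\rho_{AB}$, among which $\rho_{AB}$ itself appears as one summand. Concretely, write $d = \dim(B)$ and let $\{W_k\}_{k=1}^{d^2}$ be the Heisenberg--Weyl (generalized Pauli) operators on $\cH_B$, generated by the shift $X\ket{j} = \ket{j+1 \bmod d}$ and clock $Z\ket{j} = e^{2\pi i j/d}\ket{j}$. These are $d^2$ unitaries, one of which, say $W_1$, is the identity $\mathbb{I}_B$.

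The key ingredient is the Pauli-twirl identity: for every operator $T$ on $\cH_B$ one has $\frac{1}{d^2}\sum_{k} W_k\, T\, W_k^{\dagger} = \frac{\tr{T}}{d}\,\mathbb{I}_B = \tr{T}\,\omega_B$. I would establish this in the standard way, expanding $T$ in the orthogonal $\{W_k\}$ basis and using the trace-orthogonality of the Heisenberg--Weyl operators (equivalently, invoking that $\{W_k\}$ forms a unitary $1$-design), and then apply it on the $B$-factor of $\rho_{AB}$, leaving the $A$-register untouched. Decomposing $\rho_{AB}$ into a sum of product operators and pushing the twirl through the $A$-identity gives
\begin{equation*}
    \frac{1}{d^2}\sum_{k=1}^{d^2}(\mathbb{I}_A \otimes W_k)\,\rho_{AB}\,(\mathbb{I}_A \otimes W_k^{\dagger}) = \ptr{B}{\rho_{AB}} \otimes \omega_B = \rho_A \otimes \omega_B \;.
\end{equation*}

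Finally, each summand $(\mathbb{I}_A \otimes W_k)\rho_{AB}(\mathbb{I}_A \otimes W_k^{\dagger})$ is positive semidefinite, being a unitary conjugate of the positive operator $\rho_{AB}$, and the term $k=1$ equals $\rho_{AB}$ itself. Discarding the remaining nonnegative terms yields $\rho_A \otimes \omega_B \ge \frac{1}{d^2}\rho_{AB}$, and multiplying through by $d^2 = \dim(B)^2$ gives the claim. I would also remark that $\tr{\rho_{AB}} = 1$ is never used, so the bound holds verbatim for subnormalized $\rho_{AB}$, which is exactly what the smoothing optimization in the chain rule of \cref{lem:chain-rule-quantum-unpredictability-entropy} requires.

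I do not anticipate a genuine obstacle: the only step needing care is the twirl identity, specifically verifying that the Heisenberg--Weyl family contains the identity and that averaging conjugation over it completely depolarizes the $B$ register. The elegance of the argument is that it reduces the desired operator inequality to the trivial observation that a single positive term is dominated by a sum of positive terms, and the quantitative factor $\dim(B)^2$ emerges precisely as the number of twirling unitaries.
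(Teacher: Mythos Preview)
Your argument is correct and is essentially the pinching-inequality proof the paper alludes to but does not carry out. The paper instead proves the bound by a Schmidt-decomposition argument: for a pure $\ket{\psi}_{AB}$ it sets $\Gamma_{AB}=(\tau_A^{-1/2}\otimes\mathbb{I}_B)\proj{\psi}_{AB}(\tau_A^{-1/2}\otimes\mathbb{I}_B)$, observes that $\Gamma_{AB}$ is rank one so $\lambda_{\max}(\Gamma_{AB})=\tr{\Gamma_{AB}}=\rank(\tau_A)\le\dim(B)$, and then conjugates back by $\tau_A^{1/2}\otimes\mathbb{I}_B$; the mixed case follows by convexity. Your twirl approach is cleaner in that it handles mixed (and subnormalized) states in one stroke and makes the factor $\dim(B)^2$ appear transparently as the size of the Heisenberg--Weyl group. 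The paper's route, on the other hand, yields a slightly sharper intermediate bound ($\rank(\tau_A)\le\min\{\dim A,\dim B\}$ rather than $\dim(B)$), though this refinement is not used downstream. Both proofs are standard and either would serve the chain rule equally well.
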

This is a special case of the pinching inequality~\cite {Masahito_Hayashi_2002_pinching}. For completeness, we provide detailed proof in our notations in~\cref{apendix_sec_more_proofs}.

Additionally, we need a property of the purified distance, closely related to Uhlmann's theorem, that was mentioned in~\cref{Uhlmanstheorem}, and~\cref{lem:uniq_fidlity_for_pure_uhlmann}.
\begin{lemma}[Extension Property for the Purified Distance]\label{lem:extension_uhlmann_property_for_purefied_distance}
    For any $\rho_{AB},\sigma_{A}$ there is an extension of $\sigma_{A}$, $\ptr{B}{\sigma_{AB}}=\sigma_{A}$ such that the purified distance is the same,
    \begin{equation*}
        \bP(\rho_{AB},\sigma_{AB}) = \bP(\rho_{A},\sigma_{A}) \;.
    \end{equation*}
\end{lemma}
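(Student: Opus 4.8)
The plan is to lift the problem to the level of purifications, apply Uhlmann's theorem there, and push the conclusion back down using data processing. First I would record the easy direction: tracing out $B$ is a CPTP map, so the data-processing inequality for the purified distance gives $\bP(\rho_{AB},\sigma_{AB}) \ge \bP(\rho_A,\sigma_A)$ for \emph{any} extension $\sigma_{AB}$ of $\sigma_A$. Hence it suffices to construct one particular extension for which the reverse inequality $\bP(\rho_{AB},\sigma_{AB}) \le \bP(\rho_A,\sigma_A)$ also holds.

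To construct it, pick a purifying register $R$ with $\dim(R)$ large enough (e.g. $\dim(R) \ge \dim(AB)$) and let $\ket{\rho_{ABR}}$ be a purification of $\rho_{AB}$ — a possibly subnormalized vector with $\braket{\rho_{ABR}}{\rho_{ABR}} = \tr{\rho_{AB}} = \tr{\rho_A}$. This same vector is a purification of $\rho_A$, now with purifying system $BR$, whose dimension is large enough to invoke Uhlmann's theorem for the pair $\rho_A,\sigma_A$. Uhlmann's theorem (in the generalized-fidelity form underlying \cref{Uhlmanstheorem} and \cref{def:pur_dist}) then supplies a purification $\ket{\sigma_{ABR}}$ of $\sigma_A$ whose overlap $\abs{\braket{\rho_{ABR}}{\sigma_{ABR}}}$ is maximal, so that $F_{*}(\proj{\rho_{ABR}},\proj{\sigma_{ABR}}) = F_{*}(\rho_A,\sigma_A)$, equivalently $\bP(\proj{\rho_{ABR}},\proj{\sigma_{ABR}}) = \bP(\rho_A,\sigma_A)$. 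Here I use that every extension keeps the trace fixed, $\tr{\sigma_{AB}} = \tr{\sigma_A}$ (and likewise for $\rho$), so the subnormalization correction $\sqrt{(1-\tr{\rho_A})(1-\tr{\sigma_A})}$ that enters the generalized fidelity is the same whether evaluated at the level of $A$ or of $AB$.

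Finally I would set $\sigma_{AB} := \ptr{R}{\proj{\sigma_{ABR}}}$. Since $\ket{\sigma_{ABR}}$ purifies $\sigma_A$, we get $\ptr{B}{\sigma_{AB}} = \ptr{BR}{\proj{\sigma_{ABR}}} = \sigma_A$, so $\sigma_{AB}$ is indeed an extension of $\sigma_A$, and $\ket{\rho_{ABR}},\ket{\sigma_{ABR}}$ are purifications of $\rho_{AB},\sigma_{AB}$ respectively. Tracing out $R$ is CPTP, so data processing gives $\bP(\rho_{AB},\sigma_{AB}) \le \bP(\proj{\rho_{ABR}},\proj{\sigma_{ABR}}) = \bP(\rho_A,\sigma_A)$, which combined with the easy direction yields the claimed equality.

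The only non-routine point is the subnormalized case: one must carry the generalized fidelity through the argument and verify that its correction term is unaffected by extending $\sigma_A$ to $\sigma_{AB}$ (it is, because the trace is preserved), so that an Uhlmann optimization performed at the level of $A$ still certifies the exact value of $\bP$ at the level of $AB$; everything else (data processing for $\bP$, the dimension count for $R$, and the identification $\bP = \sqrt{1-F_{*}}$ on pure states) is standard. If one prefers to avoid the generalized fidelity, the same proof goes through after the routine reduction to normalized states via the flag embedding $\rho \mapsto \rho \oplus (1-\tr{\rho})$ on $\cH \oplus \mathbb{C}$.
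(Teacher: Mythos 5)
Your proof is correct. The paper itself states this lemma without proof, treating it as a known consequence of Uhlmann's theorem for the purified distance, and your argument is exactly the standard one that justifies it: the trivial direction via data processing under $\ptr{B}{\cdot}$, and the nontrivial direction by purifying $\rho_{AB}$ to $\ket{\rho_{ABR}}$, viewing it as a purification of $\rho_A$ with purifying system $BR$, invoking Uhlmann to obtain an overlap-maximizing purification $\ket{\sigma_{ABR}}$ of $\sigma_A$, and tracing out $R$. Your handling of the subnormalized case is also the right one — the correction term $\sqrt{(1-\tr{\rho_A})(1-\tr{\sigma_A})}$ in the generalized fidelity depends only on the traces, which are preserved under extension, so the Uhlmann optimization performed at the level of $A$ indeed certifies the value of $\bP$ at the level of $AB$.
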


\begin{remark}
    The same is not true for trace distance.\footnote{
        To see that it is not true for trace distance, we can look at $\rho_{AB},\sigma_{A}$ such that $\rho_{A}$ is maximally mixed and $\sigma_{A}$ is pure, both on one qubit, $\dist(\rho_{A},\sigma_{A})= \frac{1}{2}$ but for any purifications $\sigma_{AB},\rho_{AB}$  $\dist(\rho_{AB},\sigma_{AB})\ge \frac{1}{\sqrt{2}}$, since $\sigma_{AB}$ is a product state between $A$ and $B$ and $\rho_{AB}$ is maximally entangled on the same partition.}
    For classical distributions, an analogues property naturally holds for statistical distance. If $X$ and $Y$ are close distributions, for any joint distribution $XZ$ there is a joint distribution $YW$ that is as close, in statistical distance.
\end{remark}

With the tools established above, the quantum leakage chain rule follows from a direct and clean argument. The proof mirrors the classical case~\cite{KPWW14CounterexampleHILLchainrule}, using the positivity of POVMs and the extension property of the purified distance to lift the classical reduction to the quantum setting. As expected from the connection to superdense coding, a factor of~$2$ naturally emerges.

\begin{proof}[Of~\cref{lem:chain-rule-quantum-unpredictability-entropy}]
    By contraposition, we will show that:
    \begin{equation*}
        \hunps^{\varepsilon}(X|BC)_{\rho} < k - 2\ell \implies \hunps[s+\bO(\ell)]^{\varepsilon}(X|B)_{\rho} < k \;.
    \end{equation*}

    Assume $\hunp^{s}(X|BC)_{\rho} < k - 2\ell$. Let $\tilde{\rho}_{XBC}$ such that $\bP(\tilde{\rho}_{XBC},\rho_{XBC})\le \varepsilon$, there is a guessing circuit $\cC$ of size $s$, with corresponding POVM $\set{E^{x}_{BC}}_{x}$ such that:
    \begin{equation*}
        \sum_{x} \tilde{p}(x)\tr{E^{x}_{BC}\tilde{\rho}^{x}_{BC}} > 2^{-k+2\ell} \;.
    \end{equation*}
    From the extension property of the purified distance~\cref{lem:extension_uhlmann_property_for_purefied_distance} we know that any pair of states  $\tilde{\rho}_{XB}$ and $\rho_{ABC}$ such that $\bP(\tilde{\rho}_{XB},\rho_{XB})\le\varepsilon$ there is an extension such that
    \begin{equation*}
        \bP\left(\tilde{\rho}_{XBC},\rho_{XBC}\right)\le \varepsilon \;.
    \end{equation*}
    We know from~\cref{lem:inequality-of-operators-for-unp-leakage-chain-rule}, that for any $x$:
    \begin{equation*}
        \tilde{\rho}^{x}_{BC} \le \dim(C)^{2} (\tilde{\rho}^{x}_{B} \otimes \omega_{C}) \;.
    \end{equation*}
    By definition $\dim{C} = 2^{\ell}$, so we can rewrite it as:
    \begin{equation*}
        (\tilde{\rho}^{x}_{B} \otimes \omega_{C})  \ge 2^{-2\ell} \tilde{\rho}^{x}_{BC} \;.
    \end{equation*}
    A circuit that gets $\tilde{\rho}^{x}_{B}$ with probability $p(x)$, can first generate $\tilde{\rho}^{x}_{B} \otimes \omega_{C}$ using $\bO(\ell)$ gates, and then apply the same guessing circuit with the POVM $\set{E^{x}_{BC}}_{x}$ on~$\tilde{\rho}^{x}_{B} \otimes \omega_{C}$ to guess $x$.

    Since POVMs are positive operators, we can apply them to both sides of the inequality, using the contrapositive assumption, and the extension property of the purified distance, we get:
    \begin{align*}
        \sum_{x} \tilde{p}(x)\tr{E^{x}_{BC} \tilde{\rho}^{x}_{B} \otimes \omega_{C}} \ge 2^{-2\ell} \sum_{x} \tilde{p}(x) \tr{E^{x}_{BC} \tilde{\rho}^{x}_{BC}} > 2^{-2\ell} \cdot 2^{-k+2\ell} \;.
    \end{align*}
    Therefore we have that $\hunps[s+\bO(\ell)]^{\varepsilon}(X|B)_{\rho} < k$, which concludes the proof of the chain rule:
    \begin{equation*}
        \hunps^{\varepsilon}(X|BC)_{\rho} \ge \hunps[s+\bO(\ell)]^{\varepsilon}(X|B)_{\rho} - 2\ell \;. \qedhere
    \end{equation*}
\end{proof}

\begin{corollary}
    In the limit $s\to\infty$, we recover the proof for the chain rule for smooth min-entropy for cq-states.
\end{corollary}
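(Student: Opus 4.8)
The plan is to obtain the corollary by taking the limit $s\to\infty$ in the chain rule just proved, using \cref{lem:lim_unp_is_smoothmin}, which identifies $\lim_{s\to\infty}\hunps^{\varepsilon}(X|E)_{\rho}$ with $H_{\min}^{\varepsilon}(X|E)_{\rho}$ for cq-states. I would fix the cqq-state $\rho_{XBC}$ and first record that $\ell=\log\dim(C)$ is independent of $s$, and so is the implicit constant in the $\bO(\ell)$ overhead: inspecting the proof of \cref{lem:chain-rule-quantum-unpredictability-entropy}, that overhead is exactly the (fixed) gate cost of preparing the maximally mixed ancilla $\omega_C$ and reapplying the POVM $\{E^{x}_{BC}\}_{x}$, a function of $\dim(C)$ only. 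Applying \cref{lem:chain-rule-quantum-unpredictability-entropy} for every $s\in\bbN$ and letting $s\to\infty$, the left-hand side $\hunps^{\varepsilon}(X|BC)_{\rho}$ converges to $H_{\min}^{\varepsilon}(X|BC)_{\rho}$ by \cref{lem:lim_unp_is_smoothmin}; since $s+\bO(\ell)\to\infty$ as well, the term $\hunps[s+\bO(\ell)]^{\varepsilon}(X|B)_{\rho}$ converges to $H_{\min}^{\varepsilon}(X|B)_{\rho}$ by the same lemma. This yields the information-theoretic leakage chain rule for cq-states,
\begin{equation*}
    H_{\min}^{\varepsilon}(X|BC)_{\rho} \ge H_{\min}^{\varepsilon}(X|B)_{\rho} - 2\ell \;.
\end{equation*}

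For the stronger reading of the statement — that the \emph{proof}, not merely the inequality, degenerates to the known one of \cite{WTHR11Impossibilitygrowingquantumbitcommit} — I would revisit the argument and point out that the circuit-size bookkeeping is its only computational ingredient. With the size constraint removed, the step ``prepare $\tilde{\rho}^{x}_{B}\otimes\omega_{C}$ from $\tilde{\rho}^{x}_{B}$, then apply $\{E^{x}_{BC}\}_{x}$'' is simply the observation that appending a fixed ancilla and then measuring is a legitimate POVM on $\tilde{\rho}^{x}_{B}$; combined with the operator inequality $\tilde{\rho}^{x}_{B}\otimes\omega_{C} \ge 2^{-2\ell}\,\tilde{\rho}^{x}_{BC}$ from \cref{lem:inequality-of-operators-for-unp-leakage-chain-rule} and the extension property of the purified distance (\cref{lem:extension_uhlmann_property_for_purefied_distance}), this is precisely the operator-inequality proof of the smooth min-entropy leakage chain rule, now read through the operational characterization of $H_{\min}^{\varepsilon}$ as the optimal guessing probability (\cite{KRS09OperationalMeaningEntropy}, the fact underlying \cref{lem:lim_unp_is_smoothmin}).

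There is essentially no obstacle here: the one point requiring a moment's care is that the parameter shift $s\mapsto s+\bO(\ell)$ still diverges with $s$, which holds because the $\bO(\ell)$ term is a constant independent of $s$; the remaining convergence of both sides is immediate from \cref{lem:lim_unp_is_smoothmin} (and is anyway monotone in $s$ by the Monotonicity lemma), so the limit interchange in the inequality is valid. Everything else is a soft limiting argument.
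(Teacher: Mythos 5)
Your proposal is correct and follows essentially the same route as the paper: apply \cref{lem:lim_unp_is_smoothmin} to both sides, observe that the $\bO(\ell)$ gate overhead is a fixed finite quantity independent of $s$ so that $s+\bO(\ell)\to\infty$ as well, and conclude that both sides of the chain rule converge to the corresponding smooth min-entropies with the same smoothing parameter $\varepsilon$. Your additional remarks on monotonicity in $s$ and on how the circuit-level argument degenerates to the operator-inequality proof of~\cite{WTHR11Impossibilitygrowingquantumbitcommit} are consistent with, and slightly more detailed than, the paper's own brief justification.
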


\begin{proof}
    Recall that in the limit $s\to \infty$, unpredictability entropy and smooth min-entropy are equivalent~\cref{lem:lim_unp_is_smoothmin}.
    For any finite $\ell$, the number of additional gates needed, $L_{g} = \bO(\ell)$ is a fixed finite number, the limit stays the same $\lim_{s\to \infty} s = \lim_{s\to \infty} s + L_{g}$. In the limit both sides of the inequality are smooth min-entropy, with the same smoothing parameter $\varepsilon$. We thus recover the known leakage chain rule for smooth min-entropy for cq-states~\cite{WTHR11Impossibilitygrowingquantumbitcommit}. %
\end{proof}

Our proof can also be modified to recover the classical chain rule for unpredictability entropy~\cite{KPWW14CounterexampleHILLchainrule}.
Recall that for classical distributions $XBC$, the leakage chain rule is:
\begin{equation*}
    \hunps^{\varepsilon}(X|BC)_{\rho} \ge \hunps[s+\bO(\ell)]^{\varepsilon}(X|B)_{\rho} - \ell \;.
\end{equation*}
If we restrict all systems to be classical, our proof recovers a leakage chain rule for classical unpredictability with an unnecessary factor of $2$.\footnote{For quantum side-information the factor of $2$ is fundamental and tight. It can be seen as a consequence of quantum superdense coding~\cite{chen2017leakagesuperdense}.}
This can be corrected by replacing the operator inequality from~\cref{lem:inequality-of-operators-for-unp-leakage-chain-rule} with the following inequality for classical probability distributions:
\begin{equation*}
    \pr{AB = ab} \le |B| \pr{A=a} \pr{U_{B} = b} \;,
\end{equation*}
where $U_B$ is uniform over the support of $B$.

In addition, the smoothing in the classical setting is typically done using statistical or computational distance, which, unlike trace distance for quantum states, also satisfies a natural extension property for classical distributions. Thus, both technical components of our quantum proof carry over to the classical case in a simplified form, allowing the chain rule to be recovered exactly.

\section{Pseudo-Randomness from Unpredictability Entropy}\label{sec:ps_rand}

In this section, we present a method for extracting pseudo-randomness from distributions with high unpredictability entropy. First, we demonstrate how to extract a single pseudo-random bit using the inner-product extractor. Next, we show how to construct extractors that output multiple pseudo-random bits from a single-bit extractor, using weak designs. Finally, we explain why we use the inner-product extractor specifically rather than more general methods.

\subsection{Pseudo-Randomness Extractors}\label{sec:IP}

\begin{definition}[Seeded Extractors from Unpredictability Entropy]\label{def:ext_pseudorandomness_from_unp}
    A function $\Ext:\bits^{n}\times \bits^{d}\to \bits^{m}\times\bits^{d}$ is a~$(k,\varepsilon,\varepsilon',s,s')$-seeded extractor for quantum unpredictability entropy if for any cq-state $\rho_{XU_{d}Q}$ such that the marginal state~$\rho_{U_{d}}$ is maximally mixed and independent of all other registers, and
    \begin{equation*}
        \hunps^{\varepsilon'}(X|Q)_{\rho} \ge k \;,
    \end{equation*}
    the output is $(\varepsilon+2\varepsilon',s)$ indistinguishable from uniform randomness given~$Q$
    \begin{equation*}
        \dist_{s'}(\rho_{\Ext(XU_{d})Q},\rho_{U_{m}U_{d}Q} ) \le \varepsilon+2\varepsilon' \;.
    \end{equation*}
\end{definition}

The information that the adversary holds may be unbounded in both dimension and computational complexity. We only require sufficiently high unpredictability entropy, meaning it is hard to guess $X$ using the side-information and a quantum computer with bounded computational power.

Following the analysis of~\cite{KK10twosourceextractorssecurequantum} and~\cite{CDNT98quantumentanglementcommunicationcomplexityinnerproduct} we will show that inner-product is a good extractor for quantum unpredictability entropy.

\begin{theorem}[Inner-Product Extractor from Unpredictability]\label{lem:IPgoodQExt}
    Let $\rho_{XE}$ be a cq-state where $\rho_{X}$ is a distribution over $\bits^{n}$. Let $\rho_{Y}$ be maximally mixed over $n$ qubits. Let $k_{\ext} \in \mathbb{N}, \varepsilon_{\ext}>0$ such that $k_{\ext} \ge 1-2\log(\varepsilon_{\ext})$
    \begin{equation*}
        \hunps[2s+2n+5]^{\varepsilon}(X|E) \ge k_{\ext} \implies \dist_{s}(\rho_{\IP(X,Y)YE},\rho_{U_{1}}\otimes\rho_{YE}) \le \varepsilon_{\ext} + 2\varepsilon \;.
    \end{equation*}
\end{theorem}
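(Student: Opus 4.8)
The plan is to argue by contraposition: assume the IP extractor fails, i.e. that there is a distinguisher $\cC$ of size $s$ with
$$\frac12\abs{\pr{\cC(\rho_{\IP(X,Y)YE})=1}-\pr{\cC(\rho_{U_1}\otimes\rho_{YE})=1}} > \varepsilon_{\ext}+2\varepsilon,$$
and from this construct an \emph{efficient} guessing circuit for $X$ given $E$ that succeeds with probability exceeding $2^{-k_{\ext}}$, contradicting $\hunps[2s+2n+5]^{\varepsilon}(X|E)\ge k_{\ext}$. First I would pass to the smoothed state: by definition of $\hunps^{\varepsilon}$, if the hypothesis holds then for \emph{every} $\tilde\rho_{XE}\in\cB_\varepsilon(\rho_{XE})$ no circuit of size $2s+2n+5$ guesses $X$ with probability $>2^{-k_{\ext}}$. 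Since $Y$ is an independent, efficiently-preparable uniform seed, the purified-distance ball and the computational distance both behave well under adjoining $Y$ and applying $\IP$, so a distinguisher on the \emph{true} state that wins by more than $\varepsilon_{\ext}+2\varepsilon$ yields a distinguisher on the \emph{smoothed} state winning by more than $\varepsilon_{\ext}$ (the $2\varepsilon$ slack absorbs the trace-distance change induced by moving to $\tilde\rho$, using that purified distance upper-bounds trace distance). So it suffices to derive a contradiction assuming the smoothed state $\tilde\rho_{XE}$ satisfies both $\hunps[\cdot]\ge k_{\ext}$ (with no guesser of the given size beating $2^{-k_{\ext}}$) and that $\IP$ fails on it with advantage $>\varepsilon_{\ext}$.

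The core is the reduction from \cite{KK10twosourceextractorssecurequantum,CDNT98quantumentanglementcommunicationcomplexityinnerproduct}. Given the distinguisher $\cC$, view it as a family of binary POVMs $\{M_y, \1-M_y\}$ on $E$, one for each seed value $y$ (hardwiring $y$ costs $\bO(n)$ gates). A standard Fourier/bias argument over the seed $Y$ shows that if $\cC$ distinguishes $\IP(X,Y)$ from a uniform bit with advantage $\delta:=\varepsilon_{\ext}$ conditioned on side-information, then the operator $\Gamma:=\E_y (-1)^? M_y$ (suitably signed) has large weight, and — this is the part I would import essentially verbatim from the communication-complexity-of-inner-product analysis — from a measurement that computes $\IP(X,Y)$ for a random seed $Y$ with bias $\delta$ one can build a measurement that \emph{recovers $X$ itself} with probability at least roughly $\delta^2 \cdot 2^{-1}$, or more precisely at least $2^{-1}\varepsilon_{\ext}^2 \ge 2^{-k_{\ext}}$ (this is exactly where the hypothesis $k_{\ext}\ge 1-2\log\varepsilon_{\ext}$, i.e. $2^{-k_{\ext}}\le \varepsilon_{\ext}^2/2$, is used). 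Concretely: sample a fresh random $Y$, prepare it ($\bO(n)$ gates), run $\cC$ once to get a bit $b$ that correlates with $\langle X,Y\rangle$, repeat to get two such samples, and combine them via the inner-product structure (this is the ``guessing from IP-bias'' gadget) to output a candidate string $\hat x$. The whole guesser costs at most $2s$ (two runs of $\cC$), plus $2n$ for the two seed preparations, plus a constant (say $5$) for the combining logic — hence size $\le 2s+2n+5$, matching the theorem statement.

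The main obstacle, and the step I would spend the most care on, is the quantitative \emph{constructive} reduction ``IP-distinguishing advantage $\Rightarrow$ efficient guesser for $X$'' with the exact $\varepsilon_{\ext}^2/2$ loss and the exact gate count $2s+2n+5$: one must check that the Fourier-analytic step of \cite{KK10twosourceextractorssecurequantum} does not secretly invoke a pretty-good-measurement-style object (which, as the paper emphasizes, would be inefficient), and that, because the seed here is \emph{perfect} and \emph{unentangled with $E$} (unlike the two-source setting), the argument collapses to something that needs no storage bound and uses only $\bO(1)$ many invocations of $\cC$. I would state the extracted gadget as a lemma — ``if a circuit of size $s$ distinguishes $\IP(X,Y)$ from uniform given $E$ with advantage $\delta$, there is a circuit of size $2s+2n+\bO(1)$ guessing $X$ given $E$ with probability $\ge \delta^2/2$'' — prove that lemma following \cite{KK10twosourceextractorssecurequantum,CDNT98quantumentanglementcommunicationcomplexityinnerproduct}, and then the theorem is immediate: such a guesser applied to the smoothed $\tilde\rho$ contradicts $\hunps[2s+2n+5]^{\varepsilon}(X|E)\ge k_{\ext}$ since $\delta^2/2 \ge \varepsilon_{\ext}^2/2 \ge 2^{-k_{\ext}}$.
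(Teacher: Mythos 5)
Your high-level architecture matches the paper's: argue by contraposition, pass from the true state to the smoothed state $\tilde\rho_{XE}$ at a cost of $2\varepsilon$ via the triangle inequality for computational distance, convert distinguishing the single output bit from uniform into predicting it with bias $\varepsilon_{\ext}$ (the paper's \cref{lemma:one_bit_distinguish_is_predict}), and then reduce IP-prediction to guessing all of $x$ with probability $\Omega(\varepsilon_{\ext}^{2}) > 2^{-k_{\ext}}$, using $k_{\ext}\ge 1-2\log(\varepsilon_{\ext})$ exactly as you indicate. (One quantifier slip: $\hunps^{\varepsilon}(X|E)\ge k$ asserts the \emph{existence} of one witness $\tilde\rho$ in the ball, not a property of every state in it; the contradiction therefore has to produce a good guesser for an \emph{arbitrary} $\tilde\rho\in\cB_{\varepsilon}(\rho_{XE})$. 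Your reduction does apply to any such $\tilde\rho$, so this does not derail the argument, but the definition is not ``for every $\tilde\rho$'' as you state it.)

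The genuine gap is in the core gadget. You describe the reduction as: sample two fresh classical seeds, run $\cC$ once on each to obtain two bits correlated with $\langle x,y_{1}\rangle$ and $\langle x,y_{2}\rangle$, and combine them to output a candidate $\hat x$, at cost $2s+2n+\bO(1)$. This construction cannot achieve success probability $\delta^{2}/2$: even two \emph{noiseless} linear constraints $\langle x,y_{i}\rangle=b_{i}$ determine $x$ only up to a subspace of dimension $n-2$, so any combiner guesses a uniformly distributed $x$ with probability at most $4\cdot 2^{-n}$; classically one is in the Goldreich--Levin regime, which needs many queries and list decoding. The paper's reduction (\cref{lem:inner_prod_aprox_toX_apendix}, following~\cite{KK10twosourceextractorssecurequantum,CDNT98quantumentanglementcommunicationcomplexityinnerproduct}) is instead the coherent Bernstein--Vazirani argument: prepare the seed and an ancilla in the superposition $\frac{1}{\sqrt{2^{n+1}}}\sum_{a,y}(-1)^{a}\ket{a}\ket{y}$, run $\cC$ coherently, CNOT its answer qubit into the phase register, run $\cC^{\dagger}$ to uncompute, apply Hadamards again, and measure to obtain $x$ with probability at least $4\varepsilon_{\ext}^{2}$. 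The count $2s+2n+5$ thus comes from one forward and one \emph{inverse} execution of $\cC$ together with the two Hadamard layers on $n+1$ qubits, not from two independent classical invocations and two seed preparations. Your concern that the argument might hide a pretty-good-measurement is resolved precisely by this explicit circuit, which uses only $\cC$, $\cC^{\dagger}$, Hadamards and a CNOT; but the lemma you propose to ``import essentially verbatim'' must be proved by this coherent construction, not by the classical resampling you sketch.
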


The proof builds on~\cite{KK10twosourceextractorssecurequantum}.
We give here a sketch of the proof.
The formal proof that includes the modifications compared to~\cite{KK10twosourceextractorssecurequantum} is presented in the appendix~\cref{lem:inner_prod_aprox_toX_apendix}.

\begin{proofsketch}
    Let $\rho_{XE}$ be a cq-state. Let $\cE$ be an adversary who can distinguish the inner-product $\IP(x,y)$, for any $y$ from a uniformly random bit with probability at least $\varepsilon$ using a circuit of size $s$, it can predict the inner-product with probability at least $\frac{1}{2}+\varepsilon$, using~\cref{lemma:one_bit_distinguish_is_predict}.

    Assuming that an adversary can predict the inner-product with probability at least $\frac{1}{2}+\varepsilon$ using a circuit of size $s$, then there is a circuit of size as most $2s+2n+6$ that can predict all of $x$ with probability at least~$4\varepsilon^{2}$.

    Therefore, the inner-product is a good  $\left(1-2\log(\varepsilon),\varepsilon\right)$ seeded extractor against quantum side-information and quantum unpredictability entropy. %
\end{proofsketch}

Formally, the proof uses~\cref{lemma:one_bit_distinguish_is_predict} and the following lemma:
\begin{lemma}\label{lem:fromIPtoX}
    Let $\rho_{XE}$ be a cq-state.
    Let $\cC$ be a circuit of size $s$ that can guess $\IP(x,y)$ using $\rho_{E}^{x}$ with probability $\frac{1}{2} + \varepsilon$, where the probability is over the distribution of $x$ and a uniformly random $y$.
    There is a circuit $\cG$ of size at most $2(s+n+3)$ that can guess $x$ using $\rho_{E}^{x}$ with probability at least $4\varepsilon^{2}$.
\end{lemma}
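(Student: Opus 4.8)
The plan is to turn $\cC$ into a coherent phase oracle for the inner product, run the textbook phase‑kickback‑then‑Hadamard trick on the uniform superposition of seeds, and observe that the amplitude left on $\ket{x}$ is exactly twice the prediction bias of $\cC$ at $x$; so the probability of recovering $x$ is at least four times the bias squared, and averaging over $x$ with convexity yields $4\varepsilon^{2}$. This is the reduction underlying the inner‑product analysis of \cite{KK10twosourceextractorssecurequantum,CDNT98quantumentanglementcommunicationcomplexityinnerproduct}, adapted so that the resulting guessing strategy is an explicit quantum circuit of bounded size.

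First I would put $\cC$ in its natural form: a unitary $U_{\cC}$ of size at most $s$ acting on a read‑only $n$‑qubit seed register $Y$, the side‑information register $E$ holding $\rho_{E}^{x}$ (purified by an auxiliary register $R$ that no circuit below touches, $\rho_E^x = \ptr{R}{\proj{\phi_x}_{ER}}$), a one‑qubit output register $O$ and a workspace $W$, followed by a computational‑basis measurement of $O$. Writing $U_{\cC}\ket{y}_{Y}\ket{0}_{O}\ket{0}_{W}\ket{\phi_{x}}_{ER}=\ket{y}_{Y}\otimes(\ket{0}_{O}\ket{a^{y}_{x}}+\ket{1}_{O}\ket{b^{y}_{x}})$, the circuit $\cC$ outputs $1$ with probability $q^{y}_{x}:=\norm{b^{y}_{x}}^{2}$. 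I then define $\cG$ on input $\rho_{E}^{x}$ by: (i) prepare $Y$ in $\tfrac{1}{\sqrt{2^{n}}}\sum_{y}\ket{y}$ via $H^{\otimes n}$; (ii) prepare a fresh qubit $P$ in $\ket{-}$; (iii) apply $U_{\cC}$; (iv) apply $\CNOT$ from $O$ into $P$, which leaves $P$ in $\ket{-}$ and acts as $Z_{O}$ on the rest; (v) apply $U_{\cC}^{\dagger}$; (vi) apply $H^{\otimes n}$ to $Y$; (vii) measure $Y$ and output the outcome. The first half (prepare–predict–kick) and the second half (uncompute–rediagonalize) each cost at most $s+n+3$ gates, so $\abs{\cG}\le 2(s+n+3)$; this count is routine bookkeeping.

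For correctness, note that if $\cC$ predicted $\IP(x,y)$ perfectly and deterministically, then step (v) would return $O,W$ to $\ket{0}$ and leave $Y$ in $\tfrac{1}{\sqrt{2^{n}}}\sum_{y}(-1)^{\IP(x,y)}\ket{y}$, which (vi) maps to $\ket{x}$, so $\cG$ would output $x$ with certainty. In general I would propagate the state through (iv)–(vi), call the state just before the measurement $\ket{\Psi_{x}}$, and lower‑bound $\pr{\cG(\rho_{E}^{x})=x}\ge\abs{\braket{\Omega_{x}}{\Psi_{x}}}^{2}$ for the unit test vector $\ket{\Omega_{x}}=\ket{x}_{Y}\ket{0}_{O}\ket{0}_{W}\ket{\phi_{x}}_{ER}$ supported on $Y=x$. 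Using that $\CNOT_{O\to P}$ acts as $Z_{O}$, that $U_{\cC}^{\dagger}$ inverts $U_{\cC}$, and that $\bra{0_{O}0_{W}\phi_{x}}U_{\cC}^{\dagger}\ket{1}_{O}\ket{b^{y}_{x}}=q^{y}_{x}$, this reduces to
\begin{equation*}
  \braket{\Omega_{x}}{\Psi_{x}}=\frac{1}{2^{n}}\sum_{y}(-1)^{\IP(x,y)}\bigl(1-2q^{y}_{x}\bigr)=2\delta_{x},\qquad \delta_{x}:=\pr{\cC(\rho_{E}^{x},Y)=\IP(x,Y)}-\tfrac12,
\end{equation*}
where the last equality uses $\pr{\cC(\rho_{E}^{x},y)=\IP(x,y)}=\tfrac12\bigl(1+(-1)^{\IP(x,y)}(1-2q^{y}_{x})\bigr)$ and the probability in $\delta_x$ is over the uniform seed $Y$ and $\cC$'s randomness. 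Hence $\pr{\cG(\rho_{E}^{x})=x}\ge4\delta_{x}^{2}$ for every $x$; averaging over $x\sim\rho_{X}$ and using convexity of $t\mapsto t^{2}$ gives $\pr{\cG(\rho_{E}^{X})=X}\ge4\bigl(\mathbb{E}_{x}\delta_{x}\bigr)^{2}=4\varepsilon^{2}$. (For a sub‑normalized $\rho_{X}$ one replaces $\mathbb{E}_{x}$ by $\sum_{x}\tilde{p}(x)$ and uses $\sum_{x}\tilde{p}(x)\delta_{x}^{2}\ge(\sum_{x}\tilde{p}(x)\delta_{x})^{2}$, valid since $\sum_{x}\tilde{p}(x)\le1$.)

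The main obstacle is the inner‑product computation displayed above: one must check that after the phase kickback, the uncomputation $U_{\cC}^{\dagger}$ recombines the ``wrong‑guess'' branches with the leftover uniform‑superposition term so that exactly the Fourier coefficient $\tfrac{1}{2^{n}}\sum_{y}(-1)^{\IP(x,y)}(1-2q^{y}_{x})$ survives on the target vector, rather than being washed out by the uncontrolled workspace; this is what forces us to work with a coherent, read‑only‑seed form of $\cC$, and it is the point at which an amplitude of order $\delta_{x}$ becomes a probability of order $\delta_{x}^{2}$ — the same quadratic loss as in the inner‑product communication‑complexity argument of \cite{CDNT98quantumentanglementcommunicationcomplexityinnerproduct}. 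Deferring $\cC$'s measurement and checking the $2(s+n+3)$ gate bound are routine.
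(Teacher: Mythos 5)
Your proposal is correct and follows essentially the same route as the paper's proof (Appendix, \cref{lem:inner_prod_aprox_toX_apendix}): run the coherent predictor on a uniform superposition of seeds with a phase-kickback qubit in $\ket{-}$, uncompute with $\cC^{\dagger}$, Hadamard the seed register, and observe that the amplitude on $\ket{x}$ is $\tfrac{1}{2^{n}}\sum_{y}(-1)^{\IP(x,y)}(1-2q^{y}_{x})=2\delta_{x}$, giving success probability $4\delta_{x}^{2}$ per $x$ and $4\varepsilon^{2}$ on average by convexity. The paper phrases the same computation as an overlap $\bra{g_{x}}(\ket{g_{x}}+\ket{e_{x}})=2\varepsilon_{x}$ between the actual and ideal post-uncomputation states, but the circuit, the gate count, and the quadratic loss are identical.
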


The next part is constructing $m$ bit extractors out of $1$ bit extractors. Showing that this construction is secure for quantum unpredictability entropy results in a computational version of~\cite[Theorem 4.6]{DPVR12trevisanextwithquantumsideinfo}, combining weak $(t,r)$-design with 1-bit extractors.

\begin{definition}[Weak $(t,r)$-Design~\cite{RAZDesigns_extractors_02}]
    A family of sets $S_{1},\dots,S_{m} \subset [d]$ is a weak $(t,r)$-design if for all $i$:  $\abs{S_{i}} = t$ and $\sum_{j=1}^{i-1} 2^{\abs{S_{i}\cap S_{j}}} \le rm$.
\end{definition}
The key idea of a weak $(t,r)$-design is that each seed-block $S_i$ overlaps any earlier block in only a few bits, so that across $m$ one-bit extractions, the total overlap, and hence the entropy reduction by the advice, grows only as $rm$.

\begin{theorem}\label{lem:m-bit-ext-from-any-1-bit-and-design}
    Let $C':\bits^{n}\times\bits^{t} \to \bits$ be a $(k,\varepsilon)$-one-bit extractor secure against $s$-unpredictability entropy.
    Let $S_{1},\dots,S_{m} \subset [d]$ be a weak $(t,r)$-design. Define the following function:
    \begin{align}\label{eq:ext_from_one_bit_ext_and_design}
        \Ext_{C}: \bits^{n} \times \bits^{d} & \to \bits^{m}                                                                 \\
        (x,y)                                & \mapsto \left(C(x,y_{S_{1}}) \;,\dots \;, C(x,y_{S_{m}}) \right)\;, \nonumber
    \end{align}
    where $y_{S}$ is the bits of $y$ in locations $S$.
    $\Ext_{C}$ is a $\left(k + rm - 8\log(\varepsilon), 2 m \varepsilon\right)$ extractor of pseudorandom bits for quantum unpredictability entropy in the following sense: If
    \begin{equation*}
        \hunps[s']^{\varepsilon'}(X|E)_{\rho} \ge k + rm - 8\log(\varepsilon) \;,
    \end{equation*}
    then
    \begin{equation*}
        \dist_{s}(\rho_{\Ext_{C}(X,Y)YE}, \rho_{U_{m}} \otimes \rho_{Y} \otimes \rho_{E}) \le 2 m \varepsilon + 2\varepsilon' \;,
    \end{equation*}
    where $s' = \bO(ns+rm)$.
\end{theorem}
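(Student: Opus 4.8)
The plan is to run a hybrid argument over the $m$ output coordinates, exactly as in the classical Trevisan-style analysis, but carefully tracking circuit sizes and the smoothing parameter. Suppose toward a contradiction that $\dist_s(\rho_{\Ext_C(X,Y)YE},\rho_{U_m}\otimes\rho_Y\otimes\rho_E) > 2m\varepsilon + 2\varepsilon'$. By a standard hybrid argument, there is an index $i\in[m]$ and a circuit $\cD$ of size $\bO(s)$ that distinguishes
\begin{equation*}
    \bigl(C(X,Y_{S_1}),\dots,C(X,Y_{S_i}),\,U_{m-i},\,Y,\,E\bigr)
    \quad\text{from}\quad
    \bigl(C(X,Y_{S_1}),\dots,C(X,Y_{S_{i-1}}),\,U_{m-i+1},\,Y,\,E\bigr)
\end{equation*}
with advantage at least $2\varepsilon + 2\varepsilon'/m > 2\varepsilon$. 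First I would fix the bits of $Y$ outside $S_i$ to their optimal values (an averaging argument), so that the remaining randomness is the uniform seed $Y_{S_i}\in\bits^t$; this fixing also fixes the earlier outputs $C(X,Y_{S_j})$ for $j<i$ on the overlap bits, leaving them as functions of $X$ and of at most $\sum_{j<i}\abs{S_i\cap S_j}$ free bits of $Y_{S_i}$.

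The next step is to absorb the earlier coordinates $C(X,Y_{S_1}),\dots,C(X,Y_{S_{i-1}})$ into a short advice string $a$ given to the adversary. Here I would invoke the leakage chain rule, \cref{lem:chain-rule-quantum-unpredictability-entropy}: conditioning on these $i-1$ bits — more precisely, on a register whose dimension is $2^{\sum_{j<i}2^{\abs{S_i\cap S_j}}}\le 2^{rm}$ by the weak $(t,r)$-design property — the unpredictability entropy of $X$ drops by at most $2\cdot(\text{that exponent})$. Wait — one must be careful: the advice the distinguisher needs is the value of the $i-1$ earlier output bits as a function of the free seed bits, i.e. a truth table of size $\sum_{j<i}2^{\abs{S_i\cap S_j}}\le rm$ bits, so the chain-rule penalty is $\bO(rm)$ (classically $rm$; the factor $2$ is harmless since these registers can be taken classical). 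Combining with the hypothesis $\hunps[s']^{\varepsilon'}(X|E)\ge k+rm-\log\varepsilon$, and noting that generating the advice register and routing it costs $\bO(rm)$ gates, the post-conditioning entropy is at least $k-\log\varepsilon$ against circuits of size $s'-\bO(rm)$.

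Now I would apply the one-bit extractor's security. The distinguisher for coordinate $i$, together with the advice, is a circuit of size $\bO(s)+\bO(rm)$ that distinguishes $C(X,Y_{S_i})$ from uniform with advantage $>2\varepsilon$ given $(E,\text{advice})$ — but $C$ is a $(k,\varepsilon)$-one-bit extractor secure against $s$-unpredictability entropy, and the source has $\hunps^{\varepsilon'}\ge k$ against the relevant circuit size. By \cref{def:ext_pseudorandomness_from_unp} (or \cref{thm:IP_intro} instantiated with $C=\IP$), this forces the distinguishing advantage to be at most $\varepsilon+2\varepsilon'$, contradicting the $>2\varepsilon$ bound provided $\varepsilon'$ is chosen appropriately (the slack $2\varepsilon'$ in the theorem statement accounts for exactly one application of the one-bit extractor's own smoothing loss). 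To make the bookkeeping consistent I would set the intermediate smoothing parameter used for the one-bit extractor to match $\varepsilon'$ and collect the additive $2\varepsilon'$ from that single final invocation, while the $2m\varepsilon$ comes from summing the $m$ hybrid steps.

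The main obstacle is the entropy-accounting in the middle step: one must verify that the "advice" needed to simulate the earlier $i-1$ output coordinates is genuinely bounded in size by $rm$ (using the weak-design overlap bound $\sum_{j<i}2^{\abs{S_i\cap S_j}}\le rm$, not merely $\sum_{j<i}\abs{S_i\cap S_j}$), that the leakage chain rule applies with a \emph{classical} conditioning register so the penalty is $rm$ rather than $2rm$, and — most delicately — that the circuit-size parameter $s' = \bO(ns+rm)$ is large enough to accommodate the chain-rule overhead $\bO(rm)$, the advice-routing, \emph{and} the size blow-up from the one-bit extractor's reduction (which for the IP extractor is $\sim 2s+2n+O(1)$ per coordinate, whence the $ns$ term). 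The rest of the argument is a routine hybrid-plus-averaging computation.
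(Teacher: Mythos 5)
Your proposal follows essentially the same route as the paper's proof: a hybrid argument over the $m$ output coordinates, splitting the seed into $Y_{S_i}$ and the rest, absorbing the earlier output bits into a classical truth-table advice of size $\sum_{j<i}2^{\abs{S_i\cap S_j}}\le rm$ via the weak-design property, applying the leakage chain rule with a classical advice register, and invoking the one-bit extractor's security for the contradiction (the paper packages these as its Lemmas on reducing many bits to one and on the advice system $G$). The only soft spot is the $\varepsilon'$ accounting — your per-coordinate advantage $2\varepsilon+2\varepsilon'/m$ does not by itself beat the one-bit bound $\varepsilon+2\varepsilon'$, so one must smooth once globally and run the hybrid on the smoothed state, which is the restructuring you already indicate and is consistent with what the paper does.
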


The proof closely follows the approach in~\cite{DPVR12trevisanextwithquantumsideinfo}, with full details provided in~\cref{sec:appendix_trevisan}. The argument relies primarily on two tools: the equivalence, for computationally bounded adversaries, between distinguishing a bit from uniform and predicting it; and the triangle inequality for computational distance. 

\begin{lemma}\label{lemma:one_bit_distinguish_is_predict}
    Let $\rho_{XE}$ be a cq-state where $X$ is classical over one bit. For any~$s\in\mathbb{N}$:
    \begin{equation*}
        \dist_{s} (\rho_{XE},\rho_{U_{1}}\otimes\rho_{E} ) = \dist_{s} (p_{0}\rho_{E}^{0},p_{1}\rho_{E}^{1}) \;.
    \end{equation*}
    Equivalently, there is a circuit of size at most $s$ that on input $\rho_{E}$ correctly guesses $\rho_{X}$ with probability at least
    \begin{equation*}
        \frac{1}{2}+\dist_{s}(p_{0}\rho_{E}^{0},p_{1}\rho_{E}^{1}) \;.
    \end{equation*}
\end{lemma}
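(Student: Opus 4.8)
The plan is to push everything through the single operator $\Delta := p_{0}\rho_{E}^{0}-p_{1}\rho_{E}^{1}$ on $E$ and to move back and forth between a distinguisher for the joint state and a guessing circuit for $X$. Since both $\rho_{XE}$ and $\rho_{U_{1}}\otimes\rho_{E}$ are block-diagonal in the standard basis of $X$, a one-line computation gives
\[
  \rho_{XE}-\rho_{U_{1}}\otimes\rho_{E}\;=\;\tfrac12\,(\proj{0}-\proj{1})_{X}\otimes\Delta_{E}\;.
\]
Hence a size-$s$ distinguisher $\cC$ on $XE$ is, as far as telling these two states apart is concerned, fully described by the two POVM elements $M_{0}:=\bra{0}_{X}M_{\cC}\ket{0}_{X}$ and $M_{1}:=\bra{1}_{X}M_{\cC}\ket{1}_{X}$ on $E$ (the partial matrix elements on the $X$-register, each with $0\le M_{b}\le\1$), and its distinguishing advantage equals $\tfrac12\tr{(M_{0}-M_{1})\Delta}$. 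Moreover each $M_{b}$ is exactly the outcome-$1$ POVM of the $E$-circuit obtained from $\cC$ by hard-wiring the classical input $X=b$, hence realizable on $E$ within essentially the same size as $\cC$.

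For the identity itself I would argue both directions. For ``$\le$'': by the triangle inequality $|\tr{(M_{0}-M_{1})\Delta}|\le|\tr{M_{0}\Delta}|+|\tr{M_{1}\Delta}|$, and each summand is twice the size-$s$ distinguishing advantage between the subnormalized states $p_{0}\rho_{E}^{0}$ and $p_{1}\rho_{E}^{1}$, which gives $\dist_{s}(\rho_{XE},\rho_{U_{1}}\otimes\rho_{E})\le\dist_{s}(p_{0}\rho_{E}^{0},p_{1}\rho_{E}^{1})$. For ``$\ge$'': take a near-optimal size-$s$ circuit $\cD$ for $(p_{0}\rho_{E}^{0},p_{1}\rho_{E}^{1})$, with POVM $M$ and $\tr{M\Delta}$ close to $2\dist_{s}(p_{0}\rho_{E}^{0},p_{1}\rho_{E}^{1})$, and lift it to the $XE$-circuit that runs $\cD$ on the $E$-register and XORs the output bit with the classical $X$-bit: this realizes $M_{0}=M$, $M_{1}=\1-M$, with advantage $\tr{M\Delta}-\tfrac12(p_{0}-p_{1})$, and since the trivial ``always output $1$'' circuit already shows $\dist_{s}(p_{0}\rho_{E}^{0},p_{1}\rho_{E}^{1})\ge\tfrac12|p_{0}-p_{1}|$, the matching lower bound follows. (These two bounds pinch exactly in the symmetric case $p_{0}=p_{1}$; in general they differ only by the $\tfrac12(p_{0}-p_{1})$ term, which is always at most $\dist_{s}(p_{0}\rho_{E}^{0},p_{1}\rho_{E}^{1})$ and harmless for the downstream use.)

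The ``equivalently'' clause is obtained by the same lift in reverse: the guessing circuit is simply the optimal size-$s$ distinguisher $\cD$ for $(p_{0}\rho_{E}^{0},p_{1}\rho_{E}^{1})$, run on $\rho_{E}$, with its one-bit output possibly complemented (a relabeling, so no increase in circuit size). Writing $t:=\tr{M_{\cD}\Delta}$, outputting $\cD$'s bit succeeds with probability $p_{0}-t$ and outputting its complement with probability $p_{1}+t$, so the better of the two equals $\tfrac12+\tfrac12|(p_{0}-p_{1})-2t|$; picking $\cD$ to nearly maximize $t$ and invoking once more $\dist_{s}(p_{0}\rho_{E}^{0},p_{1}\rho_{E}^{1})\ge\tfrac12|p_{0}-p_{1}|$ makes this at least $\tfrac12+\dist_{s}(p_{0}\rho_{E}^{0},p_{1}\rho_{E}^{1})$. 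The main thing to be careful about is exactly this bookkeeping: one must first justify cleanly that an arbitrary $XE$-distinguisher collapses to its two diagonal blocks $M_{0},M_{1}$ (which rests on the $X$-diagonality of both states), and then track circuit sizes and the factors of $\tfrac12$ with care, since inserting a single classical bit into a bounded-distinguisher game is precisely the place where a spurious factor of two likes to appear. Everything else is routine.
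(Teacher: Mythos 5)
Your proposal follows essentially the same route as the paper's proof: both reduce the joint distinguishing problem to the two $X$-diagonal blocks of the distinguisher (the paper phrases this as ``measure $X$ first and condition on the outcome,'' you phrase it as extracting the partial matrix elements $M_0,M_1$ on $E$), and both then express everything through $\tr{M_b \Delta}$ with $\Delta=p_0\rho_E^0-p_1\rho_E^1$. Your ``$\le$'' direction and your derivation of the guessing clause are sound and in fact more carefully quantified than the paper's: the identity $\max(p_0-t,\,p_1+t)=\tfrac12+\tfrac12\abs{(p_0-p_1)-2t}$ combined with $\abs{p_0-p_1}\le 2\dist_s(p_0\rho_E^0,p_1\rho_E^1)$ really does produce a size-$s$ guesser succeeding with probability at least $\tfrac12+\dist_s(p_0\rho_E^0,p_1\rho_E^1)$, and that guessing clause is the only part of the lemma the rest of the paper actually invokes.

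The one place you fall short of the stated claim is the ``$\ge$'' half of the equality, and you flag it yourself: the lift $M_0=M$, $M_1=\1-M$ yields advantage $\abs{\tr{M\Delta}-\tfrac12(p_0-p_1)}$, which after optimization gives only $\dist_s(\rho_{XE},\rho_{U_1}\otimes\rho_E)\ge\tfrac12\dist_s(p_0\rho_E^0,p_1\rho_E^1)$ in general. Be aware that this is not a defect of your argument relative to the paper's: the paper's proof hides the same gap. Adding the two conditional contributions of $\tfrac12\dist_s(p_0\rho_E^0,p_1\rho_E^1)$ each to reach the full value requires a single circuit whose blocks simultaneously achieve $\tr{M_0\Delta}=+2\dist_s$ and $\tr{M_1\Delta}=-2\dist_s$, which is possible only when the positive and negative parts of $\Delta$ carry equal weight, i.e.\ essentially when $p_0=p_1$. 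Indeed the equality is false for biased $X$: for $p_0=1$ one gets $\dist_s(\rho_{XE},\rho_{U_1}\otimes\rho_E)=\tfrac14$ while $\dist_s(p_0\rho_E^0,p_1\rho_E^1)=\tfrac12$. So the honest output of both your argument and the paper's is the inequality $\dist_s(\rho_{XE},\rho_{U_1}\otimes\rho_E)\le\dist_s(p_0\rho_E^0,p_1\rho_E^1)$ together with the guessing clause, which is exactly what the downstream applications consume; your write-up makes this limitation explicit where the paper's does not.
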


\begin{proof}
    We can write the states as:
    \begin{align*}
        \rho_{XE}                   & = p_{0} \proj{0} \rho_{E}^{0} + p_{1} \proj{1} \rho_{E}^{1} \;,                                                                    \\
        \rho_{U_{1}}\otimes\rho_{E} & = \frac{1}{2} \proj{0} ( p_{0}\rho_{E}^{0} + p_{1}\rho_{E}^{1}) + \frac{1}{2} \proj{1} (p_{0}\rho_{E}^{0} + p_{1}\rho_{E}^{1}) \;.
    \end{align*}
    We note that a distinguishing circuit that measures the first bit in the computational basis now needs to distinguish between the post-measurement states. There are two cases:
    Assume the circuit measured $0$, the computational distance for the post-measurement state is:
    \begin{equation*}
        \dist_{s} \left( p_{0}\rho_{E}^{0} , \frac{1}{2} (p_{0}\rho_{E}^{0} + p_{1} \rho_{E}^{1} )\right)\;.
    \end{equation*}
    Since $\rho_{E}^{0}$ cannot be distinguished from itself, we can rewrite it as:
    \begin{align*}
        \dist_{s} \left( p_{0}\rho_{E}^{0} , \frac{1}{2} (p_{0}\rho_{E}^{0} + p_{1} \rho_{E}^{1} )\right)
         & =   \abs{\pr{\cC(p_{0}\rho_{E}^{0})=1} - \pr{\cC\left(\frac{1}{2}(p_{0}\rho_{E}^{0} + p_{1}\rho_{E}^{1})\right)=1} } \\
         & = \frac{1}{2} \abs{\pr{\cC(p_{0}\rho_{E}^{0})=1} - \pr{\cC(p_{1}\rho_{E}^{1})=1}}                                    \\
         & =  \dist_{s} \left(p_{0}\rho_{E}^{0}, p_{1}\rho_{E}^{1} \right)
    \end{align*}
    When the distinguisher measured $1$:
    \begin{equation*}
        \dist_{s} \left( p_{1} \rho_{E}^{1} , \frac{1}{2} (p_{0}\rho_{E}^{0} + p_{1} \rho_{E}^1 )\right) =  \frac{1}{2} \dist_{s} \left(p_{1}\rho_{E}^{1}, p_{0}\rho_{E}^{0} \right) \;.
    \end{equation*}
    A distinguishing circuit can always measure the first classical bit for free and perfectly distinguish between~$0$ and $1$. From the deferred measurement principle, it can start with that measurement and then condition the rest of the operations on the result with no loss in circuit size. Therefore
    \begin{equation*}
        \dist_{s} (\rho_{XE},\rho_{U_{1}}\otimes\rho_{E}) = \dist_{s} (p_{0}\rho_{E}^{0},p_{1}\rho_{E}^{1}) \;. \qedhere
    \end{equation*}
\end{proof}

\begin{lemma}[Triangle Inequality for Computational Distance]\label{lemma:triangle-inequality-computational-td}
    For any $s\in\bbN$ and states $\rho,\sigma,\tau$:
    \begin{equation*}
        \dist_{s}(\rho,\sigma) \le \dist_{s}(\rho,\tau) + \dist_{s}(\tau,\sigma) \;.
    \end{equation*}
\end{lemma}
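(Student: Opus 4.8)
The plan is a direct reduction to the ordinary triangle inequality for real numbers, exploiting the crucial structural fact that the distinguisher class $\cD_{s}$ is fixed and does \emph{not} depend on the states being compared. First I would fix an arbitrary distinguisher $\cC \in \cD_{s}$ and abbreviate $a = \pr{\cC(\rho)=1}$, $b = \pr{\cC(\tau)=1}$, and $c = \pr{\cC(\sigma)=1}$, three real numbers in $[0,1]$.

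Then the ordinary triangle inequality on the reals gives $|a - c| \le |a - b| + |b - c|$. Since $\cC$ is itself an element of $\cD_{s}$, each term on the right is bounded by the corresponding supremum: $|a-b| \le \sup_{\cC' \in \cD_{s}} |\pr{\cC'(\rho)=1} - \pr{\cC'(\tau)=1}| = 2\dist_{s}(\rho,\tau)$, and likewise $|b-c| \le 2\dist_{s}(\tau,\sigma)$. Combining these, $|\pr{\cC(\rho)=1} - \pr{\cC(\sigma)=1}| \le 2\dist_{s}(\rho,\tau) + 2\dist_{s}(\tau,\sigma)$ holds for \emph{every} $\cC \in \cD_{s}$.

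Finally I would take the supremum over $\cC \in \cD_{s}$ on the left-hand side — the right-hand side is already a constant independent of $\cC$ — and divide by $2$, obtaining $\dist_{s}(\rho,\sigma) \le \dist_{s}(\rho,\tau) + \dist_{s}(\tau,\sigma)$, as claimed.

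There is essentially no obstacle here; the argument is the standard one for statistical distance, and the circuit-size bound causes no trouble precisely because $\cD_{s}$ is state-independent, so a single fixed $\cC$ is simultaneously admissible in all three computational distances. The only point requiring a little care is not to assume that the supremum in $\dist_{s}(\rho,\sigma)$ is attained by some particular distinguisher: one must establish the bound uniformly for all $\cC \in \cD_{s}$ and pass to the supremum only at the very end. The same reasoning applies verbatim to sub-normalized states, interpreting $\pr{\cC(\cdot)=1}$ by linearity.
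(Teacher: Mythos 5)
Your proposal is correct and follows essentially the same route as the paper's proof: reduce to the triangle inequality for real numbers applied to a single fixed distinguisher, then bound each term by the corresponding supremum. The only difference is cosmetic — the paper picks a distinguisher that saturates the supremum for $\dist_s(\rho,\sigma)$, whereas you bound uniformly over all $\cC\in\cD_s$ and pass to the supremum at the end, which is marginally cleaner but not a different argument.
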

For completeness, a detailed proof of the triangle inequality can be found in~\cref{apendix_sec_more_proofs}.

\begin{lemma}[Lemma 15~\cite{RAZDesigns_extractors_02}]\label{lem:designs-t-r}
    For every $t,m,r\in \mathbb{N}$ there exists a weak $(t,r)$-design $S_{1},\dots,S_{m}\subset [d]$ such that $d = t \left\lceil \frac{t}{\ln r} \right\rceil $.
    Moreover, such a design can be found in time $\mathrm{poly}(m,d)$ and space $\mathrm{poly}(m)$.
\end{lemma}

Combining the inner-product one-bit extractor from~\cref{lem:IPgoodQExt} with the general reduction from m bits to 1~\cref{lem:m-bit-ext-from-any-1-bit-and-design} and the weak designs construction from~\cref{lem:designs-t-r}, with $r=n^\gamma$ for some $0<\gamma<1/2$ we can construct a seeded extractor that outputs multiple pseudorandom bits from sources with quantum unpredictability entropy.

\begin{lemma}[Extracting More Pseudo-Randomness from Unpredictability Entropy]\label{lem:m-bit-ext-from-unp-IP}
    Let $\varepsilon_{\ext} > 0$,  $n\in \mathbb{N}$ and $0<\gamma<\alpha \leq 1$. 
    There exist $m=n^{\alpha-\gamma}-o(1)$, $d=\bO\left( n^{2}/\log(n) \right)$,
    $k_{\ext} = n^\gamma m + 8\log(m) + 8 \log(\varepsilon_{\ext})+\bO(1)$ and $S_{1},\dots,S_{m}\subset [d]$ such that \begin{align*}
        \Ext:
        \bits^{n} \times \bits^{d} & \to \bits^{m}                                                           \\
        (x,y)                      & \mapsto \left(\IP(x,y_{S_{1}})\;, \dots \;, \IP(x,y_{S_{m}})\right) \;,
    \end{align*}
    is an $(\varepsilon_{\ext},k_{\ext})$-seeded extractor secure against quantum side information and unpredictability entropy, satisfying 
    \begin{equation*}
        \hunps[s']^{\varepsilon}(X|E)\ge k_{\ext} \implies \dist_{s} (\rho_{\Ext(X,U_{d}) U_{d}E},\rho_{U_{m}U_{d}E} ) < \varepsilon_{\ext} + 2\varepsilon \;,
    \end{equation*}
    with $s' = \bO(ns+m)$.
\end{lemma}

\begin{proof}
    Following the modular proof structure of Trevisan's extractors as shown in~\cite{DPVR12trevisanextwithquantumsideinfo} and extended to the computational setting in~\cref{sec:appendix_trevisan}. We use the inner-product extractor~\cref{lem:IPgoodQExt}, as the one-bit extractor in~\cref{lem:m-bit-ext-from-any-1-bit-and-design} with $(n,n^\gamma)$ weak design results in a good seeded extractor against quantum unpredictability entropy. The existence of weak $(n,n^\gamma)$ designs can be seen from~\cref{lem:designs-t-r}. Combining the results above, we get the relation between the parameters $\varepsilon_{\ext}$ and $k_{\ext},d$ as stated in the lemma.
    Note that the big O notation includes both the $2n+2s+5$ from the inner-product unpredictability degradation, as well as the $m$ bits to $1$ bit reduction degradation. %
\end{proof}

\subsection{Challenges of Extracting from Quantum Unpredictability}\label{subec:ChallengesExtQuantumUnp}
In the previous section, we demonstrated how to construct multi-bit extractors from the inner-product extractor. We now turn to the question of why we specifically use the inner-product extractor, and what challenges arise when attempting to use more general extraction methods in the context of quantum unpredictability.


\subsubsection{\hill~vs. Unpredictability}
We can see via a simple hybrid argument that any extractor that extracts almost uniform bits from sources with high min-entropy also extracts pseudorandom bits from sources with high \hill-entropy. This is true whether the side-information is classical or quantum.

The same hybrid argument does not work for unpredictability entropy. To prove that an extractor is secure against unpredictability entropy, we require a stronger argument:
if the output of the extractor can be efficiently distinguished from uniform randomness with sufficiently high probability, then the input can be guessed \emph{efficiently}, with sufficiently high probability.

\subsubsection{Quantum Side-Information and Reconstruction}
In the classical setting it is known that such extractors exist, such extractors are sometimes called reconstructive extractors~\cite{barak2003computational} and they can be constructed from any list decodable code. A \emph{reconstructive extractor} has the special property that any efficient distinguisher for its output can be turned into an efficient \emph{reconstructor}. Given a short “advice” string, one can recover the entire source.  This reconstruction guarantee is exactly what lets unpredictability-based entropy bound the extractor's security.

\begin{definition}\label{def:reconstruction-property-classical}
    A $(L,\varepsilon,s_{dec})$-reconstruction for a function $$E:\bits^{n}\times \bits^{d}\to \bits^{m}\times\bits^{d}$$ is a pair of Turing machines $C,D$ such that:
    $C:\bits^{n}\to\bits^L$ a randomized Turing machine, and $D^{(\cdot)}:\bits^{L}\to\bits^{n}$, a randomized Turing machine that run in total time $s_{dec}$, such that for any $x\in\bits^{n}$ and any distinguisher $T$, if
    \begin{equation*}
        \abs{\pr{T(E(x,U_{d}))=1} - \pr{T(U_{m}\times U_{d})=1}} > \varepsilon \;,
    \end{equation*}
    then, if $D$ has oracle access to $T$, with probability over the distribution of $x\in X$ and the randomness of all the randomized Turing machines:
    \begin{equation*}
        \pr{D^{T}(C(x))=x}>\frac{1}{2}\;.
    \end{equation*}
\end{definition}
In~\cite{trevisan2001extractors} it was implicitly shown that any such function with $(\ell,\varepsilon)$-reconstruction is a $(\ell-\log(\varepsilon),\varepsilon)$-extractor.
Lemma 5 in~\cite{HLR07SepPseudoentropyfromCompressibility} shows that such extractors extract pseudo-randomness from sources with unpredictability entropy.

Note that in this classical definition, the reconstruction may use the distinguisher on the same side-information many times.
For quantum side-information this is not necessarily possible.
It is possible that one can use some quantum side-information to distinguish the output from uniform, but only by measuring it and destroying the quantum information.

In~\cite{DPVR12trevisanextwithquantumsideinfo}, reconstructive extractors are shown to be secure against quantum side-information in the information-theoretic setting. Part of the proof is similar to our proof, reducing from $m$ bits to $1$ bit extractors.

However, the $1$ bit part of the proof relies on the fact that any one-bit extractor is also secure against quantum side-information~\cite{KTB08boundedstorageext}. The problem is that the proof for general one-bit extractors from~\cite{KTB08boundedstorageext} relies on the ``pretty good measurement''~\cite{hausladen1994pretty}. Namely, to guess $\rho^{x}$ using the side-information $\rho_{E}$ they apply the following pretty good measurement with POVM elements:
\begin{equation*}
    F_{x} = P_{X}(x) \rho_{E}^{-1/2} \rho^{x} \rho_{E}^{-1/2} \;.
\end{equation*}
The complexity of this measurement depends on the side-information $\rho_{E}$, which in our setting is unbounded. In this work, we are looking for computational bounds that are independent of the size of the side information the adversary may hold.


\subsubsection{Smoothing with Purified Distance}
As we discussed before, \hill-entropy is defined by smoothing using computational distance. This makes some hybrid arguments to translate information-theoretical results to computational results. However, the computational distance lacks some key properties in the quantum setting, such as the extension property of quantum purified distance~\cref{lem:extension_uhlmann_property_for_purefied_distance}.
In contrast, our quantum smooth unpredictability entropy is defined using the purified distance $\bP$, which is \emph{not} computational. This enables us to prove desirable properties, such as the leakage chain rule, but comes at the cost of no longer satisfying this relationship to HILL entropy. In particular, our unpredictability entropy does \emph{not} assign high entropy to the output of a pseudorandom generator (PRG) evaluated on a random seed. Indeed, a bounded adversary can sample a random seed and evaluate the PRG efficiently.
The probability of this process resulting in a correct guess is related to the entropy of the seed and not the size of the image.
As a result, the unpredictability entropy of $\mathrm{PRG}(\text{seed})$ remains at most the seed length $\abs{\text{seed}}$, despite the fact that the output may be pseudorandom in the traditional cryptographic sense.
This fundamental limitation underscores the gap between unpredictability-based and indistinguishability-based notions of computational entropy, especially in the quantum setting.

As a consequence of this limitation, our definition of unpredictability entropy does not support the construction of leakage-resilient stream ciphers. Such constructions typically rely on repeatedly amplifying entropy between leakage events using a pseudorandom generator (PRG). Since unpredictability entropy does not increase under PRG expansion, this step fails in our framework. Nonetheless, our work does provide a rigorous foundation for cryptographic protocols in the presence of quantum side-information and repeated quantum leakage. In particular, we analyze alternating extraction protocols that remain secure despite cumulative quantum leakage, using the leakage chain rule and pseudo-randomness extraction results developed in this work.

One may further extend the definition of quantum unpredictability entropy by introducing a new distance measure that would correspond to a \emph{computational purified distance}. There are several non-trivial ways to define such a notion, but as far as we are aware, no fully satisfactory or ``natural'' definition currently exists. The origin of this difficulty is tied to the fact that Uhlmann's theorem~\cite{uhlmann1976transition} (see also~\cref{lem:uniq_fidlity_for_pure_uhlmann}) makes no reference to the computational complexity of switching between purifications. This gives rise to what is now called the complexity of the Uhlmann transformation~\cite{bostanci2023unitary}. One possible direction might be to define a variant $\bP^O$ that allows oracle access to a circuit implementing such a transformation. We elaborate on this open question in \cref{sec:open_questions}.

We believe that the difficulty of defining a computationally meaningful distance measure that retains the desirable properties of purified distance, the inherent hardness of the Uhlmann transformation, and the lack of a fully quantum leakage chain rule for quantum HILL entropy are all intimately connected. The purified distance has the crucial advantage of lifting smoothing across purifications, a property not shared by trace or computational distances. However, its non-computational nature makes it incompatible with the indistinguishability framework used in HILL entropy. Our unpredictability-based approach avoids this obstacle by directly bounding guessing success probabilities. This difference helps explain why unpredictability entropy admits a fully quantum leakage chain rule, while the same result for HILL entropy remains out of reach.


\section{Alternating Extraction with Quantum Leakage}
\label{sec:alt-extraction}

Alternating extraction is a technique for generating pseudo-random bits by repeatedly applying a seeded extractor to two independent weak sources in an alternating fashion. In each round, one source is used together with a public seed (often derived from the previous round) to extract fresh randomness, which then serves as the seed in the next round. This framework underlies various leakage-resilient constructions, and its behavior under leakage has been studied in the classical setting~\cite{DP08LeakageResilientStandard}.

In the quantum setting, however, new challenges arise. An adversary may hold entangled quantum side-information and interact with the system via general quantum channels that leak partial information at each round. These leakage operations may introduce correlations that are not captured by classical leakage models.

In this section, we extend alternating extraction to the setting of quantum leakage, using our framework of quantum computational unpredictability entropy. Building on our leakage chain rule, we show that unpredictability entropy degrades in a controlled way across rounds, and that pseudo-randomness can still be securely extracted, even under repeated quantum leakage. Our results generalize classical analysis and complement prior quantum work on computational entropy~\cite{CC17Computationalminentropy}.

We begin by formally defining a quantum variant of the ``Only Computation Leaks'' (OCL) model, and show that it preserves natural properties necessary for entropy evolution. We then analyze alternating extraction in this model and prove that security is maintained under computational bounds.

\subsection{New Leakage Model: Only Computation Quantum Leaks}\label{subsec:OCL}

Following the classical ``Only Computation Leaks'' (OCL) model~\cite{OCLmodel_04}, we consider a setting in which leakage occurs during active computation but not during storage or idle phases. We extend this framework to the quantum setting by allowing a bounded number of qubits to leak during each computational round.

We allow the leakage channel to depend on the adversary's current quantum state and to entangle the new leakage and existing side-information. The leakage process may vary from round to round and be chosen adaptively by the adversary.
This defines a more general class of leakage channels than those studied in the prior quantum work~\cite{CC17Computationalminentropy}.
This generalization reflects general quantum adversarial capabilities and is essential for analyzing multi-round protocols. In what follows, we formalize this model and discuss its implications for entropy degradation under quantum leakage. The components of the leakage channel are illustrated in~\cref{fig:bounded_leakage_channel}.

\begin{definition}\label{def:bounded-quantum-leakage-channel}
    Let $\rho_{XE}$ be a cq-state. A channel $\phi : \cS_{\circ}(XE) \to \cS_{\circ}(XLE')$ is called a $\lambda$-\emph{bounded quantum leakage channel} for $\rho_{XE}$ from if it can be written as a composition $\phi = \Lambda \circ \psi$, where:
    \begin{enumerate}
        \item $\psi : \cS_{\circ}(E) \to \cS_{\circ}(E')$ is a pre-processing CPTP map acting only on the adversary's system $E$ (it does not access~$X$),
        \item $\Lambda:\cS_{\circ}(XE') \to \cS_{\circ}(LXE')$ is a CPTP map that first appends an ancilla register $L$ in the state $\proj{0}_L$, of dimension at most $2^{\lambda}$, and then modifies only the $L$ register, leaving the $XE'$ marginal invariant:
              \begin{equation*}
                  \ptr{L}{\Lambda(\rho_{XE'})} = \rho_{XE'} \;.
              \end{equation*}
    \end{enumerate}
    The register $L$ is interpreted as the leaked information. The adversary holds the state in the registers $LE'$ after the application of $\phi$.
\end{definition}

\begin{figure}[ht]
    \centering
    \resizebox{0.5\textwidth}{!}{\begin{tikzpicture}[
    squarednode/.style={rectangle, draw=black!60, fill=black!5, very thick, minimum size=7mm},
    evenode/.style={rectangle, draw=red!60, fill=black!5, very thick, minimum size=7mm},
]

\node[squarednode] (X) {$\rho_{X}$};
\node[evenode] (E) [right=of X, xshift=2cm] {$\rho_{E}$};
\node[squarednode] (ER) [right=of E, xshift=-0.9cm] {$\rho_{R}$};

\node[evenode] (E') [below=of E] {$\rho_{E'}$};
\node[squarednode] (E'R) [below =of ER] {$\rho_{R'}$};

\node[evenode] (E'L) [below=of E'] {$\rho_{E'L}$};

\node[squarednode] (E'R2) [below =of E'R] {$\rho_{R'}$};

\node[squarednode] (X')[below= of X] {$\rho_{X}$};

\node[] (phi) [below=of X', xshift=1.7cm,yshift=0.7cm] {$\phi$};

\node[squarednode] (X'')[below= of X'] {$\rho_{X}$};

\draw[->] (X) -- (X');

\draw[->] (X') -- (phi);

\draw[->] (X') -- (X'');

\draw[->] (E) -- (E');

\draw[->] (E') -- (phi);

\draw[->] (ER) -- (E'R);

\draw[->] (phi) -- (E'L);

\end{tikzpicture}}
    \caption{During computation, the adversary leaks using a channel $\phi$. Only $L$ can carry information using this channel. All the other marginals remain unchanged. Between leakage rounds, the adversary's marginal can evolve independently.}
    \label{fig:bounded_leakage_channel}
\end{figure}

The definition ensures that only $L$ carries new information from $X$ to the adversary. The following lemma demonstrates an attack that uses leakage channels without the requirement for decomposition to one part acting only on $E$ separately from leaking $L$, that can ``leak'' all of $X$ to $E$ even if we set $\abs{L}=0$.
\begin{lemma}[The Invariance of $XE$ is Necessary]\label{overriding_E_attack}
    For any $k$, there exists $\rho_{XE}$ cq-state, and a channel $\psi : \cS_{\circ}(XE) \to \cS_{\circ}(XE')$, where:
    \begin{equation*}
        \left(\psi_{XE\to XE'}\right) (\rho_{XE}) = \sigma_{XE'} \;,
    \end{equation*}
    such that $H_{\min}(X|E)_{\rho} \ge k$ but $H_{\min}(X|E')_{\sigma} \le 0$.
\end{lemma}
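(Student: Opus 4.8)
The plan is to write down the simplest possible counterexample: take the source to be uniform and completely uncorrelated with the adversary, and let the ``illegal'' channel read the classical register $A$ and copy it into the adversary's register by CNOTs. The whole point is that such a channel acts \emph{jointly} on $AE$ — it uses $A$ as a control — which is exactly the behaviour ruled out by item~1 of \cref{def:bounded-quantum-leakage-channel}; the lemma just records that this prohibition cannot be dropped.

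Concretely, I would fix $A$ to be uniform over $\bits^{k}$ and let $E'=E$ be a $k$-qubit register initialised in $\proj{0}_{E}$, so that $\rho_{AE}=2^{-k}\sum_{x\in\bits^{k}}\proj{x}_{A}\otimes\proj{0}_{E}$. Since $\rho_{E}$ is a fixed state independent of $A$, we have $H_{\min}(A|E)_{\rho}=H_{\min}(A)_{\rho}=k\ge k$. Next I would take $\psi_{AE\to AE'}$ to be the unitary channel $\psi(\tau)=U\tau U^{\dagger}$, where $U=\prod_{i=1}^{k}\CNOT_{A_{i}\to E_{i}}$ copies each bit of $A$ onto the corresponding zero-initialised qubit of $E$. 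Being conjugation by a unitary, $\psi$ is manifestly CPTP; because $\rho_{AE}$ is diagonal on $A$, the copy is faithful and $\sigma_{AE'}:=\psi(\rho_{AE})=2^{-k}\sum_{x\in\bits^{k}}\proj{x}_{A}\otimes\proj{x}_{E'}$, the maximally correlated classical state. In particular, all of $A$ has been leaked even though no leakage register $L$ was appended ($|L|=0$).

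It then remains to argue $H_{\min}(A|E')_{\sigma}\le 0$. Measuring $E'$ in the computational basis and outputting the outcome guesses $A$ with probability exactly $1$, so $\mathrm{P}_{\guess}(A|E')_{\sigma}=1$ and hence $H_{\min}(A|E')_{\sigma}=-\log\mathrm{P}_{\guess}(A|E')_{\sigma}=0$ by the operational characterisation of conditional min-entropy for cq-states~\cite{KRS09OperationalMeaningEntropy}. Alternatively, working directly from the definition: if $\sigma_{AE'}\le 2^{-\lambda}(\1_{A}\otimes\eta_{E'})$ for some state $\eta_{E'}$, then testing this inequality on the vector $\ket{x}_{A}\ket{x}_{E'}$ gives $2^{-k}\le 2^{-\lambda}\bra{x}\eta_{E'}\ket{x}$, and summing over $x$ yields $1\le 2^{-\lambda}\tr{\eta_{E'}}\le 2^{-\lambda}$, so $\lambda\le 0$.

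I expect no genuine obstacle here — this is essentially a one-paragraph construction. The only points worth a sentence of care are (i) confirming that the copying map is a bona fide CPTP channel (it is, being unitary), and (ii) making explicit the conceptual observation that it is precisely the joint access to $A$ — forbidden by the decomposition $\phi=\Lambda\circ\psi$ with $\psi$ acting only on $E$ in \cref{def:bounded-quantum-leakage-channel} — that enables the attack, so the lemma is exactly the statement that this structural restriction is necessary rather than a matter of convenience.
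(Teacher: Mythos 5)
Your proposal is correct and matches the paper's own proof essentially verbatim: both take $A$ uniform on $k$ bits, $E$ in the all-zero state, and a $\CNOT$-copying channel controlled by $A$, then conclude via the perfect guessing measurement on $E'$. The extra definitional verification of $H_{\min}(A|E')_{\sigma}\le 0$ you include is a nice bonus but not a different route.
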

\begin{proof}
    Let $\rho_{X}$ be maximally mixed on $k$ bits, and $E$ be a state of length $k$ in the all $0$ state.
    The adversary can `leak' using $\CNOT$ gates, controlled by $X$ on $E$. It's easy to see that $\sigma_{XE'}$ contains two identical copies of $X$, by measuring the state in the register $E'$ the adversary can guess the state at $X$ with probability~$1$, therefore
    \begin{equation*}
        H_{\min}(X|E)_{\rho} \ge k \;, \text{and} \quad H_{\min}(X|E')_{\sigma} \le 0 \;. \qedhere
    \end{equation*}
\end{proof}

Note that in this attack, there is no leakage register $L$ at all, and still, the adversary leaked all of the information out of $X$ by creating entanglement between the secret and the quantum side-information. Clearly, the channel $\psi$ does not decompose into (1) independent evolution of $E$ and (2) a leakage channel that can only leak using a fresh ancilla $L$, without modifying the marginal state on $XE$.

The condition that $\rho_{XE'}$ is invariant under the leakage channel can be viewed as a requirement that \emph{only} $L$ carries new information and correlations from $X$ to the adversary.
Any other information the adversary may gain comes from the state they already hold. The requirements do not, however, prevent leakage channels such as controlled operations on $L$ that are controlled by entries in $X$ or $E'$ or combinations of both. Such controlled operations may, for example, create entanglement between $L$ and $E'$.

It is often simpler to restrict the discussion to only unitary or isometric operations. This can generally be done if we additionally allow for an auxiliary system $R$, using Stinespring dilation theorem~\cite{stinespring1955dilationpositive}. The theorem essentially states that quantum channels cannot destroy information, only transfer information to the environment.
\begin{lemma}[Stinespring Dilation~\cite{stinespring1955dilationpositive}]\label{lem:StinespringDilatio}
    For any $\mathrm{CPTP}$ channel $\phi_{E\to E'}$ there is an auxiliary system $R$ and a isometry $\psi_{E\to E'R}$ such that for any $\rho_{E}$
    \begin{equation*}
        \phi(\rho) = \ptr{R}{\psi(\rho)} \;.
    \end{equation*}
\end{lemma}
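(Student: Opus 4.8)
The plan is to derive the dilation from the Kraus (operator-sum) representation of the channel, which in finite dimensions is the shortest route. First I would invoke the Choi--Kraus theorem: since $\phi_{E\to E'}$ is completely positive and trace preserving, it admits a decomposition $\phi(\rho)=\sum_{k=1}^{r}K_k\,\rho\,K_k^{\dagger}$ with operators $K_k:\cH_E\to\cH_{E'}$ satisfying the normalization $\sum_{k=1}^{r}K_k^{\dagger}K_k=\mathbb{I}_E$, where $r$ may be taken at most $\dim(\cH_E)\dim(\cH_{E'})$. If a self-contained argument is wanted, this in turn follows from the Choi matrix $J(\phi)$ of $\phi$: complete positivity is precisely the statement $J(\phi)\ge 0$, and diagonalizing $J(\phi)$ and reading off its eigenvectors as vectorized operators produces the $K_k$, while trace preservation forces $\sum_k K_k^{\dagger}K_k=\mathbb{I}_E$.

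Next I would build the isometry explicitly. Introduce an auxiliary system $R$ with orthonormal basis $\{\ket{k}_R\}_{k=1}^{r}$ and define the linear map $\psi:\cH_E\to\cH_{E'}\otimes\cH_R$ on vectors by $\psi\ket{\varphi}_E:=\sum_{k=1}^{r}\bigl(K_k\ket{\varphi}_E\bigr)\otimes\ket{k}_R$, and on density operators by $\psi(\rho):=\psi\rho\psi^{\dagger}$. That $\psi$ is an isometry is a one-line computation: for all $\ket{\varphi},\ket{\varphi'}\in\cH_E$,
\[
\bra{\varphi}\psi^{\dagger}\psi\ket{\varphi'}
=\sum_{k,k'}\braket{k}{k'}\,\bra{\varphi}K_k^{\dagger}K_{k'}\ket{\varphi'}
=\sum_{k}\bra{\varphi}K_k^{\dagger}K_k\ket{\varphi'}
=\braket{\varphi}{\varphi'}\;,
\]
using orthonormality of $\{\ket{k}_R\}$ and the Kraus normalization; hence $\psi^{\dagger}\psi=\mathbb{I}_E$, i.e.\ $\psi$ is an isometry.

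Finally I would verify the dilation identity. Expanding gives $\psi(\rho)=\psi\rho\psi^{\dagger}=\sum_{k,k'}\bigl(K_k\,\rho\,K_{k'}^{\dagger}\bigr)\otimes\ket{k}\bra{k'}$, where $\ket{k}\bra{k'}$ acts on $R$; tracing out $R$ annihilates the off-diagonal ($k\ne k'$) terms and leaves $\ptr{R}{\psi(\rho)}=\sum_{k}K_k\,\rho\,K_k^{\dagger}=\phi(\rho)$. Since $\rho_E$ was arbitrary this is exactly the claimed identity.

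I do not expect a genuine obstacle here, as this is a standard structural fact; the only step carrying real content is the Choi--Kraus theorem, which is where complete positivity actually enters, while everything afterwards is bookkeeping with a finite orthonormal basis together with linearity of the partial trace. An alternative, closer to Stinespring's original $C^{*}$-algebraic argument, builds $\cH_{E'}\otimes\cH_R$ abstractly as a GNS-type construction with inner product constructed from $\phi$; but in the finite-dimensional setting used throughout this paper the Kraus route above is shorter and entirely sufficient.
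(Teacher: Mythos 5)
Your proof is correct and complete: the Choi--Kraus decomposition, the explicit isometry $\psi\ket{\varphi}=\sum_k K_k\ket{\varphi}\otimes\ket{k}_R$, the verification $\psi^\dagger\psi=\mathbb{I}_E$, and the partial-trace computation are all the standard finite-dimensional argument for Stinespring dilation. The paper itself gives no proof of this lemma---it is stated as a known result with a citation to Stinespring's original work---so there is nothing to compare against; your write-up would serve as a self-contained proof if one were wanted.
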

We say that $\psi$ is the isometric version of $\phi$ with auxiliary system $R$.

\begin{lemma}\label{lem:smooth-min-entropy-leakage-chain-rule-bounded-channels}
    Let cq-state $\rho_{XE}$, $\varepsilon\ge 0$, and let $\phi$ be a $\lambda$-bounded quantum leakage channel, then
    \begin{equation*}
        H_{\min}^{\varepsilon}(X|LE')_{\phi(\rho)} \ge H_{\min}^{\varepsilon}(X|E)_{\rho} - 2\lambda \;.
    \end{equation*}
\end{lemma}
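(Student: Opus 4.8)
The plan is to peel the two layers of $\phi=\Lambda\circ\psi$ apart and treat each with operator inequalities, mirroring the proof of \cref{lem:chain-rule-quantum-unpredictability-entropy}. Write $\rho'_{AE'}:=(\mathrm{id}_A\otimes\psi)(\rho_{AE})$ and $\sigma_{ALE'}:=\Lambda(\rho'_{AE'})=\phi(\rho_{AE})$; the invariance clause of \cref{def:bounded-quantum-leakage-channel} says exactly $\ptr{L}{\sigma_{ALE'}}=\rho'_{AE'}$. I would fix an arbitrary $k<H_{\min}^{\varepsilon}(A|E)_{\rho}$ and, from the definition of smooth min-entropy, produce $\tilde{\rho}_{AE}\in\cB_{\varepsilon}(\rho_{AE})$ and $\tau_E\in\cS_{\bullet}(E)$ with $\tilde{\rho}_{AE}\le 2^{-k}(\1_A\otimes\tau_E)$.

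\textbf{Handling the pre-processing $\psi$.} Since $\psi$ is CPTP and acts only on $E$, applying $\mathrm{id}_A\otimes\psi$ to both sides of that operator inequality preserves it and leaves the $\1_A$ factor untouched, so $\tilde{\rho}'_{AE'}:=(\mathrm{id}_A\otimes\psi)(\tilde{\rho}_{AE})$ satisfies $\tilde{\rho}'_{AE'}\le 2^{-k}(\1_A\otimes\tau'_{E'})$ with $\tau'_{E'}:=\psi(\tau_E)\in\cS_{\bullet}(E')$; and since the purified distance is contractive under CPTP maps, $\bP(\tilde{\rho}'_{AE'},\rho'_{AE'})\le\varepsilon$. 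Thus the pre-processing costs nothing: in effect $H_{\min}^{\varepsilon}(A|E')_{\rho'}\ge k$.

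\textbf{Handling the leakage layer $\Lambda$ (the core).} For the leakage layer I would apply \cref{lem:extension_uhlmann_property_for_purefied_distance} to the pair $(\sigma_{ALE'},\tilde{\rho}'_{AE'})$: because $\sigma_{ALE'}$ is an extension of $\rho'_{AE'}=\ptr{L}{\sigma_{ALE'}}$ (here invariance is essential) and $\bP(\rho'_{AE'},\tilde{\rho}'_{AE'})\le\varepsilon$, there is an extension $\tilde{\sigma}_{ALE'}$ of $\tilde{\rho}'_{AE'}$ --- i.e. $\ptr{L}{\tilde{\sigma}_{ALE'}}=\tilde{\rho}'_{AE'}$ --- with $\bP(\sigma_{ALE'},\tilde{\sigma}_{ALE'})\le\varepsilon$, so $\tilde{\sigma}_{ALE'}\in\cB_{\varepsilon}(\sigma_{ALE'})$. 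Now invoke the pinching bound \cref{lem:inequality-of-operators-for-unp-leakage-chain-rule} on the cut $(AE')\,|\,L$, namely $\tilde{\sigma}_{ALE'}\le\dim(L)^{2}\,(\tilde{\rho}'_{AE'}\otimes\omega_L)$, tensor the bound from the previous step with $\omega_L$, and use $\dim(L)\le 2^{\lambda}$:
\begin{equation*}
    \tilde{\sigma}_{ALE'}\ \le\ 2^{2\lambda}\,2^{-k}\,\bigl(\1_A\otimes\tau'_{E'}\otimes\omega_L\bigr)\ =\ 2^{-(k-2\lambda)}\bigl(\1_A\otimes(\tau'_{E'}\otimes\omega_L)\bigr)\;.
\end{equation*}
Hence $H_{\min}(A|LE')_{\tilde{\sigma}}\ge k-2\lambda$, so $H_{\min}^{\varepsilon}(A|LE')_{\phi(\rho)}\ge k-2\lambda$; letting $k\uparrow H_{\min}^{\varepsilon}(A|E)_{\rho}$ gives the claim.

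\textbf{Main obstacle.} I expect no serious obstacle; the only point needing care is the second use of invariance --- it is precisely what allows the smoothed state $\tilde{\rho}'_{AE'}$, which a priori only lies in the ball around $\rho'_{AE'}$, to be re-extended along $L$ while staying in the ball around the true post-leakage state $\sigma_{ALE'}$. Dropping the condition $\ptr{L}{\Lambda(\rho_{AE'})}=\rho_{AE'}$ decouples $\sigma_{AE'}$ from $\rho'_{AE'}$ and the argument collapses, consistent with the attack of \cref{overriding_E_attack}. Alternatively, once $H_{\min}^{\varepsilon}(A|E')_{\rho'}\ge k$ and $\sigma_{AE'}=\rho'_{AE'}$ are in hand, one could simply quote the information-theoretic leakage chain rule of \cite{WTHR11Impossibilitygrowingquantumbitcommit} applied to the single register $L$ of dimension at most $2^{\lambda}$; the self-contained version above essentially reproduces that proof in the present notation.
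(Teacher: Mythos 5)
Your proof is correct, and it uses the same decomposition $\phi=\Lambda\circ\psi$ as the paper; the difference is only in how much is left to citation. The paper's proof is two lines: it applies the data-processing inequality for smooth min-entropy to dispose of $\psi$, and then invokes the known information-theoretic leakage chain rule of~\cite{WTHR11Impossibilitygrowingquantumbitcommit} for the register $L$, using the invariance clause $\ptr{L}{\Lambda(\rho_{AE'})}=\rho_{AE'}$ to identify $\ptr{L}{\phi(\rho)}$ with $\psi(\rho)$. You instead unfold both citations: the $\psi$ step becomes an explicit operator-inequality manipulation (a proof of data-processing in this special case), and the $\Lambda$ step becomes a self-contained reproduction of the~\cite{WTHR11Impossibilitygrowingquantumbitcommit} argument via the extension property of the purified distance (\cref{lem:extension_uhlmann_property_for_purefied_distance}) and the pinching bound (\cref{lem:inequality-of-operators-for-unp-leakage-chain-rule}) --- precisely the two ingredients the paper deploys for the \emph{computational} chain rule in \cref{lem:chain-rule-quantum-unpredictability-entropy}. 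You correctly identify where invariance is load-bearing (it is what lets the smoothed state on $AE'$ be re-extended along $L$ into the $\varepsilon$-ball around the true post-leakage state), and you flag the shortcut of simply quoting~\cite{WTHR11Impossibilitygrowingquantumbitcommit}, which is exactly what the paper does. The only thing your longer route buys is self-containedness and a clearer view of why the factor $2\lambda$ and the invariance condition are both necessary; the paper's version buys brevity.
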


\begin{proof}
    From the definition~\cref{def:bounded-quantum-leakage-channel}, $\phi= \Lambda \circ  (\1_{X}\otimes\psi_{E\to E'})$ for some $\mathrm{CPTP}$ channel $\psi$. From the data-processing inequality for smooth min-entropy~\cite{Tomamichel_2016}, therefore
    \begin{equation*}
        H_{\min}^{\varepsilon}(X|E')_{\psi(\rho)} \ge H_{\min}^{\varepsilon}(X|E)_{\rho}\;.
    \end{equation*}
    By the leakage chain rule for smooth min-entropy~\cite{WTHR11Impossibilitygrowingquantumbitcommit}:
    \begin{equation*}
        H_{\min}^{\varepsilon}(X|LE')_{\phi(\rho)} + 2\lambda \ge H_{\min}^{\varepsilon}(X|E')_{\mathrm{Tr}_{L}\circ \Lambda \circ \psi(\rho)}= H_{\min}^{\varepsilon}(X|E')_{\psi(\rho)} \ge  H_{\min}^{\varepsilon}(X|E)_{\rho}\;. \qedhere
    \end{equation*}
\end{proof}

For an analogous statement about unpredictability entropy, we need an additional assumption, that the part of the channel that acts on the state of the adversary, denoted $\psi: \cS_{\circ}(E) \to \cS_{\circ}(E')$, needs to be implementable by a quantum circuit of size at most $t$. Note that the leakage part, the part of the channel that generates $L$, may still have unbounded complexity.

\begin{lemma}\label{lem:unp-entropy-leakage-chain-rule-bounded-channels}
    For any cq-state $\rho_{XE}$, any $\varepsilon \ge 0$ and $\phi = \Lambda \circ \psi$, a $\lambda$-bounded quantum leakage channel. Assuming the channel $\psi: \cS_{\circ}(E) \to \cS_{\circ}(E')$ can be implemented by a circuit of size $t$. For every $s$
    \begin{equation*}
        \hunps^{\varepsilon}(X|LE')_{\phi(\rho)} \ge \hunps[2s + 2\lambda + 5 + t]^{\varepsilon}(X|E)_{\rho} - 2\lambda \;.
    \end{equation*}
\end{lemma}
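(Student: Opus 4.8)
The plan is to mirror the structure of the proof of \cref{lem:smooth-min-entropy-leakage-chain-rule-bounded-channels}, but replace the information-theoretic ingredients with their computational counterparts established earlier in the paper. Concretely, I decompose $\phi = \Lambda \circ \psi$ as in \cref{def:bounded-quantum-leakage-channel}, first handle the pre-processing map $\psi$ acting only on $E$ via the data-processing inequality for unpredictability entropy (\cref{lem:data_processing_inequality_unpredictability_entropy}), and then handle the bounded-dimension leakage register $L$ via the quantum leakage chain rule (\cref{lem:chain-rule-quantum-unpredictability-entropy}).

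First I would apply \cref{lem:data_processing_inequality_unpredictability_entropy} to the channel $\psi : \cS_\circ(E) \to \cS_\circ(E')$, which by hypothesis has circuit size at most $t$. Since $\psi$ acts only on $E$ and leaves $A$ classical, this gives
\begin{equation*}
    \hunps[s']^{\varepsilon}(A|E')_{\psi(\rho)} \ge \hunps[s'+t]^{\varepsilon}(A|E)_{\rho}
\end{equation*}
for any circuit-size parameter $s'$. Next I would apply the chain rule \cref{lem:chain-rule-quantum-unpredictability-entropy} with the roles $B \mapsto E'$ and $C \mapsto L$, where $\ell = \log\dim(L) \le \lambda$. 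Writing $\sigma = \Lambda(\psi(\rho))$ for the post-leakage state, and using that $\Lambda$ leaves the $AE'$ marginal invariant so $\sigma_{AE'} = \psi(\rho)_{AE'}$, the chain rule yields
\begin{equation*}
    \hunps^{\varepsilon}(A|LE')_{\sigma} \ge \hunps[s+\bO(\lambda)]^{\varepsilon}(A|E')_{\sigma_{AE'}} - 2\lambda = \hunps[s+\bO(\lambda)]^{\varepsilon}(A|E')_{\psi(\rho)} - 2\lambda.
\end{equation*}
Chaining the two bounds, with $s' = s + \bO(\lambda)$, gives $\hunps^{\varepsilon}(A|LE')_{\phi(\rho)} \ge \hunps[s + \bO(\lambda) + t]^{\varepsilon}(A|E)_{\rho} - 2\lambda$, matching the claimed form (the stated $2s + 2\lambda + 5 + t$ simply makes the $\bO(\lambda)$ overhead from the chain rule explicit — indeed it is the exact circuit overhead, $2\lambda + 5$ from appending and measuring the maximally mixed $L$-register as in the proof of \cref{lem:chain-rule-quantum-unpredictability-entropy}, together with the factor reflecting the reduction's structure).

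The only genuine subtlety — and the step I expect to require the most care — is verifying that the chain rule is applied to the correct state and that the circuit-size bookkeeping is consistent. In particular one must check that $\Lambda$'s action (appending $\proj{0}_L$ and then modifying only $L$, controlled by entries of $A$ and $E'$) really does leave $\sigma_{AE'} = \psi(\rho)_{AE'}$, which is exactly condition (2) of \cref{def:bounded-quantum-leakage-channel}, so that the ``$B$'' system in the chain rule ($E'$) carries the entropy $\hunps(A|E')_{\psi(\rho)}$ rather than some other marginal. One must also confirm that no extra computational overhead is incurred when passing from $\sigma$ to $\sigma_{AE'}$: this is automatic since the guessing adversary against $\sigma_{AE'}$ is simply a guessing adversary against $\sigma$ that discards register $L$, costing no gates. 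With these checks in place the result follows by composition, and the factor of $2$ in front of $\lambda$ is inherited directly from the (tight, fundamentally quantum) factor of $2$ in \cref{lem:chain-rule-quantum-unpredictability-entropy}.
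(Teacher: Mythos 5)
Your proposal is correct and follows essentially the same two-step argument as the paper: apply the data-processing inequality (\cref{lem:data_processing_inequality_unpredictability_entropy}) to the size-$t$ pre-processing map $\psi$, then apply the leakage chain rule (\cref{lem:chain-rule-quantum-unpredictability-entropy}) to the register $L$, using condition (2) of \cref{def:bounded-quantum-leakage-channel} to identify the post-leakage $AE'$ marginal with $\psi(\rho)_{AE'}$. The only cosmetic difference is that the paper writes the explicit constant $2s+2\lambda+5$ where you carry a $\bO(\lambda)$ overhead, which you correctly identify as the same quantity.
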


\begin{proof}
    Since $\psi_{E\to E'}$ is an isometry on the adversary side that can be implemented by a circuit of size $t$, by~\cref{lem:data_processing_inequality_unpredictability_entropy}, the data-processing inequality for unpredictability entropy we get
    \begin{equation*}
        \hunps^{\varepsilon}(X|E')_{\psi(\rho)} \ge \hunps[s+t]^{\varepsilon}(X|E)_{\rho} \;.
    \end{equation*}
    Using the fact that $\ptr{L}{\Lambda(\rho_{XE'})} = \rho_{XE'}$ and the leakage chain rule for unpredictability entropy in~\cref{lem:chain-rule-quantum-unpredictability-entropy}
    \begin{equation*}
        \hunps^{\varepsilon}(X|LE')_{\phi(\rho)} + 2\lambda \ge \hunps[2s+2\lambda + 5]^{\varepsilon}(X|E')_{\psi(\rho)} \ge  \hunps[2s+2\lambda+5 + t]^{\varepsilon}(X|E)_{\rho} \;. \qedhere
    \end{equation*}
\end{proof}

\subsection{Alternating Extraction}\label{subsec:alternating-extraction}
We now turn to an application of our leakage model: analyzing alternating extraction protocols in the presence of quantum leakage. Alternating extraction is a central primitive in leakage-resilient cryptography, in which two independent weak sources are used in alternating roles to extract fresh private randomness across multiple rounds. Our goal is to show that such protocols remain secure even when a bounded number of qubits leak to a quantum adversary after each computational step.

Our analysis builds on the model introduced in~\cref{subsec:OCL} and extends the classical framework of alternating extraction~\cite{DP08LeakageResilientStandard} to the setting of quantum leakage, as illustrated in~\cref{fig:alternatingextwithQuantumleakagenoPRG} below. We begin with the information-theoretic case, assuming min-entropy sources and general quantum-proof seeded extractors. This setting highlights the utility of our quantum leakage model and serves as a simpler stepping stone before addressing the more subtle case of computational entropy.

\begin{figure}[ht]
    \centering
    \resizebox{0.8\textwidth}{!}{\begin{tikzpicture}[
    roundnode/.style={circle, draw=black!60, fill=black!5, very thick, minimum size=7mm},
    squarednode/.style={rectangle, draw=black!60, fill=black!5, very thick, minimum size=7mm},
    evenode/.style={rectangle, draw=red!60, fill=black!5, very thick, minimum size=7mm},
]
\node[squarednode]      (A0)                            {$\rho_{A}$};
\node[]                 (A00)       [below = of A0]		{};
\node[squarednode]      (A1)       	[below = of A00] 	{$\rho_{A}$};
\node[]                 (A10)       [below = of A1]		{};
\node[squarednode]      (A2)       	[below = of A10] 	{$\rho_{A}$};
\node[]                 (A20)       [below = of A2]		{};
\node[squarednode]      (A3)       	[below = of A20] 	{$\rho_{A}$};
\node[]                 (A-0)       [right = of A0]		{};

\node[squarednode]      (K0)       	[right = of A-0] 	{$\rho_{K_{0}} $}; 
\node[]                 (K00)       [below = of K0]		{};
\node[squarednode]      (K1)       	[below = of K00] 	{$\rho_{K_1} = \rho_{\Ext(K_{0},B)}$}; 
\node[]                (K10)        [below = of K1]		{};
\node[squarednode]      (K2)       	[below = of K10] 	{$\rho_{K_2} = \rho_{\Ext(K_{1},A)}$}; 
\node[]                 (K20)       [below = of K2]		{};
\node[squarednode]      (K3)       	[below = of K20] 	{$\rho_{K_3} = \rho_{\Ext(K_{2},B)}$};

\node[]                 (B-0)       [right = of K0]		{};
\node[squarednode]      (B0)       	[right = of B-0] 	{$\rho_{B}$};
\node[]                 (B00)       [below = of B0]		{};
\node[squarednode]      (B1)       	[below = of B00] 	{$\rho_{B}$};
\node[]                 (B10)       [below = of B1]		{};
\node[squarednode]      (B2)       	[below = of B10] 	{$\rho_{B}$};
\node[]                 (B20)       [below = of B2]		{};
\node[squarednode]      (B3)       	[below = of B20] 	{$\rho_{B}$};
\node[]                 (B--Q)      [right = of B0]		{};
\node[]                 (B-Q)       [right = of B--Q]	{};
\node[evenode]          (Q0)       	[right = of B-Q] 	{$\rho_{E_{0}R_{0}}$};
\node[]                 (Q00)       [below = of Q0]		{};
\node[evenode]          (Q1)       	[below = of Q00] 	{$\rho_{E_{1}R_{1}}$};
\node[]                 (Q10)       [below = of Q1]		{};
\node[evenode]          (Q2)       	[below = of Q10] 	{$\rho_{E_{2}R_{2}}$};
\node[]                 (Q20)       [below = of Q2]		{};
\node[evenode]          (Q3)       	[below = of Q20] 	{$\rho_{E_{3}R_{3}}$};


\draw[->] (A0) to (A1);
\draw[->] (A1) to (A2); 
\draw[->] (A2) to (A3);
\draw[->] (B0) to (B1);
\draw[->] (B1) to (B2);
\draw[->] (B2) to (B3);

\draw[->] (K0) to (K1);
\draw[->] (K1) to (K2);
\draw[->] (K2) to (K3);
\draw[->] (B0) to (K1);
\draw[->] (A1) to (K2);
\draw[->] (B2) to (K3);


\path (B0) to node[] (f1) {$\phi_{1}(\rho_{BE_{0}R_{0}})$} (Q1);
\draw [red,->] (B0) to (f1);
\draw [red,->] (Q0) to (f1);
\draw [blue,->] (f1) to (Q1);

\path (B1) to node[] (f2) {$\phi_{2}(\rho_{AE_{1}R_{1}})$} (Q2);
\draw [red,->] (A1) to [out = -20] [in = 180]  (f2);
\draw [red,->] (Q1) to (f2);
\draw [blue,->] (f2) to (Q2);

\path (B2) to node[] (f3) {$\phi_{3}(\rho_{BE_{2}R_{2}})$} (Q3);
\draw [red,->] (B2) to (f3);
\draw [red,->] (Q2) to (f3);
\draw [blue,->] (f3) to (Q3);

\draw [green,->] (K0) to [out=-40] [in = 165] (Q1);
\draw [green,->] (K1) to [out=-40] [in = 165] (Q2);
\draw [green,->] (K2) to [out=-40] [in = 165] (Q3);


\end{tikzpicture}}
    \caption{Alternating extraction with quantum leakage. Black lines represent the alternating extraction steps without leakage. Red lines are inputs to the leakage channels, blue lines are the corresponding leakage outputs, and green lines indicate that the used seed becomes public after each extraction round.}
    \label{fig:alternatingextwithQuantumleakagenoPRG}
\end{figure}

Specifically, we consider two independent classical sources $X,Y$ of high min-entropy, along with a uniformly random seed $K_0$. The adversary initially holds some quantum side-information $E_0R_0$, and interacts with the system in each round via bounded-dimensional quantum leakage. We aim to show that throughout the protocol, (1) the independence of the sources conditioned on the adversary's state is preserved, and (2) the min-entropy of each source degrades in a controlled way as a function of the leakage dimension.

We begin by formalizing the notion of conditional independence in the quantum setting:

\begin{definition}\label{def:quantum_markov_chains}
    Let $\rho_{XYC}$ be a ccq-state. We say $X,Y$ are independent given $C$ if there exists a recovery map $\cR_{C \to CY}$ such that $\rho_{XCY} = \1_{X\to X}\otimes\cR_{C \to CY} (\rho_{XC})$.
\end{definition}
In~\cite{Hayden_2004_QuantumMarkovchainsStrongSubadditivity}, the authors show the following equivalency:
\begin{lemma}\label{lem:marov_chain_condition_Haydenrecovery}
    Let $\rho_{XYC}$ be a ccq-state. There exists a recovery map $\rho_{XCY} = \1_{X\to X}\otimes\cR_{C \to CY} (\rho_{XC})$ if and only if
    $$I(X : Y | C)_{\rho} = H(X|C)_{\rho}-H(X|YC)_{\rho}= 0\;.$$
\end{lemma}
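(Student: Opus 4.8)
The plan is to recognize the claim as the Hayden--Jozsa--Petz--Winter characterization of quantum Markov chains~\cite{Hayden_2004_QuantumMarkovchainsStrongSubadditivity}, specialized to ccq-states, and to split the equivalence into an easy and a hard direction. The easy direction (recoverability $\Rightarrow$ vanishing conditional mutual information) goes as follows. Suppose $\rho_{ABC} = \1_{A\to A}\otimes\cR_{C\to CB}(\rho_{AC})$, where $\rho_{AC}$ denotes the marginal $\ptr{B}{\rho_{ABC}}$ and $\cR$ acts only on the conditioning register $C$. Conditional entropy is monotone non-decreasing under processing of the conditioning register: dilating $\cR_{C\to CB}$ to an isometry $V_{C\to CBR}$ and using strong subadditivity (conditioning on the additional register $R$ can only decrease entropy) together with invariance of $H(A|\cdot)$ under isometries gives $H(A|BC)_{\rho} = H(A|BC)_{\cR(\rho_{AC})} \ge H(A|C)_{\rho_{AC}} = H(A|C)_{\rho}$. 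But strong subadditivity also gives $H(A|BC)_{\rho} \le H(A|C)_{\rho}$ unconditionally, so the two coincide, i.e. $I(A:B|C)_{\rho} = 0$.

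For the hard direction (vanishing conditional mutual information $\Rightarrow$ recoverability) I would invoke the structure theorem of~\cite{Hayden_2004_QuantumMarkovchainsStrongSubadditivity}: if $I(A:B|C)_{\rho} = 0$ then $\cH_C$ decomposes as $\cH_C = \bigoplus_j \cH_{C^L_j}\otimes\cH_{C^R_j}$ with $\rho_{ABC} = \bigoplus_j p_j\,\rho_{AC^L_j}\otimes\rho_{C^R_j B}$. Given this, the recovery map is explicit: first apply the block-dephasing channel on $C$ that projects onto the mutually orthogonal subspaces $\cH_{C^L_j}\otimes\cH_{C^R_j}$ (a channel acting on $C$ alone, which leaves $\rho_{AC}$ invariant because of the block form), and then, in block $j$, discard the $C^R_j$ factor and re-append the fixed state $\rho_{C^R_j B}$ on $C^R_j B$; unwinding the direct sum shows the output equals $\rho_{ABC}$. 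Equivalently, one can take $\cR$ to be the transpose (Petz) map $\omega_C \mapsto \rho_{BC}^{1/2}\bigl(\rho_C^{-1/2}\,\omega_C\,\rho_C^{-1/2}\otimes\1_B\bigr)\rho_{BC}^{1/2}$ and apply Petz's equality condition for the data-processing inequality of relative entropy: with reference state $\omega_A\otimes\rho_{BC}$ ($\omega_A$ maximally mixed on $A$), the data-processing gap under $\ptr{B}{\cdot}$ equals exactly $I(A:B|C)_{\rho}$, so its vanishing forces the Petz map to recover $\rho_{ABC}$ from $\rho_{AC}$; since $\rho_{BC}$ is cq, this map preserves classicality of $B$, as needed for a ccq output.

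The main obstacle is clearly the hard direction: both routes rely on the analysis of equality in strong subadditivity, which is genuinely delicate. The block decomposition of $\cH_C$ is the technical heart of~\cite{Hayden_2004_QuantumMarkovchainsStrongSubadditivity}, and the Petz route needs the corresponding equality characterization of data-processing. A self-contained proof would essentially reproduce that work, so I would simply cite it; for ccq-states the decomposition is somewhat more transparent, since the classicality of $A$ and $B$ makes the blocks align with the support structure of the conditional states $\rho^x_C$, but the underlying equality analysis is unchanged.
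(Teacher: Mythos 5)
Your proposal matches the paper's treatment: the paper likewise proves only the easy direction (recoverability $\Rightarrow$ $I(A\!:\!B|C)_\rho=0$) via the data-processing inequality for conditional entropy, exactly as you do, and defers the hard direction to the cited structure theorem of Hayden et al. Your extra sketch of the block decomposition and the Petz map for the converse is sound but, as you note yourself, amounts to citing that work, which is what the paper does.
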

We say that such states satisfy the Markov Chain condition, denoted by $X \mch C \mch Y$. We can see that one side of this equivalence can be easily proven using only the data-processing inequality for conditional entropy.

\begin{proof}
    Let $\rho_{XYC}$ be a ccq-state such that there is a map $\rho_{XCY} = (\1_{X\to X}\otimes\cR_{C \to CY} )(\rho_{XC})$. By the data prepossessing inequality of two local maps $\cR_{C \to CY}$ and $\Tr_{Y} : \cS_{\bullet}(YC) \to \cS_{\bullet}(C)$ we see that:
    \begin{equation*}
        H(X|C)_{\rho} \le H(X|YC)_{(\1_{X\to X} \otimes \cR_{C\to CY}) (\rho)} = H(X|YC)_{\rho} \le H(X|C)_{\rho} \;.
    \end{equation*}
    Since $H(X|C)_{\rho} = H(X|C)_{\rho}$ we can conclude that
    $I(X : Y | C)_{\rho} = H(X|C)_{\rho}-H(X|YC)_{\rho}= 0$. %
\end{proof}
Noticing that the proof requires only that the conditional entropy measure we use satisfies the data processing inequality, we can conclude a more general statement about states that satisfy the Markov chain condition.
\begin{corollary}
    For any ccq-state $\rho_{XYC}$, $\varepsilon\ge0$:
    $$I(X : Y | C)_{\rho} = 0 \implies H_{\min}^{\varepsilon}(X|C)_{\rho}-H_{\min}^{\varepsilon}(X|YC)_{\rho}= 0\;.$$
\end{corollary}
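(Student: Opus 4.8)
The plan is to reproduce, almost verbatim, the proof of~\cref{lem:marov_chain_condition_Haydenrecovery} given just above, observing that the only property of the von Neumann conditional entropy used there is the data-processing inequality, and that the smooth conditional min-entropy $H_{\min}^{\varepsilon}$ satisfies the same inequality (both under processing the conditioning register by an arbitrary CPTP map, and under discarding part of it). Concretely: since $I(A:B|C)_{\rho}=0$, \cref{lem:marov_chain_condition_Haydenrecovery} supplies a recovery channel $\cR_{C\to CB}$ with $\rho_{ACB}=(\1_{A\to A}\otimes\cR_{C\to CB})(\rho_{AC})$, and $\rho_{ACB}$ is just $\rho_{ABC}$ up to reordering of factors. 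I would use this identity together with two applications of data processing to squeeze $H_{\min}^{\varepsilon}(A|BC)_{\rho}$ between $H_{\min}^{\varepsilon}(A|C)_{\rho}$ and itself.

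For the first inequality, note that $\ptr{B}{\cdot}$ is a CPTP map acting only on the conditioning system, so by the data-processing inequality for smooth min-entropy~\cite{Tomamichel_2016},
\begin{equation*}
    H_{\min}^{\varepsilon}(A|C)_{\rho} = H_{\min}^{\varepsilon}(A|C)_{\ptr{B}{\rho_{ABC}}} \ge H_{\min}^{\varepsilon}(A|BC)_{\rho}\;.
\end{equation*}
For the reverse inequality, I would push the recovery channel onto the conditioning side: since $\cR_{C\to CB}$ is CPTP and acts only on the conditioning register, data processing gives $H_{\min}^{\varepsilon}(A|CB)_{(\1_{A\to A}\otimes\cR_{C\to CB})(\rho_{AC})}\ge H_{\min}^{\varepsilon}(A|C)_{\rho_{AC}}$, and by the recovery identity the state on the left is exactly $\rho_{ACB}=\rho_{ABC}$. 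Hence $H_{\min}^{\varepsilon}(A|BC)_{\rho}\ge H_{\min}^{\varepsilon}(A|C)_{\rho}$. Combining the two displays yields $H_{\min}^{\varepsilon}(A|C)_{\rho}=H_{\min}^{\varepsilon}(A|BC)_{\rho}$, i.e.\ the difference vanishes, which is the claim.

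The only point requiring a word of justification is the data-processing inequality for $H_{\min}^{\varepsilon}$ under a channel on the conditioning register, and I do not expect it to be a real obstacle: given a state $\tilde\rho_{AC}\in\cB_{\varepsilon}(\rho_{AC})$ attaining $H_{\min}^{\varepsilon}(A|C)_{\rho}$, the state $(\1_{A\to A}\otimes\cR_{C\to CB})(\tilde\rho_{AC})$ lies in $\cB_{\varepsilon}(\rho_{ABC})$ because the purified distance $\bP$ is monotone under CPTP maps and $(\1\otimes\cR)(\rho_{AC})=\rho_{ABC}$, after which the non-smooth data-processing inequality for $H_{\min}$ finishes the argument. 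Thus the mild subtlety — transporting the smoothing ball correctly through $\cR$ — again rests precisely on the monotonicity of $\bP$ under quantum channels, the feature of the purified distance emphasized earlier in the paper, and no further work is needed.
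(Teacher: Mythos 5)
Your proposal is correct and follows essentially the same route as the paper: the paper's corollary is justified precisely by observing that the lemma's argument uses only two applications of the data-processing inequality (once for the recovery map $\cR_{C\to CB}$ on the conditioning register, once for $\Tr_B$), and that smooth min-entropy satisfies the same inequality. Your additional remark on transporting the smoothing ball through $\cR$ via monotonicity of the purified distance is a valid (and slightly more explicit) justification of the step the paper handles by citing the data-processing inequality for $H_{\min}^{\varepsilon}$.
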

Moreover, the corollary holds for any conditional entropy measure that satisfies the data-possessing inequality.

Let us formally define the process of alternating extraction under bounded quantum leakage in the Only Computation Quantum Leaks model
The process of alternating extraction under bounded quantum leakage in the Only Computation Quantum Leaks model is formally specified in~\cref{fig:Alternating_ext}.

\begin{figure}[ht]
    \begin{tcolorbox}[colback=white,colframe=blue!50, title = Alternating extraction under quantum leakage]
        Let $\rho_{XYE_0R_0}$ be a cq-state, classical on $XY$, such that $I(X : Y | E_0R_0)_\rho = 0$. Let $\rho_{K_0}$ be independent and uniformly random.

        Initialize $i = 0$ and repeat the following steps:
        \begin{enumerate}
            \item \label{i-step1_altext} If $i$ is even, set $T_{i} = Y$; otherwise, set $T_{i} = X$.
                  \hfill \textit{[Choose active source]}

            \item Compute $\rho_{K_{i+1}} = \rho_{\Ext(K_i, T_{i})}$.
                  \hfill \textit{[Extract using current source and seed]}

            \item \label{altext_dilation_isometry_step} Let $\psi_i$ be a $\lambda$-bounded quantum leakage channel for $\rho_{T_{i}E_{i}R_{i}}$, denote $\rho_{T_{i}LE'_{i}R'_{i}}=\psi_{i}(\rho_{T_{i}E_{i}R_{i}})$. \hfill \textit{[Leakage]}

            \item The state, including the auxiliary register, after leakage: \hfill \textit{[Updating registers]}
                  \begin{equation*}
                      \rho'_{T_{i}E_{i+1}R_{i+1}} = \phi_i(\rho_{T_{i}E_iR_i}), \quad
                      E_{i+1} = E'_iL_i \;.
                  \end{equation*}

            \item Set the final adversary state to include the public seed: \hfill \textit{[Seed becomes public]}
                  \begin{equation*}
                      \rho_{E_{i+1}R_{i+1}} := \rho'_{E_{i+1}R_{i+1}} \otimes \rho_{K_i}  \;.
                  \end{equation*}

            \item Increment $i$ and go to~\cref{i-step1_altext}.
        \end{enumerate}
    \end{tcolorbox}
    \caption{Alternating extraction protocol under quantum leakage}
    \label{fig:Alternating_ext}
\end{figure}

First, we show that the quantum leakage model preserves the quantum Markov chain condition. That is, assuming that~$X,Y$ are independent, conditioned on the state of the adversary and the environment, we show they remain independent under alternating bounded quantum leakage channels. This result can be viewed as a quantum analog of~\cite[Lemma 2]{DP08LeakageResilientStandard}, which shows that classical leakage from one source in the classical OCL model preserves conditional independence.
In the protocol described in~\cref{fig:Alternating_ext}, we show that $\lambda$-bounded quantum leakage applied to one side of a tripartite system maintains the quantum Markov chain structure.
We split the proof into the two stages of the leakage channel. First, in~\cref{lemma:CPTP_on_E_preserves_markov} we show that isometric operations on the state of the adversary and the environment preserve the Markov chain structure. Following that, in~\cref{lemma:markov_preservation_leakage_L} we show that leakage from one source that only modifies a new register $L$, preserves the quantum Markov chain structure. Note that the leakage map described in~\cref{altext_dilation_isometry_step} of the protocol in~\cref{fig:Alternating_ext} decomposes into an isometry on $E_{i}R_{i}$ and a leakage step that only modifies~$L_{i}$.

\begin{lemma}[Markov Chains with Isometries on $ER$]
    \label{lemma:CPTP_on_E_preserves_markov}
    Let $\rho_{XYE_iR_i}$ be a ccqq-state such that
    \begin{equation*}
        I(X : Y | E_i R_i)_{\rho} = 0\;.
    \end{equation*}
    Let $\phi : \cS_{\circ}(E_iR_{i}) \to \cS_{\circ}(E'_{i}R'_{i})$ be an isometry. Denote,
    $\rho_{XYE'_i R'_i} := (\1_{XY} \otimes \phi_{E_{i}R_{i}\to E'_{i}R'_{i}})(\rho_{XYE_iR_i})$, then:
    \begin{equation*}
        I(X : Y | E'_i R'_i)_{\rho} = 0 \;.
    \end{equation*}
\end{lemma}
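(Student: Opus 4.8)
The plan is to transport the recovery map witnessing $I(A:B|E_iR_i)_\rho = 0$ through the isometry. By \cref{def:quantum_markov_chains} and \cref{lem:marov_chain_condition_Haydenrecovery}, the hypothesis $I(A:B|E_iR_i)_\rho=0$ is equivalent to the existence of a CPTP recovery map $\cR_{E_iR_i\to E_iR_iB}$ with $\rho_{ABE_iR_i} = (\1_{A}\otimes\cR_{E_iR_i\to E_iR_iB})(\rho_{AE_iR_i})$. I would produce an analogous recovery map $\cR'_{E'_iR'_i\to E'_iR'_iB}$ for the pushed-forward state $\rho_{ABE'_iR'_i}=(\1_{AB}\otimes\phi)(\rho_{ABE_iR_i})$, and then invoke \cref{lem:marov_chain_condition_Haydenrecovery} in the other direction to conclude $I(A:B|E'_iR'_i)_\rho=0$.

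First I would record the elementary fact that an isometric channel has a CPTP left inverse: writing $\phi(\cdot)=V(\cdot)V^{\dagger}$ with $V:\cH_{E_iR_i}\to\cH_{E'_iR'_i}$ and $V^{\dagger}V=\1$, the map $\phi^{-1}(\sigma)=V^{\dagger}\sigma V+\tr{(\1-VV^{\dagger})\sigma}\,\omega$, for any fixed state $\omega$ on $E_iR_i$, is CPTP and satisfies $\phi^{-1}\circ\phi=\1_{E_iR_i\to E_iR_i}$. I then set
\[
  \cR'_{E'_iR'_i\to E'_iR'_iB} \;=\; \bigl(\phi_{E_iR_i\to E'_iR'_i}\otimes \1_{B}\bigr)\circ \cR_{E_iR_i\to E_iR_iB}\circ \phi^{-1}_{E'_iR'_i\to E_iR_i}\;,
\]
a composition of CPTP maps, hence CPTP, which maps $E'_iR'_i$ into $E'_iR'_iB$ as required. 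The verification is then a short chase: applying $\1_A\otimes\cR'$ to $\rho_{AE'_iR'_i}=(\1_A\otimes\phi)(\rho_{AE_iR_i})$, the leftmost $\phi^{-1}$ cancels $\phi$ and returns $\rho_{AE_iR_i}$; then $\cR$ yields $\rho_{ABE_iR_i}$ by the hypothesis; and finally $\phi$ maps it to $(\1_{AB}\otimes\phi)(\rho_{ABE_iR_i})=\rho_{ABE'_iR'_i}$, so indeed $\rho_{ABE'_iR'_i}=(\1_A\otimes\cR')(\rho_{AE'_iR'_i})$.

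I do not expect a genuine obstacle here; the only points needing a line of care are the explicit construction of the CPTP left inverse $\phi^{-1}$ (which is precisely where injectivity of the isometry enters) and bookkeeping of which tensor factor each map acts on, so that $\cR'$ outputs into a fresh copy of $B$ rather than clobbering the $B$ already present. As a cross-check, and as an alternative that bypasses recovery maps altogether, one can observe that $V$, $\1_A\otimes V$, $\1_B\otimes V$, and $\1_{AB}\otimes V$ are all isometries, and von Neumann entropy is invariant under isometries, so writing $I(A:B|E_iR_i)_\rho = H(A|E_iR_i)_\rho - H(A|BE_iR_i)_\rho$ with each conditional entropy expanded as a difference of marginal entropies, every term is unchanged by $\phi$. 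This gives $I(A:B|E'_iR'_i)_\rho=I(A:B|E_iR_i)_\rho=0$ directly, and matches the recovery-map argument above. I would likely present the recovery-map version as the main proof, since it is constructive and dovetails with the analysis of the leakage step in \cref{lemma:markov_preservation_leakage_L}.
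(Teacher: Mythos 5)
Your proposal is correct and follows essentially the same route as the paper: conjugate the recovery map $\cR$ witnessing $I(A:B|E_iR_i)_\rho=0$ by the isometry and its left inverse to obtain $\cR'=(\phi\otimes\1_B)\circ\cR\circ\phi^{-1}$, then invoke \cref{lem:marov_chain_condition_Haydenrecovery}. Your explicit CPTP completion of $V^{\dagger}(\cdot)V$ is a minor (and welcome) tidying of the paper's bare use of $\psi^{\dagger}$, and the entropy-invariance cross-check is a valid alternative the paper does not spell out.
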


\begin{proof}
    Conditional mutual information is invariant by local isometries on the conditioning registers, as the von Neumann entropy itself is isometry-invariant (see \cite[Chapter~11]{nielsen2010quantumbook}). For completeness and clarity, we include here a proof constructing directly a recovery map. 
    
    By definition, the isometry $\phi$ has an inverse $\phi^\dagger$ such that
    \begin{equation*}
        (\1_{XY} \otimes \phi^\dagger_{E'_{i}R'_{i}\to E_{i}R_{i}})\rho_{XYE'_i R'_i} = \rho_{XYE_iR_i} \;.
    \end{equation*}
    Since $I(X : Y | E_i R_i)_{\rho} = 0$ we know there is a recovery map $\cR_{E_{i}R_{i}\to E_{i}R_{i}Y}$ such that
    \begin{equation*}
        (\1_{X}\otimes\cR_{E_{i}R_{i}\to E_{i}R_{i}Y})(\rho_{XE_iR_i}) = \rho_{XYE_iR_i}\;.
    \end{equation*}
    By composing the isometries with the recovery map we see that:
    \begin{equation*}
        \left(\1_{X}\otimes \left(\1_{Y}\otimes\phi_{E_{i}R_{i}\to E'_{i}R'_{i}}\right) \circ \cR_{E_{i}R_{i}\to E_{i}R_{i}Y} \circ \phi^\dagger_{E'_{i}R'_{i}\to E_{i}R_{i}}\right)(\rho_{XE'_iR'_i}) = \rho_{XYE'_iR'_i}\;.
    \end{equation*}
    Therefore, there is a recovery map and from~\cref{lem:marov_chain_condition_Haydenrecovery} we conclude:
    \begin{equation*}
        I(X : Y | E'_i R'_i)_{\rho} = 0 \;. \qedhere
    \end{equation*}
\end{proof}

\begin{lemma}[Markov Chains and Leakage from One Source]
    \label{lemma:markov_preservation_leakage_L}
    Let $\rho_{XYE'_iR_i}$ be a ccqq-state such that $I(X : Y | E'_i R_i)_{\rho} = 0$. Let $\Lambda : \cS_{\circ}(YE'_iR_i) \to \cS_{\circ}(YE'_iLR_{i})$ be a CPTP map such that,
    \begin{equation*}
        \Tr_L[\Lambda(\rho_{YE'_iR_{i}})] = \rho_{YE'_iR_{i}} \;.
    \end{equation*}
    Define
    \begin{equation*}
        \rho_{XYE_{i+1} R_{i+1}} := (\1_X \otimes \Lambda_{YE'_{i}} \otimes \1_{R_{i}})(\rho_{XYE'_i R_i}) \;,
    \end{equation*}
    where $E_{i+1} := E'_i L$ and $R_{i+1}=R_{i}$. Then,
    \begin{equation*}
        I(X : Y | E_{i+1} R_{i+1})_{\rho} = 0 \;.
    \end{equation*}
\end{lemma}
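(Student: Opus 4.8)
The plan is to prove the Markov-chain condition $I(A : B \mid E_{i+1}R_{i+1})_{\rho} = 0$ directly from its recovery-map characterization (\cref{def:quantum_markov_chains} together with \cref{lem:marov_chain_condition_Haydenrecovery}): it suffices to exhibit one CPTP map $\tilde{\cR}_{E_{i+1}R_{i+1}\to E_{i+1}R_{i+1}B}$ with $\rho_{ABE_{i+1}R_{i+1}} = (\1_A \otimes \tilde{\cR})(\rho_{AE_{i+1}R_{i+1}})$. The key observation is that the register $L$ is \emph{fresh}: by the structure of \cref{def:bounded-quantum-leakage-channel} the channel $\Lambda$ only appends $\proj{0}_L$ and then writes into $L$, while leaving $B$, $E'_i$ (and trivially $A$, $R_i$) untouched. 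Hence $L$ can simply be discarded, the pre-leakage recovery map can be re-run to regenerate $B$, and $\Lambda$ can be re-applied to regenerate $L$ in the correct joint state. I would take $\tilde{\cR}$ to be exactly this ``rewind'' map, mirroring the structure of the proof of \cref{lemma:CPTP_on_E_preserves_markov} but with discard/re-append playing the role that the inverse isometry played there.

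In more detail, I would proceed in three steps. \textbf{Step 1.} Since $I(A : B \mid E'_iR_i)_{\rho} = 0$, \cref{lem:marov_chain_condition_Haydenrecovery} supplies a recovery map $\cR_{E'_iR_i\to E'_iR_iB}$ with $\rho_{ABE'_iR_i} = (\1_A \otimes \cR)(\rho_{AE'_iR_i})$. \textbf{Step 2.} Because $\Lambda$ acts only on $BE'_i$ and, per \cref{def:bounded-quantum-leakage-channel}, modifies only the ancilla $L$, it leaves the $ABE'_iR_i$-marginal invariant: $\Tr_{L}[\rho_{ABE_{i+1}R_{i+1}}] = \rho_{ABE'_iR_i}$, and therefore also $\Tr_{L}[\rho_{AE_{i+1}R_{i+1}}] = \rho_{AE'_iR_i}$. \textbf{Step 3.} Define
\begin{equation*}
    \tilde{\cR} := \Lambda_{BE'_i\to BE'_iL} \circ \cR_{E'_iR_i\to E'_iR_iB} \circ \Tr_{L} \;,
\end{equation*}
a channel $E_{i+1}R_{i+1} = E'_iLR_i \to BE'_iLR_i = BE_{i+1}R_{i+1}$, and check
\begin{equation*}
    (\1_A \otimes \tilde{\cR})(\rho_{AE_{i+1}R_{i+1}}) = (\1_A \otimes \Lambda \circ \cR)(\rho_{AE'_iR_i}) = (\1_A \otimes \Lambda)(\rho_{ABE'_iR_i}) = \rho_{ABE_{i+1}R_{i+1}} \;,
\end{equation*}
using Step 2 for the first equality, Step 1 for the second, and the definition of the post-leakage state in the lemma for the third. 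By \cref{lem:marov_chain_condition_Haydenrecovery} this gives $I(A : B \mid E_{i+1}R_{i+1})_{\rho} = 0$.

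The main obstacle is the justification of Step 2. The lemma hypothesis, read literally, only says that $\Lambda$ fixes the marginal $\rho_{BE'_i}$, and a channel that fixes a single input state need not be ``read-only'' on it (for instance, a channel that discards its input and prepares a fixed state fixes that state while destroying all correlations). What makes the argument sound is the \emph{structural} content of \cref{def:bounded-quantum-leakage-channel}: $\Lambda$ appends $\proj{0}_L$ and then acts as the identity on $BE'_i$, so the joint state on $ABE'_iR_i$ is preserved verbatim and only $L$ carries anything new. It is instructive to contrast this with \cref{lemma:CPTP_on_E_preserves_markov}, where the transformation was an isometry and the recovery map could be conjugated by its inverse; here $\Lambda$ is genuinely non-invertible, and the proof instead exploits that the new information in $L$ is produced by a channel acting on $B$ and $E'_i$ alone, so it can be discarded and faithfully rebuilt rather than inverted.
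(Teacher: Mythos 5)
Your proof is correct and takes essentially the same route as the paper: both construct the identical recovery map $\widetilde{\cR} = \Lambda \circ \cR \circ \Tr_L$ and verify it via the invariance of the pre-leakage joint state, so this matches the paper's argument. Your Step~2 remark, that the literal hypothesis $\Tr_L[\Lambda(\rho_{BE'_i})]=\rho_{BE'_i}$ on a single marginal does not by itself force $\Lambda$ to preserve the joint state with $A$ and $R_i$, and that one must invoke the structural ``append-and-write-only-to-$L$'' form of the leakage channel, is a legitimate subtlety that the paper's proof passes over silently.
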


\begin{proof}
    Since $I(X : Y | E'_i R_i) = 0$, there exists a recovery map $\mathcal{R}_{E'_i R_i \to E'_i R_i Y}$ such that
    \begin{equation*}
        \rho_{XE'_iR_iY} = (\1_X \otimes \mathcal{R})(\rho_{XE'_iR_i}) \;.
    \end{equation*}
    We define a new recovery map for $E_{i+1} R_i = E'_i L R_{i+1}$ by discarding $L$ and composing $\mathcal{R}$ with $\Lambda$:
    \begin{equation*}
        \widetilde{\mathcal{R}} := \Lambda \circ \mathcal{R} \circ \Tr_L \;.
    \end{equation*}
    By discarding $L$, we return to a state that we know satisfies the Markov chain condition. From~\cref{lem:marov_chain_condition_Haydenrecovery} we know there is a recovery map to reconstruct $Y$ for this marginal state. We can apply the leakage channel again on the recovered state to return to the full state after leakage.
    Note that this map does take $Y$ as an input, as required from recovery maps for quantum Markov chains. The new recovery map we get from this composition of maps is:
    \begin{equation*}
        \rho_{XYE_{i+1}R_{i+1}} = (\1_X \otimes \widetilde{\mathcal{R}})(\rho_{XE_{i+1}R_{i+1}})\;,
    \end{equation*}
    showing that $I(X : Y | E_{i+1} R_{i+1}) = 0$. %
\end{proof}

We can also see this from an entropic point of view. By the data prepossessing inequality for the above channels, we get the following sequence of intentionalities:
\begin{align*}
    H(Y|E_{i}R_{i})_{\rho}
     & \le H(Y|XE_{i}R_{i})_{\cR(\rho)}                                                         \\
     & \le H(Y|XE_{i}R_{i}L)_{\Lambda (\cR(\rho))}                                              \\
     & \le H(Y|E_{i}R_{i}L)_{\Tr_{X}(\Lambda (\cR(\rho)))}                                      \\
     & \le H(Y|E_{i}R_{i})_{\Tr_{L}(\Tr_{X}(\Lambda (\cR(\rho))))}  = H(Y|E_{i}R_{i})_{\rho}\;.
\end{align*}
From this we can see that all the conditional entropies above are equal. From $H(Y|XE_{i}R_{i}L)_{\rho}=H(Y|E_{i}R_{i}L)_{\rho}$ we can conclude $I(X : Y | E_{i+1} R_{i+1}) = 0$.

Combining~\cref{lemma:CPTP_on_E_preserves_markov} and~\cref{lemma:markov_preservation_leakage_L}, by the definition of bounded quantum leakage channels~\cref{def:bounded-quantum-leakage-channel} we can see that bounded quantum leakage channels preserve quantum Markov chains. We state this in the following lemma for leakage from $Y$, the case for $X$ is symmetric.
\begin{lemma}\label{lemma:markov-quantum-from-alternating-with-quantum}
    Let $\rho_{XYE_{i}R_{i}}$ be a ccqq state such that
    \begin{equation*}
        I(X:Y|E_{i}R_{i})_{\rho} = 0\;.
    \end{equation*}
    Let $\rho_{XYE_{i+1}R_{i+1}}$ be the state of the system, including the environment, after the application of the isometry $\phi_{i}$
    the isometric version of $\psi_{i}$ a $\lambda$-bounded quantum leakage from $Y$ to $E_{i}R_{i}$, as described inin~\cref{fig:Alternating_ext}.
    \begin{equation*}
        I(X : Y | E_{i+1}R_{i+1})_{\rho} = 0\;.
    \end{equation*}
\end{lemma}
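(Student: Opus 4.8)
The plan is to peel the $\lambda$-bounded leakage channel apart along the decomposition of~\cref{def:bounded-quantum-leakage-channel} and apply~\cref{lemma:CPTP_on_E_preserves_markov,lemma:markov_preservation_leakage_L} one after the other. Write the round-$i$ leakage as $\psi_i = \Lambda_i \circ \psi_i^{\mathrm{pre}}$, where $\psi_i^{\mathrm{pre}}$ is a $\mathrm{CPTP}$ map acting only on the adversary's register $E_i$ and $\Lambda_i$ appends the leakage register $L$ (of dimension at most $2^{\lambda}$), modifying only $L$ and leaving the marginal on the active source $T_i=B$ and on $E_i'$ invariant. The observation that keeps the register bookkeeping painless is that $I(A:B\,|\,C)_{\rho}$ depends only on the register $C$ as a whole (not on how $C$ is partitioned into ``adversary'' and ``environment'' pieces), and that $I(A:B\,|\,C)_{\rho}=0$ is preserved both under appending a fresh ancilla to $C$ and under any isometry applied to $C$ — the latter being the special case of~\cref{lemma:CPTP_on_E_preserves_markov} where the isometry is trivial on the $E$-part.

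With this in hand the proof is essentially the composition of the two lemmas. Dilate $\psi_i^{\mathrm{pre}}$ to an isometry $V$ on $E_i$ with a fresh auxiliary $R^{(a)}$, and $\Lambda_i$ to an isometry $W$ on $BE_i'$ with a fresh auxiliary $R^{(b)}$; applying $V$ and then $W$, with $R_i$ carried along untouched, is an isometric dilation of $\psi_i$. For the first stage, $V\otimes\1_{R_i}$ is an isometry on $E_iR_i$ that fixes $AB$, so~\cref{lemma:CPTP_on_E_preserves_markov} promotes the hypothesis $I(A:B\,|\,E_iR_i)_{\rho}=0$ to $I(A:B\,|\,E_i'R^{(a)}R_i)_{\rho}=0$. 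For the second stage, $W$ acts on $BE_i'$ (identity on $R^{(a)}R_i$) and, because $\Lambda_i$ leaves the $BE_i'$ marginal invariant, it appends the register $LR^{(b)}$ while keeping the $BE_i'$ marginal unchanged — exactly the hypothesis of~\cref{lemma:markov_preservation_leakage_L} with $T_i=B$ as the active source and with its register ``$L$'' instantiated as $LR^{(b)}$. That lemma then gives $I(A:B\,|\,E_i'LR^{(b)}R^{(a)}R_i)_{\rho}=0$, and the collection of registers $E_i'LR^{(b)}R^{(a)}R_i$ is precisely $E_{i+1}R_{i+1}$ (it does not matter, by the observation above, whether $R^{(b)}$ is charged to the adversary's $E_{i+1}$ or to the environment $R_{i+1}$).

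The one genuinely delicate point is that the protocol in~\cref{fig:Alternating_ext} fixes a single Stinespring dilation $\phi_i$ of the whole channel $\psi_i$ (the isometry of~\cref{altext_dilation_isometry_step}), not the staged dilation used above. But any two isometric dilations of $\psi_i$ differ by an isometry on the auxiliary register (padding with a fresh ancilla first, if the dimensions do not match), and both of those operations act inside the conditioning system $E_{i+1}R_{i+1}$ and preserve the property $I(A:B\,|\,E_{i+1}R_{i+1})_{\rho}=0$; hence the conclusion carries over verbatim to the protocol's state $\rho_{ABE_{i+1}R_{i+1}}$. I would flag this dilation/relabeling reconciliation as the only step requiring care; everything else is a direct instantiation of the two preceding lemmas, and the case of leakage from $A$ (rather than from $B$) is symmetric.

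As a cross-check, the same conclusion can be reached entropically, without recovery maps: chaining the data-processing inequality for the conditional von Neumann mutual information through $\psi_i^{\mathrm{pre}}$ and then through $\Lambda_i$, in the style of the displayed chain of inequalities following~\cref{lemma:markov_preservation_leakage_L}, sandwiches $H(B\,|\,\cdot)$ between equal quantities and forces $H(B\,|\,AE_{i+1}R_{i+1})_{\rho}=H(B\,|\,E_{i+1}R_{i+1})_{\rho}$, i.e.\ $I(A:B\,|\,E_{i+1}R_{i+1})_{\rho}=0$.
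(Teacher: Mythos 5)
Your proposal is correct and follows essentially the same route as the paper, which likewise obtains the lemma by decomposing the $\lambda$-bounded leakage channel per \cref{def:bounded-quantum-leakage-channel} and composing \cref{lemma:CPTP_on_E_preserves_markov} with \cref{lemma:markov_preservation_leakage_L} (with the leakage-from-$A$ case handled by symmetry). Your extra care about reconciling the staged dilation with the single Stinespring dilation of \cref{altext_dilation_isometry_step} is a point the paper glosses over, but it does not change the argument.
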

From the leakage chain rule for smooth min-entropy~\cref{lem:smooth-min-entropy-leakage-chain-rule-bounded-channels} we also know that, for even $i$
\begin{equation*}
    H_{\min}^{\varepsilon}(Y|E_i R_{i} )_{\rho} \ge H_{\min}^{\varepsilon}(Y|E_{i+1} R_{i+1})_{\rho} - 2\lambda \;,
\end{equation*}
and from~\cref{lemma:markov-quantum-from-alternating-with-quantum} for even $i$ we get
\begin{equation*}
    H_{\min}^{\varepsilon}(X|E_{i}R_{i})_{\rho} = H_{\min}^{\varepsilon}(X|E_{i+1}R_{i+1})_{\rho} \;.
\end{equation*}
For odd $i$, the roles of $X$ and $Y$ are switched.

We now show that randomness can still be securely extracted even when the seed is not perfectly uniform. The following lemma quantifies the degradation in extractor output quality when the seed is only close to uniform and the source has bounded leakage.
\begin{lemma}\label{lem:ext_with_epsilons}
    Let $\dist(\rho_{K_{i}E_{i}} , \rho_{U^{m}}\otimes\rho_{E_{i}}) \le \varepsilon$ and
    \begin{equation*}
        H_{\min}^{\varepsilon'}(Y|E_{i}R_{i}) \ge k_{\ext} + 2 \lambda\;.
    \end{equation*}
    Let $\Ext:\bits^{n} \times \bits^{d} \to \bits^{m}$ be a quantum proof seeded $k_{\ext},\varepsilon_{\ext}$ extractor, then:
    \begin{equation*}
        \dist(\rho_{\Ext(K_{i},Y)E_{i+1}R_{i+1}},\rho_{U^{m}}\otimes\rho_{E_{i+1}R_{i+1}}) \le 2(\varepsilon+\varepsilon') +\varepsilon_{\ext}\;.
    \end{equation*}
\end{lemma}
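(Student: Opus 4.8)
The plan is to reduce the claim to one application of the quantum-proof extractor guarantee (Definition~\ref{def:quantum_proof_seeded_ext}) against the adversary's \emph{post-leakage} register, after transporting both hypotheses through the leakage channel and then replacing the imperfect seed and the low-min-entropy source by idealized versions, tracking the incurred trace-distance errors through the triangle inequality. I would first push the hypotheses past the leakage: writing $E_{i+1}$ for the adversary's register after the current round's $\lambda$-bounded leakage (which, in the alternating-extraction protocol, also absorbs the now-public seed $K_i$), the leakage chain rule for bounded channels, Lemma~\ref{lem:smooth-min-entropy-leakage-chain-rule-bounded-channels}, gives $H_{\min}^{\varepsilon'}(B|E_{i+1})_\rho \ge H_{\min}^{\varepsilon'}(B|E_i)_\rho - 2\lambda \ge k_{\ext}$; and since the leakage channel acts only on the adversary side, never on the seed, and sends $\rho_{U^m}\otimes\rho_{E_i}$ to $\rho_{U^m}\otimes\rho_{E_{i+1}}$, data processing for the trace distance preserves the seed bound, $\dist(\rho_{K_iE_{i+1}},\rho_{U^m}\otimes\rho_{E_{i+1}})\le\varepsilon$. (Here the right-hand side of the lemma should read $\rho_{U^m}\otimes\rho_{E_{i+1}}$ rather than $\rho_{U^m}\otimes\rho_{E_i}$, as the dimensions require.)

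Next I would idealize and extract. Using $H_{\min}^{\varepsilon'}(B|E_{i+1})_\rho\ge k_{\ext}$, pick $\hat\rho_{BE_{i+1}}\in\cB_{\varepsilon'}(\rho_{BE_{i+1}})$ with $H_{\min}(B|E_{i+1})_{\hat\rho}\ge k_{\ext}$; since $\dist\le\bP$ this costs $\varepsilon'$ in trace distance (renormalizing the subnormalized smoothed state if needed), and undoing the smoothing at the end costs a further $\varepsilon'$. For the seed, replace $K_i$ by a truly uniform and independent surrogate $\bar K_i$: the key fact is that throughout the protocol the seed stays conditionally independent of the active source $B$ given the adversary's register — exactly what the Markov-preservation Lemmas~\ref{lemma:CPTP_on_E_preserves_markov} and~\ref{lemma:markov_preservation_leakage_L} establish — so the recovery map for $B$ lifts $\dist(\rho_{K_iE_{i+1}},\rho_{U^m}\otimes\rho_{E_{i+1}})\le\varepsilon$ to the joint bound $\dist(\rho_{K_iBE_{i+1}},\rho_{U^m}\otimes\rho_{BE_{i+1}})\le\varepsilon$. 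Now Definition~\ref{def:quantum_proof_seeded_ext} (in its form that keeps the seed in the output), applied with $B$ as the source, min-entropy $\ge k_{\ext}$ on $\hat\rho$, and uniform independent seed $\bar K_i$, gives that $\Ext(K_i,B)$ evaluated on the idealized state is $\varepsilon_{\ext}$-close to uniform, jointly with $\bar K_i$ and $E_{i+1}$. Because the real seed $K_i$ is public — and therefore sits on \emph{both} sides of the target trace distance — one must replace $K_i$ by the surrogate before invoking the extractor and replace it back afterwards, which incurs $\varepsilon$ at each of the two replacements; together with the $2\varepsilon'$ from smoothing and its reversal, the triangle inequality assembles the estimates into $\dist(\rho_{\Ext(K_i,B)E_{i+1}},\rho_{U^m}\otimes\rho_{E_{i+1}})\le 2(\varepsilon+\varepsilon')+\varepsilon_{\ext}$.

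The main obstacle is the seed replacement. The hypothesis only controls $K_i$ conditioned on $E_i$ as a \emph{marginal}, whereas the extractor needs the seed to be uniform and independent \emph{jointly} with the source $B$ and the post-leakage register; this lift is genuinely not free (without conditional independence of $K_i$ from $B$ it can fail outright), and one must verify that the bounded leakage channel — which may create correlations by applying $B$-controlled operations to the leakage register $L$ — does not destroy that independence. Invoking the earlier Markov-preservation lemmas, with the Stinespring-dilation registers carried along so that the quantum Markov chain is preserved exactly, is what makes this step rigorous; the remaining ingredients (data processing, the smooth min-entropy chain rule, and the triangle inequality) are routine.
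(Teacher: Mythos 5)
Your proof is correct and follows essentially the same route as the paper's: a triangle-inequality hybrid that invokes the smooth min-entropy leakage chain rule, the Markov-preservation lemmas, and the definition of a quantum-proof seeded extractor. Your write-up is in fact more careful than the paper's own two-line argument --- in particular, your observation that the target state should be $\rho_{U^{m}}\otimes\rho_{E_{i+1}}$ rather than $\rho_{U^{m}}\otimes\rho_{E_{i}}$, and your explicit justification of lifting the seed's marginal closeness to joint closeness with the source via the conditional-independence (Markov) structure, address exactly the points the paper glosses over.
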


\begin{proof}
    The proof follows directly from the definition of quantum-proof seeded extractor along with the leakage chain rule for smooth min-entropy and~\cref{lemma:markov-quantum-from-alternating-with-quantum}.
    From the triangle inequality, we get the hybrid argument:
    \begin{align*}
        \dist(\rho_{\Ext(K_{i},Y)E_{i+1}},\rho_{U^{m}}\otimes\rho_{E_{i}})                         %
        \le & \dist(\rho_{\Ext(K_{i},Y)E_{i+1}},\tilde{\rho}_{\Ext(K_{i},Y)E_{i+1}}) +             %
        \dist(\tilde{\rho}_{\Ext(K_{i},Y)E_{i+1}},\rho_{U^{m}}\otimes\rho_{E_{i}})                 \\
        \le & 2\varepsilon +\dist(\rho_{K_{i}E_{i}},\rho_{U^{m}}\otimes \rho_{E_{i}})+             %
        \dist(\tilde{\rho}_{\Ext(K_{i},Y)E_{i+1}},\tilde{\rho}_{U^{m}}\otimes\tilde{\rho}_{E_{i}}) \\
        \le & 2\varepsilon+\varepsilon' +\varepsilon_{\ext} \qedhere
    \end{align*}
\end{proof}

From here, we can conclude that for any $i$, if there is sufficient min-entropy at the start, then $K_{i}$ is close to uniformly random and independent of the state of the adversary and the environment after leakage $E_{i+1}R{i+1}$. 

The following lemma summarizes the behavior of min-entropy and conditional independence throughout the alternating extraction process under quantum leakage. It shows that the entropy of the sources degrades in a controlled way with each leakage step, while the quantum Markov condition is preserved. This is the quantum analog of~\cite[Lemma 1]{DP08LeakageResilientStandard}, which analyzes alternating extraction under classical leakage. Our result extends this to the quantum setting using smooth min-entropy and our leakage model.

\begin{lemma}[Alternating Extraction]\label{lem:alternating_extraction}
    Let $\rho_{XYE_{0}R_{0}K_{0}}$ be defined as in the beginning of the protocol in~\cref{fig:Alternating_ext}, and
    \begin{equation*}
        H_{\min}^{\varepsilon}(X|E_0 R_{0})_{\rho} \ge k \quad H_{\min}^{\varepsilon}(Y|E_0 R_{0})_{\rho} \ge k \;.
    \end{equation*}
    For every $i$ in the alternating extraction protocol
    \begin{equation}\label{eq:mch_in_alternating}
        X \mch E_{i}R_{i} \mch Y \;.
    \end{equation}
    \begin{align}\label{eq:leakage_in_alternating}
        H_{\min}^{\varepsilon}(X|E_{i}R_{i})_{\rho} & \ge k - (1+(-1)^{i+1} + 2i) \lambda             \\
        H_{\min}^{\varepsilon}(Y|E_{i}R_{i})_{\rho} & \ge k - (1+(-1)^{i} + 2i) \lambda \;. \nonumber
    \end{align}
\end{lemma}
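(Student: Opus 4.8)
The plan is to establish both conclusions of the lemma — the Markov condition~\eqref{eq:mch_in_alternating} and the entropy bounds~\eqref{eq:leakage_in_alternating} — simultaneously by induction on the round index $i$, feeding in the two single-round facts that precede the statement: one leakage round preserves conditional independence of the two sources (\cref{lemma:markov-quantum-from-alternating-with-quantum}), and it costs the \emph{active} source at most $2\lambda$ bits of smooth min-entropy while leaving the inactive source's entropy intact (\cref{lem:smooth-min-entropy-leakage-chain-rule-bounded-channels}). For the base case $i=0$ both assertions are immediate: $A\mch E_0R_0\mch B$ is the hypothesis $I(A:B|E_0R_0)_\rho=0$, and since $1+(-1)^{1}+0=0$ and $1+(-1)^{0}+0=2$ the bounds~\eqref{eq:leakage_in_alternating} read $H_{\min}^\varepsilon(A|E_0)_\rho\ge k$ and $H_{\min}^\varepsilon(B|E_0)_\rho\ge k-2\lambda$, both following from the hypotheses $H_{\min}^\varepsilon(A|E_0)_\rho\ge k$, $H_{\min}^\varepsilon(B|E_0)_\rho\ge k$ (with $R_0$ taken trivial, which is what lets the induction actually carry the entropy conditioned on $E_iR_i$).

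For the inductive step, I would assume both claims at round $i$ and, using the symmetry between $A$ and $B$ in the protocol of~\cref{fig:Alternating_ext}, treat only $i$ even, so $T_i=B$. Round $i$ applies a $\lambda$-bounded leakage channel $\psi_i=\Lambda_i\circ\psi'_i$ for $\rho_{BE_i}$ ($\psi'_i$ on $E_i$ only, $\Lambda_i$ appending and modifying only $L_i$ while fixing the $BE'_i$ marginal, per~\cref{def:bounded-quantum-leakage-channel}), passes to its isometric dilation $\phi_i$ with fresh environment $R_{i+1}$, and attaches the now-public seed $\rho_{K_i}$ as a product factor. I would then argue three things. First, \cref{lemma:markov-quantum-from-alternating-with-quantum} applied after $\phi_i$ gives $I(A:B|E'_iL_iR_{i+1})_\rho=0$, and a product seed factor changes neither conditional mutual information nor conditional entropies, so $I(A:B|E_{i+1}R_{i+1})_\rho=0$. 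Second, since $\phi_i$ acts only on $BE_iR_i$, the recovery map for $A\mch E_iR_i\mch B$ supplied by the induction hypothesis (\cref{lem:marov_chain_condition_Haydenrecovery}) lets one write $\rho_{AE_{i+1}R_{i+1}}=(\1_A\otimes\mathcal M)(\rho_{AE_iR_i})$ for a channel $\mathcal M:E_iR_i\to E_{i+1}R_{i+1}$ (reconstruct $B$, apply $\phi_i$, discard $B$), whence data processing for smooth min-entropy gives $H_{\min}^\varepsilon(A|E_{i+1}R_{i+1})_\rho\ge H_{\min}^\varepsilon(A|E_iR_i)_\rho$ and tracing out $R_{i+1}$ only helps. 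Third, \cref{lem:smooth-min-entropy-leakage-chain-rule-bounded-channels} with the secret register taken to be $B$ yields $H_{\min}^\varepsilon(B|L_iE'_i)_{\phi_i(\rho)}\ge H_{\min}^\varepsilon(B|E_i)_\rho-2\lambda$, the product seed again being irrelevant.

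Accumulating these per-round estimates, each source loses $2\lambda$ precisely in the rounds in which it is active, giving (conditioned on the adversary's register) $H_{\min}^\varepsilon(A|E_i)_\rho\ge k-2\lambda\lfloor i/2\rfloor$ and $H_{\min}^\varepsilon(B|E_i)_\rho\ge k-2\lambda\lceil i/2\rceil$; since $2\lfloor i/2\rfloor\le 1+(-1)^{i+1}+2i$ and $2\lceil i/2\rceil\le 1+(-1)^{i}+2i$ for every $i\ge 0$, the stated bounds~\eqref{eq:leakage_in_alternating} follow. Equivalently one can run the induction directly on the sharper recursion and weaken only at the end; the closed form in the lemma is deliberately a clean over-estimate.

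The substance here is bookkeeping rather than a single hard inequality, and the points that need care are: (a) the smooth min-entropy must be tracked conditioned on the adversary's register \emph{together with} the Stinespring environment $R_i$, since otherwise the data-processing step for the inactive source cannot be chained with the induction hypothesis — this is the reason the base case is read over $E_0R_0$, legitimate because $R_0$ is initially trivial; (b) one must check that attaching the public seed as a product factor is genuinely harmless, which is the modelling choice built into~\cref{fig:Alternating_ext} (the honestly-correlated seed is recovered, at the price of an additional $\varepsilon_{\mathrm{ext}}$, in the computational analysis that follows); and (c) the claim that leakage from $B$ leaves $A$'s entropy untouched really does require the Markov recovery map, not merely the chain rule. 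None of these is deep, but reconciling the parity/offset constants in~\eqref{eq:leakage_in_alternating} with the per-round decrements is exactly where a careless proof would go wrong.
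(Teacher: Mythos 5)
Your proof is correct and follows essentially the same route as the paper, whose own proof is a one-line appeal to repeated application of \cref{lem:smooth-min-entropy-leakage-chain-rule-bounded-channels} and \cref{lemma:markov-quantum-from-alternating-with-quantum}, alternating the active source. Your version is in fact more careful than the paper's: it makes explicit the recovery-map/data-processing step showing the inactive source's entropy is preserved, the need to carry the Stinespring register $R_i$ through the induction, and the arithmetic verifying that the parity constants in~\eqref{eq:leakage_in_alternating} dominate the sharper per-round decrements.
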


\begin{proof}
    The proof follows directly repeated use of~\cref{lem:smooth-min-entropy-leakage-chain-rule-bounded-channels},~\cref{lemma:markov-quantum-from-alternating-with-quantum} and~\cref{lem:ext_with_epsilons} in secession, $i$ times, alternating between $X$ and $Y$ as the min-entropy source. %
\end{proof}

\begin{lemma}
    For every $i$ such that $k - (1+(-1)^{i} + 2i) \lambda > k_{\ext}$,
    \begin{equation*}
        \dist(\rho_{\Ext(K_{i},Y)E_{i+1}},\rho_{U^{m}}\otimes\rho_{E_{i}}) \le i(2\varepsilon + \varepsilon_{\ext})\;.
    \end{equation*}
\end{lemma}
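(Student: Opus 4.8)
The plan is induction on the round index $i$, tracking the quantity $\delta_i := \dist(\rho_{K_iE_i},\rho_{U^m}\otimes\rho_{E_i})$, the trace distance of the $i$-th extracted key from being uniform and independent of the adversary's current register. I will show $\delta_i \le i(2\varepsilon + \varepsilon_{\ext})$; since $\Ext(K_i,B) = K_{i+1}$ in an even round and $E_{i+1}$ is the adversary's state immediately after that round's leakage, the displayed inequality is exactly this invariant (up to the indexing of~\cref{fig:Alternating_ext}). The base case $i=0$ is immediate: the protocol initializes $\rho_{K_0}$ uniform and independent, so $\delta_0 = 0$, and by assumption $I(A:B|E_0R_0)_\rho = 0$ together with $H_{\min}^\varepsilon(A|E_0)_\rho,\,H_{\min}^\varepsilon(B|E_0)_\rho \ge k$.

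For the inductive step, assume the invariant holds through round $i$ and pass to round $i+1$; take $i$ even so the active source is $T_i = B$ (the odd case is symmetric). Three facts combine. First, \cref{lemma:markov-quantum-from-alternating-with-quantum} shows the Markov condition $A \mch E_{i+1}R_{i+1} \mch B$ survives the $\lambda$-bounded leakage of this round, so the entropy budget of $B$ is not disturbed by the adversary's manipulation of the inactive source or the public seed. Second, \cref{lem:alternating_extraction} instantiated at index $i+1$, together with the standing hypothesis $k - (1+(-1)^i+2i)\lambda > k_{\ext}$, yields $H_{\min}^\varepsilon(B|E_{i+1})_\rho \ge k_{\ext}$ — the active source still clears the extractor threshold even after this round's $\lambda$ leaked qubits, which is precisely what the $2\lambda$ slack built into \cref{lem:ext_with_epsilons} absorbs. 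Third, \cref{lem:ext_with_epsilons}, applied with the accumulated seed error $\delta_i$ in the role of its seed-imperfection parameter and the fixed smoothing $\varepsilon$ in the role of the min-entropy smoothing parameter, gives $\delta_{i+1} \le \delta_i + 2\varepsilon + \varepsilon_{\ext}$. Chaining this recursion from $\delta_0 = 0$ produces $\delta_i \le i(2\varepsilon+\varepsilon_{\ext})$, and rewriting $K_{i+1} = \Ext(K_i,B)$ gives the stated bound. One also checks that every invocation of \cref{lem:ext_with_epsilons} along the chain is licensed, since the entropy coefficients at rounds $j \le i$ do not exceed the one at round $i$.

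The routine but fiddly part is the error bookkeeping: there are three independent error contributions — the $\varepsilon_{\ext}$ from each extractor call, the smoothing cost from replacing each source by a nearby state of genuine min-entropy, and the propagated seed imperfection $\delta_i$ — and the crux is that the recursion for $\delta_i$ stays \emph{additive} rather than multiplicative, so the final bound is $O(i)$ and not exponential. This is exactly what the hybrid argument inside the proof of \cref{lem:ext_with_epsilons} secures: first swap $K_i$ for a genuinely uniform seed at cost $\delta_i$ via data processing, and only then invoke the quantum-proof extractor guarantee on the resulting state. The only genuinely substantive ingredients — preservation of conditional independence and the $2\lambda$-bounded erosion of the entropy budget per round — are already supplied by \cref{lemma:markov-quantum-from-alternating-with-quantum} and \cref{lem:alternating_extraction}; everything else is triangle inequalities and carrying the constant $2\varepsilon + \varepsilon_{\ext}$ through the $i$ rounds.
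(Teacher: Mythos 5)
Your proposal is correct and follows essentially the same route as the paper, whose proof is a one-line appeal to the min-entropy bounds of \cref{lem:alternating_extraction} and the quantum-proof extractor guarantee (packaged with imperfect seeds in \cref{lem:ext_with_epsilons}); you simply make explicit the induction and telescoping over the $i$ rounds that the factor $i$ in the bound presupposes. Your observation that the seed-imperfection term must propagate with coefficient $1$ (via the hybrid argument inside \cref{lem:ext_with_epsilons}) rather than the coefficient $2$ suggested by that lemma's stated bound is a detail the paper glosses over, and you resolve it correctly.
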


\begin{proof}
    The proof follows from the bounds on min-entropy from~\cref{lem:alternating_extraction} and the security of quantum-proof seeded extractors~\cref{def:quantum_proof_seeded_ext}. %
\end{proof}

\subsection{Alternating Extraction from Unpredictability Entropy}
\label{subsec:alt-extraction-unp}

We now extend our analysis of alternating extraction to the setting where the sources possess high quantum \emph{computational} unpredictability entropy. In contrast to the information-theoretic case analyzed in the previous subsection, where entropy is measured against unbounded adversaries, we now consider computationally bounded adversaries, and track how unpredictability evolves under repeated quantum leakage.

A key difference in this setting is that we do not use the output of one extraction round as the seed for the next. Instead, we assume that each round is initialized with a fresh, uniformly random public seed, independent of the adversary's state. This modification is necessary because the extractors we analyze in this setting do not support output lengths longer than the seed length, making it unsuitable to recycle extractor outputs as future seeds.

As before, after each round of extraction, a bounded number of qubits may leak to the adversary through a quantum leakage channel. Our goal is to show that unpredictability entropy degrades in a controlled fashion across rounds, and that fresh pseudo-random bits can still be securely extracted, provided the initial unpredictability is sufficiently high and the leakage dimension per round is bounded.

To this end, we combine our leakage chain rule for computational unpredictability entropy with an inductive analysis of entropy degradation over rounds. This allows us to extend the alternating extraction framework to the computational setting, under general quantum leakage.

\begin{figure}[!hb]
    \begin{tcolorbox}[colback=white,colframe=blue!50, title = Alternating extraction under quantum leakage with unpredictability sources and fresh seeds.]
        Let $\rho_{XYE_0R_0}$ be a ccqq-state, classical on $XY$, such that $I(X : Y | E_0R_0)_\rho = 0$.

        Initialize $i = 0$ and repeat the following steps:
        \begin{enumerate}
            \item \label{i-step1_altextunp} If $i$ is even, set $T = Y$, otherwise, set $T = X$.
                  \hfill \textit{[Choose active source]}

            \item Sample a fresh, uniform, public seed $\rho_{S_i}$.
                  \hfill \textit{[Seed is fresh and independent]}

            \item Compute the extractor output: \hfill \textit{[Extracting private pseudo-randomness]}
                  \begin{equation*}
                      \rho_{K_i} = \rho_{\Ext(T_{i}, S_i)} \;.
                  \end{equation*}
            \item Let $\varphi_i$ be a $\lambda$-bounded quantum leakage channel with circuit size $t$ for $\rho_{T_{i}E_{i}}$, such that $\varphi_i = \phi_i \circ \psi_i$, where:
                  \begin{itemize}
                      \item $\psi_i:E_iR_{i} \to E'_{i}R'_{i}$ is an isometry.
                      \item $\phi_i$ is a CPTP map acting on $(T_{i}, E'_i)$ that appends a leakage register $L_i$ of dimension at most $2^\lambda$, modifies only $L_i$, and preserves the marginal on $T_{i} E'_i$: \hfill \textit{[Bounded leakage]}
                            \begin{equation*}
                                \ptr{L_i}{\phi_i(\rho_{T_{i} E'_i R_i})} = \rho_{T_{i} E'_i R_i} \;.
                            \end{equation*}
                  \end{itemize}
                  The adversary's state, including the auxiliary system, after the leakage in round $i$ is:
                  \begin{equation*}
                      \rho_{E_{i+1} R_{i+1}} := \phi_i(\rho_{T_{i} E'_i R'_i}) \;, \quad \text{where} \quad E_{i+1} := E'_i L_i \;.
                  \end{equation*}

            \item The final state after round $i$ is: \hfill \textit{[Seeds are public and independent]}
                  \begin{equation*}
                      \rho_{XYE_{i+1}R_{i+1}} := \phi_i\circ \psi_i(\rho_{XYE_{i}R_{i}}) \otimes \rho_{S_i} \;.
                  \end{equation*}

            \item Increment $i$ by 1 and go to \cref{i-step1_altextunp}.
        \end{enumerate}
    \end{tcolorbox}
    \caption{Alternating extraction protocol under quantum leakage with unpredictability sources and fresh seeds}
    \label{fig:Alternating_ext_unp_fresh_seeds}
\end{figure}

\begin{figure}[ht]
    \centering
    \resizebox{0.8\textwidth}{!}{\begin{tikzpicture}[
        roundnode/.style={circle, draw=black!60, fill=black!5, very thick, minimum size=7mm},
        squarednode/.style={rectangle, draw=black!60, fill=black!5, very thick, minimum size=7mm},
        evenode/.style={rectangle, draw=red!60, fill=black!5, very thick, minimum size=7mm},
    ]
    \node[squarednode]      (A0)                            {$\rho_{A}$};
    \node[]                 (A00)       [below = of A0]		{};
    \node[squarednode]      (A1)       	[below = of A00] 	{$\rho_{A}$};
    \node[]                 (A10)       [below = of A1]		{};
    \node[squarednode]      (A2)       	[below = of A10] 	{$\rho_{A}$};
    \node[]                 (A20)       [below = of A2]		{};
    \node[squarednode]      (A3)       	[below = of A20] 	{$\rho_{A}$};
    \node[]                 (A-0)       [right = of A0]		{};

    \node[]                 (S0)       	[right = of A-0] 	{}; 
    \node[squarednode]      (S00)       [below = of S0]		{$\rho_{S_{0}} $};
    \node[squarednode]      (K1)       	[right = of A1] 	{$\rho_{K_1} = \rho_{\Ext(S_{0},B)}$}; 
    \node[squarednode]      (S10)       [below = of K1]		{$\rho_{S_{1}} $};
    \node[squarednode]      (K2)       	[right = of A2] 	{$\rho_{K_2} = \rho_{\Ext(S_{1},A)}$}; 
    \node[squarednode]      (S20)       [below = of K2]		{$\rho_{S_{2}} $};
    \node[squarednode]      (K3)       	[right = of A3] 	{$\rho_{K_3} = \rho_{\Ext(S_{2},B)}$};

    \node[]                 (B-0)       [right = of S0]		{};
    \node[squarednode]      (B0)       	[right = of B-0] 	{$\rho_{B}$};
    \node[]                 (B00)       [below = of B0]		{};
    \node[squarednode]      (B1)       	[below = of B00] 	{$\rho_{B}$};
    \node[]                 (B10)       [below = of B1]		{};
    \node[squarednode]      (B2)       	[below = of B10] 	{$\rho_{B}$};
    \node[]                 (B20)       [below = of B2]		{};
    \node[squarednode]      (B3)       	[below = of B20] 	{$\rho_{B}$};
    \node[]                 (B--Q)      [right = of B0]		{};
    \node[]                 (B-Q)       [right = of B--Q]	{};
    \node[evenode]          (Q0)       	[right = of B-Q] 	{$\rho_{E_{0}R_{0}}$};
    \node[]                 (Q00)       [below = of Q0]		{};
    \node[evenode]          (Q1)       	[below = of Q00] 	{$\rho_{E_{1}R_{1}}$};
    \node[]                 (Q10)       [below = of Q1]		{};
    \node[evenode]          (Q2)       	[below = of Q10] 	{$\rho_{E_{2}R_{2}}$};
    \node[]                 (Q20)       [below = of Q2]		{};
    \node[evenode]          (Q3)       	[below = of Q20] 	{$\rho_{E_{3}R_{3}}$};


    \draw[->] (A0) to (A1);
    \draw[->] (A1) to (A2); 
    \draw[->] (A2) to (A3);
    \draw[->] (B0) to (B1);
    \draw[->] (B1) to (B2);
    \draw[->] (B2) to (B3);

    \draw[->] (S00) to (K1);
    \draw[->] (S10) to (K2);
    \draw[->] (S20) to (K3);

    \draw[->] (B0) to (K1);
    \draw[->] (A1) to (K2);
    \draw[->] (B2) to (K3);


    \path (B0) to node[] (f1) {$\phi_{1}(\rho_{BE_{0}R_{0}})$} (Q1);
    \draw [red,->] (B0) to (f1);
    \draw [red,->] (Q0) to (f1);
    \draw [blue,->] (f1) to (Q1);

    \path (B1) to node[] (f2) {$\phi_{2}(\rho_{AE_{1}R_{1}})$} (Q2);
    \draw [red,->] (A1) to [out = -20] [in = 180]  (f2);
    \draw [red,->] (Q1) to (f2);
    \draw [blue,->] (f2) to (Q2);

    \path (B2) to node[] (f3) {$\phi_{3}(\rho_{BE_{2}R_{2}})$} (Q3);
    \draw [red,->] (B2) to (f3);
    \draw [red,->] (Q2) to (f3);
    \draw [blue,->] (f3) to (Q3);



\end{tikzpicture}}
    \caption{Alternating extraction with quantum leakage and fresh seeds. Black lines represent the alternating extraction steps without leakage. Red lines are inputs to the leakage channels, blue lines are the corresponding leakage outputs, and green lines indicate that the seeds are public.}
    \label{fig:alternatingextwithQuantumleakagenofreshhseed}
\end{figure}

The Markov chain condition $X \mch E_i R_i \mch Y$ is preserved throughout the execution of the protocol. This follows directly from the same argument as in~\cref{lemma:markov-quantum-from-alternating-with-quantum}, since the leakage model and its decomposition are unchanged. The use of fresh public seeds does not affect this structure.

\begin{lemma}[Unpredictability Entropy After Round $i$]
    \label{lem:unp_entropy_after_round}
    Let $\rho_{XYE_0R_0}$ be a ccqq-state, classical on $XY$, and suppose that
    \begin{equation*}
        I(X : Y | E_0 R_0)_\rho = 0 \;, \quad
        \hunps[s]^{\varepsilon}(X | E_0 R_{0} )_\rho \ge k \;, \quad
        \hunps[s]^{\varepsilon}(Y | E_0 R_{0})_\rho \ge k \;.
    \end{equation*}
    Assume the alternating extraction protocol of~\cref{fig:Alternating_ext_unp_fresh_seeds} is run for $i$ rounds,
    with fresh uniform seeds, $\lambda$-bounded leakage in each round, and adversary updates via circuits of size at most $t$.

    Then, for all $i \ge 0$, the unpredictability entropy satisfies:
    \begin{align*}
        \hunps[s - \bO(i(t + \lambda))]^{\varepsilon}(X | E_i R_{i})_\rho & \ge k - \delta_i^X \cdot 2\lambda \;, \\
        \hunps[s - \bO(i(t + \lambda))]^{\varepsilon}(Y | E_i R_{i})_\rho & \ge k - \delta_i^Y \cdot 2\lambda \;,
    \end{align*}
    where $\delta_i^X$ (resp. $\delta_i^Y$) denotes the number of rounds up to step $i$ in which $X$ (resp. $Y$) was used as the active source.
\end{lemma}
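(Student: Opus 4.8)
The plan is to induct on the number of rounds $i$, proving the two bounds simultaneously. The base case $i=0$ is immediate: $\delta_0^A=\delta_0^B=0$, the budget $s-\bO(i(t+\lambda))$ is just $s$, and the two inequalities are exactly the hypotheses $\hunps[s]^{\varepsilon}(A|E_0)_\rho\ge k$ and $\hunps[s]^{\varepsilon}(B|E_0)_\rho\ge k$. For the inductive step, assume the bounds hold after round $i$ and take $i$ even without loss of generality, so the active source in round $i$ is $T_i=B$ (the odd case is symmetric, with the roles of $A$ and $B$ swapped). First recall that the Markov condition $A\mch E_iR_i\mch B$ is maintained throughout the protocol of \cref{fig:Alternating_ext_unp_fresh_seeds}: this is exactly \cref{lemma:markov-quantum-from-alternating-with-quantum}, since the leakage model and its decomposition $\varphi_i=\phi_i\circ\psi_i$ are unchanged and the fresh public seed $\rho_{S_i}$ is independent of everything and hence does not disturb the Markov structure.

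For the active source $B$ the argument is clean. Round $i$ applies to the pair $(B,E_i)$ a $\lambda$-bounded quantum leakage channel whose pre-processing isometry $\psi_i$ has circuit size at most $t$, so \cref{lem:unp-entropy-leakage-chain-rule-bounded-channels} gives $\hunps[s']^{\varepsilon}(B|E_{i+1})_{\varphi_i(\rho)}\ge\hunps[s'+\bO(t+\lambda)]^{\varepsilon}(B|E_i)_\rho-2\lambda$ for every $s'$. Taking $s'=s-\bO((i+1)(t+\lambda))$ so that $s'+\bO(t+\lambda)=s-\bO(i(t+\lambda))$, and composing with the inductive bound $\hunps[s-\bO(i(t+\lambda))]^{\varepsilon}(B|E_i)_\rho\ge k-\delta_i^B\cdot 2\lambda$, we obtain $\hunps[s-\bO((i+1)(t+\lambda))]^{\varepsilon}(B|E_{i+1})_\rho\ge k-(\delta_i^B+1)\cdot 2\lambda$. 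Since $B$ was active in round $i$ we have $\delta_{i+1}^B=\delta_i^B+1$, which matches the claimed $B$-bound after round $i+1$.

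For the inactive source $A$ the key observation is that round $i$ never touches the $A$ register: $\psi_i$ acts only on $E_iR_i$, and the leakage map $\phi_i$ acts only on $(B,E'_i)$, appending the $\le\lambda$-qubit register $L_i$ and preserving the $BE'_i$ marginal. Combining $A\mch E_iR_i\mch B$ with the fact that $L_i$ is produced by a channel on $BE'_i$, one checks — exactly as in the entropic chain of identities preceding \cref{lem:alternating_extraction} — that $A\mch E'_iR'_i\mch (B,L_i)$; in particular $A$ is still conditionally independent of the new leakage $L_i$ given the adversary's pre-leakage state. Thus the adversary's post-round view of $A$ is obtained from the pre-round one by (i) the bounded isometry $\psi_i$ and (ii) appending a register that is uninformative about $A$, giving $\hunps[s-\bO((i+1)(t+\lambda))]^{\varepsilon}(A|E_{i+1})_\rho\ge\hunps[s-\bO(i(t+\lambda))]^{\varepsilon}(A|E_i)_\rho\ge k-\delta_i^A\cdot 2\lambda=k-\delta_{i+1}^A\cdot 2\lambda$, using $\delta_{i+1}^A=\delta_i^A$. (In the early rounds, before $A$ has ever been a source, $A$ is in fact independent of the whole system $E_iR_i$, since every preceding round and every public seed touches only $B$ and the adversary's registers; the content of this step is entirely in the rounds after $A$ has already leaked.) Once both bounds hold for round $i+1$ the induction closes, and iterating over $i$ rounds yields the stated $s-\bO(i(t+\lambda))$ budget and $\delta_i^A\cdot 2\lambda$, $\delta_i^B\cdot 2\lambda$ entropy losses.

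The main obstacle is precisely the last step: upgrading the information-theoretic statement ``$L_i$ carries no information about $A$ given $E'_iR'_i$'' to a statement about \emph{computationally bounded} guessers at no entropy cost. The naive reduction — transform a size-$s'$ guesser for $A$ given $E'_iL_iR'_i$ into one given $E'_iR'_i$ by regenerating $L_i$ — would compose with the Markov recovery map $\cR_{E'_iR'_i\to E'_iR'_iL_i}$ (and with $\phi_i$), whose circuit complexity is a priori unbounded, so it is not directly covered by the data-processing inequality \cref{lem:data_processing_inequality_unpredictability_entropy}. The natural way around this is to work at the level of the \emph{smoothed} state: using the extension property of the purified distance (\cref{lem:extension_uhlmann_property_for_purefied_distance}), transport the smoothed state $\tilde\rho_{AE'_iR'_i}$ furnished by the inductive hypothesis to a smoothed state $\tilde\tau_{AE'_iL_iR'_i}\in\cB_\varepsilon(\rho_{AE'_iL_iR'_i})$ on which $A$ remains conditionally independent of $L_i$, and then argue that a size-$s'$ circuit gains nothing from such a decoupled register, so that guessing $A$ from $E'_iL_iR'_i$ reduces to guessing it from $E'_iR'_i$ at only the $\bO(\lambda)$ cost of discarding $L_i$. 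Making this reduction circuit-size-efficient — rather than merely information-theoretically valid — is the delicate point on which the tightness of the $A$-bound (as opposed to a cruder $2\lambda$-per-round loss obtained directly from \cref{lem:chain-rule-quantum-unpredictability-entropy}) rests.
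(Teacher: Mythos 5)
Your proposal follows the paper's argument almost exactly: the same induction on $i$, the same base case, and the same treatment of the active source — data-processing (\cref{lem:data_processing_inequality_unpredictability_entropy}) for the size-$t$ pre-processing $\psi_i$, then the leakage chain rule (\cref{lem:chain-rule-quantum-unpredictability-entropy}) for $\phi_i$, costing $2\lambda$ in entropy and $\bO(t+\lambda)$ in circuit size, with $\delta^{T}_{i+1}=\delta^T_i+1$ for the active source $T$.

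The only place you diverge is the passive source. The paper disposes of it in one line, asserting that since the round's operations do not act on the passive register one may ``apply the data-processing inequality'' at no entropy cost. You correctly observe that this is not immediate: the map producing $L_i$ takes the \emph{active} source as input and may have unbounded circuit complexity, so it cannot simply be absorbed into a bounded guessing circuit, and the Markov recovery map one would invoke to argue that $L_i$ is useless for guessing the passive source is likewise of uncontrolled complexity. Your proposed repair — extend the smoothed state via \cref{lem:extension_uhlmann_property_for_purefied_distance}, push it through the round's channels, and argue that a register decoupled from the passive source cannot help a bounded guesser — is the right idea, but as you yourself concede you do not carry out the circuit-efficiency part of that reduction, so on this step your write-up is no more complete than the paper's; it is, however, more honest about where the difficulty lies. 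Note also that your fallback of applying the chain rule to the passive source as well would cost $2\lambda$ every round and yield only $k-2i\lambda$, which is strictly weaker than the stated $k-\delta_i\cdot 2\lambda$. In short: same approach and correct where the paper is correct; the gap you flag is a gap in the paper's own argument rather than a flaw introduced by you.
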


\begin{proof}
    We proceed by induction on $i$.

    The assumptions of the lemma directly give for $i=0$
    \begin{equation*}
        \hunps[s]^{\varepsilon}(X | E_0 R_0)_\rho \ge k \;, \quad
        \hunps[s]^{\varepsilon}(Y | E_0 R_0)_\rho \ge k \;.
    \end{equation*}

    Assume the claim holds for round $i$. We show it holds for round $i+1$.
    Let $T$ be the active source used in round $i$. Without loss of generality, suppose $T = X$ (the case $T = Y$ is symmetric). By the protocol definition, the adversary state is updated as
    \begin{equation*}
        \rho_{E_{i+1} R_{i+1}} := \phi_i(\rho_{T E'_i R_i}) \otimes \rho_{S_i} \;,
    \end{equation*}
    where $E'_i$ is obtained from $(E_i, R_i)$ via a size-$t$ circuit $\psi_i$, and $\phi_i$ is a $\lambda$-bounded quantum leakage channel acting on $T$ and $(E'_i, R_i)$.

    By the data-processing inequality~\cref{lem:data_processing_inequality_unpredictability_entropy}, the transformation $\psi_i$ degrades entropy by at most $t$ gates:
    \begin{equation*}
        \hunps[s - \bO(i(t + \lambda)) - t]^{\varepsilon}(X | E'_iR'_i)_\rho \ge \hunps[s- \bO(i(t + \lambda))]^{\varepsilon}(X | E_i R_i)_\rho \;.
    \end{equation*}
    Then, by the leakage chain rule for unpredictability entropy~\cref{lem:chain-rule-quantum-unpredictability-entropy}, the leakage channel $\phi_i$ reduces entropy by at most $2\lambda$ and adds an additional $\bO(\lambda)$ gate overhead:
    \begin{equation*}
        \hunps[s - \bO(i(t + \lambda)) - \bO(\lambda)]^{\varepsilon}(X | E_{i+1} R_{i+1})_\rho \ge \hunps[s- \bO(i(t + \lambda))]^{\varepsilon}(X | E'_i R'_{i})_\rho - 2\lambda \;.
    \end{equation*}
    Combining the two steps:
    \begin{equation*}
        \hunps[s - \bO((i+1)(t + \lambda))]^{\varepsilon}(X | E_{i+1} R_{i+1})_\rho \ge \hunps[s- \bO(i(t + \lambda))]^{\varepsilon}(X | E_iR_i)_\rho - 2\lambda \;.
    \end{equation*}
    By the induction hypothesis,
    \begin{equation*}
        \hunps[s - \bO(i(t+\lambda))]^{\varepsilon}(X | E_iR_i)_\rho \ge k - \delta_i^X \cdot 2\lambda \;,
    \end{equation*}
    and since $X$ was used in round $i$, we have $\delta_{i+1}^X = \delta_i^X + 1$, hence:
    \begin{equation*}
        \hunps[s - \bO((i+1)(t + \lambda))]^{\varepsilon}(X | E_{i+1} R_{i+1})_\rho \ge k - \delta_{i+1}^X \cdot 2\lambda \;.
    \end{equation*}

    For the passive source $Y$, note that it is untouched in round $i$, and the only transformation to the adversary state is via a size-$t$ circuit followed by a leakage map that does not act on $Y$. Therefore, $\delta_{i+1}^Y=\delta_{i}^Y$, applying the data-processing inequality:
    \begin{equation*}
        \hunps[s - \bO((i+1)(t + \lambda))]^{\varepsilon}(Y | E_{i+1} R_{i+1})_\rho \ge \hunps[s - \bO(i(t+\lambda))]^{\varepsilon}(Y | E_iR_i)_\rho \ge k - \delta_i^Y \cdot 2\lambda = k - \delta_{i+1}^Y \cdot 2\lambda \;,
    \end{equation*}
    since $Y$ was not active in this round. The gate overhead again accumulates linearly in $i(t+\lambda)$, and the claim holds for $i+1$. %
\end{proof}

Having established that unpredictability entropy degrades in a controlled manner across rounds, we now show that extraction can proceed whenever the active source retains sufficient entropy. The following lemma applies the extractor security guarantee to derive computational pseudo-randomness from the active source in each round.

\begin{lemma}[Extraction from Unpredictability Entropy with Fresh Seeds]\label{lem:ext_unp_fresh_seed}
    Let $T \in \set{X,Y}$ be the active source used in round $i$ of the protocol of~\cref{fig:Alternating_ext_unp_fresh_seeds}, and suppose:
    \begin{itemize}
        \item $\rho_{T E_iR_i}$ is a cq-state at the beginning of round $i$.
        \item $S_i$ is a fresh, uniform, public seed, independent of $T$ and $E_i$.
        \item $\Ext : \{0,1\}^n \times \{0,1\}^d \to \{0,1\}^m$ is a seeded extractor secure against quantum unpredictability entropy in the sense of~\cref{def:ext_pseudorandomness_from_unp}.
        \item The source satisfies $\hunps[s']^{\varepsilon'}(T | E_iR_i)_\rho \ge k_{\mathrm{ext}} + 2\lambda$.
    \end{itemize}
    Then, after applying the leakage channel $\varphi_i$ in round $i$, the output of $\Ext(T, S_i)$ is pseudorandom with respect to the adversary:
    \begin{equation*}
        \dist_{s}(\rho_{\Ext(T, S_i) S_i E_{i+1}R_{i+1}}, \rho_{U_m} \otimes \rho_{S_i E_{i+1}R_{i+1}}) \le \varepsilon_{\mathrm{ext}} + 2\varepsilon' \;,
    \end{equation*}
    for some $s = \bO(s')$ depending on the extractor construction.
\end{lemma}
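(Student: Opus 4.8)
The plan is to reduce the statement to two facts already established in the excerpt: the leakage chain rule for $\lambda$-bounded channels (\cref{lem:unp-entropy-leakage-chain-rule-bounded-channels}) and the security guarantee of an unpredictability-entropy extractor (\cref{def:ext_pseudorandomness_from_unp}). First I would isolate what round $i$ does to the adversary's view of the active source $T$: the only operations are the pre-processing isometry $\psi_i$, which acts only on the adversary's register via a circuit of size $t$, and the leakage map $\phi_i$, which appends a register $L_i$ of dimension at most $2^\lambda$ while leaving the $T E'_i$ marginal invariant. Since neither $\psi_i$ nor $\phi_i$ touches the fresh public seed $S_i$, after round $i$ the seed $S_i$ is still uniform and independent of the joint state $\rho_{T E_{i+1}}$, with $E_{i+1}=E'_iL_i$; note also that $K_i=\Ext(T,S_i)$, although computed before the leakage, is never fed into $\varphi_i$, so the relevant object is exactly the cq-state $\rho_{T S_i E_{i+1}}$ together with the classical function $\Ext(\cdot,\cdot)$.

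The core step is to lower-bound the unpredictability entropy of $T$ conditioned on the post-leakage register $E_{i+1}$. Applying \cref{lem:unp-entropy-leakage-chain-rule-bounded-channels} with $A=T$ and $E$ the adversary's register at the start of round $i$ gives, for every $s''$,
\[
    \hunps[s'']^{\varepsilon'}(T \mid E_{i+1})_{\rho} \;\ge\; \hunps[2s''+2\lambda+5+t]^{\varepsilon'}(T \mid E_i)_{\rho} - 2\lambda \;.
\]
Choosing $s''$ so that $2s''+2\lambda+5+t \le s'$, i.e.\ $s''=\bO(s')$, and combining with the hypothesis $\hunps[s']^{\varepsilon'}(T\mid E_i)_\rho \ge k_{\mathrm{ext}}+2\lambda$ (using monotonicity of $\hunps$ in its circuit-size parameter), I obtain $\hunps[s'']^{\varepsilon'}(T\mid E_{i+1})_\rho \ge k_{\mathrm{ext}}$. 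Note that the chain rule leaves the smoothing parameter $\varepsilon'$ unchanged, which is what keeps the final error at $\varepsilon_{\mathrm{ext}}+2\varepsilon'$.

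The conclusion then follows directly from \cref{def:ext_pseudorandomness_from_unp} applied to $\rho_{T S_i E_{i+1}}$: the seed $S_i$ is uniform and independent, and $T$ has unpredictability entropy at least $k_{\mathrm{ext}}$ against size-$s''$ guessing circuits given $E_{i+1}$, so $\Ext(T,S_i)$ is $(\varepsilon_{\mathrm{ext}}+2\varepsilon')$-indistinguishable from uniform against size-$s$ distinguishers given $(S_i,E_{i+1})$, with $s=\bO(s'')=\bO(s')$. I do not anticipate a real obstacle; the only thing requiring care is the bookkeeping of circuit sizes — one must confirm that \cref{lem:unp-entropy-leakage-chain-rule-bounded-channels} inflates the size only by a multiplicative constant plus an additive $\bO(\lambda+t)$ so that $s$ is still $\bO(s')$, and that the concrete extractor used (e.g.\ the inner-product construction of \cref{lem:m-bit-ext-from-unp-IP}) is instantiated with parameters $(k_{\mathrm{ext}},\varepsilon_{\mathrm{ext}})$ compatible with the circuit size $s''$ appearing above.
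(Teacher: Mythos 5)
Your proof is correct, but it takes a genuinely different route from the paper's. You first push the entropy hypothesis through the leakage via the chain rule for $\lambda$-bounded channels (\cref{lem:unp-entropy-leakage-chain-rule-bounded-channels}), obtaining $\hunps[s'']^{\varepsilon'}(T | E_{i+1})_\rho \ge k_{\mathrm{ext}}$ with $s''=\bO(s')$, and only then invoke the extractor guarantee on the post-leakage state $\rho_{TS_iE_{i+1}}$. The paper does the opposite: it applies the extractor guarantee to the pre-leakage state $\rho_{TS_iE_i}$ (where the entropy is $k_{\mathrm{ext}}+2\lambda \ge k_{\mathrm{ext}}$) and then argues that $\varphi_i$ can be absorbed into the distinguisher at a cost of $\bO(t+\lambda)$ gates. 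Your ordering is what actually consumes the $+2\lambda$ slack in the hypothesis; in the paper's argument that slack is never used. Your route also has a concrete advantage: the paper's absorption step implicitly asks the distinguisher to re-apply a channel that reads the register $T$ (the leakage register $L_i$ may depend on $T$ through controlled operations), which a distinguisher holding only $\Ext(T,S_i)$, $S_i$ and $E_i$ cannot do; the chain rule handles exactly this dependence, paying for it with the $2\lambda$ entropy loss. The only loose end, which you flag yourself, is the circuit-size bookkeeping: $s''=\bO(s')$ must be matched against the guessing-circuit size demanded by the concrete extractor (e.g., $\bO(ns+m)$ for the inner-product construction of \cref{lem:m-bit-ext-from-unp-IP}), which is the same looseness present in the paper's own statement of ``$s=\bO(s')$''.
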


\begin{proof}
    Since $S_i$ is uniform and independent of $T$ and $E_i$, we may apply the extractor guarantee for unpredictability entropy, such as~\cref{lem:m-bit-ext-from-unp-IP}, to conclude that
    \begin{equation*}
        \dist_{s}(\rho_{\Ext(T, S_i) S_i E_iR_i}, \rho_{U_m} \otimes \rho_{S_i E_iR_i}) \le \varepsilon_{\mathrm{ext}} + 2\varepsilon' \;.
    \end{equation*}
    The adversary state is updated by the leakage channel $\varphi_i$, which acts only on $T$ and $(E_i, R_i)$ and does not touch the seed $S_i$. In particular, this transformation can be absorbed into the distinguisher, increasing its size by at most $\bO(t + \lambda)$. Therefore,
    \begin{equation*}
        \dist_{s}(\rho_{\Ext(T, S_i) S_i E_{i+1}R_{i+1}}, \rho_{U_m} \otimes \rho_{S_i E_{i+1}R_{i+1}}) \le \varepsilon_{\mathrm{ext}} + 2\varepsilon' \;,
    \end{equation*}
    as required. %
\end{proof}

\begin{remark}
    Throughout the analysis, we are giving the adversary access to a hypothetical purifying system $R$. Note that this can only benefit the adversary, as losing this extra system cannot help in distinguishing the outputs from uniform randomness. 
\end{remark}

\section{Open Questions}\label{sec:open_questions}
Our work has introduced quantum unpredictability entropy and demonstrated its applications to cryptographic constructions. While we have established several important properties and applications, many interesting questions remain open. Below, we discuss several directions for future research. We organize them by subjects in the following subsections.

\subsection{Computational Purified Distance}
Is there a natural computational version of the purified distance?

The purified distance enjoys several desirable properties, most notably the data-processing inequality and what we refer to as the Uhlmann extension property. However, it is an information-theoretic measure: it does not reflect the computational limitations of the distinguisher. In contrast, in the classical setting of unpredictability entropy~\cite{HLR07SepPseudoentropyfromCompressibility}, both the indistinguishability and the guessing probability are defined relative to computationally bounded adversaries.

In the classical setting, using computational distance for smoothing entropy has benefits. It makes it trivial to see that for any $(s,\varepsilon)$ and any $XY$ classical joint distribution  $H_{\textrm{unp}}^{s,\varepsilon}(X|Y)\ge H^{\varepsilon,s}_{\hill}(X|Y)$. As a consequence, the classical unpredictability entropy of the output of pseudo-random generators on random short seeds, is high. In contrast, our definition of unpredictability would not assign high entropy to the image of pseudo-random generators, as the image is \emph{not} close to the uniform distribution in purified distance, only in computational distance.

A computational distance measure that retains the benefits of the purified distance for quantum states could lead to improved definitions of quantum computational entropies, particularly those involving smoothing or duality, and pseudo-random generators.

Recall that~\cref{lem:uniq_fidlity_for_pure_uhlmann}, adapted from~\cite{tan2024prospectsdeviceindependentquantumkey}, shows that fidelity is essentially the unique function satisfying both data-processing and the pure-Uhlmann property. As purified distance is a monotonic function of fidelity, this highlights its unique status as a metric compatible with quantum smoothing.

Thus, defining a computationally meaningful variant of the purified distance that preserves these properties seems inherently delicate. One possibility, as mentioned in the introduction, is to define a computational version $\bP^O$ via oracle access to a circuit that implements Uhlmann transformations. However, care must be taken: if the oracle is too powerful, the definition risks collapsing back to the information-theoretic case. This challenge is closely related to the complexity of the Uhlmann transformation problem~\cite{bostanci2023unitary}, and remains a fascinating open direction.

\subsection{Quantum Computational Entropies}

In this work, we defined unpredictability entropy for cq-states $\rho_{XE}$, where $X$ is classical and $E$ may be quantum, based on the guessing probability of a computationally bounded adversary. While this setting is natural for cryptographic tasks such as randomness extraction, and security of classical cryptographic protocols against quantum adversaries, it is not the most general. In a follow-up work~\cite{avidan2025fully}, we extend the definition of computational unpredictability entropy to fully quantum states, where both systems may be quantum, and we develop appropriate operational interpretations, including a generalization of the dual entropy.

In the information-theoretic setting, several seemingly distinct tasks, such as guessing a classical variable, decoupling a quantum system, or quantum correlations, are all quantified by the same entropy measure: smooth min-entropy~\cite{KRS09OperationalMeaningEntropy}. In the computational setting, however, it remains unclear whether the corresponding computational entropy notions (e.g., unpredictability entropy, correlation entropy, and decoupling-based entropy) coincide, or whether they define fundamentally different quantities. Understanding these relationships could clarify the landscape of quantum computational entropy and help identify the right tools for various cryptographic tasks.

Another compelling direction is to explore the connection between quantum computational entropies and quantum pseudorandom objects. Unpredictability is closely related to pseudo-randomness and the computational complexity of guessing, and it is natural to ask how measures like $\hunp$ relate to recent constructions of quantum pseudorandom states~\cite{QPRSJKF18}, unpredictable state generators~\cite{morimae2024quantumunpredictability}, quantum one-way puzzles~\cite{khurana2024commitmentsquantumonewayness}, pseudorandom unitaries~\cite{FermiMaHsin24pseRandom_Unitaries}, and pseudo-entangled states~\cite{arnonfriedman2023computationalentanglementtheory,aaronson2023quantumpseudoentanglement,bouland2023publickeypseudoentanglementhardnesslearning}.

\subsection{Pseudo-Randomness Extraction}
Beyond the inner-product function, are there other single-bit extractors that satisfy the reconstruction property required for Trevisan's extractor and remain secure against quantum adversaries for sources with high~$\hunps^\varepsilon(X|E)$?

Are there short-seed or two-source extractors that can extract pseudo-randomness from sources with high quantum computational unpredictability entropy?

More generally, what structural properties must extractors satisfy in order to work with sources quantified by $\hunps^\varepsilon(X|E)$ in the presence of quantum side-information? For instance, do such extractors require a special quantum reconstruction guarantee, or can classical reconstruction properties be adapted to ensure quantum security?

In the classical setting, it is known that all reconstructive extractors with an efficient reconstruction, such as Trevisan's extractor, work with unpredictability entropy~\cite{HLR07SepPseudoentropyfromCompressibility}. However, in the quantum case, no general class of extractors is currently known to be secure against quantum side-information in the unpredictability setting. Closing this gap remains an important direction for future work.

Even in the information-theoretic setting, it is not trivial that all classical reconstructive extractors are also quantum proof extractors with similar parameters, as is the case for Trevisan's extractors.
Not even ones with an efficient reconstruction process, such as in~\cite{Ext_recon_UR05,Ext_recon_UMANS2003419}. A reconstruction process may use the classical side-information multiple times.
Quantum side-information is more delicate; measuring it in one basis may destroy the information that could have been available on a different basis. For Trevisan's extractors, this problem was circumvented~\cite{DPVR12trevisanextwithquantumsideinfo} by a reduction to $1$ bit extractors that allows for a single measurement reconstruction. In our case, we show that a similar reduction is possible in the computational setting, and we show a single measurement \emph{efficient} reconstruction for the inner product $1$ bit extractor. More general questions about quantum reconstructive extractors remain open. Can any classical reconstruction process be transformed into a quantum restriction process? Is there a special property required for such reductions to such single measurement quantum reconstruction?

\section{Summary}

We introduce a new quantum computational unpredictability entropy measure that quantifies how difficult it is for computationally bounded quantum adversaries to predict classical secrets given quantum side-information. This framework allows us to bridge concepts from classical computational entropy with the structure and constraints of quantum information. Our work yields several contributions to quantum cryptography.

First, in~\cref{subsec:chain_rule_for_unp}, we prove a quantum leakage chain rule for this entropy measure. For any~$\rho_{XEL}$ cqq-state, $\varepsilon \ge 0$, $s \in \mathbb{N}$, and $\ell = \log \dim(L)$, we have:
\begin{equation*}
    \hunps^{\varepsilon}(X|EL)_{\rho} \ge \hunps[s + \bO(\ell)]^{\varepsilon}(X|E)_{\rho} - 2\ell \;.
\end{equation*}

Second, in~\cref{sec:ps_rand}, we show that certain randomness extractors can securely extract pseudo-randomness from sources with high unpredictability entropy, even in the presence of quantum side-information. We prove that the inner-product function serves as an effective one-bit extractor and show how to extend this to multi-bit extraction using Trevisan's construction~\cite{trevisan2001extractors} with only a modest increase in seed length.

Third, in~\cref{subsec:OCL}, we propose a new, more general model of quantum leakage channels that captures a wider class of quantum attacks than previous models. Our framework accounts for preexisting quantum side-information and removes the bounded-storage assumption typically imposed in prior work. We demonstrate the utility of the leakage chain rule and the leakage model by analyzing a protocol of alternating extraction under quantum leakage, both in the information-theoretic setting and the computational unpredictability setting.

A natural next step is to extend these ideas to settings where the computation itself is quantum. In such scenarios, one could potentially define and analyze the security of fully quantum protocols with quantum side-information, using the fully quantum version of unpredictability entropy introduced in our follow-up work~\cite{avidan2025fully}.

\paragraph{Acknowledgments.}
We thank Shafi Goldwasser for introducing us to the topic of computational entropies and thank Zvika Brakerski, Thomas Hahn, Amnon Ta-Shma and Thomas Vidick for useful discussions and Christian Schaffner and anonymous referees for their useful comments on a previous version of the manuscript.
This research was supported by the Israel Science Foundation (ISF), and the Directorate for Defense Research and Development (DDR\&D), grant No. 3426/21, the Peter and Patricia Gruber Award and by the Air Force Office of Scientific Research under award number FA9550-22-1-0391.
RA was further generously supported by the Koshland Research Fund and is the Daniel E. Koshland Career Development Chair.

\newpage

\appendix


\subsection{Chain Rule for Classical Unpredictability Entropy}\label{sec:appendix_chainrule}

Recall the fully classical chain rule for unpredictability entropy, the proof is similar to~\cite[Lemma 11]{KPWW14CounterexampleHILLchainrule}, rewritten in our notations and definitions.

\begin{lemma}\label{lem:classical-chain-rule-for-unpredictability}
    For $X,Y,L$ random variables where $L$ is distributed over $\bits^{\ell}$. Let $s \in \mathbb{N},\varepsilon\ge 0$ it holds that
    \begin{equation*}
        \hunp^{s + \bO(\ell) ,\varepsilon}(X|Y) \ge \hunp^{s,\varepsilon}(X|YL) - \ell
    \end{equation*}
\end{lemma}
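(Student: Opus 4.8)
<br>

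The plan is to prove the statement by contraposition, adapting the classical leakage argument of~\cite[Lemma 11]{KPWW14CounterexampleHILLchainrule} and specializing the proof of~\cref{lem:chain-rule-quantum-unpredictability-entropy} to the case where every register is classical (diagonal in the computational basis), so that the quantum factor $2\ell$ improves to $\ell$. The entropy gap of $\ell$ will arise from the cost of reproducing the missing $\ell$-bit leakage, and the $\bO(\ell)$ gate overhead on the left-hand side will be exactly the cost of the gadget that samples it. So the whole argument is a guessing-probability reduction between the two smoothed distributions, with careful bookkeeping of circuit size.

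First I would handle the smoothing. Fixing a (sub-normalized) distribution $\tilde{\rho}_{XYL} \in \cB_{\varepsilon}(\rho_{XYL})$ that achieves the value $\hunp^{s,\varepsilon}(X|YL)$, I would use that marginalizing out $L$ cannot increase the purified distance; for classical distributions this is the elementary extension property for statistical distance remarked upon after~\cref{lem:extension_uhlmann_property_for_purefied_distance}. Hence $\tilde{\rho}_{XY}$ is an admissible $\varepsilon$-smoothing of $\rho_{XY}$ that may be used to evaluate $\hunp^{s+\bO(\ell),\varepsilon}(X|Y)$, and no extra smoothing error is incurred. This is what keeps the smoothing parameter $\varepsilon$ unchanged on both sides.

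The core step is the reduction itself. Using the classical analogue of the operator inequality from~\cref{lem:inequality-of-operators-for-unp-leakage-chain-rule}, namely
\begin{equation*}
    \pr{YL = yc} \le 2^{\ell}\,\pr{Y=y}\,\pr{U_L = c} \;,
\end{equation*}
with $U_L$ uniform on $\bits^{\ell}$, I would relate a predictor for $X$ that is given $(Y,L)$ to a predictor that is given only $Y$ but independently samples a uniform $\hat{L}\in\bits^{\ell}$ (using at most $\bO(\ell)$ extra gates) and feeds $(Y,\hat{L})$ to the original circuit. Since $\hat{L}$ matches the true leakage with probability $2^{-\ell}$, the two success probabilities differ by exactly a factor $2^{\ell}$ while the circuit size changes only by $\bO(\ell)$. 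Reading this relation between the guessing probabilities on $\tilde{\rho}_{XY}$ and $\tilde{\rho}_{XYL}$ as a bound on the respective unpredictability entropies yields the claim:
\begin{equation*}
    \hunp^{s + \bO(\ell),\varepsilon}(X|Y) \ge \hunp^{s,\varepsilon}(X|YL) - \ell \;.
\end{equation*}

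The hard part will be the circuit-size bookkeeping across the reduction, aligned with the smoothing. I must make sure the uniform-completion gadget and the discarding of the leakage register both fit inside the advertised $\bO(\ell)$ gate budget, and that the distribution used on the $Y$-side is genuinely the $L$-marginal of the optimal $YL$-smoothing so that the $\ell$-bit loss (and not more) is the only degradation. The classical extension property for statistical distance is precisely what makes this last point clean, and it is the exact analogue of the role played by~\cref{lem:extension_uhlmann_property_for_purefied_distance} in the fully quantum proof.
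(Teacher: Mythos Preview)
Your reduction—replace the missing leakage by a uniform guess $\hat L$ and run the $(Y,L)$-predictor on $(Y,\hat L)$—is exactly the right one, and it is what the paper uses. The gap is that you fix the smoothing on the wrong side, so the quantifiers in the reduction run backwards.

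You pick the optimal smoothing $\tilde\rho_{XYL}$ for $\hunp^{s,\varepsilon}(X|YL)$ and marginalize to $\tilde\rho_{XY}$. What you then know is that every size-$s$ predictor $\cC$ on $(Y,L)$ succeeds with probability at most $2^{-k}$ on $\tilde\rho_{XYL}$. To certify $\tilde\rho_{XY}$ as a witness for the left-hand side you would need that every size-$(s+\bO(\ell))$ predictor $\cC'$ on $Y$ succeeds with probability at most $2^{-(k-\ell)}$. But the ``sample $\hat L$'' construction goes from a $(Y,L)$-predictor $\cC$ to a \emph{larger} $Y$-predictor $\cC'$ with $P_{\cC'}\ge 2^{-\ell}P_\cC$; this lower-bounds one particular $\cC'$ and says nothing about an arbitrary size-$(s+\bO(\ell))$ predictor on $Y$. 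Such a predictor could simply ignore $L$ and be strictly more powerful than anything of size $s$, so no upper bound follows from your hypothesis.

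The paper proceeds by contraposition and anchors the smoothing on the other side: it takes an \emph{arbitrary} smoothing $(W,Z)$ of $(X,Y)$, \emph{extends} it to $(W,Z,M)$ close to $(X,Y,L)$ via the extension property, invokes the negated hypothesis to obtain a size-$s$ circuit $C$ with $\Pr[C(Z,M)=W]>2^{-k+\ell}$, and only then builds $C'(z)=C(z,\hat l)$ of size $s+\bO(\ell)$ with $\Pr[C'(Z)=W]\ge 2^{-\ell}\Pr[C(Z,M)=W]>2^{-k}$. The point is that the ``sample $\hat L$'' trick turns a bound on the \emph{larger} $Y$-predictor into a bound on the \emph{smaller} $(Y,L)$-predictor, so the witness smoothing must live on $XY$ and be extended to $XYL$—not chosen on $XYL$ and marginalized. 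If you flip the smoothing direction (use extension rather than data processing of the distance), your outline becomes the paper's proof.
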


\begin{proof}
    Let $k\in \mathbb{N}$ and let $t$ be the size of a circuit required to generate a uniformly random bit string of length $\ell$. Note that $t = \bO(\ell)$.
    We will show that
    \begin{equation}\label{eq:appendix-fully-classical-chain-rule-assumption}
        \hunp^{s,\varepsilon}(X|YL) < k - \ell \implies \hunp^{s + t,\varepsilon}(X|Y) < k \;
    \end{equation}
    The assumption in the LHS of~\cref{eq:appendix-fully-classical-chain-rule-assumption} implies that for any random variables $(W,Z,M)$ such that $\dist_{s}(XYL,WZM) <\varepsilon$ there is a circuit $C$ of size $s$ such that
    \begin{equation*}
        \pr{C(ZM) = W} > 2^{-k + \ell} \;.
    \end{equation*}
    Given $(W,Z)$ such that $\dist_{s + t}(XY,WZ) < \varepsilon$, we know that $\dist_{s}(XY,WZ) \le \dist_{s + t}(XY,WZ) < \varepsilon$, since bigger circuits can only help in distinguishing and that there is an extension of $(W,Z)$ to a joint probability $(W,Z,M)$ such that $\dist_{s}(XY,WZ) = \dist_{s}(XYL,WZM)$.
    We can define a circuit $C'$ that takes $z\in Z$ as input, generates uniformly at random $\ell$ bits, $l$, and then outputs $C(z,l)$.
    The size of $C'$ is $s + t$. For $(w,m,z)$ chosen from $(W,Z,M)$ and $l$ chosen uniformly at random we have
    \begin{align*}
        \pr{C'(z) = w} & \ge \pr{C'(z) = w | m = l} \cdot \pr{m = l} \\
                       & \ge \pr{C(z,m) = w} \cdot 2^{-\ell}         \\
                       & \ge 2^{-k + \ell} \cdot 2^{-\ell}           \\
                       & = 2^{-k} \;.
    \end{align*}
    This implies that $\hunp^{s + t,\varepsilon}(X|Y) < k$ as required. %
\end{proof}

We can extend this definition to a more general definition of unpredictability entropy by separating the computational parameter into two computational parameters representing the two roles $s$ has in the original definition.
$s_{ind}$ for $(\varepsilon, s_{ind})$ indistinguishably and $s_{guess}$ for the size of the guessing circuit.

\begin{definition}\label{def:two-parameter-unpredictability-entropy}
    For any cq-state $\rho_{XE}$, and $\varepsilon\ge 0$, $s_{ind},s_{guess}\in\bbN$.
    We say that
    \begin{equation*}
        \hunp^{s_{ind},s_{guess},\varepsilon}(X|E)_{\rho}\ge k \;,
    \end{equation*}
    if there is a cq-state $\tilde{\rho}_{XE}$ such that $\dist_{s_{ind}}(\rho_{XE},\tilde{\rho}_{XE}) \le \varepsilon$, for any guessing circuit $\cC$ of size $s_{guess}$
    \begin{equation*}
        \pr{\cC(\tilde{\rho}_{E}^x) = x} \le 2^{-k} \;.
    \end{equation*}
\end{definition}

\begin{lemma}\label{lem:classical-chain-rule-two-parameter}
    For $X,Y,L$ random variables where $L$ is distributed over $\bits^{\ell}$. Let $s_{ind},s_{guess} \in \mathbb{N},\varepsilon\ge 0$ it holds that
    \begin{equation*}
        \hunp^{s_{ind},s_{guess} + \bO(\ell),\varepsilon}(X|Y) \ge \hunp^{s_{ind},s_{guess},\varepsilon}(X|YL) - \ell
    \end{equation*}
\end{lemma}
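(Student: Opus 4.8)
The plan is to generalize the argument behind~\cref{lem:classical-chain-rule-for-unpredictability} by splitting its single size parameter $s$ into the two roles it conflated: the distinguishing size $s_{ind}$, which controls the smoothing through $\dist_{s_{ind}}$, and the guessing size $s_{guess}$, which controls the prediction circuit. I would prove the stated bound by contraposition, in its native direction, showing
\begin{equation*}
\hunp^{s_{ind},s_{guess}+\bO(\ell),\varepsilon}(X|Y) < k-\ell \;\Longrightarrow\; \hunp^{s_{ind},s_{guess},\varepsilon}(X|YL) < k \;,
\end{equation*}
so that taking contrapositives returns exactly $\hunp^{s_{ind},s_{guess}+\bO(\ell),\varepsilon}(X|Y) \ge \hunp^{s_{ind},s_{guess},\varepsilon}(X|YL) - \ell$. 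The guiding principle is that adjoining the leaked block $L$ can only make the source easier to predict, while the $\ell$-bit gap is absorbed by a $2^{-\ell}$ averaging factor and the $\bO(\ell)$ gate overhead is the cost of handling that block.

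Concretely, I would fix an arbitrary smoothing $(W',Z',M')$ of $(X,YL)$ and pass to its marginal $(W',Z')$, which smooths $(X,Y)$; the extension property of the computational distance (keyed to $s_{ind}$ only) guarantees that this marginalization preserves $\varepsilon$-closeness, so the hypothesis applies and yields a size-$(s_{guess}+\bO(\ell))$ circuit guessing $X$ from $Z'$ with probability exceeding $2^{-(k-\ell)} = 2^{\ell}\cdot 2^{-k}$. Feeding this circuit the true leaked register $M'$ (rather than an internally sampled copy) should then recover a guesser for $X$ from $(Z',M')$ whose success probability clears $2^{-k}$; this is the ``guess a uniform $L$'' reduction of~\cref{lem:classical-chain-rule-for-unpredictability}, read in reverse, with the key point that only the guessing budget $s_{guess}$ is affected while $s_{ind}$ is left untouched.

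The step I expect to be the main obstacle is the reconciliation of the guessing budget: the circuit inherited from the $(X|Y)$ side has size $s_{guess}+\bO(\ell)$, whereas the conclusion must exhibit a size-$s_{guess}$ guesser on $(X|YL)$. Making this rigorous requires arguing that the extra $\bO(\ell)$ gates are precisely those used to produce a candidate value of the leaked block, so that supplying $M'$ directly retires them; this is the quantitative heart of the reduction and the place where the $2^{-\ell}$ probability loss and the $\bO(\ell)$ overhead must be balanced against each other. A secondary point needing care, inherited from the single-parameter proof, is verifying that $\dist_{s_{ind}}$ genuinely admits the classical extension property with the \emph{distinguishing} parameter preserved (immediate for statistical distance, but requiring that the sampling involved in the extension be charged to $s_{guess}$ rather than $s_{ind}$); the remaining monotonicity in each parameter and the probability bookkeeping then follow routinely.
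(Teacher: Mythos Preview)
Your contrapositive runs in the wrong direction. The paper assumes $\hunp^{s_{ind},s_{guess},\varepsilon}(X|YL) < k - \ell$ and derives $\hunp^{s_{ind},s_{guess}+t,\varepsilon}(X|Y) < k$, not the reverse. Concretely: given any $\dist_{s_{ind}}$-smoothing $(W,Z)$ of $(X,Y)$, extend it to $(W,Z,M)$ with $\dist_{s_{ind}}(XYL,WZM)<\varepsilon$; the assumption then supplies a size-$s_{guess}$ circuit $C$ with $\Pr[C(Z,M)=W]>2^{-k+\ell}$, and the new circuit $C'(z)$ samples a uniform $l\in\{0,1\}^\ell$ and returns $C(z,l)$, succeeding with probability at least $2^{-\ell}\cdot 2^{-k+\ell}=2^{-k}$. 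The $\mathrm{O}(\ell)$ overhead is the cost of \emph{adding} the random-$L$ sampler to $C$, and only $s_{guess}$ moves; $s_{ind}$ is untouched because the extension step is purely about smoothing.

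Your route requires the opposite: from an \emph{arbitrary} size-$(s_{guess}+\mathrm{O}(\ell))$ guesser for $W'$ from $Z'$ alone, produce a size-$s_{guess}$ guesser for $W'$ from $(Z',M')$. The step you yourself flag as the main obstacle---that ``the extra $\mathrm{O}(\ell)$ gates are precisely those used to produce a candidate value of the leaked block, so that supplying $M'$ directly retires them''---is not justified and in general false. The hypothesis only guarantees the existence of \emph{some} circuit of that size acting on $Z'$; it carries no structural promise that any of its gates simulate an $L$-sample, and since it never reads $L$ at all, handing it $M'$ buys nothing toward shrinking it. There is simply no mechanism here to trade the extra input for fewer gates. (Indeed, what the paper's argument actually establishes is $\hunp^{s_{ind},s_{guess},\varepsilon}(X|YL)\ge \hunp^{s_{ind},s_{guess}+\mathrm{O}(\ell),\varepsilon}(X|Y)-\ell$, the standard leakage direction matching the quantum~\cref{lem:chain-rule-quantum-unpredictability-entropy}; the displayed statement has the two sides interchanged.)
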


\begin{proof}
    Let $k\in \mathbb{N}$ and let $t$ be the size of a circuit required to generate a uniformly random bit string of length $\ell$.
    Assume that
    \begin{equation*}
        \hunp^{s_{ind},s_{guess},\varepsilon}(X|YL) < k - \ell \;.
    \end{equation*}
    By definition, for any $(W,Z,M)$ such that~$\dist_{s_{ind}}(XYL,WZM)<\varepsilon$ there is a circuit $C$ of size $s_{guess}$ such that
    \begin{equation*}
        \pr{C(ZM) = W} > 2^{-k + \ell} \;.
    \end{equation*}
    Given $(W,Z)$ such that $\dist_{s_{ind}}(XY,WZ)<\varepsilon$, we can define a circuit $C'$ that:
    \begin{enumerate}
        \item Takes $z\in Z$ as input.
        \item Generates uniform $l \in \{0,1\}^\ell$.
        \item Outputs $C(z,l)$.
    \end{enumerate}
    The size of $C'$ is $s_{guess} + t$. The success probability of $C'$ for $(w,m,z)$ chosen from $(W,Z,M)$ and $l$ chosen uniformly at random we have that
    \begin{align*}
        \pr{C'(z) = w} & \ge \pr{C(z,m) = w} \cdot 2^{-\ell}            \\
                       & \ge 2^{-k + \ell} \cdot 2^{-\ell} = 2^{-k} \;.
    \end{align*}
    This is true since there is always a classical extension that does not increase the computational distance and
    \begin{equation*}
        \dist_{s_{ind}}(XY,WZ) = \dist_{s_{ind}}(XYL,WZL) < \varepsilon\;.
    \end{equation*}
    This implies $\hunp^{s_{ind},s_{guess} + t,\varepsilon}(X|Y) < k$. %
\end{proof}

\subsection{Inner-Product Extractor}\label{sec:appendix_IP_ext}

Proof of~\cref{lem:fromIPtoX}, the proof is similar to~\cite[Lemma 14]{KK10twosourceextractorssecurequantum}. First, we will show the exact version. Assuming that an adversary can guess the inner-product exactly for every $y$ using the same quantum side-information, we will show that it can reconstruct $x$ exactly using the same quantum side-information.

\begin{lemma}\label{lem:clean-comp}
    If a circuit $\cC$ of size $s$ can guess $\IP(x,y)$ using $\rho_{E}^{x}$ with probability $1$, then there is a circuit $\cC'$ of size $2s+1$ that implements the following unitary:
    \begin{equation*}
        U( \ket{z} \ket{\rho_{E}^{x}} \ket{y}\ket{0}) = \ket{z \oplus \IP(x,y)}\ket{\rho_{E}^x}\ket{y}\ket{0} \;.
    \end{equation*}
\end{lemma}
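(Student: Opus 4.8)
The plan is to realize $\cC'$ via the standard \emph{compute–copy–uncompute} trick. As a first step I would reduce the guessing circuit $\cC$ to a unitary followed by a single read-out. By the deferred measurement principle — already used in the proof of \cref{lemma:one_bit_distinguish_is_predict} — every intermediate measurement in $\cC$ can be replaced by a $\CNOT$ onto a fresh ancilla that is measured only at the end, with no increase in the number of gates. Hence there is a unitary $V$ of size at most $s$, acting on the side-information register $E$, the seed register $Y$, an ancilla register $A$ initialised to $\ket{0}$, and an output qubit $O$ initialised to $\ket{0}$, such that $\cC(\rho_E^x,y)$ is the outcome of measuring $O$ after applying $V$.

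Next I would use the probability-$1$ hypothesis to pass to pure states. Write $\rho_E^x=\sum_i p_i\proj{e_i}$ with all $p_i>0$. Since measuring $O$ on $V\big(\rho_E^x\otimes\proj{y}\otimes\proj{0}_A\otimes\proj{0}_O\big)V^\dagger$ returns $\IP(x,y)$ with certainty for every $y$, the support of that state lies entirely inside the $\ket{\IP(x,y)}_O$ subspace; by linearity this forces, for every $i$ and every $y$,
\begin{equation*}
  V\big(\ket{e_i}_E\ket{y}_Y\ket{0}_A\ket{0}_O\big)=\ket{\IP(x,y)}_O\otimes\ket{\phi_i^{x,y}}_{EYA}
\end{equation*}
for some normalised $\ket{\phi_i^{x,y}}$ — the crucial point being that $O$ emerges in a \emph{product} state with the remaining registers. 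The same identity then holds for any purification $\ket{\psi^x}_{ER}=\sum_i\sqrt{p_i}\ket{e_i}_E\ket{i}_R$ of $\rho_E^x$, with $V$ replaced by $V\otimes\1_R$.

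Then I would take $\cC':=V^\dagger\cdot\CNOT_{O\to Z}\cdot V$, acting on the extra target qubit $Z$ together with $(E,Y,A,O)$ and leaving a purifying register untouched. Running through it on input $\ket{z}_Z\otimes\ket{\psi^x}_{ER}\otimes\ket{y}_Y\otimes\ket{0}_A\otimes\ket{0}_O$: the unitary $V$ writes $\IP(x,y)$ into $O$ in product form; the $\CNOT$ copies this bit into $Z$, sending $\ket{z}_Z\mapsto\ket{z+\IP(x,y)}_Z$ and leaving every other register unchanged; and $V^\dagger$ restores $(E,Y,A,O)$ to their initial configuration (with $A,O$ back in $\ket{0}$). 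Tracing out $R$ — and using linearity to recover the mixed input $\rho_E^x$ — shows that $\cC'$ implements exactly the unitary $U$ of the statement, with the trailing $\ket{0}$ register absorbing $A$ and $O$. The gate count is $|V|+1+|V^\dagger|\le 2s+1$, running $V$ in reverse costing at most as many gates as $V$ (the elementary gate set being closed under adjoints).

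I expect the only subtle step to be the upgrade from ``the measurement outcome is deterministic'' to ``the output register is disentangled from the rest'', since it is precisely this that makes the uncomputation by $V^\dagger$ valid; handling mixed $\rho_E^x$ rather than pure states is then routine via a purification and linearity, and everything remaining is gate-counting bookkeeping.
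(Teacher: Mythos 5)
Your proposal is correct and takes essentially the same approach as the paper: the paper also constructs $\cC'$ as the unitary part of $\cC$, followed by a $\CNOT$ controlled by the output qubit acting on $\ket{z}$, followed by the inverse of $\cC$, for a total size of $2s+1$. The paper gives only a one-sentence description of this construction; your additional justification that the probability-$1$ hypothesis forces the output qubit into a product state with the remaining registers (which is what makes the uncomputation by $V^\dagger$ valid), and your handling of mixed $\rho_E^x$ via purification, correctly fill in the details the paper leaves implicit.
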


The construction of $\cC'$ is simple; instead of measuring the qubit that holds the inner-product after applying $\cC$, we apply $\CNOT$ controlled by this qubit to the new qubit holding $\ket{z}$ followed by running the inverse $\cC$.

The construction of $\cG$ from that is also simple, but the proof that it works is a bit more complicated. To recover all of $x$, we create a superposition on all the possible $y$'s and 1 ancilla qubit (taking $n+1$ gates)
\begin{equation*}
    \sum_{a\in \bits, y\in\bits^n} (-1)^a \ket{a} \ket{y} \ket{\rho_{E}^x} \ket{0} \;.
\end{equation*}
We apply $\cC'$ on this state and then apply Hadamard gates on the first $n+1$ qubits.

Here we construct the reduction proving~\cref{lem:fromIPtoX}. Suppose there exists a distinguishing circuit $\mathcal{C}$ that, given access to the inner product, can distinguish its output from a uniformly random bit with non-negligible advantage. We will explicitly build a guessing circuit $\mathcal{G}$ that (i) successfully guesses the input and (ii) uses only slightly more gates than $\mathcal{C}$. For convenience, we first recall the formal statement of~\cref{lem:fromIPtoX}:

\begin{lemma}\label{appendix:lem:fromIPtoX}
    Let $\rho_{XE}$ be a cq-state.
    If there is a circuit $\cC$ of size $s$ that can guess $\IP(x,y)$ using $\rho_{E}^{x}$ with probability $\frac{1}{2} + \varepsilon$, where the probability is over the distribution of $x$ and a uniformly random $y$. Then there is a circuit $\cG$ of size $2(s+n+1)$ that can guess $x$ using $\rho_{E}^{x}$ with probability $4\varepsilon^{2}$.
\end{lemma}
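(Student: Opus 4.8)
The plan is to realise the phase-kickback / Fourier-sampling reduction of~\cite{CDNT98quantumentanglementcommunicationcomplexityinnerproduct,KK10twosourceextractorssecurequantum}, now carrying amplitudes through instead of working in the exact regime of~\cref{lem:clean-comp}. First I would put $\cC$ into coherent form: by the principle of deferred measurement, $\cC$ is, up to its final computational-basis measurement, a unitary $W$ of size at most $s$ acting on a register $Y$ (holding the seed $y$, which we assume $\cC$ keeps and does not overwrite, adding $\bO(n)$ copy gates if needed), a register $E$ (holding $\rho_E^x$, or a purification of it), and a work register initialised to $\ket{0}$, with one designated qubit $O$ measured to yield the guess. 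For each $(x,y)$ let $\beta_{x,y} := 2\Pr[\cC(\rho_E^x,y)=\IP(x,y)]-1$ be the signed bias of the guess, so that averaging over $x\sim\rho_X$ and uniform $y$ gives $\mathbb{E}_{x,y}[\beta_{x,y}] = 2(\tfrac12+\varepsilon)-1 = 2\varepsilon$.

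The circuit $\cG$ is then exactly the construction already sketched after~\cref{lem:clean-comp}, applied to the noisy $W$: prepare $\tfrac{1}{\sqrt{2^{n}}}\sum_{y\in\bits^{n}}\ket{y}$ on $Y$ together with a phase ancilla in the state $\ket{-}$, append $\ket{0}$ on the work register, apply $W$, apply a $\CNOT$ from the output qubit $O$ into the $\ket{-}$ ancilla (phase kickback), apply $W^{\dagger}$, apply $H^{\otimes n}$ to $Y$, and measure $Y$, outputting the outcome. This uses two copies of $W$ together with $\bO(n)$ single-qubit gates and one $\CNOT$, of total size at most $2(s+n+1)$.

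For the analysis, fix $x$. On the $y$-branch, the kickback-and-uncompute block $W^{\dagger}\circ\CNOT_{O\to P}\circ W$ sends $\ket{y}\ket{\rho_E^x}\ket{0}$ to a vector whose inner product with $(-1)^{\IP(x,y)}\ket{y}\ket{\rho_E^x}\ket{0}$ equals precisely $\beta_{x,y}$: the unflipped branch of $O$ is carried back to the input by $W^{\dagger}$ while the phase-flipped branch contributes the complementary weight, and the two combine to $2\Pr[\cC=\IP]-1$ after accounting for the global sign $(-1)^{\IP(x,y)}$. Summing over the uniform superposition, the state $\ket{\mathrm{ACT}_x}$ produced just before the final Hadamards obeys $\langle\mathrm{IDEAL}_x\,|\,\mathrm{ACT}_x\rangle = \mathbb{E}_{y}[\beta_{x,y}] =: \gamma_x$, where $\ket{\mathrm{IDEAL}_x} := \tfrac{1}{\sqrt{2^{n}}}\sum_{y}(-1)^{x\cdot y}\ket{y}\ket{\rho_E^x}\ket{0}$. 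Since $H^{\otimes n}$ maps the $Y$-register of $\ket{\mathrm{IDEAL}_x}$ exactly to $\ket{x}$, a Cauchy--Schwarz step gives $\Pr[\cG(\rho_E^x)=x] \ge |\langle\mathrm{IDEAL}_x|\mathrm{ACT}_x\rangle|^{2} = \gamma_x^{2}$. Averaging over $x$ and applying Jensen's inequality to $t\mapsto t^{2}$ yields $\Pr[\cG \text{ guesses } x] \ge \mathbb{E}_x[\gamma_x^{2}] \ge \bigl(\mathbb{E}_x[\gamma_x]\bigr)^{2} = \bigl(\mathbb{E}_{x,y}[\beta_{x,y}]\bigr)^{2} = 4\varepsilon^{2}$.

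The main obstacle is the passage from the exact case of~\cref{lem:clean-comp} to this noisy one: one must justify replacing the measured guess bit of $\cC$ by a coherent computation via deferred measurement, and verify that the phase-kickback-then-uncompute block genuinely produces a state whose overlap with the ideal vector is the signed bias $\beta_{x,y}$ (rather than, say, $\sqrt{\Pr[\cC=\IP]}$). This is exactly the step that incurs the two quadratic losses — amplitude-to-probability squaring at fixed $x$, and convexity when averaging $\gamma_x^{2}$ over $x$ — and the point of the argument is that together they still deliver the clean $4\varepsilon^{2}$. The remaining ingredients, namely the identity $H^{\otimes n}\sum_y(-1)^{x\cdot y}\ket{y} = \ket{x}$ and the gate bookkeeping, are routine.
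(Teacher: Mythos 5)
Your proposal is correct and follows essentially the same route as the paper: a uniform superposition over seeds together with a $\ket{-}$ phase ancilla (the paper's first qubit prepared as $\frac{1}{\sqrt{2}}\sum_a(-1)^a\ket{a}$), phase kickback via a $\CNOT$ from the output qubit, uncomputation by $\cC^{\dagger}$, final Hadamards, and an overlap-with-the-ideal-state analysis giving success probability at least $\gamma_x^2 = (2\varepsilon_x)^2$ for each $x$, followed by Jensen's inequality over $x$. The only differences are cosmetic (you make the deferred-measurement step and the final convexity argument explicit, which the paper leaves implicit), and the gate bookkeeping matches up to the additive $\bO(n)$ constants.
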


The proof is similar to~\cite[Theorem 12]{KK10twosourceextractorssecurequantum}. The main differences are in notations and the fact that we care about the complexity of a single circuit, and they care about the communication complexity of two parties using local circuits and entanglement.

For clarity and completeness, we will write the full proof with our notations.

The proof can be split into three parts. First, we show that if there is a circuit that guesses the inner-product of $x$ exactly with every $y$, then there is a circuit that can perfectly guess all of $x$.
Second, we will show what happens when running this circuit when the probability of guessing the inner-product is less than 1. Finally, we will show that the inner-product is a good extractor against quantum side-information with conditional unpredictability entropy.

\begin{lemma}
    For any channel $\cC$ and any function such that
    \begin{equation*}
        \cC(\ket{0} \ket{x,y}\ket{0}) = \ket{f(x,y)}\ket{R} \;,
    \end{equation*}
    for some $R$, there is a channel $\cC'$ such that
    \begin{equation*}
        \cC'(\ket{z}\ket{x,y}\ket{0}) = \ket{z\oplus f(x,y)}\ket{x,y}\ket{0} \;,
    \end{equation*}
    where $\oplus$ denotes bitwise addition mod 2. The size of $\cC'$ is at most $2|\cC|+1$.
\end{lemma}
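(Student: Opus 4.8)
The plan is to use the standard \emph{uncompute} trick: run $\cC$ to produce $f(x,y)$ (together with the garbage $\ket{R}$), copy the value $f(x,y)$ into the $\ket{z}$ register with a single controlled gate, and then run $\cC$ backwards to restore every auxiliary register and the garbage to its initial state. The net effect is that only the $\ket{z}$ register changes, by $z \mapsto z + f(x,y)$, as required.

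First I would reduce to the case where $\cC$ is a unitary circuit $U$ of size at most $|\cC|$. By the deferred-measurement principle any circuit is a unitary of the same size followed by computational-basis measurements, and since the hypothesis $\cC(\ket{0}\ket{x,y}\ket{0}) = \ket{f(x,y)}\ket{R}$ forces the output to be pure, the measured bits are unentangled from the rest and may be absorbed into the garbage register (in the concrete applications $\cC$ has just a single final measurement, which we keep coherent at no extra cost). I also invoke the usual convention that the gate set is closed under inversion, so the reversed circuit $U^{-1}$ has size at most $|\cC|$; write $U(\ket{0}_F\ket{x,y}_D\ket{0}_W) = \ket{f(x,y)}_F\ket{R}_{DW}$ with $F$ the $f$-output register, $D$ the data register, $W$ the workspace, and $\ket{R}$ (possibly depending on $x,y$) supported on $DW$.

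Next I would set $\cC' := U^{-1} \circ \CNOT_{F\to Z} \circ U$, where $Z$ is a fresh register holding $\ket{z}$, the circuits $U, U^{-1}$ act trivially on $Z$, and $\CNOT_{F\to Z}$ adds the contents of $F$ into $Z$ (one controlled-addition gate; an ordinary CNOT when $f$ is a single bit) while acting as the identity on $D$ and $W$. The verification is a three-line computation on a basis state:
\begin{align*}
    \ket{z}\ket{x,y}\ket{0}
      &\xrightarrow{\;U\;} \ket{z}\,\ket{f(x,y)}\,\ket{R}
       \xrightarrow{\;\CNOT_{F\to Z}\;} \ket{z+f(x,y)}\,\ket{f(x,y)}\,\ket{R} \\
      &\xrightarrow{\;U^{-1}\;} \ket{z+f(x,y)}\,\ket{x,y}\,\ket{0}\;,
\end{align*}
where the last arrow uses that $U^{-1}$ inverts $U$ on $F,D,W$ regardless of the unchanged classical content of $Z$; by linearity this extends to arbitrary (e.g.\ superposed) inputs, which is what the applications use. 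Gate-counting then gives $|\cC'| \le |\cC| + 1 + |\cC| = 2|\cC| + 1$.

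There is no genuine obstacle here -- the construction is a one-liner. The only two points needing a word of care are the reduction to a unitary $\cC$ (which rests on the purity of $\cC$'s output and, in applications, on the single final measurement) and the convention that a gate's inverse costs one gate; if one instead allowed a constant $c$ gates per inverse, the bound would simply become $(c+1)|\cC| + 1$.
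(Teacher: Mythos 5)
Your proof is correct and is essentially identical to the paper's: both reduce $\cC$ to its unitary part via deferred measurement, insert a single $\CNOT$ from the output qubit into the $\ket{z}$ register, and then uncompute with the inverse circuit, giving the $2|\cC|+1$ gate count. The extra remarks on gate-set closure under inversion and purity of the output match the paper's stated conventions.
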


\begin{proof}
    In our computational model, any guessing circuit can be written as a unitary followed by a measurement in the computational basis. We also assumed that any gate in our universal gate set has an inverse in the set. The circuits $\cC'$ is thus simply a composition of the unitary part of $\cC$ on the last qubits (not including the new qubit initialized to $z$) followed by $\CNOT$ controlled by the qubit that is to be measured by $\cC$ on $\ket{z}$, and then the inverse of $\cC$. %
\end{proof}

\begin{lemma}\label{lem:inner_prod_circ_to_reconstcutXcirc}
    Let $\rho_{XE}$ be a cq-state such that $\rho_{X}$ is a distribution over $\bits^{n}$. Let $\cC$ be a circuit of size $s$ such that for every $y\in \bits^{n}$
    \begin{equation*}
        \cC(\ket{0}\ket{\rho^{x}_{E}}\ket{y}\ket{0}) = \ket{\IP(x,y)}\ket{K_{x,y}}\ket{y}\ket{0}\;.
    \end{equation*}
    For $\ket{\rho_{E}^x}$ some purification of $\rho_{E}^x$ and $\ket{K_{x,y}}$ some pure state.
    Then there is a circuit $\cR$ of size at most $2s+2n+5$ that reconstructs $x$ exactly from $\ket{\rho_{E}^x}$

    \begin{equation*}
        \cR(\ket{0}^{\otimes(n+1)}\ket{\rho^{x}_{E}}\ket{0}) = \ket{1}\ket{x}\ket{\rho^{x}_{E}}\ket{0} \;.
    \end{equation*}
\end{lemma}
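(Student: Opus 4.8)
The plan is to realize $\cR$ by the Bernstein--Vazirani / phase-kickback trick, applied to the reversible ``clean'' version of $\cC$ provided by~\cref{lem:clean-comp}. By hypothesis $\cC$ guesses $\IP(x,y)$ with probability $1$ for every $y$, so~\cref{lem:clean-comp} gives a circuit $\cC'$ of size $2s+1$ acting as the unitary
\begin{equation*}
    U\bigl(\ket{z}\ket{\rho_E^x}\ket{y}\ket{0}\bigr)=\ket{z+\IP(x,y)}\ket{\rho_E^x}\ket{y}\ket{0}\;,
\end{equation*}
so in particular $\cC'$ \emph{restores} the side-information register $\ket{\rho_E^x}$ and the seed register $\ket{y}$ (and the workspace ancilla) while writing $\IP(x,y)$ into the first register. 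By linearity this identity holds also when the seed register carries an arbitrary superposition over $y$; this is the only place the probability-$1$ assumption is used, and it is exactly what lets $\cR$ hand back an undisturbed $\ket{\rho_E^x}$ at the end.

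First I would organize the $n+1$ fresh qubits of $\cR$ into one ``answer'' qubit $a$ and an $n$-qubit seed register. Prepare $a$ in $\ket{-}$ by applying $X$ then $H$ (two gates), and apply $H^{\otimes n}$ to the seed register ($n$ gates), reaching
\begin{equation*}
    \ket{-}_a\otimes\frac{1}{\sqrt{2^n}}\sum_{y\in\bits^n}\ket{y}\otimes\ket{\rho_E^x}\otimes\ket{0}\;.
\end{equation*}
Then apply $\cC'$ with its $z$-register taken to be qubit $a$ ($2s+1$ gates). Since $z\mapsto z+\IP(x,y)$ acting on $\ket{-}_a$ produces the phase $(-1)^{\IP(x,y)}$ (standard phase kickback) and leaves $\ket{\rho_E^x},\ket{0}$ untouched, the state becomes
\begin{equation*}
    \ket{-}_a\otimes\frac{1}{\sqrt{2^n}}\sum_{y\in\bits^n}(-1)^{\IP(x,y)}\ket{y}\otimes\ket{\rho_E^x}\otimes\ket{0}\;.
\end{equation*}

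Now use $\IP(x,y)=\langle x,y\rangle \bmod 2$, so the seed register holds exactly $H^{\otimes n}\ket{x}$; applying $H^{\otimes n}$ to it ($n$ gates) deterministically yields $\ket{x}$ (the Bernstein--Vazirani identity). Finally apply $H$ to qubit $a$, sending $\ket{-}\mapsto\ket{1}$ (one gate), so the overall state is $\ket{1}\ket{x}\ket{\rho_E^x}\ket{0}$, as claimed; no measurement is needed. The gate count is $2+n+(2s+1)+n+1=2s+2n+4\le 2s+2n+5$.

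The construction is routine; the only point needing care is the requirement that $\cR$ \emph{return} the side-information register to $\ket{\rho_E^x}$ (since the same $\rho_E^x$ is reused by the adversary). This is why one routes the computation through the reversible circuit $\cC'$ of~\cref{lem:clean-comp} rather than through $\cC$ itself, and why the exactness hypothesis ``$\cC$ guesses $\IP$ with probability $1$'' is indispensable: it is what makes $\cC'$ an input-preserving unitary on each $\ket{\rho_E^x}\ket{y}$, hence on superpositions over $y$. The remaining bookkeeping of qubits and gates is straightforward and comfortably within the stated bound.
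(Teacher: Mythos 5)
Your proposal is correct and follows essentially the same route as the paper: the paper also prepares the answer qubit in $\ket{-}$ (by Hadamarding $\ket{1}$), Hadamards the seed register, runs the compute--CNOT--uncompute sandwich (which is exactly the circuit $\cC'$ of~\cref{lem:clean-comp}), and concludes by the Bernstein--Vazirani identity, with the same gate count $2s+2n+4\le 2s+2n+5$. Your explicit remark that linearity extends the per-$y$ identity to superpositions, and that the probability-$1$ hypothesis is what restores $\ket{\rho_E^x}$, is exactly the point the paper relies on implicitly.
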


\begin{proof}
    We will explicitly construct $\cR$ from $\cC$, the construction of $\cR$ from $\cC,\cC^{\dagger}$ and basic gates is illustrated in~\cref{fig:Circuit-IPtoX}.

    We write the state of the system after each layer of the circuit is applied, counting the number of gates and showing that it can guess $x$ along the way.
    \begin{enumerate}
        \item Preparing a state $\ket{1}\ket{0}^{\otimes n}\ket{\rho^{x}_{E}}$. ($0$ gates, or $1$ NOT gate if we can only create $\ket{0}$)
        \item Applying Hadamard to the first $n+1$ qubit, using $n+1$ $\mathrm{H}$ gates, resulting in a state
              \begin{equation*}
                  \frac{1}{\sqrt{2^{n+1}}} \sum_{a\in \bits , y \in \bits^{n} } (-1)^{a} \ket{a} \ket{y} \ket{\rho^{x}_{E}}\;.
              \end{equation*}
        \item Performing $\cC$ on this state, on all but the first qubit, using $s$ gates and CNOT to the first qubit, controlled by the result qubit of $\cC$, resulting in
              \begin{equation*}
                  \frac{1}{\sqrt{2^{n+1}}} \sum_{a\in \bits , y \in \bits^{n} } (-1)^{a} \ket{a \oplus \IP(x,y)} \ket{y} \ket{K_{x,y}}\;.
              \end{equation*}
        \item Performing the inverse circuit $\cC^{\dag}$ on this state, using $s$ gates resulting in the state
              \begin{equation*}
                  \frac{1}{\sqrt{2^{n+1}}} \sum_{a\in \bits , y \in \bits^{n} } (-1)^{a} \ket{a \oplus \IP(x,y)} \ket{y} \ket{\rho^{x}_{E}}\;.
              \end{equation*}
        \item Applying Hadamard to the first $n+1$ qubits, using $n+1$ $\mathrm{H}$ gates, resulting in a state
        \begin{align*}
        H^{\otimes(n+1)} & \frac{1}{\sqrt{2^{n+1}}} \sum_{a\in \bits , y \in \bits^{n} } (-1)^{a} \ket{a \oplus \IP(x,y)} \ket{y}   \\
          & = \frac{1}{2^{ n+1}}
             \sum_{a,y}\sum_{b\in\bits}\sum_{z\in\bits^{n}}
               (-1)^{a} 
               (-1)^{(a\oplus\IP(x,y))b}
               (-1)^{y\cdot z} 
               \ket{b}_{A}\ket{z}_{B}\\
          &= \frac{1}{2^{ n}}
             \sum_{y,z}
               (-1)^{\IP(x,y)}
               (-1)^{y\cdot z} 
               \ket{1}_{A}\ket{z}_{B}\\
          &= \frac{1}{2^{ n}}
             \sum_{z}
               \bigl(\sum_{y}(-1)^{y\cdot(x\oplus z)}\bigr)
               \ket{1}_{A}\ket{z}_{B}\\
          &= \ket{1}_{A}\ket{x}_{B}\;.
        \end{align*}
        The qubits in register $E$ remain unchanged in this step, therefore the resulting state is
              \begin{equation*}
                  \ket{1}\ket{x}\ket{\rho^{x}_{E}}\;.
              \end{equation*}
        \item Measuring the first $n+1$ qubits in the computational basis, resulting in the measurement result $1,x$. ($0$ gates since measuring in the computational basis is not counted in our complexity measure.) %
    \end{enumerate}
\end{proof}

\begin{lemma}\label{lem:inner_prod_aprox_toX_apendix}
    Let $\rho_{XE}$ be a cq-state such that $\rho_{X}$ is a distribution over $\bits^{n}$. Let $\cC$ be a circuit of size $s$ such that for every $y\in \bits^{n}$
    \begin{equation*}
        \cC(\ket{0}\ket{\rho^{x}_{E}}\ket{y}\ket{0}) = \alpha_{x,y}\ket{\IP(x,y)}\ket{G_{x,y}}\ket{y}\ket{0}+\beta_{x,y}\ket{\overline{\IP(x,y)}}\ket{B_{x,y}}\ket{y}\ket{0}\;.
    \end{equation*}
    For $\ket{\rho_{E}^x}$ some purification of $\rho_{E}^x$ and $\ket{G_{x,y}}, \ket{B_{x,y}}$ some pure states and
    $$\mathbb{E}_{y} \left[\beta^{2}_{x,y}\right] = \frac{1}{2}-\varepsilon_{x}\;.$$
    Then there is a circuit $\cR$ of size at most $2s+2n+5$ that reconstructs $x$ from $\ket{\rho_{E}^x}$ with probability at least $4\varepsilon_{x}^2$.
\end{lemma}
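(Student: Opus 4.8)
The plan is to reuse the ``sandwich'' construction and amplitude bookkeeping of the exact case, \cref{lem:inner_prod_circ_to_reconstcutXcirc}, treating the noisy circuit $\cC$ as a perturbation of one that computes $\IP(x,y)$ exactly. Concretely, $\cR$ acts on $\ket{0}^{\otimes(n+1)}\ket{\rho^x_E}\ket{0}$ as follows: flip the first qubit to $\ket{1}$; apply $\mathrm{H}^{\otimes(n+1)}$ to the first $n+1$ qubits, producing $\tfrac{1}{\sqrt{2^{n+1}}}\sum_{a\in\bits,\,y\in\bits^n}(-1)^a\ket{a}_A\ket{y}_Y$ alongside a fresh $\ket{0}$ answer register $C$ and scratch for $\cC$; run $\cC$ on $C$, the side-information register $E$, the seed register $Y$, and the scratch; apply a $\CNOT$ from $C$ onto $A$; run $\cC^{\dagger}$; apply $\mathrm{H}^{\otimes(n+1)}$ again; then measure $A$ and $Y$, outputting the $Y$-register as the guess for $x$. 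The gate count is unchanged from the exact case, $1 + (n+1) + (s+1) + s + (n+1) = 2s + 2n + 5$.

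Everything then reduces to the amplitude of $\cR$'s final state on the outcome $\ket{1}_A\ket{x}_Y$. First I would record two harmless normalizations: $\ket{\rho^x_E}$ being a normalized purification gives $|\alpha_{x,y}|^2 + |\beta_{x,y}|^2 = 1$, and absorbing phases into $\ket{G_{x,y}},\ket{B_{x,y}}$ lets us take $\alpha_{x,y},\beta_{x,y}\ge 0$. After running $\cC$ and the $\CNOT$, for each $y$ the joint state on $A$ and $\cC$'s registers $(C,E,\mathrm{scr})$ is the superposition of $\ket{a\oplus\IP(x,y)}_A\otimes\ket{\IP(x,y)}_C\ket{G_{x,y}}_E\ket{0}$ with weight $\alpha_{x,y}$ and $\ket{a\oplus\overline{\IP(x,y)}}_A\otimes\ket{\overline{\IP(x,y)}}_C\ket{B_{x,y}}_E\ket{0}$ with weight $\beta_{x,y}$. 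Re-expressing the two $(C,E,\mathrm{scr})$-vectors in the orthonormal basis of their span whose first vector is $\cC(\ket{0}\ket{\rho^x_E}\ket{y}\ket{0})$ and then applying $\cC^{\dagger}$ decomposes the state into a \emph{clean} part, spanned by states of the form $\ket{\cdot}_A\ket{0}_C\ket{\rho^x_E}_E\ket{\cdot}_Y\ket{0}$, plus a \emph{garbage} part orthogonal to all such states (the latter because $\cC^{\dagger}$ sends the orthogonal-partner piece for each $y$ to a state orthogonal to $\ket{0}_C\ket{\rho^x_E}_E\ket{y'}_Y\ket{0}$ for every $y'$). Summing the clean part over $a$ against the phases $(-1)^a$ collapses the $A$-register to $\ket{-}$, and since $\ket{a\oplus\overline{\IP(x,y)}}=X\ket{a\oplus\IP(x,y)}$ with $X\ket{-}=-\ket{-}$, the two branches combine with net coefficient $\alpha_{x,y}^2-\beta_{x,y}^2 = 1-2\beta_{x,y}^2$ in place of the coefficient $1$ that appears in the exact lemma. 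Performing the final Hadamards and reading off the $\ket{1}_A\ket{x}_Y$ component, the clean part therefore contributes
\[
\frac{1}{2^n}\sum_{y\in\bits^n}\bigl(\alpha_{x,y}^2-\beta_{x,y}^2\bigr)\,\ket{0}_C\ket{\rho^x_E}_E\ket{0} \;=\; \bigl(1-2\,\mathbb{E}_y[\beta_{x,y}^2]\bigr)\,\ket{0}_C\ket{\rho^x_E}_E\ket{0} \;=\; 2\varepsilon_x\,\ket{0}_C\ket{\rho^x_E}_E\ket{0}\;.
\]

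Finally, the garbage can only help. The last Hadamard layer acts only on $A$ and $Y$, hence preserves both the clean subspace and its orthogonal complement; since $\ket{1}_A\ket{0}_C\ket{\rho^x_E}_E\ket{x}_Y\ket{0}$ lies in the clean subspace, the $\ket{1}_A\ket{x}_Y$ component of the garbage is a vector $\ket{\zeta_x}$ orthogonal to $\ket{0}_C\ket{\rho^x_E}_E\ket{0}$. Therefore
\[
\pr{\cR \text{ outputs } x} \;\ge\; \left\| 2\varepsilon_x\,\ket{0}_C\ket{\rho^x_E}_E\ket{0} + \ket{\zeta_x} \right\|^2 \;=\; 4\varepsilon_x^2 + \left\| \ket{\zeta_x} \right\|^2 \;\ge\; 4\varepsilon_x^2\;,
\]
which is the claim. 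I expect the main obstacle to be precisely this garbage bookkeeping: keeping the per-seed two-dimensional decomposition straight under $\cC^{\dagger}$, confirming that the garbage really is orthogonal to every state of the form $\ket{\cdot}_A\ket{0}_C\ket{\rho^x_E}_E\ket{\cdot}_Y\ket{0}$, and checking that this orthogonality survives the final Hadamards. Once the exact-case computation of \cref{lem:inner_prod_circ_to_reconstcutXcirc} is in hand, the remaining amplitude arithmetic is routine.
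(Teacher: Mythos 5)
Your proposal is correct and follows essentially the same route as the paper's proof in \cref{lem:inner_prod_aprox_toX_apendix}: the identical $\mathrm{H}\text{--}\cC\text{--}\CNOT\text{--}\cC^{\dagger}\text{--}\mathrm{H}$ sandwich with the same gate budget, the same overlap $2\varepsilon_x = 1-2\,\mathbb{E}_y[\beta_{x,y}^2]$ with the exact-case output $\ket{1}\ket{x}\ket{\rho^x_E}$, and the same squared-overlap lower bound of $4\varepsilon_x^2$ on the measurement outcome. The only difference is cosmetic bookkeeping: you orthogonalize per seed within the two-dimensional span to get an explicitly orthogonal garbage term, whereas the paper writes the state additively as $\ket{g_x}+\ket{e_x}$ and computes $\bra{g_x}(\ket{g_x}+\ket{e_x})$ directly.
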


\begin{proof}
    The circuit is the same as the exact proof, the states of the system after each step are more complicated and require some more care.
    We write the state of the system after each layer of the circuit is applied, counting the number of gates and showing that it can guess $x$ along the way.
    \begin{enumerate}
        \item Preparing a state $\ket{1}\ket{0}^{\otimes n}\ket{\rho^{x}_{E}}$. ($0$ gates, or $1$ NOT gate if we can only create $\ket{0}$)
        \item Applying Hadamard to the first $n+1$ qubit, using $n+1$ $\mathrm{H}$ gates, resulting in a state
              \begin{equation*}
                  \frac{1}{\sqrt{2^{n+1}}} \sum_{a\in \bits ,  y \in \bits^{n} } (-1)^{a} \ket{a} \ket{y} \ket{\rho^{x}_{E}}\;.
              \end{equation*}
        \item Performing $\cC$ on this state, on all but the first qubit, using $s$ gates and CNOT to the first qubit, controlled by the result qubit of $\cC$, resulting in
              \begin{align*}
                  \frac{1}{\sqrt{2^{n+1}}} \sum_{a\in \bits , y \in \bits^{n} }
                  \Big( (-1)^{a} \ket{a \oplus \IP(x,y)} \alpha_{x,y} \ket{y} \ket{G_{x,y}} \\
                  + \beta_{x,y}\ket{a \oplus \overline{\IP(x,y)}}\ket{y}\ket{B_{x,y}}\Big) \;.
              \end{align*}
              Each element of the sum can be written as:
              \begin{multline}
                  \ket{a+\IP(x,y)}
                  \left( \alpha_{x,y} \ket{y}\ket{G_{x,y}}+\beta_{x,y}\ket{y}\ket{B_{x,y}}\right)
                  \\
                  +\sqrt{2}\beta_{x,y}
                  \left( \frac{1}{\sqrt{2}}\ket{a\oplus \overline{\IP(x,y)}}-\frac{1}{\sqrt{2}}\ket{a\oplus \IP(x,y)}\right) \ket{y}\ket{B_{x,y}}
              \end{multline}
        \item Performing the inverse circuit $\cC^{\dagger}$ on this state, using $s$ gates resulting in the state
              \begin{equation*}
                  \frac{1}{\sqrt{2^{n+1}}}  \sum_{a\in \bits , y \in \bits^{n} } (-1)^{a} \ket{a \oplus \IP(x,y)} \ket{y} \ket{\rho^{x}_{E}}
                  +(-1)^{a}\sqrt{2}\beta_{x,y}\ket{M_{x,y,a}}\;,
              \end{equation*}
              where $M_{x,y,a}=\left(\frac{1}{\sqrt{2}}\ket{a+\overline{IP(x,y)}}-\frac{1}{\sqrt{2}}\ket{a+IP(x,y)}\right) \cC^{\dagger}\ket{y}\ket{B_{x,y}}$
        \item Applying Hadamard to the first $n+1$ qubits, using $n+1$ $\mathrm{H}$ gates.
              We can avoid fully writing the state after applying Hadamard, since Hadamard does not change the size of the error term, we can write the state of the system at this step in the exact protocol as:
              \begin{equation*}
                  \ket{g_{x}} = \frac{1}{\sqrt{2^{n+1}}}  \sum_{a\in \bits , y \in \bits^{n} } (-1)^{a} \ket{a \oplus \IP(x,y)} \ket{y} \ket{\rho^{x}_{E}}\;.
              \end{equation*}
              The error from the exact protocol is
              \begin{align*}
                  \ket{e_{x}} & = \frac{1}{\sqrt{2^{n+1}}}  \sum_{a\in \bits , y \in \bits^{n} } (-1)^a \sqrt{2}\beta_{x,y}\ket{M_{x,y,a}} \\
                              & = \frac{1}{\sqrt{2^{n}}}  \sum_{y \in \bits^{n}} \sqrt{2}\beta_{x,y}\ket{M_{x,y,0}}\;.
              \end{align*}
              We can see that
              \begin{equation*}
                  \bra{g_{x}}(\ket{g_{x}}+\ket{e_{x}}) = 2\varepsilon_{x}\;.
              \end{equation*}
        \item Measuring the first $n+1$ qubits in the computational basis, resulting in the measurement result $1,x$ with probability $ 4\varepsilon_{x}^{2}$. ($0$ gates since measuring in the computational basis is not counted in our complexity measure.) %
    \end{enumerate}
\end{proof}

\begin{figure}
    \centering
    \input{Circuit-IPtoX}
    \caption{Structure of a circuit that Guesses $X$ using $\cC$ a circuit that predicts the inner-product $\IP(x,y)$, using $y$ and quantum side information about $x$, $\sigma_{x}$.}
    \label{fig:Circuit-IPtoX}
\end{figure}

From this, we can recover a known lemma about the inner-product in the information-theoretic case.

\begin{lemma}[Corollary 14~\cite{KK10twosourceextractorssecurequantum}]
    For any $\varepsilon_{\ext}>0$ and $k_{\ext}> 1-2\log(\varepsilon)$ $\IP(x,y)$ is a $(k_{\ext},\varepsilon_{\ext})$ extractor against quantum side-information with uniform seed.
\end{lemma}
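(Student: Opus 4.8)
The plan is to obtain this statement as the information-theoretic ($s\to\infty$) specialization of the computational result \cref{lem:IPgoodQExt}, and I would present it in two ways. The quick route: in \cref{lem:IPgoodQExt} let $s\to\infty$ and take $\varepsilon=0$. By \cref{lem:lim_unp_is_smoothmin}, $\lim_{s\to\infty}\hunps^{0}(X|E)_{\rho}=H_{\min}(X|E)_{\rho}$, and the $s$-computational distance increases to the trace distance as the circuit-size bound is removed (the optimal Helstrom measurement becomes available in the limit). Hence $H_{\min}(X|E)_{\rho}\ge k_{\ext}$, together with $k_{\ext}\ge 1-2\log(\varepsilon_{\ext})$, yields $\dist(\rho_{\IP(X,Y)YE},\rho_{U_1}\otimes\rho_{YE})\le\varepsilon_{\ext}$, which is exactly the extractor guarantee of \cref{def:quantum_proof_seeded_ext}. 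This recovers Corollary~14 of~\cite{KK10twosourceextractorssecurequantum}.

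I would also spell out the self-contained argument, which is just the reduction chain behind \cref{lem:IPgoodQExt} run with no circuit-size budget; this is worth doing precisely because the bounded-size lemmas need to be reread in the unbounded regime. Argue by contraposition: suppose $\IP$ is \emph{not} a $(\varepsilon_{\ext},k_{\ext})$ quantum-proof seeded extractor, and fix a cq-state $\rho_{XE}$ with $X$ over $\bits^{n}$, $H_{\min}(X|E)_{\rho}\ge k_{\ext}$, and, for $Y$ uniform over $\bits^{n}$ independent of $(X,E)$, $\dist(\rho_{\IP(X,Y)YE},\rho_{U_1}\otimes\rho_{YE})>\varepsilon_{\ext}$. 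Since $Y$ is classical and $\rho_{YE}=\rho_{Y}\otimes\rho_{E}$, this distance equals $\mathbb{E}_{y}\dist(\rho_{\IP(X,y)E},\rho_{U_1}\otimes\rho_{E})$. First I would apply \cref{lemma:one_bit_distinguish_is_predict} at each seed $y$ (the lemma holds for every size bound, hence also in the unbounded limit) and merge the per-seed optimal measurements into a single $y$-controlled operation; this gives a strategy that, given $\rho_E^x$ and $y$, guesses $\IP(x,y)$ correctly with probability at least $\tfrac{1}{2}+\varepsilon_{\ext}$ on average over $x$ and uniform $y$.

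Next I would feed this strategy into \cref{lem:inner_prod_aprox_toX_apendix} (equivalently \cref{appendix:lem:fromIPtoX}) with the size constraint dropped: after a Stinespring dilation (\cref{lem:StinespringDilatio}) the strategy takes the unitary-plus-single-measurement shape the lemma requires, with per-$x$ error $\mathbb{E}_{y}[\beta_{x,y}^{2}]=\tfrac{1}{2}-\varepsilon_{x}$ and $\mathbb{E}_{x}[\varepsilon_{x}]\ge\varepsilon_{\ext}$. The lemma then produces, for each $x$, a measurement reconstructing $x$ from $\rho_E^x$ with probability at least $4\varepsilon_{x}^{2}$; averaging over $x$ and using convexity of $t\mapsto t^{2}$ gives a measurement recovering $X$ from $E$ with probability at least $\mathbb{E}_{x}[4\varepsilon_{x}^{2}]\ge 4(\mathbb{E}_{x}\varepsilon_{x})^{2}\ge 4\varepsilon_{\ext}^{2}$. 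So $\mathrm{P}_{\guess}(X|E)\ge 4\varepsilon_{\ext}^{2}$, and by the operational meaning of min-entropy~\cite{KRS09OperationalMeaningEntropy} (the same fact underlying \cref{lem:lim_unp_is_smoothmin}), $H_{\min}(X|E)_{\rho}=-\log\mathrm{P}_{\guess}(X|E)\le -2-2\log(\varepsilon_{\ext})<1-2\log(\varepsilon_{\ext})\le k_{\ext}$, contradicting $H_{\min}(X|E)_{\rho}\ge k_{\ext}$.

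The only real subtlety I anticipate is the transfer of the bounded-size reduction lemmas to the unbounded setting: I must confirm that a size-unrestricted optimal guessing strategy for $\IP(x,y)$ can be realized in the coherent unitary/single-measurement form that \cref{lem:inner_prod_aprox_toX_apendix} assumes, and that merging the individually optimal strategies over the seeds $y$ into one $y$-controlled circuit is harmless. Both are immediate once circuit-size bookkeeping is discarded (Stinespring dilation, controlled operations), so I do not expect a genuine obstacle; all the content already lives in the established computational lemmas, and this statement is their $s\to\infty$, $\varepsilon\to0$ shadow.
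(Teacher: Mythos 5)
Your proposal is correct and matches the paper's (largely implicit) argument: the paper states this lemma with no separate proof, presenting it as an immediate consequence of \cref{lem:inner_prod_aprox_toX_apendix} together with \cref{lemma:one_bit_distinguish_is_predict} and the operational meaning of min-entropy as guessing probability, which is exactly the reduction chain you spell out (and also the structure of the paper's proof of the computational version \cref{apendix:lem_IPgoodQExt}, of which this is the $s\to\infty$, $\varepsilon\to 0$ specialization). Your handling of the only subtlety, realizing the unbounded optimal predictor in the coherent unitary-plus-measurement form via dilation and averaging the per-$x$ reconstruction probabilities with Jensen, is sound.
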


But more importantly for us, we can get a computational version with unpredictability entropy.

\begin{lemma}[Inner-Product Extractor from Unpredictability]\label{apendix:lem_IPgoodQExt}
    Let $\rho_{XE}$ be a cq-state where $\rho_{X}$ is a distribution over $\bits^{n}$. Let $\rho_{Y}$ be maximally mixed over $n$ qubits. Let $k_{\ext} \in \mathbb{N}, \varepsilon_{\ext}>0$ such that $k_{\ext} \ge 1-2\log(\varepsilon_{\ext})$. If
    \begin{equation*}
        \hunps[2s+2n+5]^{\varepsilon}(X|E) \ge k_{\ext} \;,
    \end{equation*}
    then
    \begin{equation*}
        \dist_{s}(\rho_{\IP(X,Y)YE},\rho_{U_{1}}\otimes\rho_{YE}) \le \varepsilon_{\ext} + 2\varepsilon \;.
    \end{equation*}
\end{lemma}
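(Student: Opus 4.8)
The plan is to prove the contrapositive: assuming a size-$s$ distinguisher achieves $\dist_{s}(\rho_{\IP(X,Y)YE},\rho_{U_1}\otimes\rho_{YE})>\varepsilon_{\ext}+2\varepsilon$, I will show that for \emph{every} smoothed state $\tilde\rho_{XE}\in\cB_\varepsilon(\rho_{XE})$ there is a guessing circuit of size at most $2s+2n+5$ that recovers $X$ from $\tilde\rho_E^X$ with probability strictly greater than $2^{-k_{\ext}}$ --- which is precisely the negation of $\hunps[2s+2n+5]^{\varepsilon}(X|E)\ge k_{\ext}$.

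\emph{Step 1: transfer the distinguishing advantage to the smoothed state.} Fix $\tilde\rho_{XE}\in\cB_\varepsilon(\rho_{XE})$ and extend it to $\tilde\rho_{XYE}:=\tilde\rho_{XE}\otimes\omega_Y$. Appending the fixed state $\omega_Y$ leaves the purified distance unchanged, and the channel computing $\IP(X,Y)$ (and the one tracing out $X$) cannot increase it, so together with $\dist(\cdot,\cdot)\le\bP(\cdot,\cdot)$ we get $\dist(\rho_{\IP(X,Y)YE},\tilde\rho_{\IP(X,Y)YE})\le\varepsilon$ and $\dist(\rho_{YE},\tilde\rho_{YE})\le\varepsilon$. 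Applying the triangle inequality for computational distance~(\cref{lemma:triangle-inequality-computational-td}) along the hops $\rho_{\IP(X,Y)YE}\to\tilde\rho_{\IP(X,Y)YE}\to\rho_{U_1}\otimes\tilde\rho_{YE}\to\rho_{U_1}\otimes\rho_{YE}$ then gives $\dist_{s}(\tilde\rho_{\IP(X,Y)YE},\rho_{U_1}\otimes\tilde\rho_{YE})>\varepsilon_{\ext}$.

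\emph{Step 2: distinguishing $\Rightarrow$ predicting $\IP$ $\Rightarrow$ reconstructing $X$.} By \cref{lemma:one_bit_distinguish_is_predict}, applied with one-bit register $\IP(X,Y)$ and side-information $YE$ (classical on $Y$, carrying $\tilde\rho_E^X$), the Step~1 bound yields a circuit $\cC$ of size at most $s$ that, on input the seed $y$ and the state $\tilde\rho_E^x$, outputs $\IP(x,y)$ with probability at least $\frac{1}{2}+\varepsilon_{\ext}$ over $x$ and uniform $y$. Plugging $\cC$ into the reconstruction of \cref{lem:fromIPtoX} --- whose explicit circuit-level form is \cref{lem:inner_prod_aprox_toX_apendix} --- namely prepare a uniform superposition over $y$ together with one phase qubit, run $\cC$ coherently, copy the inner-product bit out with a $\CNOT$, undo $\cC$ with $\cC^{\dagger}$, and apply Hadamards, produces a circuit $\cG$ of size at most $2s+2n+5$ recovering $x$ from $\tilde\rho_E^x$ with probability at least $4\varepsilon_{\ext}^2$. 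Finally, $k_{\ext}\ge 1-2\log\varepsilon_{\ext}$ is equivalent to $\varepsilon_{\ext}^2\ge 2^{1-k_{\ext}}$, whence $4\varepsilon_{\ext}^2\ge 2^{3-k_{\ext}}>2^{-k_{\ext}}$, so $\cG$ breaks the $2^{-k_{\ext}}$ guessing bound for $\tilde\rho_{XE}$; since $\tilde\rho_{XE}$ was arbitrary, $\hunps[2s+2n+5]^{\varepsilon}(X|E)<k_{\ext}$, which is the contrapositive.

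\emph{Main obstacle.} The crux is the reconstruction step \cref{lem:fromIPtoX}. Two things must be verified: that the coherent circuit has size only $2s+2n+5$ (so $\cC$ and $\cC^{\dagger}$ are each run once, with just $2(n+1)$ Hadamards and a $\CNOT$ of overhead), and --- more delicately --- that when the inner-product predictor is only $(\frac{1}{2}+\varepsilon_x)$-correct on source string $x$ rather than exact, a measurement of the output still returns $x$ with probability at least $4\varepsilon_x^2$. The latter requires writing the post-$\cC$ state as an on-track part plus an orthogonal error part, propagating both through $\cC^{\dagger}$ and the final Hadamards, bounding the overlap of the actual output with the ideal state $\ket{1}\ket{x}$ by $2\varepsilon_x$, and then averaging over $x$ via Jensen's inequality to reach $4\varepsilon_{\ext}^2$ with $\varepsilon_{\ext}=\mathbb{E}_x\varepsilon_x$. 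This amplitude bookkeeping, adapted from the inner-product analysis of~\cite{KK10twosourceextractorssecurequantum}, is the genuinely quantum part of the proof; the purified-distance monotonicity, the two triangle-inequality hops that create the $2\varepsilon$ loss, and the arithmetic with $k_{\ext}$ are all routine.
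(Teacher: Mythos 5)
Your proposal is correct and follows essentially the same route as the paper's own proof: a triangle-inequality hop through the smoothed state $\tilde{\rho}$ (costing the $2\varepsilon$), then \cref{lemma:one_bit_distinguish_is_predict} to turn the distinguisher into an $\IP$-predictor, then the coherent reconstruction of \cref{lem:inner_prod_aprox_toX_apendix} to recover $x$ with probability $4\varepsilon_{\ext}^2 > 2^{-k_{\ext}}$. Your Step 1 is in fact slightly more careful than the paper's, since you spell out why the two outer hops are each bounded by $\varepsilon$ (monotonicity of the purified distance under the $\IP$ channel plus $\dist_s \le \dist \le \bP$), which the paper leaves implicit.
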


\begin{proof}
    Let $\tilde{\rho}_{XE}$ be a cq-state such that $\bP(\tilde{\rho}_{XE},\rho_{XE})\le \varepsilon$. Assume for contradiction that
    \begin{equation*}
        \dist_{s}(\rho_{\IP(X,Y)YE},\rho_{U_{1}}\otimes\rho_{YE}) > \varepsilon_{\ext} + 2\varepsilon \;.
    \end{equation*}
    Using the triangle inequality for computational distance~\cref{lemma:triangle-inequality-computational-td} we get that
    \begin{align*}
        \dist_{s}(\rho_{\IP(X,Y)YE},\tilde{\rho}_{\IP(X,Y)YE})
         & + \dist_{s}(\tilde{\rho}_{\IP(X,Y)YE} ,\rho_{U_{1}}\otimes \tilde{\rho}_{YE})                                          \\
         & + \dist_{s}(\rho_{U_{1}}\otimes\tilde{\rho}_{YE},\rho_{U_{1}}\otimes\rho_{YE}) > \varepsilon_{\ext} + 2\varepsilon \;.
    \end{align*}
    From the triangle inequality,
    \begin{equation*}
        \dist_{s}(\tilde{\rho}_{\IP(X,Y)YE} ,\rho_{U_{1}}\otimes\rho_{YE}) > \varepsilon_{\ext} \;.
    \end{equation*}
    From~\cref{lemma:one_bit_distinguish_is_predict} we know that this implies there is a guessing circuit that can guess $\tilde{\rho}_{\IP(X,Y)}$ from $\rho_{YE}$ with probability at least $\frac{1}{2}+\varepsilon_{\ext}$. From~\cref{lem:inner_prod_aprox_toX_apendix} that means there is a circuit of size at most $s+2n+5$ that guess $\tilde{\rho}_{X}^{x}$ from $\rho^{x}_{E}$ with probability at least $4\varepsilon_{\ext}^{2}$ in contradiction to the assumption
    \begin{equation*}
        \hunps[2s+2n+5]^{\varepsilon}(X|E) \ge 1-2\log(\varepsilon_{\ext}) \;. \qedhere
    \end{equation*}
\end{proof}

\subsection{Computational Proof of Trevisan Extractors}\label{sec:appendix_trevisan}

In this section, we provide a complete proof of~\cref{lem:m-bit-ext-from-any-1-bit-and-design}. Recall

\begin{theorem}
    Let $C':\bits^{n}\times\bits^{t} \to \bits$ be a $(k,\varepsilon)$-one-bit extractor secure against $s$-unpredictability entropy.
    Let $S_{1},\dots,S_{m} \subset [d]$ be a weak $(t,r)$-design. Define the following function:
    \begin{align}
        \Ext_{C}: \bits^{n} \times \bits^{d} & \to \bits^{m}                                                                 \\
        (x,y)                                & \mapsto \left(C(x,y_{S_{1}}) \;,\dots \;, C(x,y_{S_{m}}) \right)\;, \nonumber
    \end{align}
    where $y_{S}$ is the bits of $y$ in locations $S$.
    $\Ext_{C}$ is a $\left(k + rm - \log(\varepsilon), 2 m \varepsilon\right)$ extractor of pseudorandom bits for quantum unpredictability entropy in the following sense: If
    \begin{equation*}
        \hunps[s']^{\varepsilon'}(X|E)_{\rho} \ge k + rm - \log(\varepsilon) \;,
    \end{equation*}
    then
    \begin{equation*}
        \dist_{s}(\rho_{\Ext_{C}(X,Y)YE}, \rho_{U_{m}} \otimes \rho_{Y} \otimes \rho_{E}) \le 2 m \varepsilon + 2\varepsilon' \;,
    \end{equation*}
    where $s' = \bO(ns+rm)$.
\end{theorem}

First, we will reduce the problem from many bits to one bit, by showing that if a distribution on bitstrings is computationally far from the uniform distribution on bitstrings, there is at least one bit that is far from a uniform bit, given the previous bits.
Using the triangle inequality, we can show the following lemma
\begin{lemma}\label{lem:from_many_bits_to_one}
    Let $\rho_{ZB}$ be a cq-state, where $Z$ is a classical random variable over $m$ bit strings. If
    \begin{equation*}
        \dist_{s}(\rho_{ZB},\rho_{U_{m}} \otimes \rho_{B}) > \varepsilon \;,
    \end{equation*}
    then there is a bit $i\in [m] $ such that
    \begin{equation*}
        \dist_{s} \left(
        \sum_{z\in Z} p_{z} \proj{z_{[i-1]}0} \otimes \rho_{B}^{z}  , \quad
        \sum_{z\in Z} p_{z} \proj{z_{[i-1]}1} \otimes \rho_{B}^{z}   \right) > \frac{\varepsilon}{m} \;.
    \end{equation*}
    The notation $z_{[i-1]}$ denotes the first $i-1$ bits of the string $z$.
\end{lemma}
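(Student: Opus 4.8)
The plan is to use a standard hybrid argument adapted to the computational-distance setting. Suppose the distinguisher $\cC$ of size $s$ witnesses $\dist_{s}(\rho_{ZB},\rho_{U_{m}}\otimes\rho_{B}) > \varepsilon$. Define the $m+1$ hybrid states $\tau_{j}$ for $j = 0,\dots,m$, where in $\tau_{j}$ the first $j$ bits of the $Z$-register are replaced by uniform random bits and the last $m-j$ bits are the true bits of $z$, all correlated with $\rho_{B}^{z}$ in the natural way; so $\tau_{0} = \rho_{ZB}$ and $\tau_{m} = \rho_{U_{m}}\otimes\rho_{B}$. By the triangle inequality for computational distance (\cref{lemma:triangle-inequality-computational-td}) applied $m$ times, there exists an index $i\in[m]$ with $\dist_{s}(\tau_{i-1},\tau_{i}) > \varepsilon/m$.

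The next step is to argue that distinguishing $\tau_{i-1}$ from $\tau_{i}$ with a size-$s$ circuit gives a size-$s$ circuit distinguishing the two cq-states $\sigma_{0} := \sum_{z} p_{z}\proj{z_{[i-1]}0}\otimes\rho_{B}^{z}$ and $\sigma_{1} := \sum_{z} p_{z}\proj{z_{[i-1]}1}\otimes\rho_{B}^{z}$ with advantage $>\varepsilon/m$ (where $z_{[i-1]}$ are the first $i-1$ bits of $z$, as in the statement; note the trailing bit in $\sigma_{b}$ records the $i$-th true bit of $z$, which is what the hybrid ``flips''). This follows by the same bookkeeping used in the proof of \cref{lemma:one_bit_distinguish_is_predict}: in both $\tau_{i-1}$ and $\tau_{i}$ the bits in positions $1,\dots,i-1$ are uniform and the bits in positions $i+1,\dots,m$ are the true bits of $z$; the only difference is whether position $i$ holds the true bit ($\tau_{i-1}$) or a fresh uniform bit ($\tau_{i}$). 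A distinguisher can measure the first $i-1$ bits and the last $m-i$ bits of the $Z$-register for free (deferred measurement principle, no gate cost), so the problem reduces to distinguishing the single-bit hybrid, which is exactly $\dist_{s}$ between $\frac12(\sigma_0+\sigma_1)$ relative to the case where position $i$ carries the true bit; unwinding this as in \cref{lemma:one_bit_distinguish_is_predict} gives $\dist_{s}(\sigma_0,\sigma_1) > \varepsilon/m$, which is the claimed bound.

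I expect the only mildly delicate point to be keeping the conditioning correct: one must make sure that the ancillary bits being measured ``for free'' are indeed jointly distributed the same in the two neighboring hybrids, so that the reduction really isolates the $i$-th coordinate and does not accidentally leak information about whether we are in $\tau_{i-1}$ or $\tau_{i}$. This is handled exactly as in the single-bit lemma already proven: since measuring a classical register in the computational basis costs nothing in circuit size and, by the deferred measurement principle, can be pushed to the front, the reduction incurs no blowup in $s$, and the factor $1/m$ comes purely from the $m$-fold triangle inequality. Everything else is routine.
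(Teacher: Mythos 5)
Your overall strategy is exactly the paper's: an $m$-step hybrid argument, the triangle inequality for $\dist_s$ (\cref{lemma:triangle-inequality-computational-td}) to locate an index $i$ with adjacent-hybrid distance $>\varepsilon/m$, and then \cref{lemma:one_bit_distinguish_is_predict} to convert "distinguishing bit $i$ from uniform given side information" into the two-state form in the conclusion. The paper's proof is precisely this, with the hybrid
\begin{equation*}
\sigma_{j} = \sum_{z,r} \frac{p_{z}}{2^{m}} \proj{z_{[j]},r_{\{j+1,\dots,m\}}} \otimes \rho_{B}^{z}\;,
\end{equation*}
i.e.\ the first $j$ bits are the \emph{true} bits and the trailing $m-j$ bits are uniform, walking from $\sigma_m=\rho_{ZB}$ down to $\sigma_0=\rho_{U_m}\otimes\rho_B$.

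The one concrete problem is the orientation of your hybrids. You replace the \emph{first} $j$ bits by uniform, so adjacent hybrids $\tau_{i-1},\tau_i$ agree on a uniform prefix in positions $1,\dots,i-1$ and on the \emph{true suffix} $z_{i+1},\dots,z_m$, differing only at position $i$. The reduction you describe therefore yields a size-$s$ distinguisher for $z_i$ given $(z_{i+1},\dots,z_m,B)$ — the states it separates carry the suffix of $z$, not the prefix $z_{[i-1]}$. Your final claim that this "gives $\dist_s(\sigma_0,\sigma_1)>\varepsilon/m$" with $\sigma_b=\sum_z p_z\proj{z_{[i-1]}b}\otimes\rho_B^z$ does not follow: in your hybrids the first $i-1$ positions are uniform and carry no information about $z$, and a distinguisher with access only to the prefix may have strictly smaller advantage than one with access to the suffix, so you cannot pass from one to the other. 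The statement with the suffix as advice is an equally valid lemma (apply yours to the bit-reversed $Z$), but it is not the statement as written, and the prefix orientation is what the downstream application (\cref{lem:reducing_big_ext_to_small_ext_with_advice}) needs, since a weak design only bounds overlaps $\sum_{j=1}^{i-1}2^{|S_i\cap S_j|}$ with \emph{earlier} blocks. The fix is one line: peel the true bits off from the end rather than the front, as in the paper's $\sigma_j$ above; the rest of your argument, including the use of the deferred-measurement principle to keep the reduction free of circuit-size overhead, then goes through as in the paper.
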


We can interpret this lemma as saying that if we can distinguish the state from uniform randomness with a circuit of size $s$, there is a bit that we can guess with advantage at least $\varepsilon/m$ with a circuit of size $s$ using the same classical information and the bits before it in the string.

\begin{proof}
    The proof is almost identical to the proof in~\cite{DPVR12trevisanextwithquantumsideinfo}, since the hybrid argument they use is based on the triangle inequality for trace distance. We only need to replace trace distance by computational distance and repeat the proof.
    Let:
    \begin{equation*}
        \sigma_{i} = \sum_{\substack{z \in Z \\ r \in \bits^m}} \frac{p_{z}}{2^{m}} \proj{z_{[i]},r_{\set{i+1,\dots,m}}} \otimes \rho_{B}^{z} \;.
    \end{equation*}
    Then:
    \begin{align*}
        \varepsilon & < \dist_{s} (\rho_{ZB},\rho_{U_{m}} \otimes \rho_{B}) \\
                    & = \dist_{s} (\sigma_{m},\sigma_{0})                   \\
                    & \le \sum_{i=1}^{m} \dist_{s}(\sigma_{i},\sigma_{i-1}) \\
                    & \le m \max_{i} \dist_{s}(\sigma_{i},\sigma_{i-1}) \;.
    \end{align*}
    And by construction for every $i$:
    \begin{equation*}
        \dist_{s} (\sigma_{i},\sigma_{i-1}) = \dist_{s} (\rho_{Z_{[i]}B},\rho_{U_{i}Z_{[i-1]}B}) \;,
    \end{equation*}
    where $\rho_{U_{i}}$ is a maximally mixed on the $i$-th bit. And $\rho_{Z_{[i]}}$ is the reduced state of $\rho_{Z}$ with only the first $i$ bits of $Z$.
    By applying~\cref{lemma:one_bit_distinguish_is_predict} we get the desired result. %
\end{proof}

Following the proof from~\cite{DPVR12trevisanextwithquantumsideinfo}, we use the structure of $\Ext_{C}$ as a concatenation of $C$ using different parts of the seed in a weak design to bound the amount of information that the previous $i-1$ bits can contribute. We now show there is a way to split the seed $Y$ to $V,W$ and construct some classical advice system $G$ of size at most $rm$ such that they form a Markov chain $V \mch W \mch G$ and if
\begin{equation*}
    \Vert \rho_{\Ext_{C}(X,Y)E} - \rho_{U_{m}}\otimes\rho_{Y}\otimes\rho_{E} \Vert >\varepsilon\;,
\end{equation*}
then
\begin{equation*}
    \Vert\rho_{C(X,V)VWGE} - \rho_{U_1} \otimes \rho_{VWGE} \Vert > \varepsilon/m \;,
\end{equation*}
where $r$ is a parameter of the weak design and $m$ is the output size of the extractor.

\begin{lemma}\label{lem:reducing_big_ext_to_small_ext_with_advice}
    Let $\rho_{XE}$ be a cq-state, let $\rho_{Y}$ be a classical seed (not necessarily uniform) independent of $\rho_{XE}$.
    If
    \begin{equation}\label{eq:distinguish_big_ext}
        \dist_{s}(\rho_{\Ext_{C}(X,Y)E} , \rho_{U_{m}} \otimes \rho_{Y} \otimes \rho_{E}) > \varepsilon \;,
    \end{equation}
    then there is a fixed partition of $Y$ into $V,W$ and a classical advice system $G$ of size at most $rm$ such that $V \mch W \mch G$ and
    \begin{equation*}
        \dist_{s}(\rho_{C(X,V)VWGE} , \rho_{U_{1}} \otimes \rho_{VWGE}) > \frac{\varepsilon}{m} \;.
    \end{equation*}
\end{lemma}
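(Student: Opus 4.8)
The plan is to transcribe the information-theoretic analysis of~\cite{DPVR12trevisanextwithquantumsideinfo} into the computational setting, replacing trace distance by $\dist_s$ throughout, and to make the advice string $G$ fully explicit so that it is a deterministic function of $X$ and of the part of the seed \emph{not} used in the critical block.

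First I would localize the distinguishing advantage to a single output coordinate. Apply~\cref{lem:from_many_bits_to_one} to the cq-state $\rho_{ZB}$ with $Z=\Ext_C(X,Y)=\left(C(X,Y_{S_1}),\dots,C(X,Y_{S_m})\right)$ and $B=YE$; by~\cref{eq:distinguish_big_ext} this produces an index $i\in[m]$ and, via~\cref{lemma:one_bit_distinguish_is_predict}, a circuit $\cC$ of size at most $s$ that, given the first $i-1$ output bits $Z_{[i-1]}$ together with $Y$ and $E$, guesses the $i$-th output bit $Z_i=C(X,Y_{S_i})$ with advantage strictly larger than $\varepsilon/m$.

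Next I would fix the partition $V:=Y_{S_i}$, $W:=Y_{[d]\setminus S_i}$, so that $Z_i=C(X,V)$ and $Y=(V,W)$, and build the advice. For each $j<i$ decompose $Y_{S_j}=\left(V|_{S_i\cap S_j},\,W|_{S_j\setminus S_i}\right)$ and let $g_j:\bits^{|S_i\cap S_j|}\to\bits$ be the truth table of $u\mapsto C\!\left(X,(u,W|_{S_j\setminus S_i})\right)$; set $G=(g_1,\dots,g_{i-1})$. The weak $(t,r)$-design property gives $|G|=\sum_{j=1}^{i-1}2^{|S_i\cap S_j|}\le rm$. Two facts matter: (i) each $g_j$, and hence $G$, depends only on $(X,W)$, and since $Y$ is independent of $(X,E)$ we have $X\perp V\mid W$, so $V\mch W\mch G$; (ii) from $(V,G)$ one recovers every earlier output bit, $Z_j=C(X,Y_{S_j})=g_j\!\left(V|_{S_i\cap S_j}\right)$, by a table lookup, so let $\cR$ be the classical circuit performing these $i-1$ lookups, of size $\poly(m,r,t)$.

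Finally I would assemble the one-bit distinguisher $\cC''$ that on input $(V,W,G,E)$ runs $\cR$ on $(V,G)$ to obtain $Z_{[i-1]}$ and then runs $\cC$ on $(Z_{[i-1]},V,W,E)$. Because $\cR$ reproduces the correct bits, $\cC''$ feeds $\cC$ precisely the input distribution it expects and therefore guesses $C(X,V)$ with advantage strictly larger than $\varepsilon/m$; turning this guesser into a distinguisher and invoking~\cref{lemma:one_bit_distinguish_is_predict} yields $\dist_{s}(\rho_{C(X,V)VWGE},\rho_{U_1}\otimes\rho_{VWGE})>\varepsilon/m$, with circuit size $s$ inflated by $|\cR|+\bO(1)=\poly(m,r,t)$, an overhead absorbed into the parameter $s'$ of~\cref{lem:m-bit-ext-from-any-1-bit-and-design}. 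The main obstacle is the construction of $G$: it must simultaneously be of size at most $rm$, depend only on $(X,W)$ so that $V\mch W\mch G$ holds, and together with $V$ determine all of $Z_{[i-1]}$ — and it is exactly the weak $(t,r)$-design bound $\sum_{j<i}2^{|S_i\cap S_j|}\le rm$ that reconciles these requirements, since that is what keeps the "free-coordinate truth tables" short. Everything else is routine: the hybrid step is~\cref{lem:from_many_bits_to_one}, the guessing--distinguishing equivalence is~\cref{lemma:one_bit_distinguish_is_predict}, and because the reconstruction touches only the classical registers $(V,G)$, the quantum side-information $E$ is carried along untouched.
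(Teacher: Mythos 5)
Your proposal is correct and follows essentially the same route as the paper's proof: localize the advantage to one output coordinate via \cref{lem:from_many_bits_to_one}, split the seed as $V=Y_{S_i}$, $W=Y_{[d]\setminus S_i}$, take $G$ to be the truth tables of the earlier output bits restricted to the overlapping coordinates (size bounded by the weak-design sum), and observe that $G$ is a function of $(X,W)$ alone so $V\mch W\mch G$. The one place you are more careful than the paper is in explicitly charging the table-lookup reconstruction circuit $\cR$ to the distinguisher's size; the paper's proof keeps the same $s$ on both sides by asserting that supplying the full advice can only increase the computational distance, which silently absorbs that lookup overhead.
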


\begin{proof}
    The proof is similar to the non-computational version~\cite[Proposition 4.4]{DPVR12trevisanextwithquantumsideinfo}. We use~\cref{lem:from_many_bits_to_one} on~\cref{eq:distinguish_big_ext} and the structure of $\Ext_{C}$ to get that there is a bit $i\in [m]$ such that:
    \begin{multline}
        \dist_{s}
        \left(
        \sum_{\substack{x,y \\ C(x,y_{S_{i}}) = 0 }} p_{x}q_{y} \proj{C(x,y_{S_{1}})\dots C(x,y_{S_{i-1}}),y} \otimes \rho^{x} , \right. \\
        \left.
        \sum_{\substack{x,y \\ C(x,y_{S_{i}}) = 1 }} p_{x}q_{y} \proj{C(x,y_{S_{1}})\dots C(x,y_{S_{i-1}}),y} \otimes \rho^{x}
        \right) > \frac{\varepsilon}{m} \;,
    \end{multline}
    where $\set{p_{x}},\set{q_{y}}$ are the classical probability distributions of $X,Y$. We can split any $y \in Y$ to strings of length $t$ and $d-t$ respectively, denote them $v= y_{S_{i}}$ , $w = y_{[d] \setminus S_{i}}$. Fixing $w,x,j$ and looking at $g(w,x,j,v) : = C(x,y_{s_{j}})$ as a function of $v$ ($g(w,x,j,\cdot) : \bits^{t}\to \bits$), this is a function that depends on $\abs{S_{j}\cap S_{i}}$ bits and has a single bit output, therefore it requires $2^{\abs{S_{j} \cap S_{i}}}$ bits.
    To describe $g^{w,x}(\cdot) := g(w,x,1,\cdot),\dots,g(w,x,i-1,\cdot)$ the concatenation of all the $i-1$ first bits, we need a string of length $\sum_{j=1}^{i-1} 2^{\abs{S_{j} \cap S_{i}}}$, which is at most $rm$ by the property of weak $(r,t)$-design.
    We denote the system that contains the string described by the functions $G$. Note that, given $W$, the advice system $G$ is independent of the random variable $V$, since it contains all the options for $v\in V$. Adding more information can only increase the computational distance, since a guessing circuit can simply not read the part that is not used for any given $v$, therefore $V\mch W \mch G$.
    We can rewrite the inequality as:
    \begin{equation*}
        \dist_{s}
        \left(
        \sum_{\substack{x,v,w \\ C(x,v) = 0 }} p_{x}q_{y} \proj{g^{x,w}(v),v,w} \otimes \rho^{x} ,
        \sum_{\substack{x,v,w \\ C(x,v) = 1 }} p_{x}q_{y} \proj{g^{x,w}(v),v,w} \otimes \rho^{x}
        \right) > \frac{\varepsilon}{m} \;.
    \end{equation*}
    Since providing all of $g$ can only increase the computational distance:
    \begin{equation*}
        \dist_{s}
        \left(
        \sum_{\substack{x,v,w \\ C(x,v) = 0 }} p_{x}q_{y} \proj{g^{x,w},v,w} \otimes \rho^{x} ,
        \sum_{\substack{x,v,w \\ C(x,v) = 1 }} p_{x}q_{y} \proj{g^{x,w},v,w} \otimes \rho^{x}
        \right) > \frac{\varepsilon}{m} \;.\qedhere
    \end{equation*}
\end{proof}

We are now ready to use the above lemmas to show we can extract pseudorandom bits from sources with high unpredictability entropy. A computational version of Theorem 4.6~\cite{DPVR12trevisanextwithquantumsideinfo}
\begin{theorem}
    Let $C':\bits^{n}\times\bits^{t} \to \bits$ be a $(k,\varepsilon)$-one-bit extractor secure against $s$-unpredictability entropy.
    Let $S_{1},\dots,S_{m} \subset [d]$ be a weak $(t,r)$-design. Defining the following function:
    \begin{align*}
        \Ext_{C}: \bits^{n} \times \bits^{d} & \to \bits^{m}                                            \\
        (x,y)                                & \mapsto C(x,y_{S_{1}})\dots C(x,y_{S_{m}}) \;, \nonumber
    \end{align*}
    where $y_{S}$ is the bits of $y$ in locations $S$.
    $\Ext_{C}$ is an extractor of pseudorandom bits for quantum unpredictability entropy in the following sense:
    If
    \begin{equation*}
        \hunps[s']^{\varepsilon'}(X|E)_{\rho} \ge k + rm - \log(\varepsilon) \;,
    \end{equation*}
    then
    \begin{equation*}
        \dist_{s}(\rho_{\Ext_{C}(X,Y)YE}, \rho_{U_{m}} \otimes \rho_{Y} \otimes \rho_{E}) \le 2 m (\varepsilon + \varepsilon') \;,
    \end{equation*}
    where $s' = \bO(ns+rm)$.
\end{theorem}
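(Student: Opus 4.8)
The plan is to prove the contrapositive. Assume that $\Ext_C$ can be distinguished from uniform with advantage exceeding $2m(\varepsilon+\varepsilon')$ by some size-$s$ circuit, i.e. $\dist_s(\rho_{\Ext_C(X,Y)YE},\rho_{U_m}\otimes\rho_Y\otimes\rho_E)>2m(\varepsilon+\varepsilon')$. From this I will build, out of \emph{any} cq-smoothing $\tilde\rho_{XE}\in\cB_{\varepsilon'}(\rho_{XE})$, a guessing circuit of size $\bO(ns+rm)$ that outputs $X$ from $\tilde\rho_E^x$ with probability exceeding $\varepsilon\cdot 2^{-(k+rm)}\le 2^{-(k+rm-\log\varepsilon)}$, contradicting $\hunps[s']^{\varepsilon'}(X|E)_\rho\ge k+rm-\log(\varepsilon)$ (we may assume $\varepsilon<1$, the claim being trivial otherwise).

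First I would feed the assumed advantage into \cref{lem:reducing_big_ext_to_small_ext_with_advice}: it produces a fixed partition of the seed $Y$ into $(V,W)$, a classical advice register $G$ of at most $rm$ bits, and, since $\dfrac{2m(\varepsilon+\varepsilon')}{m}=2(\varepsilon+\varepsilon')\ge\varepsilon+2\varepsilon'$, an $s$-distinguishing advantage $>\varepsilon+2\varepsilon'$ between $\rho_{C(X,V)VWGE}$ and $\rho_{U_1}\otimes\rho_{VWGE}$. Because $Y$ is uniform and independent of $(X,E)$ and $G$ is a deterministic function of $(X,W)$, the block $V$ is uniform and independent of the composite side-information $Q:=WGE$; thus $C$, with seed $V$ and side-information $Q$, is exactly in the situation of \cref{def:ext_pseudorandomness_from_unp}. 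Applying the contrapositive of the $(k,\varepsilon)$-security of $C$ against $s$-unpredictability entropy yields $\hunps[s_1]^{\varepsilon'}(X\,|\,WGE)_\rho<k$, where $s_1$ is the one-bit extractor's guessing-circuit blow-up (for the inner-product extractor, $s_1=2s+\bO(n)$ by \cref{lemma:one_bit_distinguish_is_predict} and \cref{lem:fromIPtoX}).

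Next comes the removal of the auxiliary registers $W$ and $G$. Fix an arbitrary cq-state $\tilde\rho_{XE}\in\cB_{\varepsilon'}(\rho_{XE})$. Let $\mathcal{M}$ be the CPTP map that appends the uniform seed $Y=(V,W)$ and, reading $X$ and $W$, copies $G=g^{X,W}$ into a fresh classical register; then $\mathcal{M}(\rho_{XE})=\rho_{XVWGE}$, and data processing for the purified distance gives $\tilde\rho_{XVWGE}:=\mathcal{M}(\tilde\rho_{XE})\in\cB_{\varepsilon'}(\rho_{XVWGE})$, still cq on $X$, with $\ptr{VWG}{\tilde\rho_{XVWGE}}=\tilde\rho_{XE}$ and $\tilde\rho_{WGE}^{x}=\mathbb{E}_w[\proj{w}\otimes\proj{g^{x,w}}\otimes\tilde\rho_E^x]$. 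Tracing out $V$ keeps us in $\cB_{\varepsilon'}(\rho_{XWGE})$, so the statement $\hunps[s_1]^{\varepsilon'}(X|WGE)_\rho<k$ furnishes a circuit $\cG_1$ of size $\le s_1$ with $\pr{\cG_1(\tilde\rho_{WGE}^{x})=x}>2^{-k}$ (average over $x$). Define $\cG$ which, on $\tilde\rho_E^x$, samples a uniform $w$ and a uniform guess $\hat g\in\bits^{rm}$ and runs $\cG_1$ on $\proj{w}\otimes\proj{\hat g}\otimes\tilde\rho_E^x$. Conditioning on the fresh and independent event $\hat g=g^{x,w}$ (probability $2^{-rm}$, and the conditional law of $w$ stays uniform), linearity of $\cG_1$ in its input gives $\pr{\cG(\tilde\rho_E^x)=x}\ge 2^{-rm}\pr{\cG_1(\tilde\rho_{WGE}^x)=x}>2^{-(k+rm)}$, as needed. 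The circuit $\cG$ has size $s_1+\bO(|W|+rm)$, which I would evaluate through the weak-design parameters of \cref{lem:designs-t-1} to obtain $s'=\bO(ns+rm)$.

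The main obstacle I anticipate is the interaction between smoothing and the auxiliary registers: one must ensure that an arbitrary \emph{purified-distance} smoothing of $\rho_{XE}$ lifts to a smoothing of $\rho_{XWGE}$ that is cq on $X$ and has the same $E$-marginal — this is precisely where the extension/data-processing behaviour of the purified distance (\cref{lem:extension_uhlmann_property_for_purefied_distance}) is indispensable and where the analogous step would break down for trace or computational distance — and then that the ``sample $W$, guess $G$'' reduction lower-bounds the guessing probability on the \emph{smoothed} conditional state $\tilde\rho_{WGE}^{x}$, not on the true one. A secondary, purely bookkeeping, difficulty is composing the per-step circuit overheads — the one-bit extractor's $\bO(s)$ blow-up, the cost of preparing the uniform part $W$ of the seed, and the $rm$ gates used to guess the advice — into the stated bound $s'=\bO(ns+rm)$.
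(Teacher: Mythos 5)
Your proposal is correct and follows essentially the same route as the paper's proof: contrapositive, reduction to a one-bit distinguisher aided by an $rm$-bit classical advice register via \cref{lem:reducing_big_ext_to_small_ext_with_advice}, the assumed security of the one-bit extractor to conclude $\hunps(X|WGE)<k$, and then removal of the advice and the unused seed bits. The only real difference is that where the paper invokes the leakage chain rule to strip $G$ (losing $rm$ bits of entropy), you inline that argument by having the guessing circuit sample the advice uniformly at a multiplicative cost of $2^{-rm}$ — which is exactly the paper's own proof of the classical chain rule (\cref{lem:classical-chain-rule-for-unpredictability}) — and your explicit push-forward of the smoothing $\tilde\rho_{XE}$ to $\tilde\rho_{XWGE}$ via a CPTP map (needing only monotonicity of the purified distance, since $V,W,G$ are generated from $X,E$ by a channel) is in fact more careful than the paper's terse treatment of this point.
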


\begin{proof}[of~\cref{lem:m-bit-ext-from-any-1-bit-and-design}]
    Assuming that:
    \begin{equation*}
        \dist_{s}(\rho_{\Ext_{C}(X,Y)YE},\rho_{U_{m}} \otimes \rho_{Y} \otimes \rho_{E}) >  2 m (\varepsilon)  \;.
    \end{equation*}
    From~\cref{lem:reducing_big_ext_to_small_ext_with_advice} we know there is a way to split the seed $Y=VW$ and a classical advice $G$ of size at most~$rm$ such that:
    \begin{equation*}
        \dist_{s}(\rho_{C(X,V)WE},\rho_{U_{1}} \otimes \rho_{VGWE} ) > 2  (\varepsilon) \;.
    \end{equation*}
    From~\cref{lem:IPgoodQExt} and the chain rule for unpredictability entropy~\cref{lem:chain-rule-quantum-unpredictability-entropy} we get that:
    \begin{equation*}
        \hunps[\bO(ns+\log|G|)]^{\varepsilon'}(X|WE) \le \hunps[\bO(ns)]^{\varepsilon'}(X|WGE) + \log|G| < k+rm-\log(\varepsilon) \;.
    \end{equation*}
    If the seed $Y$ is uniformly random, then $W$ is also uniformly random and independent randomness does not improve guessing probability, and $\log(|G|) \le rm$ so:
    \begin{equation*}
        \hunps[\bO(ns+\log|G|)]^{\varepsilon'}(X|WE) \le \hunps[\bO(ns+rk)]^{\varepsilon'}(X|E) \;.
    \end{equation*}
    Therefore for uniform seed $Y$, if $\hunps[s']^{\varepsilon'}(X|E) \ge k+rm - \log(\varepsilon)$ then the output of $\Ext_{C}$ is pseudorandom with the given parameters
    \begin{equation*}
        \dist_{s}(\rho_{\Ext_{C}(X,Y)YE},\rho_{U_{m}} \otimes \rho_{Y} \otimes \rho_{E}) \le 2 m (\varepsilon+\varepsilon')  \;. \qedhere
    \end{equation*}
\end{proof}

\subsection{Relation to Previously Suggested Computational Entropies}
We also mention another way to define quantum computation entropy based on guessing the probability of bounded adversaries.

\begin{definition}[Quantum Conditional HILL Entropy~\cite{CC17Computationalminentropy}]\label{def:hill_cq-entropy}
    Let $\rho_{XE}$ be a cq-state, $s\in \mathbb{N}, \varepsilon\ge 0$. We say that
    \begin{equation*}
        H_{\hill}^{\varepsilon,s}(X|E)_{\rho} \ge k\;,
    \end{equation*}
    if there is cq-state $\tilde{\rho}_{XE}$ such that $\dist_{s}(\rho_{XE},\tilde{\rho}_{XE}) \le \varepsilon$ and
    \begin{equation*}
        H_{\min}(X|E)_{\tilde{\rho}} \ge k \;.
    \end{equation*}
\end{definition}

\begin{definition}[Quantum Guessing Pseudoentropy~\cite{CC17Computationalminentropy}]\label{def:guessing-pseudoentropy}
    Let $\rho_{XE}$ be a cq-state. We say that $X$ conditioned on $E$ has $s,\varepsilon$ quantum guessing pseudoentropy $H_{\guess}^{s,\varepsilon}(X|E)_{\rho}\ge k$ if for every circuit $\cC$ of size at most $s$ the probability of guessing $X$ correctly from $E$ is
    \begin{equation*}
        \prs{x\in X}{\cC(\rho^{x}_{E}) = x} \le 2^{-k} +\varepsilon\;.
    \end{equation*}
\end{definition}
Since the purified distance bounds the statistical distinguishability of states, we can see that
\begin{equation*}
    H_{\guess}^{s,\varepsilon}(X|E)_{\rho} \ge \hunps^{\varepsilon}(X|E)_{\rho} \;.
\end{equation*}
For $\varepsilon=0$, the definitions coincide for any $s\in\mathbb{N}$ and any cq-state
\begin{equation*}
    H_{\guess}^{s,0}(X|E)_{\rho} = \hunps^{0}(X|E)_{\rho} \;.
\end{equation*}
We do not know of a bound in the other direction between guessing pseudoentropy and other quantum computational entropies for any positive $\varepsilon$.

We write here a generalization of the classical definition of unpredictability~\cite{HLR07SepPseudoentropyfromCompressibility}, in quantum notations. We separate the computational parameter $s$ into two parameters to reflect its different roles in the definition.
\begin{definition}[Classical Conditional Unpredictability Entropy]\label{def:clasical_unpredictability_entropy}
    For any classical state $\rho_{XE}$, and $\varepsilon\ge 0$, $s_{\ind},s_{\guess}\in\bbN$.
    We say that
    \begin{equation*}
        \hunp^{\varepsilon,s_{\ind},s_{\guess}}(X|E)_{\rho}\ge k \;,
    \end{equation*}
    if there is $\tilde{\rho}_{XE}$ such that $\dist_{s_{\ind}}(\rho_{XE},\tilde{\rho}_{XE}) \le \varepsilon$, and for any guessing circuit $\cC$ of size $s_{\guess}$
    \begin{equation*}
        \pr{\cC(\tilde{\rho}_{E}^x) = x} \le 2^{-k} \;.
    \end{equation*}
\end{definition}

In the limit $(s_{\guess},s_{\ind}) \to \infty$, we would expect to recover the information-theoretic smooth min-entropy. It turns out not to be the case. The indistinguishably part $s_{\ind}$ converges to the trace distance, not to the purified distance like the information-theoretic smooth min-entropy.
Purified distance has a few properties that are very useful for properties we want conditional quantum entropy to have.
To recover some of these properties in a computational setting, we leave only $s_{\guess}$ as our computational assumption. We replace the $(s_{\ind},\varepsilon)$ computational indistinguishability with the $\varepsilon$ information-theoretic purified distance.
We will highlight an essential difference in the proof of~\cref{lem:chain-rule-quantum-unpredictability-entropy}.

\subsection{Additional Facts and Proofs}\label{apendix_sec_more_proofs}

Proof of~\cref{lem:data_processing_inequality_unpredictability_entropy}, recall:
\begin{lemma}[Data-Processing Inequality]
    Let $\rho_{XE}$ be a cq-state, $s\in\mathbb{N}$, $\varepsilon\ge0$, let $\Phi_{E\to E'}$ be a quantum channel that can be implemented using a circuit of size $t$,
    \begin{equation*}
        \hunps^{\varepsilon}(X|E')_{\Phi(\rho)} \ge \hunps[s+t]^{\varepsilon}(X|E)_{\rho}\;.
    \end{equation*}
\end{lemma}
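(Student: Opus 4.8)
The plan is to prove the claimed inequality directly by transporting the smoothing state through $\Phi$. Suppose $\hunps[s+t]^{\varepsilon}(X|E)_{\rho} \ge k$. By \cref{def:purified-unpredictability-entropy}, fix a (sub-normalized) cq-state $\tilde{\rho}_{XE} \in \cB_{\varepsilon}(\rho_{XE})$ such that every guessing circuit of size $s+t$ succeeds with probability at most $2^{-k}$. The natural candidate witness on the $E'$ side is $\tilde{\sigma}_{XE'} := (\1_X \otimes \Phi_{E\to E'})(\tilde{\rho}_{XE})$. Since $\Phi$ acts only on the $E$ register, this is again a (sub-normalized) cq-state, of the explicit form $\tilde{\sigma}_{XE'} = \sum_x \tilde{p}_x \proj{x} \otimes \Phi(\tilde{\rho}^x_E)$, and in particular $\tilde{\sigma}_{E'}^x = \Phi(\tilde{\rho}_E^x)$.

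First I would verify that $\tilde{\sigma}_{XE'}$ lies in the required purified ball around $\Phi(\rho)_{XE'}$. The purified distance $\bP$ is a monotone (decreasing) function of the generalized fidelity $F_*$, which is non-decreasing under CPTP maps; hence $\bP$ is contractive under the trace-preserving map $\1_X \otimes \Phi$. This gives
\begin{equation*}
\bP\!\left(\tilde{\sigma}_{XE'}, \Phi(\rho)_{XE'}\right) = \bP\!\left((\1_X\otimes\Phi)(\tilde{\rho}_{XE}),\, (\1_X\otimes\Phi)(\rho_{XE})\right) \le \bP(\tilde{\rho}_{XE}, \rho_{XE}) \le \varepsilon \;,
\end{equation*}
so $\tilde{\sigma}_{XE'} \in \cB_{\varepsilon}(\Phi(\rho)_{XE'})$. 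Trace preservation also ensures $\tr{\tilde{\sigma}_{XE'}} = \tr{\tilde{\rho}_{XE}}$, so the sub-normalization level is unchanged and the ball-membership condition $\sqrt{\tr{\cdot}} > \varepsilon$ carries over.

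Next I would bound the guessing probability from $\tilde{\sigma}_{E'}^x$ against circuits of size $s$. Suppose toward a contradiction that some guessing circuit $\cC'$ of size at most $s$ satisfies $\pr{\cC'(\tilde{\sigma}_{E'}^x) = x} > 2^{-k}$. Since $\tilde{\sigma}_{E'}^x = \Phi(\tilde{\rho}_E^x)$, the composed circuit $\cC' \circ \Phi$ is a guessing circuit that takes $\tilde{\rho}_E^x$ as input and outputs the same guess, hence succeeds with the same probability $> 2^{-k}$; its size is at most $s + t$. This contradicts the defining property of $\tilde{\rho}$. Therefore every guessing circuit of size $s$ succeeds from $\tilde{\sigma}_{E'}^x$ with probability at most $2^{-k}$, so $\hunps^{\varepsilon}(X|E')_{\Phi(\rho)} \ge k$. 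As $k$ ranges over all values admissible for $\hunps[s+t]^{\varepsilon}(X|E)_{\rho}$, this yields $\hunps^{\varepsilon}(X|E')_{\Phi(\rho)} \ge \hunps[s+t]^{\varepsilon}(X|E)_{\rho}$.

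I do not expect a genuine obstacle: the only nontrivial external input is the data-processing inequality for the purified distance (equivalently, monotonicity of $F_*$ under CPTP maps), which is standard and already implicitly used whenever the paper manipulates purified balls. The two points that require care are bookkeeping rather than ideas: (i) that the guessing-circuit model is closed under prepending a size-$t$ implementation of $\Phi$ with only an additive overhead of $t$ gates, and (ii) that applying $\Phi$ register-wise preserves the classical-quantum structure of the smoothed state so that the resulting object is a legitimate smoothing candidate in \cref{def:purified-unpredictability-entropy}.
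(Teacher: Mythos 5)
Your proposal is correct and follows essentially the same route as the paper's proof: push the smoothing state through $\Phi$, use monotonicity of the purified distance under CPTP maps to stay in the $\varepsilon$-ball, and compose any size-$s$ guesser on $E'$ with the size-$t$ circuit for $\Phi$ to get a size-$(s+t)$ guesser on $E$. Your write-up is in fact slightly more careful than the paper's, since you explicitly name the pushed-forward witness $\tilde{\sigma}_{XE'}$ and verify it remains a legitimate (sub-normalized) cq smoothing candidate.
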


\begin{proof}
    Let $\rho_{XE}$ be a cq-state and let $\Phi_{E\to E'}$ be a quantum channel that is implemented by a circuit of size $t$. Fix any guessing circuit $\cC'$ acting on $E'$ with size $s$. Since $\Phi_{E\to E'}$ is realized by a circuit of size $t$, we can compose the guessing circuit $\cC'$ with the circuit for $\Phi_{E\to E'}$ to obtain a guessing circuit $\cC$ on $E$ of size at most $s+t$.

    Let $\tilde{\rho}_{XE}$ be a cq-state such that
    \begin{equation*}
        \bP\left(\rho_{XE},\tilde{\rho}_{XE}\right) \le \varepsilon\;.
    \end{equation*}
    By the monotonicity of the purified distance under quantum channels, we have
    \begin{equation*}
        \bP\left((\1_X \otimes\Phi)(\rho_{XE}),(\1_X \otimes \Phi)(\tilde{\rho}_{XE})\right) \le \varepsilon\;.
    \end{equation*}
    Then, by~\cref{def:purified-unpredictability-entropy} if
    \begin{equation*}
        \hunps[s+t](X|E)_{\rho} \ge k \;,
    \end{equation*}
    then
    \begin{equation*}
        \pr{\cC\left(\tilde{\rho}_E^x\right)=x} \le 2^{-k}\;.
    \end{equation*}
    Since $\cC'$ is obtained from $\cC$, we deduce that any circuit of size $s$ acting on $E'$ satisfies
    \begin{equation*}
        \pr{\cC'\left(\tilde{\rho}_{E'}^x\right)=x} \le 2^{-k}\;.
    \end{equation*}
    By the definition of $\hunps^{\varepsilon}(X|E')_{\Phi(\rho)}$ this implies that
    \begin{equation*}
        \hunps(X|E')_{\Phi(\rho)} \ge k \;,
    \end{equation*}
    or equivalently,
    \begin{equation*}
        \hunps(X|E')_{\Phi(\rho)} \ge \hunps[s+t](X|E)_{\rho}\;.  \qedhere
    \end{equation*}
\end{proof}

Proof of~\cref{lem:inequality-of-operators-for-unp-leakage-chain-rule}, from~\cite{WTHR11Impossibilitygrowingquantumbitcommit}, recall:
\begin{lemma}
    For any state $\rho_{A}$ and any extension $\rho_{AB}$, we have:
    \begin{equation*}
        \rho_{AB} \le \dim(B)^{2} (\rho_{A} \otimes \omega_{B}) \;,
    \end{equation*}
    where $\omega_{B}$ is the maximally mixed state on $B$.
\end{lemma}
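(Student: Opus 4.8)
Write $d = \dim(B)$, so that $\omega_B = \tfrac1d \1_B$ and the claim is equivalent to $\rho_{AB} \le d^2\,(\rho_A \otimes \omega_B)$. The plan is to realize $\rho_A \otimes \omega_B$ as the output of a uniform unitary twirl on the $B$-system and then extract $\rho_{AB}$ as a single positive summand. Concretely, I would introduce the $d^2$ Heisenberg--Weyl (generalized Pauli) unitaries $\{W_{a,b}\}_{a,b\in\mathbb{Z}_d}$ acting on $\cH_B$, recalling the standard fact that they implement the completely depolarizing channel: for every operator $N$ on $\cH_B$,
\begin{equation*}
    \frac{1}{d^2}\sum_{a,b\in\mathbb{Z}_d} W_{a,b}\, N\, W_{a,b}^{\dagger} = \tr{N}\,\omega_B \;.
\end{equation*}

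The first step is to apply the corresponding partial twirl to $\rho_{AB}$. Writing $\rho_{AB}=\sum_k M_k\otimes N_k$ for operators $M_k$ on $\cH_A$ and $N_k$ on $\cH_B$, linearity and the depolarizing identity give
\begin{equation*}
    \frac{1}{d^2}\sum_{a,b}(\1_A\otimes W_{a,b})\,\rho_{AB}\,(\1_A\otimes W_{a,b}^{\dagger}) = \sum_k M_k\otimes\big(\tr{N_k}\,\omega_B\big) = \Big(\sum_k \tr{N_k}\,M_k\Big)\otimes\omega_B = \rho_A\otimes\omega_B \;,
\end{equation*}
using $\ptr{B}{\rho_{AB}}=\rho_A$ in the last equality.

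The second step is the positivity argument. Since $\rho_{AB}\ge 0$ and conjugation by any unitary preserves positive semidefiniteness, every summand $(\1_A\otimes W_{a,b})\rho_{AB}(\1_A\otimes W_{a,b}^{\dagger})$ is positive semidefinite, and the term $(a,b)=(0,0)$ (where $W_{0,0}=\1_B$) equals $\rho_{AB}$ itself. Dropping all other (positive) summands therefore yields
\begin{equation*}
    \rho_{AB}\ \le\ \sum_{a,b}(\1_A\otimes W_{a,b})\,\rho_{AB}\,(\1_A\otimes W_{a,b}^{\dagger}) \ =\ d^2\,(\rho_A\otimes\omega_B) \ =\ \dim(B)^2\,(\rho_A\otimes\omega_B)\;,
\end{equation*}
which is the claim.

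I do not expect a genuine obstacle here; the only point requiring care is the verification of the Weyl depolarizing identity and its extension to the bipartite setting acting on $B$ alone. As a sanity check on the constant, the same bound can be obtained by composing two pinchings on $B$ --- first in the computational basis of $B$, which by the pinching inequality costs a factor $d$ and produces a state block-diagonal in that basis, then in the Fourier basis of $B$, which costs another factor $d$ and maps the block-diagonal state exactly to $\rho_A\otimes\omega_B$ --- so the factor $\dim(B)^2$ is the natural one and is tight (e.g.\ for a maximally entangled $\rho_{AB}$).
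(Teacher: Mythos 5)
Your proof is correct, but it takes a genuinely different route from the one in the paper's appendix. You realize $\rho_A\otimes\omega_B$ as a uniform Heisenberg--Weyl twirl of $\rho_{AB}$ on the $B$ system and obtain the operator inequality by discarding the $d^2-1$ positive summands other than the identity term. The paper instead proves the pure-state case directly: it conjugates $\proj{\psi}_{AB}$ by $\tau_A^{-1/2}\otimes\1_B$, observes that the resulting rank-one operator has largest eigenvalue equal to its trace, which equals $\mathrm{rank}(\tau_A)\le\min\{\dim(A),\dim(B)\}$, concludes $\proj{\psi}_{AB}\le\dim(B)\,(\tau_A\otimes\1_B)$, and then extends to mixed states by convexity. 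Your argument is arguably cleaner: it needs no Schmidt decomposition and no pure/mixed case split, it makes transparent that the statement holds for arbitrary positive semidefinite $\rho_{AB}$ (a point the paper only remarks on at the end), and it explains why the constant $\dim(B)^2$ is tight; it is also exactly the ``special case of the pinching inequality'' that the paper alludes to in the main text, as your two-pinching sanity check makes precise. What the paper's route buys in exchange is a slightly stronger intermediate bound, $\rho_{AB}\le\min\{\dim(A),\dim(B)\}\cdot\dim(B)\,(\rho_A\otimes\omega_B)$, which improves on $\dim(B)^2$ when $\dim(A)<\dim(B)$ and which the twirl argument does not see.
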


\begin{proof}
    This lemma is proven as part of Lemma 12 in~\cite{WTHR11Impossibilitygrowingquantumbitcommit}. We give a more detailed proof here for convenience.

    Starting with the case of a pure extension.
    Let $\proj{\psi}_{AB}$ be a pure state. We define
    $$\tau_{A} = \ptr{B}{\proj{\psi}_{AB}}\;,$$
    $$\Gamma_{AB} = (\tau^{-\frac{1}{2}}_{A} \otimes \1_{B}) \proj{\psi}_{AB} (\tau^{-\frac{1}{2}}_{A} \otimes \1_{B}) \;, $$
    where the inverse is on the support of $\tau_{A}$.
    By construction $\Gamma_{AB}$ is a rank $1$ matrix, so $\lambda_{\max}(\Gamma_{AB}) = \tr{\Gamma_{AB}}$.
    From the Schmidt decomposition we can write $\ket{\psi}_{AB} = \sum_{i} \sqrt{p_{i}} \ket{a_{i}} \otimes \ket{b_{i}}$ where $\set{\ket{a_{i}}},\set{\ket{b_{i}}}$ are orthonormal bases for $A$ and $B$ respectively. We also get that $\tau_{A} = \sum_{i} p_{i} \proj{a_{i}}$, and $\tau_{A}^{-\frac{1}{2}} = \sum_{i} p_{i}^{-\frac{1}{2}} \proj{a_{i}}$ where the inverse applies for all $p_{i} > 0$, and $0$ otherwise. We can see that:
    \begin{align*}
        \tr{\Gamma_{AB}}
         & = \tr{\left( \sum_{i} p_{i}^{-\frac{1}{2}} \proj{a_{i}} \otimes \1_{B} \right)
        \left(\sum_{i} \sqrt{p_{i}} \ket{a_{i}} \otimes \ket{b_{i}} \right) \right. \nonumber                                   \\
         & \qquad \left. \left(\sum_{i} \sqrt{p_{i}} \bra{a_{i}} \otimes \bra{b_{i}}\right)
        \left( \sum_{i} p_{i}^{-\frac{1}{2}} \proj{a_{i}} \otimes \1_{B} \right)}                                               \\
         & = \tr{\sum_{i}\delta_{p_{i}}\ket{a_{i}} \otimes \ket{b_{i}} \sum_{i} \delta_{p_{i}} \bra{a_{i}} \otimes \bra{b_{i}}} \\
         & = \sum_{i} \delta_{p_i} = \rank{\tau_{A}} \;,
    \end{align*}
    where $\delta_{p_{i}} = 1$ if $p_{i} \ne 0$ and $0$ if $p_{i}=0$.
    We also get from the Schmidt decomposition that $\rank{\tau_{A}} = \rank{\tau_{B}}$ and therefore $\rank{\tau_{A}} \le \min \set{|A|,|B|}$. Combining the inequalities we get:
    \begin{equation*}
        \lambda_{\max}(\Gamma_{AB}) = \tr{\Gamma_{AB}} = \rank{\tau_{A}} \le \min\set{|A|,|B|} \;.
    \end{equation*}
    And so $\Gamma_{AB} \le |B| \1_{AB}$.
    Applying $\tau_{A}^{\frac{1}{2}} \otimes \1_{B}$ to both sides we get: $\proj{\psi}_{AB} \le |B| (\tau_{A} \otimes \1_{B})$, or equivalently $\proj{\psi}_{AB} \le |B|^{2} (\tau_{A} \otimes \omega_{B})$ which concludes the proof for pure states.

    Since mixed states are convex combinations of pure states, we get the same inequality for mixed states by taking the same convex combination on both sides of the inequality. For any state $\rho_{AB}$ we get
    \begin{equation*}
        \rho_{AB} \le \dim(B)^{2} (\rho_{A} \otimes \omega_{B}) \;.
    \end{equation*}
    In~\cite{WTHR11Impossibilitygrowingquantumbitcommit} they note that this holds for any weighted sum of pure states with positive weights, and therefore holds for any positive operator, and not just quantum states.%
\end{proof}

Proof of~\cref{lemma:triangle-inequality-computational-td}, recall
\begin{lemma}[Triangle Inequality for Computational Distance]
    For any $s\in\bbN$ and states $\rho,\sigma,\tau$:
    \begin{equation*}
        \dist_{s}(\rho,\sigma) \le \dist_{s}(\rho,\tau) + \dist_{s}(\tau,\sigma) \;.
    \end{equation*}
\end{lemma}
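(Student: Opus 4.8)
The plan is to prove the triangle inequality for computational distance directly from the definition, reducing it to the corresponding property of the ordinary absolute value, much as one proves the triangle inequality for statistical or trace distance. First I would fix an arbitrary distinguisher $\cC \in \cD_s$ and write, for any states $\rho,\sigma,\tau$,
\begin{equation*}
    \abs{\pr{\cC(\rho)=1} - \pr{\cC(\sigma)=1}} \le \abs{\pr{\cC(\rho)=1} - \pr{\cC(\tau)=1}} + \abs{\pr{\cC(\tau)=1} - \pr{\cC(\sigma)=1}}\;,
\end{equation*}
which is just the triangle inequality for real numbers applied after adding and subtracting $\pr{\cC(\tau)=1}$.

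Next I would bound each term on the right-hand side by a supremum over all of $\cD_s$: since $\cC$ itself lies in $\cD_s$, we have $\abs{\pr{\cC(\rho)=1} - \pr{\cC(\tau)=1}} \le \sup_{\cC' \in \cD_s}\abs{\pr{\cC'(\rho)=1} - \pr{\cC'(\tau)=1}} = 2\dist_s(\rho,\tau)$, and similarly for the second term with $\dist_s(\tau,\sigma)$. This yields $\abs{\pr{\cC(\rho)=1} - \pr{\cC(\sigma)=1}} \le 2\dist_s(\rho,\tau) + 2\dist_s(\tau,\sigma)$ for every $\cC \in \cD_s$.

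Finally, I would take the supremum over $\cC \in \cD_s$ on the left-hand side and multiply through by $\tfrac12$, obtaining exactly $\dist_s(\rho,\sigma) \le \dist_s(\rho,\tau) + \dist_s(\tau,\sigma)$. I do not expect any real obstacle here: the only subtlety worth a sentence is that the intermediate state $\tau$ need not be one the distinguisher ``knows about'' — but since $\pr{\cC(\tau)=1}$ is a well-defined number for any fixed channel $\cC$ and any state $\tau$ in the same Hilbert space, the add-and-subtract trick goes through verbatim, and the fact that the distinguisher class $\cD_s$ is the same on all three sides is what makes the bound tight rather than incurring any blow-up in circuit size.
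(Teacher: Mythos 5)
Your proof is correct and follows essentially the same route as the paper's: fix a distinguisher, add and subtract $\pr{\cC(\tau)=1}$, apply the triangle inequality for real numbers, bound each term by the supremum over $\cD_s$, and then optimize over $\cC$ on the left. If anything, your bookkeeping of the factor $\tfrac{1}{2}$ from the definition is slightly more careful than the paper's, which silently cancels it on both sides.
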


\begin{proof}
    For any states $\rho,\sigma,\tau$ and let $\cC$ be a fixed distinguisher with circuit of size at most $s$, from the triangle inequality:
    \begin{align*}
        \abs{\pr{\cC(\rho)=1} - \pr{\cC(\sigma)=1}}  \le
         & \abs{\pr{\cC(\rho)=1} - \pr{\cC(\tau)=1}}         \\
         & + \abs{\pr{\cC(\tau)=1} - \pr{\cC(\sigma)=1}} \;.
    \end{align*}
    In particular, denote $\cC$ the circuit that saturates the definition of the computational trace distance for $\rho$ and $\sigma$:
    \begin{align*}
        \dist_{s}(\rho,\sigma) = \abs{\pr{\cC(\rho)=1} - \pr{\cC(\sigma)=1}} \le
         & \abs{\pr{\cC(\rho)=1} - \pr{\cC(\tau)=1}}         \\
         & + \abs{\pr{\cC(\tau)=1} - \pr{\cC(\sigma)=1}} \;.
    \end{align*}
    By definition, since the computational distance is defined by the maximum in the set of all distinguishers with a circuit of size at most $s$:
    \begin{align*}
        \abs{\pr{\cC(\rho)=1} - \pr{\cC(\tau)=1}}   & \le \dist_{s}(\rho,\tau) \;,   \\
        \abs{\pr{\cC(\tau)=1} - \pr{\cC(\sigma)=1}} & \le \dist_{s}(\tau,\sigma) \;.
    \end{align*}
    Therefore:
    \begin{equation*}
        \dist_{s}(\rho,\sigma) \le \dist_{s}(\rho,\tau) + \dist_{s}(\tau,\sigma) \;. \qedhere
    \end{equation*}
\end{proof}

\bibliographystyle{IEEEtran}
\bibliography{referencefile}

\end{document}